\keywords{probabilistic automata \and semantics for probability and nondeterminism \and trace semantics \and determinisation \and coalgebra \and convex subsets of distributions monad.}
\newcommand{\mypar}[1]{\vspace{0.5cm}\noindent{\bf #1} }
\newcommand{\dset}{\mathcal D}
\newcommand{\lset}{A}
\newcommand{\funct}{F}
\newcommand{\onev}{\star}
\newcommand{\beh}[1]{\llbracket #1 \rrbracket}
\newcommand{\succes}{\bullet}
\newcommand{\Ctx}{\mathit{Ctx}}
\newcommand{\term}{S}
\newcommand{\termone}{S_1}
\newcommand{\termtwo}{S_2}
\newcommand{\intervals}{\mathcal{I}}
\newcommand{\psuminterval}{+_p^{\intervals}}
\DeclareMathOperator{\minmax}{min-max}
\newcommand{\dirac}[1]{\delta_{#1}}
\newcommand{\pplus}[1]{+_{#1}}
\newcommand{\cplus}{\oplus}
\newcommand{\msum}{\sum}
\newcommand{\pss}{NPLTS\xspace}
\newcommand{\ttrel}[1]{\ensuremath{\overset{#1}{\longrightarrow}}}
\DeclareMathOperator{\corr}{corr}
\DeclareMathOperator{\pprob}{prob}
\newcommand{\quotientT}{q^{{T}}}
\newcommand{\quotientB}{q^{{B}}}
\newcommand{\barot}{\bar{o}_{{T}}}
\newcommand{\barob}{\bar{o}_{{B}}}
\newcommand{\barO}{\bar{o}}
\newcommand{\bartB}{\bar{t}_{B}}
\newcommand{\bartT}{\bar{t}_{T}}
\newcommand{\bart}{\bar{t}}
\newcommand{\bb}[1]{[\![ #1 ]\!]}
\newcommand{\bbangle}[1]{\langle\!\langle #1 \rangle\!\rangle}
\newcommand{\bbressup}[1]{\lfloor\!\lfloor #1 \rfloor\!\rfloor}
\newcommand{\bbresinf}[1]{\lceil\!\lceil #1 \rceil\!\rceil}
\DeclareMathOperator{\reachres}{reach}
\DeclareMathOperator{\reach}{reach}
\newcommand{\R}{\mathcal{R}}
\newcommand{\RR}{\mathrel{\mathcal{R}}}
\newcommand{\PS}{\mathcal{PS}} 
\newcommand{\TPS}{T_\PS} 
\newcommand{\SB}{\mathcal{SB}} 
\newcommand{\TSB}{T_\SB} 
\newcommand{\ST}{\mathcal{ST}} 
\newcommand{\TST}{T_\ST} 
\newcommand{\PCS}{\mathcal{PCS}} 
\newcommand{\TPCS}{T_\PCS} 
\newcommand{\CSB}{\mathcal{CSB}} 
\newcommand{\TCSB}{T_\CSB} 
\newcommand{\CST}{\mathcal{CST}} 
\newcommand{\TCST}{T_\CST} 
\newcommand{\eqmaymust}{\equiv}
\newcommand{\eqmay}{\equiv_{{B}}}
\newcommand{\eqmust}{\equiv_{{T}}}
\newcommand{\bbmay}[1]{[\![ #1 ]\!]_{{B}}}
\newcommand{\bbmust}[1]{[\![ #1 ]\!]_{{T}}}
\newcommand{\bbressupfp}[1]{\lfloor\!\lfloor #1 \rfloor\!\rfloor_{fp}}
\newcommand{\bbresinffp}[1]{\lceil\!\lceil #1 \rceil\!\rceil_{fp}}
\def\url@leostyle{%
  \@ifundefined{selectfont}{\def\UrlFont{\sf}}{\def\UrlFont{\small\ttfamily}}}
\DeclareMathOperator{\supp}{supp}
\DeclareMathOperator{\maxalg}{{\mathbb M}ax}
\DeclareMathOperator{\minalg}{{\mathbb M}in}
\newcommand{\id}{\text{\emph{id}}}
\newcommand{\strength}{\text{\emph{st}}}
\newcommand{\inl}{\text{\emph{in}}_l}
\newcommand{\inr}{\text{\emph{in}}_r}
\newcommand{\after}{\mathrel{\circ}}
\newcommand{\Pow}{\mathcal{P}}
\newcommand{\nePow}{\mathcal{P}_{ne}}
\newcommand{\Powne}{\mathcal{P}_{ne}}
\newcommand{\Idmap}{\textrm{Id}}
\newcommand{\Cat}[1]{\ensuremath{\mathbf{#1}}}
\newcommand{\Sets}{\Cat{Sets}}
\newcommand{\Kl}{\mathcal{K}{\kern-.2ex}\ell}
\newcommand{\set}[2]{\{#1\;|\;#2\}}
\newcommand{\setin}[3]{\{#1\in#2\;|\;#3\}}
\newcommand{\Alg}[1]{\mathbb{#1}}			
\DeclareMathOperator{\convex}{conv}			    
\DeclareMathOperator{\f}{f}			    
\DeclareMathOperator{\ide}{id}			    
\DeclareMathOperator{\dder}{der}			    
\DeclareMathOperator{\CoAlg}{Coalg}
\DeclareMathOperator{\EM}{E{\kern-.2ex}M}
\DeclareMathOperator{\Id}{Id}			    
\newcommand{\Dis}{\mathcal{D}}				
\newcommand{\T}{M}				
\newcommand{\tr}[1]{ \stackrel{#1}{\to}}           
\newcommand{\Tthree}{T_i} 
\newcommand{\tthree}{\bar{t}^{\sharp}_{i}}
\newcommand{\othree}{\bar{o}^{\sharp}_{i}}
\newcommand{\eqthree}{\equiv_i}
\newcommand{\ev}{\mbox{\sl ev}}
\newcommand{\weg}[1]{}
\DeclareMathOperator{\termfun}{term}
\begin{document}

\title[Traces for Systems with Nondeterminism, Probability, and Termination]{The Theory of Traces for Systems with Nondeterminism, Probability, and Termination}

\titlecomment{{\lsuper*} This paper is an extended version of a LICS 2019 paper ``The Theory of Traces for Systems with Nondeterminism and Probability''. It contains all the proofs, additional explanations, material, and examples.}

\author[Filippo Bonchi]{Filippo Bonchi\rsuper{a}}	
\address{University of Pisa, Italy}	
\email{}  
\thanks{Filippo Bonchi has been supported by the Ministero dell’Università e della Ricerca of Italy under Grant No. 201784YSZ5, PRIN2017 – ASPRA (Analysis of Program Analyses).}	

\author[Ana Sokolova]{Ana Sokolova\rsuper{b}}	
\address{University of Salzburg, Austria}	
\email{}

\author[Valeria Vignudelli]{Valeria Vignudelli\rsuper{c}}	
\address{Univ Lyon, CNRS, ENS de Lyon, UCB Lyon 1, LIP, France}	
\email{}
\thanks{Valeria Vignudelli has been partially supported by the French projects ANR-20-CE48-0005 QuaReMe, the European Research Council (ERC) under the European Union's Horizon 2020 programme (CoVeCe, grant agreement No 678157), the LABEX MILYON (ANR-10-LABX-0070) of Universit\'e de Lyon, within the program ``Investissements d'Avenir'' (ANR-11-IDEX-0007) operated by the French National Research Agency (ANR)}	


\begin{abstract}
This paper studies trace-based equivalences for systems combining nondeterministic and probabilistic choices. We show how trace semantics for such processes can be recovered by instantiating a coalgebraic construction known as the generalised powerset construction. We characterise and compare the resulting semantics to known definitions of trace equivalences appearing in the literature. Most of our results are based on the exciting interplay between monads and their presentations via algebraic theories.
\end{abstract}

\maketitle

\section{Introduction}\label{sec:intro}

Systems exhibiting both nondeterministic and probabilistic behaviour are abundantly used in verification~\cite{Baier2008,HermannsKK14,KNP02,DehnertJK017,vardi1985automatic,hansson1991time,segala1995probabilistic}, AI~\cite{CPP09,Kaelbling:1998vs,RussellNorvig:2009}, and studied from semantics perspective~\cite{HeunenKSY17,StatonYWHK16,HermannsPSWZ11}.
Probability is needed to quantitatively model uncertainty and belief, whereas nondeterminism enables modelling of incomplete information, unknown environment, implementation freedom, or concurrency. At the same time, the interplay of nondeterminism and probability has been posing some remarkable challenges~\cite{VaraccaW06,DBLP:journals/corr/KeimelP16,Mio14,Jacobs08,Varacca03,Mislove00,Goubault-Larrecq08a,TixKP09a}.
Figure~\ref{fig:examplesys} shows a nondeterministic probabilistic system (NPLTS) that we use as a running example.

Traces and trace semantics~\cite{vG01} for nondeterministic probabilistic systems have been studied for several decades within concurrency theory and AI using resolutions or schedulers---entities that resolve the nondeterminism. Most proposals of trace semantics in  the literature~\cite{Seg95:thesis,Seg95b,BDL14a,BDL14b} are based on such auxiliary notions of resolutions and differ on how these resolutions are defined and combined. We call such approaches \emph{local-view} approaches.

On the other hand, the theory of coalgebra~\cite{Rut00:tcs,Jacobs:book} provides uniform generic approaches to trace semantics of various kinds of systems and automata, via Kleisli traces~\cite{DBLP:journals/lmcs/HasuoJS07}, generalised determinisation~\cite{SBBR10}, providing e.g.\ an abstract treatment of language equivalence for automata, or logics~\cite{KlinR16}. We use the term \emph{global-view} approaches for the coalgebraic methods via generalised determinisation.

In this paper, we propose a theory of trace semantics for nondeterministic probabilistic systems that unifies the local and the global view.
We start by taking the global-view approach founded on algebras and coalgebras and inspired by automata theory, and study determinisation of NPLTS in this framework. Then we find a way to mimic the local-view approach and show that we can recover known trace semantics from the literature.
We introduce now the main pieces of our puzzle, and show how everything combines together in the theory of traces for NPLTS\@.

In order to illustrate our approach, it is convenient to recall nondeterministic automata (NDA) and Rabin probabilistic automata (PA)~\cite{Rab63}. Both NDA and PA can be described as maps $\langle o, t\rangle \colon X \to O \times (\T X)^A$ where $X$ is a set of states, $A$ is the set of labels, $o\colon X \to O$ is the output function assigning to each state in $X$ an observation, and $t\colon X \to (\T X)^A$ is the transition function that assigns to each state $x$ in $X$ and to each letter $a$ of the alphabet $A$ an element of $\T X$ that describes the choice of a next state. For NDA, this is a nondeterministic choice; for PA, the choice is governed by a probability distribution. An NDA state observes one of two possible values which qualify the state as accepting (output $1$) or not (output $0$). A state in a PA observes a real number in $[0,1]$. Below we depict an example NDA (on the left) and an example PA (on the right) with labels $A=\{a,b\}$ and with outputs denoted by $\downarrow$.

{\footnotesize \begin{tikzpicture}[thick]

\matrix[matrix of nodes, row sep= 0.5cm, column sep=1cm,ampersand replacement=\&]
{
					\node (x) {$x \downarrow_0$};	\&\node (y) {$y \downarrow_1$};	\\
	};
\draw[-latex] (x) to node[above] {$a$} (y);
\draw[-latex] (x) to [out=90, in=160, looseness=5] node[above] {$a$} (x);
\draw[-latex] (y) to[bend right=70] node[above] {$b$} (x);

\begin{scope}[xshift=4cm]
\matrix[matrix of nodes, row sep= 0.5cm, column sep=1cm,ampersand replacement=\&]
{
					\node (x) {$x \downarrow_0$};	\&\node (d) {$^{}$};	\&\node (y) {$y \downarrow_1$};\\
	};
\draw[-latex] (x) to node[above] {$a,b$} (d);
\draw[-latex] (y) to [out=10, in=80, looseness=5] node[right] {$a,b$} (y);
\draw[dotted,->] (d) to[bend right=70] node[above] {$\frac 1 2$} (x);
\draw[dotted,->] (d) to node[above] {$\frac 1 2$} (y);

\end{scope}
\end{tikzpicture}%
}

The type of choice, modelled abstractly by a monad $\T$, is often linked to a concrete algebraic theory, the presentation of $\T$. Having such a presentation is a valuable tool, since it provides a finite syntax for describing finite branching. For nondeterministic choice this is the algebraic theory of semilattices (with bottom), for probabilistic choice it is the algebraic theory of convex algebras. Once we have such an algebraic presentation, we have a determinised automaton (as depicted below) and we inductively compute the output value after executing a trace by following the algebraic structure.

{ \footnotesize{\begin{tikzpicture}[thick]

\matrix[matrix of nodes, row sep= 0.5cm, column sep=1cm,ampersand replacement=\&]
{
					\node (x) {$x \downarrow_0$};	\&\node (y) {$x\cplus y \downarrow_1$};	\\
\node (z) {$\star \downarrow_0$};		\\
	};
\draw[-latex] (x) to node[above] {$a$} (y);
\draw[-latex] (x) to node[left] {$b$} (z);
\draw[-latex] (z) to [out=0, in=-70, looseness=5] node[right] {$a,b$} (z);
\draw[-latex] (y) to [out=10, in=70, looseness=4] node[above] {$a$} (y);
\draw[-latex] (y) to[bend right=70] node[above] {$b$} (x);

\begin{scope}[xshift=4cm]
\matrix[matrix of nodes, row sep= 0.5cm, column sep=1cm,ampersand replacement=\&]
{
					\node (x) {$x \downarrow_0$};	\\
\node (y) {$x\pplus{\frac{1}{2}}y \downarrow_{\frac{1}{2}}$};		\\
\node (z) {$x\pplus{\frac{1}{4}}y \downarrow_{\frac{3}{4}}$};		\\
\node (d) {$\vdots$};		\\[-0.1cm]
	};
\draw[-latex] (x) to node[left] {$a,b$} (y);
\draw[-latex] (y) to node[left] {$a,b$} (z);
\draw[-latex] (z) to node[left] {$a,b$} (d);
 \end{scope}

\end{tikzpicture}%
}
}

\noindent Here $x \cplus y$ denotes the nondeterministic choice of $x$ or $y$, and $x \pplus{p} y$ the probabilistic choice where $x$ is chosen with probability $p$ and $y$ with probability $1-p$.
For example, in the determinised PA we have, since $x \stackrel{a}{\to} x \pplus{\frac{1}{2}} y$ and $y \stackrel{a}{\to} y$:
\[x \pplus{\frac{1}{2}} y \stackrel{a}{\to} (x \pplus{\frac{1}{2}} y) \pplus{\frac{1}{2}} y = x \pplus{\frac{1}{4}} y\] and hence the output of $x \pplus{\frac{1}{4}} y$ is $o(x) \pplus{\frac{1}{4}} o(y) = \frac{3}{4}$ giving us the probability of $x$ executing the trace $aa$.
Our computation is enabled by having the right algebraic structure on the set of observations: a semilattice on $\{0,1\}$ and a convex algebra on $[0,1]$. The induced semantics is language equivalence and probabilistic language equivalence, respectively.

 This is the approach of trace semantics via a determinisation~\cite{SBBR10}, founded in the abstract understanding of automata as coalgebras and computational effects as monads.

We develop a theory of traces for NPLTS using such approach, by first identifying NPLTS as a special class of automata. For this purpose we take the monad for nondeterminism and probability~\cite{Jacobs08} with origins in~\cite{Mislove00,Goubault-Larrecq08a,TixKP09a,Varacca03,VaraccaW06}, namely, the monad $C$ of nonempty convex subsets of distributions, and provide all necessary and convenient infrastructure for generalised determinisation. The necessary part is having an algebra of observations, the convenient part is giving an algebraic presentation in terms of \emph{convex semilattices}. These are algebras that are at the same time a semilattice and a convex algebra, with a distributivity axiom distributing probability over nondeterminism.  Having the presentation we can write, for example,
 \[x \stackrel{a}{\to} x_1 \cplus (x_3 \pplus{\frac{1}{2}} x_2)\]
 for the NPLTS from Figure~\ref{fig:examplesys}.

 The presentation for $C$ is somewhat known, although not explicitly proven, in the community --- proving it and putting it to good use is part of our contribution which, in our opinion, clarifies and simplifies the trace theory of systems with nondeterminism and probability.

Remarkably, necessity and convenience go hand in hand on this journey. Having the presentation enables us to clearly identify what are the \emph{interesting} algebras necessary for describing trace and testing semantics (with tests being finite traces).
We identify three different algebraic theories: the \emph{theory of pointed convex semilattices}, the \emph{theory of convex semilattices with bottom}, and the \emph{theory of convex semilattices with top}.
These theories give rise to three interesting semantics arising in a canonical way by taking as algebras of observations those freely generated by a singleton set. We prove their concrete characterisations: the free convex semilattice with bottom is carried by $[0,1]$ with $\max$ as semilattice operation and standard convex algebra operations; the free convex semilattice with top is carried by $[0,1]$ with $\min$ as semilattice operation; and the pointed convex semilattice freely generated by $1$ is carried by the set of closed intervals in $[0,1]$ where the semilattice operation combines two intervals by taking their minimum and their maximum, and the convex operations are given by Minkowski sum.

  We call the resulting three semantics \emph{may trace}, \emph{must trace} and \emph{may-must trace semantics} since there is a close correspondence with \emph{probabilistic testing semantics}~\cite{YL92,JHY94,DGHMZ07a,DGHM09} when tests are taken to be just the finite traces in $A^*$. Indeed, the may trace semantics gives the greatest probability with which a state passes a given test; the must trace semantics gives the smallest probability with which a state passes a given test, and the may-must trace semantics gives the closed interval ranging from the smallest to the greatest.

\begin{figure}
\begin{center}
\begin{tikzpicture}[thick]

\matrix[matrix of nodes, row sep= 0.7cm, column sep=.1cm,ampersand replacement=\&]
{
	\&							\&\node (x) {$x$};		\\
	\&\node (x1) {$x_{1}$};	\&						\&\node (d2) {$\Delta_{2}$}; 	\& \\
	\&\node (d1) {$\Delta_{1}$};			\& 						\&						\&\node (x2) {$x_{2}$}; 	\& \\
  	\&		\&\node (z) {${x_{3}}$};		\& 	\&				 	 \\
	};
\draw[-latex] (x) to node[left] {$a$} (x1);
\draw[-latex] (x) to node[right] {$a$} (d2);
\draw[-latex] (x1) to node[left] {$b$} (d1);
\draw[-latex] (x2) to node[left] {$b$} (z);
\draw[-latex] (x2) to[bend right=70] node[right] {$c$}  (x);

\draw[dotted,->] (d1) to node[left] {$\frac 1 2$}  (z);
\draw[dotted,->] (d2) to node[left] {$\frac 1 2$}  (z);
\draw[dotted,->] (d2) to node[right] {$\frac 1 2$}  (x2);
\draw[dotted,->] (d1) to[bend left=70] node[left] {$\frac 1 2$}  (x);

\begin{scope}[yshift=-0.3cm,xshift=4.2cm]
\matrix[matrix of nodes, row sep= 0.7cm, column sep=.2cm,ampersand replacement=\&]
{
	\&							\&							\&\node (y) {$y$};			\&\\
	\&\& 	\node (y1) {$y_{1}$};							\&\node (t2) {$\Theta_{2}$}; 	\&	\&\node (t3) {$\Theta_{3}$}; \\
	\&\node (t1) {$\Theta_{1}$};			\& 							\&\node (y2) {$y_{2}$};		\&	\&				\&\node (y3) {$y_{3}$}; 	\\
  	\&		\& 		\node (z) {${y_{4}}$};					\& 	\&	\&				\&\\
  	\&							\& 	\\
	};
\draw[-latex] (y) to node[left] {$a$} (y1);
\draw[-latex] (y) to node[left] {$a$} (t2);
\draw[-latex] (y) to node[right] {$a$} (t3);
\draw[-latex] (y1) to node[left] {$b$} (t1);
\draw[-latex] (y2) to node[left] {$b$} (z);
\draw[-latex] (y3) to[bend right=70] node[right] {$c$} (y);

\draw[dotted,->] (t1) to node[left] {$\frac 1 2$}  (z);
\draw[dotted,->] (t2) to[bend right=20] node[left] {$\frac 1 2$}  (z);
\draw[dotted,->] (t2) to node[left] {$\frac 1 2$}  (y2);
\draw[dotted,->] (t3) to node[below] {$\frac 1 4$}  (y2);
\draw[dotted,->] (t3) to node[right] {$\frac 1 2$}  (y3);
\draw[dotted,->] (t1) to[bend left=70] node[left] {$\frac 1 2$}  (y);
\draw[dotted,->] (t3) to[bend left=30, looseness=1] node[right] {$\frac 1 4$}  (z);

\end{scope}

\end{tikzpicture}

\end{center}

\caption{NPLTS}\label{fig:examplesys}

\end{figure}

From the abstract theory, we additionally get that:
\begin{enumerate}
\item The induced equivalence can be proved coinductively by means of proof-techniques known as \emph{bisimulations up-to}~\cite{Milner89}. More precisely, it holds that up-to $\cplus$ and up-to $\pplus{p}$ are compatible~\cite{SP09b} techniques.
\item The equivalence is implied by the standard branching-time equivalences for NPLTS, namely bisimilarity and convex bisimilarity~\cite{segala1995probabilistic,Seg95:thesis}.
\item The equivalence is \emph{backward compatible} w.r.t.\ trace equivalence for LTS and for reactive probabilistic systems (RPLTS): When regarding an LTS and RPLTS as a nondeterministic probabilistic system, standard trace equivalence coincides with our may trace equivalence and with our three semantics, respectively.
\end{enumerate}

\noindent
Last but certainly not least, we show that the global view coincides with the local one, namely  that our three semantics can be elegantly characterised in terms of resolutions. The may-trace semantics assigns to each trace the greatest probability with which the trace can be performed, with respect to any resolution of the system; the must-trace semantics assigns the smallest one. It is important to remark here that our resolutions differ from those previously proposed in the literature in the fact that they are reactive rather than fully probabilistic. We observe that however this difference does not affect the greatest probability, and we can therefore show that the may-trace coincides with the randomized $\sqcup$-trace equivalence in~\cite{Cast18,BDL14a, BDL14b}.

Our theory is stated using the language of algebras, coalgebras and, more generally, category theory: this fact, on the one hand, guarantees some sort of canonicity of the constructions we introduce and, on the other, it allows for reusing general results and thus simplifying proofs.
Throughout the paper, we make an effort of keeping the presentation as accessible as possible also for those readers which are not familiar with the categorical language.

\mypar{Synopsis} We start by illustrating our global-view results concretely, i.e., without relying on categorical notions, in a process algebraic fashion, in Section~\ref{sec:global-view-alg}. We recall  monads and algebraic theories in Section~\ref{sec:monad}. We provide a presentation for the monad $C$ in Section~\ref{sec:C} (Theorem~\ref{th:pres-C}) and combine it with termination in Section~\ref{sec:termination}. We then recall, in Section~\ref{sec:GenDet}, the generalised determinisation and show an additional useful result (Theorem~\ref{thm:transfert}). All these pieces are put together in Section~\ref{sec:maymust}, where we introduce our three semantics and discuss their properties. The correspondence of the global view with the local one is illustrated in Section~\ref{sec:resolutions} (Theorem~\ref{thm:correspondence}).

\section{Trace semantics for NPLTS, concretely}\label{sec:global-view-alg}

\medskip

A nondeterministic probabilistic transition system with labels in $A$ is a pair $\langle X,t \rangle$ where $X$ is the set of states and $t\colon X \to (\Pow\Dis X)^A$ is a function assigning to each state $x\in X$ and label $a\in A$ a set of probability distributions over $X$.\footnote{To be completely precise, we should say a finite set of finitely supported distributions. This will be explained in full details in Section~\ref{sec:monad}.}

The idea is to construct for each NPLTS $\langle X,t \rangle$, a process calculus where the states in $X$ act as constants and the behaviour of such constants is determined by $t$. The terms of the calculus are defined inductively by the following grammar.
\begin{equation*}
s \, ::= \; x\in X \mid \star \mid  s\cplus s \mid s\pplus{p}s \quad \text{ for all }p\in [0,1]
\end{equation*}

\begin{table}
\[\infer{-}{x \stackrel{a}{\to}\termfun_X(t(x)(a))} \quad \infer{-}{\star \stackrel{a}{\to}\star} \quad \infer{s_1 \stackrel{a}{\to}s_1' \quad s_2\tr{a}s_2'}{s_1\cplus s_2 \stackrel{a}{\to}s_1' \cplus s_2'} \quad \infer{s_1 \stackrel{a}{\to}s_1' \quad s_2\tr{a}s_2'}{s_1\pplus{p} s_2 \stackrel{a}{\to}s_1' \pplus{p} s_2'}\]\\
{{\bf(a)} Transition function.}\label{fig:transitions}
\\
\[\infer{-}{x\downarrow_1} \quad \infer{-}{\star\downarrow_0} \quad \infer{s_1 \downarrow_{o_1} \quad s_2\downarrow_{o_2}}{s_1\cplus s_2 \downarrow_{\max(o_1,o_2)}} \quad  \infer{s_1 \downarrow_{o_1} \quad s_2\downarrow_{o_2}}{s_1\pplus{p} s_2 \downarrow_{p\cdot o_1 +(1-p)\cdot o_2}}\]\\
{{\bf(b)} Output function for may trace ($\eqmay$): $o_1,o_2 \in [0,1]$.}\label{fig:outputmay}
\\
\[\infer{-}{x\downarrow_1} \quad \infer{-}{\star\downarrow_0} \quad \infer{s_1 \downarrow_{o_1} \quad s_2\downarrow_{o_2}}{s_1\cplus s_2 \downarrow_{\min(o_1,o_2)}} \quad  \infer{s_1 \downarrow_{o_1} \quad s_2\downarrow_{o_2}}{s_1\pplus{p} s_2 \downarrow_{p\cdot o_1 +(1-p)\cdot o_2}}\]\\
{{\bf(c)} Output function for must trace ($\eqmust$): $o_1,o_2 \in [0,1]$.}\label{fig:outputmust}
\\
\[\infer{-}{x\downarrow_{[1,1]}} \quad \infer{-}{\star\downarrow_{[0,0]}} \quad \infer{s_1 \downarrow_{o_1} \quad s_2\downarrow{o_2}}{s_1\cplus s_2 \downarrow_{\minmax(o_1,o_2)}} \quad  \infer{s_1 \downarrow_{o_1} \quad s_2\downarrow_{o_2}}{s_1\pplus{p} s_2 \downarrow_{o_1\psuminterval o_2}}\]\\
{{\bf(d)} Output function for may-must trace ($\equiv$): $o_1,o_2 \in \intervals=\{[x,y] \mid x,y\in [0,1] \, \wedge \, x\leq y \}$. For all $[x_1,y_1], \, [x_2,y_2]\in \intervals$, $ \minmax ([x_1,y_1], \, [x_2,y_2])= [\min(x_1,x_2),\, \max(y_1,y_2)]$ and $[x_1,y_1] \psuminterval [x_2,y_2] = [p\cdot x_1+(1-p)\cdot x_2,\,\, p\cdot y_1+(1-p)\cdot y_2]$.}\\
\caption{Structural Operational Semantics (GSOS) for NPLTS\@.}\label{table:GSOS}
\end{table}

The inference rules in Table~\ref{table:GSOS}{.(a)} define a transition function over the terms of this grammar. The rules for $\star$, $\cplus$ and $\pplus{p}$ are self-explanatory. The rule for the constants $x\in X$ requires some explanation: the transitions of $x$ are defined using the transition function $t$ of the original NPLTS\@. Observe that for $a\in A$, $t(x)(a)$ is an element of $\Pow\Dis X$. The function  $\termfun_X$ assigns to each  of these elements its representation as term of the grammar. Indeed, sets of distributions, namely the elements of $\Pow\Dis X$, can be represented by the syntax above: distributions are represented by terms of the shape $(\dots((x_0 \pplus{p_0}x_1) \pplus{p_2} x_2) \dots )\pplus{p_n}x_n$; sets of probability distributions by terms of the shape $(\dots(d_0\cplus d_1) \cplus d_{n-1}) \dots \cplus d_n$ where each $d_i$ is a term representing a distribution. The empty set is represented by $\star$.

For instance, when starting with the NPLTS in the left of Figure~\ref{fig:examplesys}, the axiom for constants get instantiated to the following transitions:
\[\begin{array}{ccc}
x\tr{a}x_1 \cplus(x_2 \pplus{\frac{1}{2}}x_3) & x\tr{b}\star & x \tr{c} \star\\
x_1\tr{a}\star & x_1 \tr{b}x \pplus{\frac{1}{2}}x_3 & x_1\tr{c}\star\\
x_2\tr{a}\star & x_2 \tr{b} x_3 & x_2\tr{c}x\\
x_3\tr{a}\star & x_3 \tr{b}\star & x_3\tr{c}\star\\
\end{array}\]
Note that $x\tr{a}\star$ means $x\not \tr{a}$, i.e. $x$ cannot perform a transition with label $a$.
Now, with these axioms and the three other rules in Table~\ref{table:GSOS}{(a)} one can compute transitions for arbitrary terms of the syntax. For instance, starting from $x$, one obtains the transition system partially depicted below.

$\xymatrix{
\star \ar@(ul,dl)_{a,b,c} & \star \cplus(\star \pplus{\frac{1}{2}}\star) \ar@(ul,ur)^{a,b,c} & \dots \\
x \ar[r]^(0.3)a \ar[u]^{b,c}& x_1 \cplus(x_2 \pplus{\frac{1}{2}}x_3) \ar[d]_c\ar[u]^a \ar[r]^(0.4)b & (x \pplus{\frac{1}{2}}x_3) \cplus (x_3 \pplus{\frac{1}{2}}\star ) \ar[u]^{a,b,c}  \\
& \star \cplus(x \pplus{\frac{1}{2}}\star) \ar[r]^{a,b,c} & \cdots
}$

It is important to note that this transition systems is deterministic: for each term $s$ and for each letter $a\in A$, there exists exactly one term $s'$ such that $s\tr{a}s'$. Therefore, for each word $w\in A^*$, there exists exactly one term reachable from $s$ with $w$. We denote this term by $t_w(s)$.

Furthermore, in order to define trace semantics, we decorate each term $s$ with  an output value: the trace semantics of $s$, written $\bb{s}$, is the function that maps each word $w\in A^*$ into the output value of $t_w(s)$.
In Table~\ref{table:GSOS}{(b),(c),(d)}, we define three different ways of decorating terms with outputs: we write $s\downarrow_o$ to mean that the output of $s$ is $o$. These three different ways lead to three different trace semantics: may ($\eqmay$), must ($\eqmust$) and may-must ($\equiv$).

For instance, the may trace and the must trace semantics of $x$ from Figure~\ref{fig:examplesys} are partially depicted as on the left and, respectively, on the right below.

\[\begin{array}{rcl}
\varepsilon & \mapsto & 1\\
a & \mapsto & 1\\
b & \mapsto & 0\\
c& \mapsto & 0\\
aa& \mapsto & 0\\
ab& \mapsto & 1 \\
ac& \mapsto & \frac{1}{2} \\
\dots & \mapsto & \dots
\end{array}
\qquad \qquad
\begin{array}{rcl}
\varepsilon & \mapsto & 1\\
a & \mapsto & 1\\
b & \mapsto & 0\\
c& \mapsto & 0\\
aa& \mapsto & 0\\
ab& \mapsto & \frac{1}{2} \\
ac& \mapsto & 0 \\
\dots & \mapsto & \dots
\end{array}\]

In the rest of the paper we study the three semantics that we have described so far. Our abstract treatment provides an explanation for each element of the above construction: all design choices of this semantics have a mathematical justification and nothing is arbitrary. Moreover, these semantics enjoy many desirable properties that can be easily proven by means of our abstract construction. We will come back to NPLTS and these trace semantics in Section~\ref{sec:maymust}, but first we need to set up the necessary algebraic (Sections~\ref{sec:monad},~\ref{sec:C} and~\ref{sec:termination}) and coalgebraic (Section~\ref{sec:GenDet}) playground.

\section{Monads and Algebraic Theories}\label{sec:monad}

In this paper, on the algebraic side, we deal with Eilenberg-Moore algebras of a monad on the category $\Sets$ of sets and functions, for which we also give presentations in terms of operations and equations, i.e., algebraic theories.

\subsection{Monads}\label{sec:monads}

A monad  on $\Sets$ is
a functor $\T\colon\Sets \rightarrow \Sets$ together with two
natural transformations: a unit $\eta\colon \Idmap
\Rightarrow \T$ and multiplication $\mu \colon \T^{2} \Rightarrow
\T$ that satisfy the laws
\[\xymatrix@C-.5pc@R-.5pc{
\T X\ar[rr]^-{\eta{\T}}\ar@{=}[drr] & & \T^{2}X\ar[d]^{\mu} & &
   \T X\ar[ll]_{\T\eta}\ar@{=}[dll]
&  & \T^{3}X\ar[rr]^-{\mu{\T}}\ar[d]_{\T\mu}
   & & \T^{2}X\ar[d]^{\mu} \\
& & \T X & &
& &
\T^{2} X\ar[rr]_{\mu} & & \T X
}\]

We next introduce several monads on $\Sets$, relevant to this paper. Each monad can be seen as giving side-effects.

\mypar{Nondeterminism} The finite powerset monad $\Pow$ maps a set $X$ to its finite powerset $\Pow X = \{U \mid U \subseteq X, \,\, U \textrm{ is finite}\}$ and a function $f\colon X\to Y$ to $\Pow f\colon \Pow X\to \Pow Y$, $\Pow f (U) = \{f(u) \mid u\in U\}$.
The unit $\eta$ of $\Pow$ is given by singleton, i.e., $\eta(x) = \{x\}$ and the multiplication $\mu$ is given by union, i.e., $\mu(S) = \bigcup_{U \in S} U$ for $S \in \Pow\Pow X$.
Of particular interest to us in this paper is the submonad $\Pow_{ne}$ of non-empty finite subsets, that acts on functions just like the (finite) powerset monad, and has the same unit and multiplication. We rarely mention the unrestricted (not necessarily finite) powerset monad, which we denote by  $\Pow_u$. We sometimes write $\overline{f}$ for $\Pow_u f$ in this paper.

\mypar{Probability} The finitely supported probability distribution monad $\Dis$ is defined, for a set $X$ and a function $f\colon X \to Y$,  as

\begin{align*}
&\Dis X
 =
\set{\varphi\colon X \to [0,1]}{\sum_{x \in X} \varphi(x) = 1,\, \supp(\varphi) \text{~is~finite}}\\
&\Dis f(\varphi)(y)
 =
\sum\limits_{x\in f^{-1}(y)} \varphi(x).
\end{align*}

The support set of a distribution $\varphi \in \Dis X$ is
$\supp(\varphi) = \setin{x}{X}{\varphi(x) \neq 0}$.
The unit of $\Dis$ is given by a Dirac
distribution $\eta(x) = \delta_x = ( x \mapsto 1)$ for $x \in X$ and
the multiplication by $\mu(\Phi)(x) = \sum_{\varphi \in \supp(\Phi)}
\Phi(\varphi)\cdot \varphi(x)$ for $\Phi \in \Dis\Dis X$.
\noindent We sometimes write $\sum_{i \in I} p_i x_i$ for a distribution $\varphi$ with $\supp(\varphi) = \{x_i \mid i\in I\}$ and $\varphi(x_i) = p_i$.

\mypar{Termination} The termination monad, also called lift and denoted by $\cdot + 1$ maps a set $X$ to the set $X + 1$, where $+$ denotes the coproduct in $\Sets$, which amounts to disjoint union, and $1 = \{\star\}$. For a coproduct $A+B$ we write $\inl\colon A \to A+B $ and $\inr\colon B \to A + B$ for the left and right coproduct injections, respectively. This monad maps a function $f\colon X \to Y$ to the function $f+1 \colon X + 1 \to Y+1$ defined, as expected, by $(f+1)(\inl(x)) = \inl(f(x))$ for $x \in X$ and $(f+1)(\inr(\star)) = \inr(\star)$. The unit of the termination monad is given by the left injection, $\eta\colon X \to X+1$ with $\eta(x) = \inl(x)$ and the multiplication by $\mu(\inl\after\inl(x)) = \inl(x)$ for $x \in X$, $\mu(\inl\after\inr(\star)) = \inr(\star)$, and $\mu(\inr(\star)) = \inr(\star)$. If clear from the context, we may omit explicit mentioning of the injections, and write for example $(f+1)(x) = x$ for $x \in X$ and $(f+1)(\star) = \star$.

\subsection{Monad Maps, Quotients and Submonads}\label{sec:quotients}

A monad map from a monad $\T$ to a monad $\hat\T$ is a natural transformation $\sigma\colon \T\Rightarrow\hat\T$ that makes the following diagrams commute, with $\eta, \mu$ and $\hat\eta, \hat\mu$ denoting the unit and multiplication of $\T$ and $\hat\T$, respectively, and
$\sigma\sigma$ denoting $\sigma\hat \T \after \T \sigma = \hat\T \sigma \after \sigma \T$.

\[\xymatrix@R-1pc{
&{X}\ar[dr]_{\hat\eta}\ar[r]^-{\eta} & \T X\ar[d]^{\sigma}
&
{\T\T X}\ar[d]_{\mu}\ar[r]^-{\sigma\sigma} & {\hat\T\hat\T X}\ar[d]^{\hat\mu} &
\\
&& {\hat\T X}
&
{\T X}\ar[r]_-{\sigma} & {\hat\T X}&
}\]

If $\sigma\colon \T \Rightarrow \hat\T$ is an epi monad map, then $\hat\T$ is a \emph{quotient} of $\T$. If it is a mono, then $\T$ is a \emph{submonad} of $\hat\T$. If it is an iso, the two monads are isomorphic.

\subsection{Distributive Laws}\label{subsec:distr-laws}

Let $(\T, \eta, \mu)$ be a monad and $F$ a functor. A natural transformation $\lambda\colon \T F \Rightarrow F\T$ is a \emph{functor distributive law} of the monad $\T$ over the functor $F$ if it commutes appropriately with the unit and the multiplication of $\T$, i.e.,

\[\xymatrix@R-1pc{
&{F X}\ar[dr]_{F\eta}\ar[r]^-{\eta F} & \T F X\ar[d]^{\lambda}
&
{\T\T F X}\ar[d]_{\T \lambda}\ar[rr]^-{\mu F} && {\T F X}\ar[d]^{\lambda} &
\\
&& {F \T X}
&
{\T F\T X}\ar[r]_-{\lambda \T} & {F\T \T X} \ar[r]_-{F \mu} & {F\T X}
}\]

Let $(\T, \eta, \mu)$ and $(\hat\T, \hat\eta, \hat\mu)$ be two monads. A \emph{monad distributive law} of $\T$ over $\hat\T$ is a natural transformation $\lambda\colon \T\hat\T \Rightarrow \hat\T\T$ that commutes appropriately with the units and the multiplications of the monads, i.e.,

\[\xymatrix@R-1pc{
&{\T X}\ar[dr]_{\hat\eta\T}\ar[r]^-{\T\hat\eta} & \T\hat\T X\ar[d]^{\lambda} & {\hat\T X}\ar[l]_{\eta\hat\T} \ar[dl]^{\hat\T\eta}
&
{\T\hat\T\hat\T X}\ar[d]_{\lambda\hat\T}\ar[rr]^-{\T\hat\mu} && {\T\hat\T X}\ar[d]^{\lambda} &
\\
&& {\hat\T \T X} &
&
{\hat\T\T\hat\T X}\ar[r]_-{\hat\T\lambda} & {\hat\T\hat\T \T X} \ar[r]_-{\hat\mu\T} & {\hat\T\T X}
}\]
and
\[\xymatrix@R-1pc{
{\T\T\hat\T X}\ar[d]_{\T\lambda}\ar[rr]^-{\mu\hat\T} && {\T\hat\T X}\ar[d]^{\lambda} &
\\
{\T\hat\T\T X}\ar[r]_-{\lambda\T} & {\hat\T\T \T X} \ar[r]_-{\hat\T\mu} & {\hat\T\T X}
}\]

Given a monad distributive law $\lambda\colon \T\hat\T \Rightarrow \hat\T\T$, we get a composite monad $\bar\T = \hat\T\T$ with unit $\bar\eta = \hat\eta\eta$ and multiplication $\bar\mu = \hat\mu\mu\after\hat\T\lambda\T$.

For any monad $\T$ on $\Sets$, there exists a distributive law $\iota\colon \T+1 \Rightarrow \T(\cdot+1)$ defined as
\begin{equation}\label{eq:iota}
\iota_X = \big( \xymatrix{{\T X+1}  \ar[rrr]^{[\T \inl, \eta_{X+1}\after \inr]}&&&{\T(X+1)}} \big).
\end{equation}
As a consequence, $\T(\cdot+1)$ is a monad.
Moreover, we get the following useful property.
\begin{lem}\label{lem:monad-map-termination}
	Whenever $\sigma\colon \T \Rightarrow \hat \T$ is a monad map, also $\sigma(\cdot + 1) \colon \T(\cdot + 1) \Rightarrow \hat \T(\cdot + 1)$ is a monad map. Injectivity of $\sigma$ implies injectivity of $\sigma(\cdot + 1)$. \qed%
\end{lem}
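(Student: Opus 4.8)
The plan is to treat $\T(\cdot+1)$ and $\hat\T(\cdot+1)$ as the composite monads induced by the distributive law $\iota$ of the termination monad $\cdot+1$ over $\T$, respectively by $\hat\iota$ over $\hat\T$, as just recalled, and to show that whiskering $\sigma$ with the functor $\cdot+1$ gives a map between these composite monads. The components of $\sigma(\cdot+1)$ are $\sigma_{X+1}\colon\T(X+1)\to\hat\T(X+1)$, so that it is a natural transformation is immediate from the naturality of $\sigma$; what remains is to check compatibility with the composite units and multiplications.

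The key step is a single compatibility square relating $\sigma$ to the two distributive laws, namely
$$\sigma_{X+1}\after\iota_X \;=\; \hat\iota_X\after(\sigma_X+1)\colon\ \T X+1\longrightarrow\hat\T(X+1).$$
I would prove this by precomposing with the two injections of $\T X+1$. On the $\inl$ summand, by the definition of $\iota$ and $\hat\iota$ in~\eqref{eq:iota}, both sides collapse to a comparison of $\sigma_{X+1}\after\T\inl$ with $\hat\T\inl\after\sigma_X$, which coincide by naturality of $\sigma$ at $\inl\colon X\to X+1$. On the $\inr$ summand, both sides reduce to $\hat\eta_{X+1}\after\inr$, using on the left the unit law $\sigma\after\eta=\hat\eta$. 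This square is exactly the statement that $\sigma$ is a morphism of distributive laws, with the termination monad held fixed.

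Granting this square, the two monad-map conditions follow by diagram chasing. For the unit, the composite units are $\bar\eta_X=\eta_{X+1}\after\inl$ and $\bar{\hat\eta}_X=\hat\eta_{X+1}\after\inl$, so $\sigma_{X+1}\after\bar\eta_X=\bar{\hat\eta}_X$ reduces directly to the unit law of $\sigma$. For the multiplication, I would expand $\bar\mu$ as the composite of $\T\iota$, the multiplication $\mu$ of $\T$, and the multiplication of the termination monad, and then transport $\sigma$ across these factors one at a time: across $\T\iota$ by the compatibility square above, across $\mu$ by the multiplication law $\sigma\after\mu=\hat\mu\after\sigma\sigma$, and across the termination multiplication by plain naturality of $\sigma$ (this multiplication is the same function on both sides and is independent of $\T$). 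The delicate part, and the step I expect to require the most care, is matching the whiskered composite $(\sigma(\cdot+1))(\sigma(\cdot+1))$ with the shape of $\bar\mu$; once the compatibility square is in place, however, the chase is routine.

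Finally, injectivity is immediate: since each component $\sigma_{X+1}$ of $\sigma(\cdot+1)$ is among the components of $\sigma$, injectivity of all $\sigma_Y$ at once yields injectivity of all components of $\sigma(\cdot+1)$.
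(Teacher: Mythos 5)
Your proposal is correct and follows essentially the same route as the paper: the paper isolates your compatibility square $\sigma_{X+1}\after\iota_X=\iota_X\after(\sigma_X+\id_1)$ as Lemma~\ref{lem:map-iotas} (proved, exactly as you do, by precomposing with the two coproduct injections and using naturality plus the unit law of $\sigma$), and packages your unit/multiplication diagram chase as the general Lemma~\ref{lem:map-distr-laws} on maps of distributive laws, with the same injectivity observation. No gaps.
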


Lemma~\ref{lem:monad-map-termination} follows directly from Lemma~\ref{lem:map-distr-laws} and Lemma~\ref{lem:map-iotas} below. Before we state and prove these, let us consider another example of a distributive law in the other direction.

\begin{exa}\label{Example-D+1}
	We have that $\Dis + 1$ is a monad, since $\beta\colon \Dis(\cdot + 1) \Rightarrow \Dis + 1$ given by
	\begin{equation*}
	\beta(\varphi) = \left\{\begin{array}{ll}	 \varphi_X & * \not\in\supp(\varphi) \wedge \forall x \in X. \,\varphi_X(x) = \varphi(\inl(x))\\
* & * \in \supp(\varphi) \end{array}\right.
\end{equation*}
 is a monad distributive law.
This $\beta$ corresponds to termination as a ``black-hole'' of~\cite{SW2018}.
\end{exa}

\begin{lem}\label{lem:map-distr-laws}
	Given three monads $\T$, $\hat \T$, and $T$, two monad distributive laws $\lambda\colon T\T \Rightarrow \T T$ and $\hat\lambda\colon T\hat\T \Rightarrow \hat\T T$, ensuring that $\T T$ and $\hat\T T$ are monads, and a monad map $\sigma\colon \T \Rightarrow \hat\T$. If the following diagram commutes, in which case we say that $\sigma$ is a map of distributive laws,
		\[
\xymatrix{
T\T \ar[r]^{\lambda} \ar[d]_{T\sigma} & \T T \ar[d]^{\sigma T}\\
T \hat\T \ar[r]_{\hat\lambda} & \hat\T T
}
\]
	then $\sigma T \colon \T T \Rightarrow\hat\T T$ is a monad map. If $\sigma$ is injective, then $\sigma T$ is as well.
\end{lem}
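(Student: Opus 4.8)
The plan is to verify the three defining properties of a monad map for $\sigma T$: that it is a natural transformation, that it commutes with the units, and that it commutes with the multiplications, and then to treat injectivity separately. Naturality is immediate, since $\sigma T$ is the whiskering of the natural transformation $\sigma$ by the functor $T$, so its component at $X$ is $\sigma_{TX}\colon \T T X \to \hat\T T X$. Recall that the composite monad $\T T$ induced by $\lambda$ has unit $\eta^\T\eta^T$ and multiplication $\mu^\T\mu^T \after \T\lambda T$, while $\hat\T T$ induced by $\hat\lambda$ has unit $\hat\eta\,\eta^T$ and multiplication $\hat\mu\mu^T\after\hat\T\hat\lambda T$. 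The unit law for $\sigma T$ then follows at once by precomposing the unit law $\sigma\after\eta^\T = \hat\eta$ of $\sigma$ with $\eta^T$. Likewise injectivity is trivial: the component $(\sigma T)_X = \sigma_{TX}$ is injective whenever every component of $\sigma$ is, so $\sigma T$ is injective as soon as $\sigma$ is.

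The substance of the proof is the multiplication law, i.e.
\[
\sigma T \after \big(\mu^\T\mu^T\after\T\lambda T\big) \;=\; \big(\hat\mu\mu^T\after\hat\T\hat\lambda T\big)\after (\sigma T)(\sigma T).
\]
I would carry this out as a diagram chase. First I strip off the two outer multiplications: applying the multiplication law $\sigma\after\mu^\T = \hat\mu\after\sigma\sigma$ of $\sigma$ at the object $TX$ rewrites the leading $\sigma T\after\mu^\T$ on the left as $\hat\mu\after\sigma\sigma$, and since both sides then have $\hat\mu$ as their outermost factor it suffices to prove the reduced identity of arrows into $\hat\T\hat\T T$ obtained by cancelling $\hat\mu$. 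From there I push the two remaining copies of $\sigma$ inwards past $\mu^T$, $\lambda$ and $\hat\lambda$. Commuting $\sigma$ past the inner copies of $\mu^T$ and past $\hat\lambda$ uses only naturality of $\sigma$ together with functoriality of $\T$ and $\hat\T$; the single genuinely new ingredient is the map-of-distributive-laws square $\sigma T\after\lambda = \hat\lambda\after T\sigma$, which is invoked exactly once, to commute $\sigma$ past $\lambda$ and thereby replace it by $\hat\lambda$. After these rewrites the two sides coincide factor for factor, establishing the equality.

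The only real obstacle is bookkeeping: one must keep careful track of which functor each occurrence of $\sigma$ is whiskered by, and the chase closes up only if the map-of-distributive-laws square is applied at the correct component (namely $TX$) and at the correct stage, after the inner $\mu^T$ has already been commuted through. Everything else is a mechanical application of the naturality of $\sigma$ and the monad-map laws for $\sigma$, so conceptually the lemma is precisely the statement that a map of distributive laws induces a map of the associated composite monads, and the final sentence about injectivity requires no further work.
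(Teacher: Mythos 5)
Your proposal is correct and follows essentially the same route as the paper's proof: the unit law and injectivity are dispatched exactly as in the paper, and your diagram chase for the multiplication law decomposes into the same three steps the paper stacks as diagrams — the monad-map multiplication law for $\sigma$ at $TX$, naturality of $\sigma$ past the inner $T$-multiplication, and naturality plus the map-of-distributive-laws square to trade $\lambda$ for $\hat\lambda$. The only cosmetic caveat is that ``cancelling $\hat\mu$'' should be read as the (valid) sufficiency direction — proving the arrows into $\hat\T\hat\T T$ agree before postcomposing with $\hat\mu$ — which is exactly what the paper's stacked diagrams do.
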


\begin{proof}
	We denote by $\eta, \mu$ the unit and multiplication of $\T$, by $\hat\eta, \hat\mu$ those of $\hat\T$ and by $\eta^T, \mu^T$ those of $T$.
	Note that $\sigma T_X = \sigma_{TX}$ and hence, using that $\sigma$ is a monad map, we get immediately $\sigma T_X\after\eta_{TX} \after\eta^T_X = \hat\eta_{TX}\after\eta^T_X$.

	The following diagram commutes since $\sigma$ is a monad map.
\begin{equation}\label{eq:diag-floor1}\xymatrix@R-.5pc{
{\T\T TX}\ar[d]_{\mu T}\ar[r]^{\sigma {\T T}} &\hat\T\T TX \ar[r]^{\hat \T\sigma {T}} & {\hat\T\hat\T TX}\ar[d]^{\hat\mu {T}}
& &
\\
{\T TX}\ar[rr]^{\sigma {T}} && {\hat\T T X}
& &
}
\end{equation}
From the naturality of $\sigma$, the following diagram also commutes.
	\begin{equation}\label{eq:diag-floor2}\xymatrix@R-.5pc{
{\T\T TTX}\ar[d]_{\T\T\mu^T}\ar[r]^{\sigma {\T TT}} &{\hat\T\T TTX}\ar[d]^{\hat\T\T\mu^T} \ar[r]^{\hat\T\sigma {TT}} & \hat\T\hat\T TTX \ar[d]^{\hat\T\hat\T\mu^T}
\\
{\T\T TX}\ar[r]^{\sigma {\T T}} & {\hat\T\T TX} \ar[r]^{\hat\T\sigma {T}}
&  \hat\T\hat\T TX
}\end{equation}
Using once again the naturality of $\sigma$, for the left square, and the assumption that $\sigma$ is a map of distributive laws, for the square on the right, we get the commutativity of the following diagram.
\begin{equation}\label{eq:diag-floor3}
\xymatrix@R-.5pc{
{\T T \T TX}\ar[d]_{\T\lambda {T}}\ar[r]^{\sigma {T\T T}} &{\hat\T T \T TX}\ar[d]^{\hat\T\lambda {T}} \ar[r]^{\hat\T T\sigma {T}} & \hat\T
T\hat\T TX \ar[d]^{\hat\T\hat\lambda{T}}
\\
{\T\T TTX}\ar[r]^{\sigma {\T TT}} &{\hat\T\T TTX}\ar[r]^{\hat\T\sigma T {T}} & \hat\T\hat\T TTX
}
\end{equation}
Stacking diagram~(\ref{eq:diag-floor3}) on top of diagram~(\ref{eq:diag-floor2}) and further on top of diagram~(\ref{eq:diag-floor1}) gives the commutativity of
\[\xymatrix@R-.5pc{
{\T T\T TX}\ar[d]_{\mu^{\T T}}\ar[r]^-{\sigma T\sigma T} & {\hat\T T \hat\T T X}\ar[d]^{\mu^{\hat\T T}}
& &
\\
{\T TX}\ar[r]_-{\sigma T} & {\hat\T TX}
& &
}\]
with $\mu^{\T T}$ and $\mu^{\hat\T T}$ the multiplications of the monads $\T T$ and $\hat\T T$, respectively.
This completes the proof that $\sigma T$ is a monad map. Clearly, if all components of $\sigma$ are injective, then all components of $\sigma T$ (which are the components of $\sigma$ at $T X$) are injective as well.
\end{proof}

\begin{lem}\label{lem:map-iotas}
Let $\T$ and $\hat\T$ be two monads and $\sigma \colon \T \Rightarrow \hat\T$ be a monad map. Then the following commutes.
\[
\xymatrix{
\T X+1 \ar[r]^{\iota_X} \ar[d]_{\sigma_X+id_1} & \T(X+1) \ar[d]^{\sigma_{X+1}}\\
\hat\T X+1 \ar[r]_{\iota_X} & \hat\T(X+1)
}
\]
\end{lem}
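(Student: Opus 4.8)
The plan is to prove the square commutes by using the universal property of the coproduct in the domain $\T X + 1$: two morphisms out of a coproduct coincide precisely when they agree after precomposition with each injection. So I would reduce the single equation $\sigma_{X+1}\after\iota_X = \iota_X\after(\sigma_X + \idmap_1)$ — where the bottom $\iota_X$ denotes the distributive law $[\hat\T\inl, \hat\eta_{X+1}\after\inr]$ associated with $\hat\T$, and $\hat\eta$ is the unit of $\hat\T$ — to two separate checks, one along $\inl\colon \T X \to \T X + 1$ and one along $\inr\colon 1 \to \T X + 1$. The only bookkeeping to keep straight is that the top and bottom arrows are two \emph{different} instances of $\iota$, one built from the structure of $\T$ and one from that of $\hat\T$.

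For the left summand I would use the defining copairing clauses $\iota_X\after\inl = \T\inl$ (top) and $\iota_X\after\inl = \hat\T\inl$ (bottom), together with $(\sigma_X+\idmap_1)\after\inl = \inl\after\sigma_X$. The equation to verify then collapses to $\sigma_{X+1}\after\T\inl = \hat\T\inl\after\sigma_X$, which is exactly the naturality square of $\sigma$ instantiated at the morphism $\inl\colon X \to X+1$.

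For the right summand I would use $\iota_X\after\inr = \eta_{X+1}\after\inr$ (top) and $\iota_X\after\inr = \hat\eta_{X+1}\after\inr$ (bottom), together with $(\sigma_X+\idmap_1)\after\inr = \inr$. The equation to verify then reduces to $\sigma_{X+1}\after\eta_{X+1} = \hat\eta_{X+1}$, which is precisely the unit-preservation condition from the definition of a monad map, taken at the object $X+1$. Combining the two cases yields commutativity of the whole square. There is no genuine obstacle here: the content is entirely the coproduct splitting, after which one summand is discharged by naturality of $\sigma$ and the other by the unit law of the monad map, so the only care needed is to apply the correct defining clause of each of the two $\iota$'s on each injection.
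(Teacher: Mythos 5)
Your proof is correct and follows essentially the same route as the paper's: split along the coproduct injections, discharge the left summand by naturality of $\sigma$ at $\inl\colon X \to X+1$, and the right summand by the unit-preservation law $\sigma\after\eta=\hat\eta$ of the monad map. Nothing is missing.
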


\begin{proof} First observe that the following commutes: the left square commutes trivially; the right commutes since $\sigma$ is a monad map.
\[
\xymatrix{
1 \ar[r]^{\inr} \ar[d]_{id_1} & X+1 \ar[r]^{\eta_{X+1}} \ar[d]|{id_{X+1}}& \T(X+1) \ar[d]^{\sigma_{X+1}}\\
1 \ar[r]^{\inr} & X+1 \ar[r]_{\hat\eta_{X+1}} & \hat\T(X+1)
}
\]
The following diagram commutes by naturality of $\sigma$.
\[
\xymatrix{
\T X \ar[r]^{\T \inl} \ar[d]_{\sigma_X}& \T(X+1) \ar[d]^{\sigma_{X+1}}\\
\hat\T X \ar[r]_{\hat\T \inl} & \hat\T(X+1)
}
\]
The statement of the lemma follows from the commutativity of the two above diagrams and the universal property of the coproduct.
\end{proof}

\subsection{Algebraic  Theories}\label{sec:alg-th}

With a monad $\T$ one associates the  Eilenberg-Moore category
$\EM(\T)$ of $\T$-algebras. Objects of
$\EM(\T)$ are pairs $\Alg A = (A, a)$ of a set $A \in \Sets$ and a map
$a\colon \T A \rightarrow A$,
making the first two
diagrams below commute.

\[\xymatrix@R-1pc@C-1pc{
A\ar@{=}[dr]\ar[r]^-{\eta} & \T A\ar[d]^{a}
&
\T^{2}A\ar[d]_{\mu}\ar[r]^-{\T a} & \T A\ar[d]^{a}
&
\T A\ar[d]_{a}\ar[r]^-{\T h} & \T B\ar[d]^{b} \\
& A
&
\T A\ar[r]_-{a} & A
 &
A\ar[r]_-{h} & B
}\]

\noindent A homomorphism from an algebra $\Alg A = (A, a)$ to an algebra $\Alg B = (B, b)$ is a map $h\colon
A\rightarrow B$ between the underlying sets making the third diagram above commute. Hence the second diagram tells us that $a$ is a homomorphism from $(MA,\mu)$ to $(A,a)$. It is direct to check that $(MA,\mu)$ is an Eilenberg-Moore algebra. Moreover, $(MA,\mu)$ the free Eilenberg-Moore algebra generated by $A$.

In this paper we care for both categorical algebra, algebras of a monad, and their presentations in terms of algebraic theories and their models.  An algebraic theory is a pair $(\Sigma, E)$ of signature $\Sigma$ (a set of operation symbols) and a set of equations $E$ (a set of pairs of terms). A $(\Sigma,E)$-algebra, or a model of the algebraic theory $(\Sigma, E)$ is an algebra $\mathbb A = (A, \Sigma_A)$ with carrier set $A$ and a set of operations $\Sigma_A$, one for each operation symbol in $\Sigma$, that satisfies the equations in $E$. A homomorphism from a $(\Sigma,E)$-algebra $\mathbb A = (A, \Sigma_A)$ to a $(\Sigma,E)$-algebra $\mathbb{B} = (B, \Sigma_B)$ is a function $h\colon A \to B$ that commutes with the operations, i.e., $h \after f_A = f_B \after h^n$ for all $n$-ary $f \in \Sigma$, and $f_A, f_B$ its interpretations in $\mathbb{A}, \mathbb{B}$, respectively.
$(\Sigma,E)$-algebras together with their homomorphisms form a category.

Every algebraic theory gives rise to a monad, defined by the free $\dashv$ forgetful adjunction, see e.g.~\cite[V.6]{MacLane71}.
We now explicitly recall the construction.
In this construction, free algebras play an important role.
Recall that $\mathbb F_X = (FX, \Sigma_{FX})$ is the free $(\Sigma,E)$-algebra generated by $X \subseteq FX$ if for any $(\Sigma,E)$-algebra $\mathbb A = (A, \Sigma_A)$, any function $f\colon X \to A$ extends to a unique homomorphism $f^\#\colon \mathbb F_X\to \mathbb A$. Free algebras are unique up to isomorphism.
For any set $X$, the free $(\Sigma,E)$-algebra generated by $X$ is isomorphic to the algebra with carrier the set of $\Sigma$-terms with variable in $X$ modulo $E$-equations. The adjunction give rise to the monad $T_{\Sigma,E}$ of $\Sigma$-terms modulo $E$-equations. We next describe this monad.

Given a signature $\Sigma$, the monad $T_\Sigma$ of $\Sigma$-terms maps a set $X$ to the set of all $\Sigma$-terms with variables in $X$, and $f\colon X \to Y$ to the function that maps a term over $X$ to a term over $Y$ obtained by substitution according to $f$. The unit maps a variable in $X$ to itself, and the multiplication is term composition.
Given an algebraic theory $(\Sigma,E)$, the monad $T_{\Sigma,E}$ of $\Sigma$-terms modulo $E$-equations is defined analogously, using equivalence classes of terms rather than plain terms.
The monad $T_{\Sigma,E}$ is a quotient of $T_\Sigma$. Moreover, for two sets of equations $E_1 \subseteq E_2$, the monad $T_{\Sigma,E_2}$ is a quotient of $T_{\Sigma,E_1}$.

\begin{defi} A \emph{presentation} of a monad $M$ is an algebraic theory $(\Sigma, E)$  such that the monad $T_{\Sigma,E}$ is isomorphic to $M$.
\end{defi}

Clearly, $(\Sigma,E)$ is a presentation of the monad $T_{\Sigma,E}$.

\begin{rem}\label{rem:conc-iso}
A direct, syntactic, way to prove that $(\Sigma,E)$ is a presentation of a monad $M$ is to provide a monad isomorphism from $T_{\Sigma,E}$ to $M$. We give such a direct syntactic proof for our monad of interest in~\cite{CALCO21}. We now briefly discuss other ways to prove that $(\Sigma,E)$ presents $M$, which justify our proof of the presentation in this paper (Theorem~\ref{th:pres-C}).

Let $U\colon \Cat{A} \to \Sets$ be the forgetful functor from the category  $\Cat{A}$ of $(\Sigma,E)$-algebras to $\Sets$
 and let $F_{\Sigma,E}$ be the left adjoint to $U$ mapping a set X to the $(\Sigma,E)$-algebra of $\Sigma$-terms modulo $E$-equations. We have $T_{\Sigma,E} = U F_{\Sigma,E}$.
Now, suppose that $F$ is another left adjoint to $U$. Since adjoints are unique up to isomorphism, there exists an isomorphism $\iota:  F_{\Sigma,E} \Rightarrow F$. It is not difficult to derive from this that $U\iota: U F_{\Sigma,E} \Rightarrow UF$ is an isomorphism from the monad $T_{\Sigma,E} = U F_{\Sigma,E}$ to the monad $UF$.
Hence, in order to prove that $(\Sigma,E)$ presents a monad $M$, it suffices to show that there exists some left adjoint $F$ to the forgetful functor $U$ such that $M=UF$. This is what we use to prove the presentation in this paper, in the proof of Proposition~\ref{prop:presentations} below. The proposition itself identifies sufficient conditions for $M=UF$ to hold, for an adjoint $F$ to the forgetful functor $U$.


One can also work on the level of the categories of algebras.
The main observation is the following: a monad $M$ is isomorphic to a monad $T$ if and only if there is a \emph{concrete}\footnote{We thank Alo\"{\i}s Rosset for reminding us of the necessity of a concrete isomorphism. Previous versions of this paper omitted mentioning that the isomorphism needs to be concrete, although our presentation proofs were not affected by this.} isomorphism $I$ between their categories of Eilenberg-Moore algebras, as in the following diagram:

\[
\begin{tikzcd}
	{\EM(M)} && {\EM(T)} \\
	& {\Sets}
	\arrow["I", from=1-1, to=1-3]
	\arrow["{U^{M}}"', from=1-1, to=2-2]
	\arrow["{U^{T}}", from=1-3, to=2-2]
	&
\end{tikzcd}
\]

In such a case, using uniqueness of adjoints up to isomorphism, we also have an isomorphism $\iota\colon F^T \Rightarrow IF^M$ with $F^M$ and $F^T$ being the left adjoints of the forgetful functors $U^M$ and $U^T$, respectively, and the monad isomorphism is given by
\[M = U^MF^M = U^TIF^M  \stackrel{U^T\iota^{-1}}{\Longrightarrow} U^TF^T = T.\]
In the other direction, given a monad isomorphism $\sigma\colon M \Rightarrow T$, we have that
\[\left(\!\begin{gathered}\xymatrix{TX \ar[dd]^{a^T}\\ \\ X}\end{gathered}\right) \mapsto \left(\!\begin{gathered}\xymatrix{MX \ar[d]^{\sigma_X}\\ TX \ar[d]^{a^T}\\X}\end{gathered}\right)\]
is a concrete isomorphism of the algebras.
Hence, another way to prove that $(\Sigma, E)$ is a presentation of $M$ is to provide a concrete isomorphism between $\EM(T_{\Sigma,E})$ and $\EM(M)$.

Moreover, as a consequence of Beck's theorem (see~\cite[Theorem VI.8.1]{MacLane71}), there is a concrete isomorphism between $(\Sigma,E)$-algebras and $\EM(T_{\Sigma,E})$ given by the comparison functor. Hence we get that $(\Sigma,E)$ is a presentation of $M$ if and only if there is a concrete isomorphism between the category of $(\Sigma,E)$-algebras and $\EM(M)$. This leads us to a third way to prove that $(\Sigma, E)$ is a presentation of $M$: providing a concrete isomorphism between $(\Sigma,E)$-algebras and $\EM(M)$.
This way of providing a presentation was used in~\cite{MV20,MSV21}.

Such concrete isomorphism can be also obtained using again~\cite[Theorem VI.8.1]{MacLane71}. The theorem states that if $M$ is the monad that arises from an adjunction
$F \dashv U$ for the forgetful functor $U\colon \Cat{A} \to \Sets$, then the comparison functor provides a concrete isomorphism from $(\Sigma, E)$-algebras to $\EM(M)$. Hence, our proof of the presentation using Proposition~\ref{prop:presentations} can also be (indirectly) justified this way.

\end{rem}

In the sequel, we present several algebraic theories that give presentations to the monads of interest.

\mypar{Presenting the monad $\nePow$}
Let $\Sigma_N$ be the signature consisting of a binary operation $\cplus$.
Let $E_N$ be the following set of axioms, stating that $\cplus$ is associative, commutative, and idempotent, respectively.
\[\begin{array}{ccc}
(x\cplus y)\cplus z& \stackrel{(A)}{=}& x\cplus(y\cplus z) \\
x\cplus y& \stackrel{(C)}{=}& y\cplus x\\
x\cplus x&\stackrel{(I)}{=}&x
\end{array}\]

The algebraic theory $(\Sigma_N,E_N)$ of \emph{semilattices} provides a presentation for the monad $\nePow$. We refer to this theory as the theory of nondeterminism. To avoid confusion later, it is convenient to fix here the interpretation of $\cplus$ as a join (rather than a meet)  and, thus, to think of the induced order as $x \sqsubseteq y$ iff $x\cplus y=y$.

\mypar{Presenting the monad $\mathcal{D}$}
Let $\Sigma_P$ be the signature consisting of binary operations $\pplus{p} $ for all $p\in (0,1)$.
Let $E_P$ be the following set of axioms.
\[\begin{array}{ccc}
(x\pplus{q} y)\pplus{p} z& \stackrel{(A_p)}{=}& x \pplus{pq}(y\pplus{\frac{p(1-q)}{1-pq}}z)\\
x\pplus{p} y& \stackrel{(C_p)}{=} & y\pplus{1-p}x\\
x\pplus{p} x&\stackrel{(I_p)}{=}&x
\end{array}\]
Here, $(A_p)$, $(C_p)$, and $(I_p)$ are the axioms of parametric associativity, commutativity, and idempotence.
The algebraic theory $(\Sigma_P,E_P)$ of \emph{convex algebras}, see~\cite{swirszcz:1974,semadeni:1973,doberkat:2006,doberkat:2008,jacobs:2010}, provides a presentation for the monad $\mathcal{D}$.

Another presentation of convex algebras is given by the algebraic theory with infinitely many operations denoting arbitrary (and not only binary) convex combinations $(\Sigma_{\hat P}, E_{\hat P})$ where $\Sigma_{\hat P}$ consists of operations $\sum_{i=1}^n p_i(\cdot)_i$ for all $n \in \mathbb N$ and $(p_1, \dots, p_n) \in [0,1]^n$ such that $\sum_{i=1}^n p_i = 1$ and
$E_{\hat P}$ is the set of the following two axioms.
\begin{align*}
	\sum_{i=0}^n p_ix_i & \stackrel{(P)}{=}  x_j  &\text{ if } p_j = 1\\
\sum_{i=0}^n p_i \left(\sum_{j=0}^m q_{i,j} x_j\right) & \stackrel{(BC)}{=}
			\sum_{j=0}^m \left( \sum_{i=0}^n p_i q_{i,j}\right) x_j.
\end{align*}
Here, $(P)$ stands for projection, and $(BC)$ for barycentre.
This allows us to interchangeably use binary convex combinations or arbitrary convex combinations whenever more convenient. Moreover, we can write binary convex combinations $\pplus{p}$ for $p \in [0,1]$ and not just $p \in (0,1)$.

Convex algebras are known under many names: ``convex modules'' in~\cite{pumpluen.roehrl:1995}, ``positive convex structures''
in~\cite{doberkat:2006} (where $X$ is taken to be endowed with the discrete topology), ``sets with a convex structure''
in~\cite{swirszcz:1974}, and barycentric algebras~\cite{stone:1949}.
We refer to the theory of convex algebras as the algebraic theory for probability.

\begin{rem}\label{rem:bin-suff}
	Let $\Alg{X}$ be a $(\Sigma_{\hat P},E_{\hat P})$-algebra. Then (for $p_n \neq 1$ and $\overline{p_n} = 1 - p_n$)
	\begin{equation}\label{eq:bin-suff}
		\sum_{i=1}^{n} p_i x_i = \overline{p_n} \left( \sum_{j=1}^{n-1} \frac{p_j}{\overline{p_n}} x_j\right) + p_n x_n
		.
	\end{equation}
	Hence, an $n$-ary convex combination can be written as a binary convex combination using an $(n-1)$-ary convex
	combination.

One can also see Equation~(\ref{eq:bin-suff}) as a definition ---  the classical definition of Stone~\cite[Definition~1]{stone:1949}.
The following property, whose proof follows by induction along the lines of~\cite[Lemma~1--Lemma~4]{stone:1949}, gives the connection:

Let $X$ be the carrier of a $(\Sigma_P,E_P)$-algebra.
	Define $n$-ary convex operations inductively by the projection axiom and the formula~(\ref{eq:bin-suff}). Then $X$ becomes an algebra in $(\Sigma_{\hat P},E_{\hat P})$.
\end{rem}

\mypar{Presenting $\cdot+1$}
The algebraic theory $(\Sigma_{T}, E_{T})$ for the termination monad consists of a single constant (nullary operation symbol) $\Sigma_{T} = \{\star\}$ and no equations $E_{T} = \emptyset$. This is called the theory of \emph{pointed sets}.

\mypar{Combining Algebraic Theories}
Algebraic theories can be combined  in a number of general ways: by taking their coproduct, their tensor, or by means of distributive laws (see e.g.~\cite{hyland2002combining}). Unfortunately, these abstract constructions do not lead to a presentation for the monad we are interested in. We will thus devote the next section to show a ``hand-made'' presentation for this monad.

We conclude this section with a well known fact that can be easily proved, for instance using the distributive law in~\eqref{eq:iota}: given a presentation $(\Sigma, E)$ for a monad $\T$, the monad $\T(\cdot+1)$ is presented by the theory $(\Sigma',E)$ where $\Sigma'$ is $\Sigma$ together with an extra constant $\star$.
For instance, the  subdistributions monad $\mathcal{D}(\cdot+1)$ is presented by the theory $(\Sigma_P \cup \Sigma_{T}, E_P)$ of \emph{pointed convex algebras}, also known as \emph{positive convex algebras}. The theory $(\Sigma_N\cup \Sigma_{T},E_N)$ of \emph{pointed semilattices} provides instead a presentation for the monad $\nePow(\cdot+1)$. It is interesting to observe that the powerset monad $\Pow$ is presented by adding to $(\Sigma_N\cup \Sigma_{T},E_N)$ the equation

\[\begin{array}{ccc}
x\cplus \star&\stackrel{(B)}{=}&x
\end{array}\]
leading to the theory of \emph{semilattices with bottom}. The theory of \emph{semilattices with top} can be obtained by adding instead the following equation:

\[\begin{array}{ccc}
x\cplus \star&\stackrel{(T)}{=}&\star  .
\end{array}\]
Similar axioms can be added to the theory of pointed convex algebras $(\Sigma_P \cup \Sigma_{T}, E_P)$. The axiom
\[\begin{array}{ccc}
x\pplus{p} \star&\stackrel{(B_p)}{=}&x
\end{array}\]
makes the probabilistic structure collapse, i.e., $x\pplus{p} y=x\pplus{q} y$ holds for any $p,q\in (0,1)$:
\[\begin{array}{ccc}
x\pplus{p} y &\stackrel{(B_p)}{=} &  (x\pplus{q} \star )\pplus{p} y \\
& \stackrel{(A_p)}{=}& x \pplus{pq}(\star\pplus{\frac{p(1-q)}{1-pq}}y)\\
& \stackrel{(B_p)}{=}& x \pplus{pq}y\\
& \stackrel{(B_p)}{=}& x \pplus{pq}(\star\pplus{\frac{q(1-p)}{1-pq}}y)\\
 &\stackrel{(A_p)}{=} &  (x\pplus{p} \star )\pplus{q} y \\
 &\stackrel{(B_p)}{=} & x\pplus{q} y
\end{array}\]

At the monad level, adding the axioms $(B_p)$ can be seen as the quotient of monads $\supp\colon \mathcal{D}(\cdot +1) \Rightarrow \Pow$ mapping each sub-distribution into its support (e.g., $(x\pplus{p}y)\pplus{q} \star$ becomes $x+y$).

On the other hand, the axiom
\[\begin{array}{ccc}
x\pplus{p} \star&\stackrel{(T_p)}{=}&\star
\end{array}\]
quotients the monad $\mathcal{D}(\cdot+1)$ into $\mathcal{D}+1$, recall Example~\ref{Example-D+1}. Intuitively, each term of this theory is either a sum of only variables (a distribution) or an extra element ($\star$). This axiom describes the unique functorial way of adding termination to a convex algebra, the so-called black-hole behaviour of $\star$, cf.~\cite{SW2018}.

\section{Algebraic Theory for Nondeterminism and Probability}\label{sec:C}

In this section we recall the definition of the monad $C$ for probability and nondeterminism, give its presentation via \emph{convex semilattices}, and present examples of $C$-algebras.

\subsection{The monad \texorpdfstring{$C$}{C} of convex subsets of distributions}%
\label{sec:C-new}

The monad $C$ origins in the field of domain theory~\cite{Mislove00,Goubault-Larrecq08a,TixKP09a}, and in the work of Varacca and Winskel~\cite{Varacca03,VaraccaW06}. Jacobs~\cite{Jacobs08} gives a detailed study of (a generalisation of) this monad.

For a set $X$, $CX$ is the set of non-empty, finitely-generated convex subsets of distributions on $X$, i.e.,
\[
CX = \{ S \subseteq \Dis X \mid  S \neq \emptyset, \convex(S) = S,
S \text{ is finitely generated}\}.
\]
Recall that, for a subset $S$ of a convex algebra, $\convex(S)$ is the convex closure of $S$, i.e., the smallest convex set that contains $S$, i.e., \[\convex(S) = \{\sum p_i x_i \mid p_i \in [0,1], \sum p_i = 1, x_i \in S\}.\]
We say that a convex set $S$ is generated by its subset $B$ if $S = \convex(B)$. In such a case we also say that $B$ is a basis for $S$.
A convex set $S$ is finitely generated if it has a finite basis.

For a function $f \colon X \to Y$, $Cf\colon CX \to CY$ is given by
\[Cf(S) = \{\Dis f(d) \mid d \in S\} = \overline{\Dis f}(S).\]

The unit of $C$ is $\eta\colon X \to CX$ given by $\eta(x) = \{ \delta_x\}$.

The multiplication of $C$, $\mu\colon CCX \to CX$ can be expressed in concrete terms as follows~\cite{Jacobs08}. Given $S\in CCX$,
\[\mu (S) = \bigcup_{\Phi \in S} \{ \sum_{U \in \supp(\Phi)} \Phi(U)\cdot d \mid d \in U\}.\]
Hence, $d \in \mu(S)$ if and only if there exists $\Phi \in S$ and for all $U \in \supp(\Phi)$ there exists $d_U \in U$ such that $d = \sum_{U \in\supp(\Phi)} \Phi(U)\cdot d_U$.

\medskip

For later use it is convenient to observe that the assignment $S \mapsto \convex(S)$ gives rise to a natural transformation~\cite{Mio14,BSS17}, that we refer hereafter as
\begin{equation}\label{eq:conv}
\convex \colon \Powne \Dis \Rightarrow C
\end{equation}
The following lemma summarises the relationship between the monads $\Powne$, $\Dis$, and $C$. Note here that $\Powne\Dis$ is not a monad.
\begin{lem}\label{lemma:injmapC}
Consider the natural transformations depicted in the diagram on the left below.
\[\xymatrix{
& C \\
\Dis \ar@{=>}[r]_{\eta^{\Powne}_{\Dis}}& \Powne\Dis \ar@{=>}[u]_{\convex}& \Powne \ar@{=>}[l]^{\Powne \eta^\Dis}
}\]
Then
\begin{enumerate}
\item $\chi^{\nePow} = \convex \after \nePow\eta^{\Dis}$ is an injective monad map.
\item $\chi^{\Dis} = \convex \after \eta^{\nePow}_{\Dis}$ is an injective monad map.
\end{enumerate}
\end{lem}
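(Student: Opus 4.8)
The plan is to unwind both maps into concrete form and then check the three defining conditions of a monad map, reserving the commutation with multiplication as the only real work.

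First I would record the concrete descriptions. Since $\convex$ is convex closure, $\chi^{\Dis}_X(d) = \convex(\{d\}) = \{d\}$ sends a distribution to the singleton convex set it generates, while $\chi^{\nePow}_X(U) = \convex(\{\delta_x \mid x \in U\}) = \{d \in \Dis X \mid \supp(d) \subseteq U\}$ sends a finite nonempty set to the full simplex of distributions supported on it. Injectivity is then immediate: $\{d\} = \{d'\}$ forces $d = d'$, and $x \in U \iff \delta_x \in \chi^{\nePow}_X(U)$ recovers $U$ from its image (equivalently, the Diracs are precisely the extreme points of the simplex).

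For the monad-map conditions, naturality of both composites is inherited from naturality of $\convex$ (cited in~\eqref{eq:conv}) together with the naturality of the whiskered units $\eta^{\nePow}_{\Dis}$ and $\Powne\eta^{\Dis}$. Unit preservation is a one-line check using $\eta^C_X(x) = \{\delta_x\}$: indeed $\chi^{\Dis}_X(\delta_x) = \{\delta_x\}$ and $\chi^{\nePow}_X(\{x\}) = \{d \mid \supp(d)\subseteq\{x\}\} = \{\delta_x\}$. The substance is commutation with multiplication, which I would verify by hand from the concrete formula for $\mu^C$. For $\chi^{\Dis}$ this is light: given $\Phi \in \Dis\Dis X$, the element $\chi^{\Dis}_{CX}(\Dis\chi^{\Dis}_X(\Phi))$ is the singleton set whose one point is $\Dis\chi^{\Dis}_X(\Phi) \in \Dis CX$, supported on the singletons $\{d\}$ for $d \in \supp(\Phi)$; applying $\mu^C$ forces the only available choice in each $\{d\}$ and returns $\{\sum_{d} \Phi(d)\cdot d\} = \{\mu^{\Dis}(\Phi)\} = \chi^{\Dis}_X(\mu^{\Dis}(\Phi))$.

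For $\chi^{\nePow}$ the analogous computation is the crux and is where I expect the only genuine obstacle. Writing $\mathcal U = \{U_1,\dots,U_k\}$ and $S_i = \chi^{\nePow}_X(U_i)$, the composite $\mu^C_X \after \chi^{\nePow}_{CX}\after \Powne\chi^{\nePow}_X$ first produces the simplex of all $\Psi \in \Dis CX$ with $\supp(\Psi) \subseteq \{S_1,\dots,S_k\}$, and then, through $\mu^C$, yields the union over all weight vectors $(\lambda_i)$ and all choices $d_i \in S_i$ of the combinations $\sum_i \lambda_i d_i$. The key fact to isolate is that for convex sets $S_i$ this union equals $\convex(\bigcup_i S_i)$ — any convex combination of points drawn from the $S_i$ regroups, by convexity of each $S_i$, into such a weighted sum — and that $\convex(\bigcup_i S_i) = \{d \mid \supp(d) \subseteq \bigcup_i U_i\} = \chi^{\nePow}_X(\mu^{\nePow}(\mathcal U))$, since the generators $\delta_x$ with $x \in \bigcup_i U_i$ are exactly the Diracs appearing across the $S_i$. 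Establishing this convex-closure-of-a-union identity and matching it cleanly against the unravelled $\mu^C$ is the one step I would treat with care; everything else is bookkeeping. Finally, injectivity of each composite is preserved under the whiskering/composition, but here it follows more simply from the explicit descriptions above.
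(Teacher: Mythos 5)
Your proposal is correct and follows essentially the same route as the paper: the paper's proof consists of exactly your two injectivity observations (singletons are convex, and $\delta_x \in \chi^{\nePow}(U)$ iff $x \in U$) and otherwise just asserts that the monad-map conditions follow "by unfolding the definitions," which is precisely the concrete unfolding of $\mu^C$ you carry out. Your explicit verification of the multiplication square for $\chi^{\nePow}$, via the identity between $\convex(\bigcup_i S_i)$ and the union of Minkowski-weighted choices from the $S_i$, correctly fills in the detail the paper leaves to the reader.
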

\begin{proof}
Here, it is most important to notice that $\chi^\Dis(\varphi) = \{\varphi\}$, for any distribution $\varphi$, as singleton sets are convex, and $\chi^{\nePow}(S) = \Dis S$. Keeping this in mind, it is not difficult to check that the diagrams needed for $\chi^\Dis$ and $\chi^{\nePow}$ to be monad maps indeed commute. For the unit diagrams, we have:
\[\chi^\Dis\after \eta^\Dis(x) = \chi^\Dis(\delta_x) = \{\delta_x\} = \eta^C(x)\]
and
\[\chi^{\nePow}\after \eta^{\nePow}(x) = \chi^{\nePow}(\{x\}) = \{\delta_x\} = \eta^C(x).\]
For the multiplication diagrams, we derive, given $\Phi \in \Dis\Dis X$,
\[\chi^{\Dis}\after \mu^\Dis(\Phi) = \chi^{\Dis}\left(\sum_{\varphi \in \Dis X}\Phi(\varphi)\cdot\varphi\right) = \left\{\sum_{\varphi \in \Dis X}\Phi(\varphi)\cdot\varphi\right\}\]
and
\begin{eqnarray*}
	\mu^C\after C\chi^{\Dis}\after \chi^{\Dis}\Dis(\Phi) & = & \mu^C\after C\chi^{\Dis}(\{\Phi\})\\
		& = & \mu^C\left(\sum_{\varphi \in \Dis X} \Phi(\varphi)\cdot\{\varphi\}\right)\\
		& = & \left\{\sum_{\varphi \in \Dis X}\Phi(\varphi)\cdot\varphi\right\}.
\end{eqnarray*}
Given $\mathcal S = \{S_i \mid i \in I\}$ for $S_i \subseteq X, S_i \neq \emptyset, I \neq \emptyset$, that is  $\mathcal S\in \Powne \Powne (X)$, we have
\[\chi^{\nePow}\after \mu^{\nePow}({\mathcal S}) = \chi^{\nePow}\left(\bigcup_{i \in I} S_i\right) = \Dis\left(\bigcup_{i\in I}S_i\right)\]
and
\begin{eqnarray*}
	\mu^C\after \chi^{\nePow}C\after \nePow\chi^{\nePow}({\mathcal S}) & = & \mu^C\after \chi^{\nePow}C(\{\Dis S_i \mid i\in I\})\\
		& = & \mu^C\left(\Dis\{\Dis S_i \mid i \in I\}\right)\\
		& = & \Dis\left(\bigcup_{i\in I}S_i\right)
\end{eqnarray*}
where the inclusion $\subseteq$ in the last equality is obvious, and the inclusion $\supseteq$ can be derived by simple grouping and normalisation, as follows. Given $d \in \Dis\left(\bigcup_{i\in I}S_i\right)$, i.e.,
$d = \sum_{j \in J} p_jx_j$ where $x_j \in \bigcup_{i\in I} S_i$, let $(J_i|i \in I)$ be a partition of $J$ with the property
$j \in  J_i \Rightarrow x_j \in S_i$, and let $p_i = \sum_{j \in J_i} p_j$. Let, moreover, \[d_i = \sum_{j \in J_i}\frac{p_j}{p_i}\cdot x_j\] if $p_i \neq 0$, and $d_i = \delta_x$ for some $x \in S_i$ otherwise.
Then $d = \sum_{i \in I} p_id_i \in \Dis\{\Dis S_i \mid i \in I\}$.

Injectivity follows directly from $\chi^\Dis(\varphi) = \{\varphi\}$ and $\chi^{\nePow}(S) = \Dis S$, the latter since $\Dis S = \Dis S'$ if and only if $S = S'$.
\end{proof}

\subsection{The presentation of \texorpdfstring{$C$}{C}}%
\label{sec:presentation-C}
We now introduce the algebraic theory $(\Sigma_{NP}, E_{NP})$ of \emph{convex semilattices}, that gives us the presentation of $C$ and thus provides an algebraic theory for nondeterminism and probability.

A convex semilattice $\mathbb A$ is an algebra $\mathbb A = (A, \cplus, \pplus{p})$ with a binary operation $\cplus$ and for each $p \in (0,1)$ a binary operation $\pplus{p}$ satisfying the axioms $(A), (C), (I)$ of a semilattice, the axioms $(A_p), (C_p), (I_p)$ for a convex algebra, and the following distributivity axiom:
\[ (x \cplus y) \pplus{p} z \stackrel{(D)}{=} (x \pplus{p} z) \cplus (y \pplus{p} z)\]
Hence, $(\Sigma_{NP}, E_{NP})$ is given by $\Sigma_{NP} = \Sigma_N \cup \Sigma_P$ and $E_{NP} = E_N \cup E_P \cup \{(D)\}$.

In every convex semilattice there also holds a convexity law, of which we directly present the generalised version in the following lemma.

\begin{lem}\label{lem:gen-conv}
	Let $\mathbb A = (A, \cplus, \pplus{p})$ be a convex semilattice. Then for all $n \in \mathbb N$, all $a_1, \dots, a_n \in A$ and all $p_1, \dots, p_n \in [0,1]$ with $\sum_{i=1}^n p_i = 1$ we have

	\[a_1 \cplus \dots \cplus a_n \cplus \sum_{i=1}^n p_ia_i \stackrel{(C)}{=} a_1 \cplus \dots \cplus a_n.\qed\]
\end{lem}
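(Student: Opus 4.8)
The plan is to reformulate the claim in terms of the semilattice order induced by $\cplus$ and then to show that the convex combination $\sum_{i=1}^n p_i a_i$ sits below the join $a_1 \cplus \dots \cplus a_n$. Following the convention fixed in the text (where $\cplus$ is read as a join), I would put $a \sqsubseteq b$ iff $a \cplus b = b$. With $s := a_1 \cplus \dots \cplus a_n$, and using commutativity of $\cplus$ to reorder, the equation to be proved is $s \cplus \sum_{i=1}^n p_i a_i = s$, which is by definition exactly the statement $\sum_{i=1}^n p_i a_i \sqsubseteq s$. So the whole lemma reduces to one order-theoretic inequality.

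The key auxiliary fact I would establish first is that every convex combination is \emph{monotone} with respect to $\sqsubseteq$. For the binary operation this falls straight out of the distributivity axiom $(D)$: if $a \sqsubseteq b$, i.e.\ $a \cplus b = b$, then $(a \pplus{p} c) \cplus (b \pplus{p} c) \stackrel{(D)}{=} (a \cplus b) \pplus{p} c = b \pplus{p} c$, so $a \pplus{p} c \sqsubseteq b \pplus{p} c$; monotonicity in the second argument follows by the same computation after applying commutativity $(C_p)$. Since, by Remark~\ref{rem:bin-suff}, an $n$-ary convex combination is assembled from binary ones via~(\ref{eq:bin-suff}), this monotonicity lifts argumentwise to arbitrary convex combinations, giving $\sum_{i=1}^n p_i a_i \sqsubseteq \sum_{i=1}^n p_i b_i$ whenever $a_i \sqsubseteq b_i$ for all $i$.

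To finish, I would note that each summand satisfies $a_i \sqsubseteq s$, a basic consequence of associativity, commutativity, and idempotence of $\cplus$ (indeed $a_i \cplus s = s$). Applying monotonicity with $b_i = s$ for every $i$ then yields $\sum_{i=1}^n p_i a_i \sqsubseteq \sum_{i=1}^n p_i s$, and the right-hand side collapses to $s$ by generalised idempotence of the convex algebra, i.e.\ $\sum_{i=1}^n p_i s = s$ whenever $\sum_i p_i = 1$ (derivable from $(I_p)$ through Remark~\ref{rem:bin-suff}). Hence $\sum_{i=1}^n p_i a_i \sqsubseteq s$, which is precisely the desired equation.

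The algebraic core—monotonicity from $(D)$ together with idempotence—is routine, so the only delicate point I anticipate is the bookkeeping when passing from the binary operations of $(\Sigma_{NP},E_{NP})$ to $n$-ary convex combinations, both for monotonicity and for generalised idempotence. I would handle this either by invoking Remark~\ref{rem:bin-suff} to work with $n$-ary operations directly, or, if a fully elementary argument is preferred, by a short induction on $n$ using the decomposition~(\ref{eq:bin-suff}) and the binary monotonicity step.
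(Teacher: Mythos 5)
Your proof is correct, but it follows a genuinely different route from the paper's. The paper argues by induction on $n$ directly at the level of equations: it first derives the identity $a_1 \cplus \dots \cplus a_n = a_1 \cplus \dots \cplus a_n \cplus \bigoplus_i\bigl(a_i \pplus{p_1} \bigoplus_{j\neq i} a_j\bigr)$ from $(I_p)$ and a generalised form of $(D)$, then applies the induction hypothesis inside the term $a_1 \pplus{p_1} \bigoplus_{j\neq 1} a_j$ to make the convex combination $\sum_i p_i a_i$ appear, and finally absorbs it. You instead reformulate the claim order-theoretically as $\sum_i p_i a_i \sqsubseteq a_1 \cplus \dots \cplus a_n$ and isolate as your key lemma the monotonicity of $\pplus{p}$ with respect to the semilattice order, which falls out of $(D)$ in one line; the conclusion then follows from $a_i \sqsubseteq s$, argumentwise monotonicity, and generalised idempotence $\sum_i p_i s = s$. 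Your approach buys a cleaner conceptual structure and a reusable fact (monotonicity of convex combinations in a convex semilattice) at the price of a little extra bookkeeping: the inductions have not disappeared but are pushed into the lift from binary to $n$-ary monotonicity and into generalised idempotence, and one must handle the degenerate weights $p_j \in \{0,1\}$ when isolating a coordinate via Equation~(\ref{eq:bin-suff}) (the projection and barycentre axioms dispose of these). The paper's computation is more opaque but entirely self-contained within $E_{NP}$. Both are valid proofs of the lemma.
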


\begin{proof}
	For $n=1$ the property amounts to idempotence. Assume $n > 1$ and the property holds for $n-1$.

	Below, we will write $(D)$ also for generalised distributivity as in
	\[\bigoplus_{i}a_i \pplus{p} \bigoplus_{j}b_j \stackrel{(D)}{=} \bigoplus_{i,j}(a_i \pplus{p} b_j).\]
	First, we observe that
\begin{equation}\label{eq:gen-conv-aux}
a_1 \cplus \dots \cplus a_n = a_1 \cplus \dots \cplus a_n \cplus \bigoplus_i(a_i \pplus{p_1} \bigoplus_{j\neq i} a_j	)
\end{equation}
	which follows from
	\begin{eqnarray*}
		a_1 \cplus \dots \cplus a_n &\stackrel{(I_p)}{=}& (a_1 \cplus \dots \cplus a_n) \pplus{p_1} (a_1 \cplus \dots \cplus a_n)\\
		&\stackrel{(D)}{=}& \bigoplus_{i,j}(a_i \pplus{p_1} a_j) \\
		&\stackrel{(I_p,D)}{=}& a_1 \cplus \dots \cplus a_n \cplus \bigoplus_i(a_i \pplus{p_1} \bigoplus_{j\neq i} a_j)\\
	\end{eqnarray*}
Recall that we write $\overline{p}$ for $1-p$ if $p \in [0,1]$.
Furthermore, having in mind that $\sum_{i=1}^n p_ia_i  = a_1 \pplus{p_1} (\sum_{i=2}^n \frac{p_i}{\overline{p_1}} a_i)$ we have
	\begin{eqnarray*}
		a_1 \pplus{p_1} (\bigoplus_{j\neq 1} a_j) &\stackrel{IH}{=}& a_1 \pplus{p_1} (\bigoplus_{j\neq 1} a_j \cplus \sum_{j\neq 1} \frac{p_j}{\overline{p_1}}a_j)\\
		&\stackrel{(D)}{=}& (a_1 \pplus{p_1} \bigoplus_{j\neq 1} a_j) \oplus (a_1 \pplus{p_1} \sum_{j\neq 1} \frac{p_j}{\overline{p_1}}a_j)\\
		&=& (a_1 \pplus{p_1} \bigoplus_{j\neq 1} a_j) \oplus \sum_i p_ia_i.
	\end{eqnarray*}
	Using this in the second equality below, we get
	\begin{align*}
		a_1 \cplus \dots \cplus a_n \cplus \sum_{i=1}^n p_ia_i
		&\stackrel{(\ref{eq:gen-conv-aux})}{=} a_1 \cplus \dots \cplus a_n \cplus \bigoplus_i(a_i \pplus{p_1} \bigoplus_{j\neq i} a_j)	\cplus \sum_{i=1}^n p_ia_i\\
		&= a_1\cplus \dots \cplus a_n \cplus \bigoplus_i(a_i \pplus{p_1} \bigoplus_{j\neq i} a_j)\\
		&\stackrel{(\ref{eq:gen-conv-aux})}{=} a_1 \cplus \dots \cplus a_n.  \qedhere
	\end{align*}
\end{proof}

We next formulate a property that provides a way to prove that an algebraic theory is a presentation for a monad, which we will later apply in the proof that $(\Sigma_{NP}, E_{NP})$ is a presentation for $C$.

\begin{prop}\label{prop:presentations} Let $\Cat{A}$ be the category of $(\Sigma,E)$-algebras with signature $\Sigma$ and equations $E$. Let $U\colon \Cat{A} \to \Sets$ be the forgetful functor. In order to prove that $(\Sigma, E)$ is a presentation for a monad $(\T, \eta, \mu)$, it suffices to:
\begin{itemize}
\item[1.] For any set $X$, define $\Sigma$-operations $\Sigma_X$ on $\T X$ and prove that with these operations $(\T X, \Sigma_X)$ is an algebra in $\Cat{A}$. Moreover prove that for any map $f \colon X \to Y$, $\T f$ is an $\Cat{A}$-homomorphism from $(\T X, \Sigma_X)$ to $(\T Y, \Sigma_Y)$.
\item[2.] Prove that $(\T X, \Sigma_X)$ is the free algebra in $\Cat{A}$ generated by $X$, i.e., for any algebra $\Alg A = (A,\Sigma_A)$ in $\Cat{A}$ and any map $f\colon X \to A$, there is a unique homomorphism $f^\#\colon (\T X, \Sigma_X) \to \Alg A$ that extends $f$, i.e., that satisfies $f = Uf^\#\after\eta$.
\item[3.] Prove that $\mu_X = (\id_{\T X})^\#$.
\end{itemize}
\end{prop}

\begin{proof}
Assume that 1.-3.\ hold. We show that $\T$ is presented by $(\Sigma, E)$
, as explained in Remark~\ref{rem:conc-iso}.
We provide a left adjoint to the forgetful functor such that $\T$ is the monad that arises from this adjunction. We start with defining the adjunction explicitly.

Let $F\colon \Sets \to \Cat{A}$ be the functor defined on objects as $F X = (\T X,\Sigma_{X})$. Hence, $UFX = \T X$. On arrows $f\colon X \to Y$, we set $Ff = (\eta \after f)^{\#}$.

Then $F$ is a left adjoint of the forgetful functor $U$, and the adjunction is given by the bijective correspondence $(f \colon X \to U\mathbb A) \mapsto (f^{\#} \colon FX \to \mathbb A)$. This is injective since: $f^{\#} = g^{\#} \Rightarrow f = Uf^{\#}\after \eta = Ug^{\#}\after \eta = g$. It is surjective since: for a homomorphism $h^*\colon FX \to \mathbb A$, we consider $h = Uh^* \after \eta$ and since $h^\#$ is the unique homomorphism such that $h = Uh^\# \after \eta$, we get that $h^\# = h^*$.

Next, we see that $UF f = \T f$ as a consequence of naturality of $\eta$. Namely, we have
that $F f = (\eta\after f)^\#$ is the unique homomorphism with the property $UF f\after \eta = \eta \after f$. Hence, using 1., since $\T f\after \eta = \eta \after f$ by naturality of $\eta$, we get $UF f = \T f$.

Let $(T, \bar\eta, \bar\mu)$ be the monad of this adjunction.
This means that, see e.g.~\cite[VI.1, IV.1]{MacLane71}, $T = UF$, $(\bar\eta)^\# = \id_{FX}$ and $\bar\mu = U\varepsilon F$ where
$\varepsilon_{\mathbb A} = (\id_{U \mathbb A})^\#$ is the counit of the adjunction, and hence $\bar\mu_X = (\id_{\T X})^\#$.
We next show that
$\eta^\# = \id_{FX}$ which implies $\bar\eta = \eta$. All we need to observe is that $U\id_{FX} \after \eta = \id_{UFX} \after \eta = \eta$ and since $\eta^\#$ is the unique homomorphism with $U\eta^\# \after \eta = \eta$ and $\id_{FX}$ is a homomorphism from $FX$ to itself, we get $\eta^\# = \id_{FX}$. Finally, item 3.\ proves that $\bar\mu = \mu$.
\end{proof}

For $p \in [0,1]$ we set $\overline{p} = 1- p$.  Let $X$ be an arbitrary set. We define $\Sigma_{NP}$-operations on $CX$ by
\[S_1 \cplus S_2 = \convex(S_1 \cup S_2)\] and for $p \in (0,1)$
\[S_1 \pplus{p} S_2 = \{\varphi \mid \varphi = p\varphi_1 + \overline{p}\varphi_2 \text{ for some } \varphi_1 \in S_1, \varphi_2 \in S_2\}\]
where $p\varphi_1 + \overline{p}\varphi_2 = \varphi_1 \pplus{p} \varphi_2$ is the binary convex combination of $\varphi_1$ and $\varphi_2$ in $\Dis X$, defined point-wise. Note that $S_1 \pplus{p} S_2$ is the Minkowski sum of two convex sets. If convenient, we may sometimes also write, as usual, $pS_1 + \overline{p}S_2$ for the Minkowski sum $S_1 \pplus{p} S_2$.

The proof of the presentation follows the structure of Proposition~\ref{prop:presentations} via the following three lemmas.

\begin{lem}\label{lem:CX-alg} With the above defined operations
	$(CX, \cplus, \pplus{p})$ is a convex semilattice, for any set $X$. Moreover, for a map $f \colon X \to Y$, the map $Cf\colon CX \to CY$ is a convex semilattice homomorphism from $(CX, \cplus, \pplus{p})$ to $(CY, \cplus, \pplus{p})$. \qed%
\end{lem}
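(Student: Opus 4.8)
The plan is to check first that the two operations actually take values in $CX$, then to verify each axiom of a convex semilattice, and finally to treat the homomorphism claim. For well-definedness I would observe that $\convex(S_1 \cup S_2)$ is convex and, writing $S_i = \convex(B_i)$ for finite bases $B_i$, equals $\convex(B_1 \cup B_2)$, hence is non-empty and finitely generated; similarly the Minkowski sum $S_1 \pplus{p} S_2$ is convex, non-empty, and generated by $\{p b_1 + \overline{p} b_2 \mid b_1 \in B_1,\, b_2 \in B_2\}$, so it lies in $CX$. The semilattice axioms $(A), (C), (I)$ for $\cplus = \convex(\cdot \cup \cdot)$ then reduce to the standard closure identities $\convex(\convex(S_1) \cup S_2) = \convex(S_1 \cup S_2)$ and $\convex(S) = S$ for convex $S$ (the latter giving idempotence). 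For $\pplus{p}$, the axioms $(C_p)$ and $(A_p)$ hold because the Minkowski sum is by definition the set of pointwise convex combinations of representatives, so these identities reduce to the corresponding convex-algebra identities in $\Dis X$ (for $(A_p)$ after the usual reparametrisation of the mixing coefficients, which is exactly the one verifying $(A_p)$ in $\Dis X$). Idempotence $(I_p)$ is the only convex-algebra axiom that genuinely uses convexity of $S$, via $\varphi = p\varphi + \overline{p}\varphi \in S \pplus{p} S \subseteq \convex(S) = S$ together with the reverse inclusion by convexity.

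The crux, and the step I expect to be the main obstacle, is the distributivity axiom $(D)$, namely $(S_1 \cplus S_2) \pplus{p} S_3 = (S_1 \pplus{p} S_3) \cplus (S_2 \pplus{p} S_3)$, which I would prove by two inclusions. The inclusion $\supseteq$ is immediate: since $S_i \subseteq \convex(S_1 \cup S_2)$, each $S_i \pplus{p} S_3$ is contained in the (convex) left-hand side, hence so is their convex hull. For $\subseteq$ I would use the description $\convex(S_1 \cup S_2) = \{q\varphi_1 + \overline{q}\varphi_2 \mid q \in [0,1],\, \varphi_1 \in S_1,\, \varphi_2 \in S_2\}$, valid because $S_1$ and $S_2$ are convex, and then compute that a left-hand element $p(q\varphi_1 + \overline{q}\varphi_2) + \overline{p}\varphi_3$ equals $q(p\varphi_1 + \overline{p}\varphi_3) + \overline{q}(p\varphi_2 + \overline{p}\varphi_3)$, which is a convex combination of an element of $S_1 \pplus{p} S_3$ and an element of $S_2 \pplus{p} S_3$, hence lies in the right-hand side. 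This rebracketing identity is the one genuinely computational point of the whole lemma.

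For the homomorphism claim I would exploit that $\Dis f$ is a convex-algebra homomorphism, so $\overline{\Dis f}$ commutes with convex closure and preserves convex combinations. Concretely, using $\overline{\Dis f}(S_1 \cup S_2) = \overline{\Dis f}(S_1) \cup \overline{\Dis f}(S_2)$ one gets $Cf(\convex(S_1 \cup S_2)) = \convex(\overline{\Dis f}(S_1) \cup \overline{\Dis f}(S_2)) = Cf(S_1) \cplus Cf(S_2)$; and since $\Dis f(p\varphi_1 + \overline{p}\varphi_2) = p\,\Dis f(\varphi_1) + \overline{p}\,\Dis f(\varphi_2)$ with $\Dis f(\varphi_i)$ ranging exactly over $Cf(S_i)$, one gets $Cf(S_1 \pplus{p} S_2) = Cf(S_1) \pplus{p} Cf(S_2)$. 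Thus $Cf$ preserves both operations and is a convex semilattice homomorphism.
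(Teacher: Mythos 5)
Your proof is correct and follows essentially the same route as the paper's: the semilattice axioms via $\convex(\convex(S)\cup T)=\convex(S\cup T)$, the distributivity axiom via the rebracketing identity $p(q\varphi_1+\overline{q}\varphi_2)+\overline{p}\varphi_3 = q(p\varphi_1+\overline{p}\varphi_3)+\overline{q}(p\varphi_2+\overline{p}\varphi_3)$ (which the paper writes as a chain of set equalities rather than two inclusions), and the homomorphism claim via the fact that $\overline{\Dis f}$ commutes with convex closure and preserves Minkowski sums (the paper's Lemmas~\ref{lem:convex-image-is-convex2} and~\ref{lem:convex-image-is-convex1}). The only differences are that you spell out the well-definedness of the operations and the convex-algebra axioms, which the paper omits or delegates to a citation of~\cite{BSS17}.
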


\begin{lem}\label{lem:CX--free-alg}
	The convex semilattice
	$(CX, \cplus, \pplus{p})$ is the free convex semilattice generated by $\eta(X)$. \qed%
\end{lem}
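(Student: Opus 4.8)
The plan is to verify the universal property of freeness directly, by constructing the extension and checking the three requirements: that it extends $f$, that it is a convex semilattice homomorphism, and that it is unique. Fix a convex semilattice $\Alg{A} = (A, \cplus, \pplus{p})$ and a map $f\colon X \to A$. Since the theory of convex algebras presents $\Dis$, the set $\Dis X$ is the free convex algebra on $X$, so $f$ extends uniquely to a convex-algebra homomorphism $\hat{f}\colon \Dis X \to A$ into the convex-algebra reduct of $\Alg{A}$, explicitly $\hat{f}(\sum_x \varphi(x)\, x) = \sum_x \varphi(x) f(x)$. I would then define, for $S \in CX$ with finite basis $\{\varphi_1, \dots, \varphi_n\}$ (so $S = \convex\{\varphi_1, \dots, \varphi_n\}$),
$$f^\#(S) = \hat{f}(\varphi_1) \cplus \dots \cplus \hat{f}(\varphi_n).$$

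The hard part will be well-definedness: the value must not depend on the chosen basis, and this is exactly where Lemma~\ref{lem:gen-conv} does the work. If $\{\psi_1, \dots, \psi_m\}$ is another basis of $S$, then each $\psi_j$ lies in $\convex\{\varphi_1, \dots, \varphi_n\}$, so $\hat{f}(\psi_j)$ is a convex combination of $\hat{f}(\varphi_1), \dots, \hat{f}(\varphi_n)$ because $\hat{f}$ preserves convex combinations. By Lemma~\ref{lem:gen-conv}, adjoining such a convex combination to the join $\bigoplus_i \hat{f}(\varphi_i)$ leaves it unchanged; adjoining all the $\hat{f}(\psi_j)$ one at a time yields $\bigoplus_i \hat{f}(\varphi_i) = \bigoplus_i \hat{f}(\varphi_i) \cplus \bigoplus_j \hat{f}(\psi_j)$. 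The symmetric argument gives $\bigoplus_j \hat{f}(\psi_j) = \bigoplus_j \hat{f}(\psi_j) \cplus \bigoplus_i \hat{f}(\varphi_i)$, and commutativity and associativity of $\cplus$ then force the two joins to coincide.

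With well-definedness in hand, the remaining checks are routine. Extension of $f$ is immediate, since $\eta(x) = \{\dirac{x}\}$ has basis $\{\dirac{x}\}$, whence $f^\#(\eta(x)) = \hat{f}(\dirac{x}) = f(x)$. For the homomorphism property I would exploit convenient bases. A basis of $S_1 \cplus S_2 = \convex(S_1 \cup S_2)$ is the union of bases of $S_1$ and $S_2$, giving $f^\#(S_1 \cplus S_2) = f^\#(S_1) \cplus f^\#(S_2)$ at once. For the probabilistic operations, $\{\varphi_i \pplus{p} \psi_j\}_{i,j}$ is a basis of the Minkowski sum $S_1 \pplus{p} S_2$, as every $p\varphi + \overline{p}\psi$ with $\varphi \in S_1, \psi \in S_2$ is a convex combination of these elements; using that $\hat{f}$ preserves $\pplus{p}$ together with the generalised distributivity law $(D)$, we obtain
$$f^\#(S_1 \pplus{p} S_2) = \bigoplus_{i,j}\bigl(\hat{f}(\varphi_i) \pplus{p} \hat{f}(\psi_j)\bigr) = \Bigl(\bigoplus_i \hat{f}(\varphi_i)\Bigr) \pplus{p} \Bigl(\bigoplus_j \hat{f}(\psi_j)\Bigr) = f^\#(S_1) \pplus{p} f^\#(S_2).$$

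Finally, uniqueness follows because every $S \in CX$ is generated from $\eta(X)$ by the operations of the theory: each basis distribution $\varphi_i$ is a convex combination of Diracs, so the singleton $\{\varphi_i\}$ is built from the generators $\{\dirac{x}\} = \eta(x)$ using $\pplus{p}$, and $S = \{\varphi_1\} \cplus \dots \cplus \{\varphi_n\}$. Hence any homomorphism agreeing with $f$ on $\eta(X)$ is forced to equal $f^\#$. Together these points establish the universal property, showing that $(CX, \cplus, \pplus{p})$ is the free convex semilattice generated by $\eta(X)$.
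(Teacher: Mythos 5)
Your proof is correct and follows essentially the same route as the paper's: build $f^\#$ from the unique convex-algebra extension $\hat f$ on $\Dis X$ by taking the join over a finite basis, establish well-definedness via the generalised convexity law (Lemma~\ref{lem:gen-conv}), verify the homomorphism conditions using the union basis for $\cplus$ and the Minkowski basis $\{\varphi_i \pplus{p} \psi_j\}$ together with axiom $(D)$ for $\pplus{p}$, and derive uniqueness from the fact that every element of $CX$ is generated from $\eta(X)$ by the operations. No substantive differences from the paper's argument.
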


\begin{lem}\label{lem:CX--mu-pres}
	The multiplication $\mu$ of the monad $C$ satisfies $\mu = (\id_{CX})^\#$. \qed%
\end{lem}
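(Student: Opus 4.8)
The plan is to invoke the uniqueness clause of the universal property established in Lemma~\ref{lem:CX--free-alg}. Applying that lemma to the set $C X$ in place of $X$, the convex semilattice $(C(C X),\cplus,\pplus{p})$ is free on the generators $\eta_{C X}(C X)$, so there is a \emph{unique} convex semilattice homomorphism $(\id_{C X})^{\#}\colon C(C X)\to C X$ satisfying $(\id_{C X})^{\#}\after\eta_{C X}=\id_{C X}$. Since $\mu_X$ is also a map $C(C X)\to C X$, it therefore suffices to verify two things: (i) that $\mu_X$ preserves $\cplus$ and each $\pplus{p}$, i.e.\ is a convex semilattice homomorphism, and (ii) that $\mu_X\after\eta_{C X}=\id_{C X}$. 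Once both hold, uniqueness forces $\mu_X=(\id_{C X})^{\#}$.

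Property (ii) is just the left unit law of the monad $C$, and so comes for free from $C$ being a monad; concretely, for $U\in C X$ one has $\eta_{C X}(U)=\{\delta_U\}$, and the explicit formula for $\mu$ gives $\mu(\{\delta_U\})=\{1\cdot d\mid d\in U\}=U$. The computational core is property (i). For the semilattice operation I would exploit that, for a fixed choice of representatives, the multiplication formula is \emph{affine} in its distribution argument. Writing an element $\Psi\in S_1\cplus S_2=\convex(S_1\cup S_2)$ as a convex combination $\Psi=\sum_k\lambda_k\Theta_k$ with $\Theta_k\in S_1\cup S_2$, and fixing $d_W\in W$ for each $W\in\supp\Psi$, one gets $\sum_W\Psi(W)\,d_W=\sum_k\lambda_k\bigl(\sum_{W\in\supp\Theta_k}\Theta_k(W)\,d_W\bigr)$, which exhibits an arbitrary element of $\mu(S_1\cplus S_2)$ as a convex combination of elements of $\mu(S_1)$ and $\mu(S_2)$. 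The reverse inclusion follows from monotonicity of the formula in its set argument (since $S_1,S_2\subseteq S_1\cplus S_2$) together with convexity of $\mu(S_1\cplus S_2)\in C X$. Hence $\mu(S_1\cplus S_2)=\convex(\mu(S_1)\cup\mu(S_2))=\mu(S_1)\cplus\mu(S_2)$.

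For the probabilistic operations the argument is more delicate, and this is the step I expect to be the main obstacle, since it is precisely here that convexity of the elements of $C X$ is needed in an essential way. Given $S_1\pplus{p}S_2=\{p\Phi_1+\overline p\Phi_2\mid\Phi_1\in S_1,\Phi_2\in S_2\}$, an element of $\mu(S_1\pplus{p}S_2)$ takes the form $\sum_W\bigl(p\Phi_1(W)+\overline p\Phi_2(W)\bigr)\,d_W$ with $d_W\in W$, whereas a general element of $\mu(S_1)\pplus{p}\mu(S_2)$ takes the form $p\sum_U\Phi_1(U)\,d^1_U+\overline p\sum_V\Phi_2(V)\,d^2_V$ with possibly distinct representatives $d^1_U,d^2_V$. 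On indices lying in only one of the two supports the two forms already coincide; on a shared index $W\in\supp\Phi_1\cap\supp\Phi_2$ I would rewrite $p\Phi_1(W)\,d^1_W+\overline p\Phi_2(W)\,d^2_W=\bigl(p\Phi_1(W)+\overline p\Phi_2(W)\bigr)\,d_W$, where $d_W$ is the convex combination of $d^1_W$ and $d^2_W$ with the evident weights, and crucially $d_W\in W$ because $W$ is convex.

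Reading this last computation in both directions gives $\mu(S_1\pplus{p}S_2)=\mu(S_1)\pplus{p}\mu(S_2)$. Together with the semilattice case this establishes (i), so $\mu_X$ is a convex semilattice homomorphism extending $\id_{C X}$, and by uniqueness $\mu_X=(\id_{C X})^{\#}$, as required.
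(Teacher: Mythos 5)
Your proof is correct, but it takes a genuinely different route from the paper's. The paper first proves a closed formula for the extension of an arbitrary map $f\colon X\to CY$ (Lemma~\ref{lem:f-sharp}): $f^\#(S)=\bigcup_{\Phi\in S}\sum_{u\in\supp\Phi}\Phi(u)\cdot f(u)$. Lemma~\ref{lem:CX--mu-pres} is then a one-line corollary, since instantiating $f=\id_{CX}$ reproduces verbatim the concrete description of $\mu$ recalled in Section~\ref{sec:C-new}. You instead use the uniqueness half of the universal property from Lemma~\ref{lem:CX--free-alg}: you verify that $\mu_X$ is itself a convex semilattice homomorphism restricting to the identity on generators, whence $\mu_X=(\id_{CX})^\#$. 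Your verification is sound: the unit law handles the restriction to generators; for $\cplus$ the forward inclusion correctly expands a convex combination $\Psi=\sum_k\lambda_k\Theta_k$ and regroups, and the reverse inclusion correctly combines monotonicity of the union formula with convexity of $\mu(S_1\cplus S_2)$; and for $\pplus{p}$ the decisive step --- merging the two representatives $d^1_W,d^2_W$ on a shared support element $W$ into a single $d_W\in W$ using convexity of $W$ --- is exactly where the convexity built into $CX$ is essential, as you note. As for what each approach buys: the paper's detour through Lemma~\ref{lem:f-sharp} makes the present lemma immediate and reuses the closed form of $f^\#$ elsewhere (e.g.\ in the proof of Proposition~\ref{prop:correspondence2}), at the price of proving the more general statement first; your argument needs only the freeness statement and no closed form of $f^\#$, but it front-loads the convexity bookkeeping into a direct check that $\mu$ preserves the operations --- computations morally parallel to those the paper hides inside Lemmas~\ref{lem:convex-image-is-convex2}, \ref{lem:union-is-convex} and \ref{lem:f-sharp}.
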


All detailed proofs are in Section~\ref{app:proofs-for-C}, where we also list some additional helpful properties.
Now all ingredients are in place and we get the presentation for $C$ directly from Proposition~\ref{prop:presentations}.

\begin{thm}\label{th:pres-C}
	The theory for nondeterminism and probability $(\Sigma_{NP},E_{NP})$, i.e., the theory of convex semilattices, is a presentation for the monad $C$. \qed%
\end{thm}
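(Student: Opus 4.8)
The plan is to apply Proposition~\ref{prop:presentations} directly, instantiating the variety $\mathcal{V}$ to be the variety of convex semilattices, i.e., $(\Sigma_{NP}, E_{NP})$-algebras, and the monad to be $C$. Since that proposition reduces the task of establishing a presentation to verifying exactly three conditions, and since Lemmas~\ref{lem:CX-alg}, \ref{lem:CX--free-alg}, and \ref{lem:CX--mu-pres} have been stated precisely to discharge those three conditions in order, the proof is essentially a matter of invoking them.

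Concretely, I would proceed as follows. First, condition~1 of Proposition~\ref{prop:presentations} asks that for each set $X$ the operations $\cplus$ and $\pplus{p}$ defined above make $(CX, \cplus, \pplus{p})$ an algebra in $\mathcal{V}$, and that $Cf$ is a homomorphism for every $f\colon X \to Y$; this is exactly the content of Lemma~\ref{lem:CX-alg}. Second, condition~2 requires that $(CX, \cplus, \pplus{p})$ is the free convex semilattice on the basis $\eta(X)$, which is precisely Lemma~\ref{lem:CX--free-alg}. Third, condition~3 demands that the multiplication $\mu$ of $C$ coincides with $(\id_{CX})^\#$, the unique homomorphic extension of the identity, which is the statement of Lemma~\ref{lem:CX--mu-pres}. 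Having all three conditions verified, Proposition~\ref{prop:presentations} yields that $(\Sigma_{NP}, E_{NP})$ is a presentation for $C$, as claimed.

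In this architecture, the genuine mathematical work has already been factored out into the three lemmas, so the proof of the theorem itself is a one-line appeal to the proposition. The main obstacle, therefore, does not lie in assembling these pieces but in the freeness statement of Lemma~\ref{lem:CX--free-alg}: establishing the universal property requires showing that for any convex semilattice $\Alg{A}$ and any map $f\colon X \to A$ there is a \emph{unique} homomorphism $f^\#$ extending $f$, which in turn relies on the generalised convexity law of Lemma~\ref{lem:gen-conv} and on the fact that every finitely generated convex set of distributions is determined by, and can be canonically decomposed via, its finite basis. Since the excerpt defers those detailed verifications to the proof section, at the level of the theorem I would simply cite Proposition~\ref{prop:presentations} together with the three lemmas and conclude.
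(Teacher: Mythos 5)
Your proposal is correct and matches the paper's own argument exactly: the paper likewise derives Theorem~\ref{th:pres-C} as an immediate application of Proposition~\ref{prop:presentations}, with Lemmas~\ref{lem:CX-alg}, \ref{lem:CX--free-alg}, and \ref{lem:CX--mu-pres} discharging the three conditions in order. You also correctly locate the real mathematical content in the freeness statement of Lemma~\ref{lem:CX--free-alg} (well-definedness of $f^\#$ via the generalised convexity law of Lemma~\ref{lem:gen-conv}), which is where the paper's detailed work lies as well.
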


\begin{rem}\label{rem:Varacca-Winskel}
Theorem~\ref{th:pres-C} is to some extent known\footnote{Personal communication with Gordon Plotkin.} but we could not find a proof of it in the literature. In~\cite{Varacca03,VaraccaW06} a monad for probability and nondeterminism is given starting from a similar algebraic theory (with somewhat different basic algebraic structure).
The observation that the distributive law $(D)$ gives rise to convex subsets was known at least since~\cite{MOW03}. Indeed, all the axioms in $E_{NP}$ already appear in~\cite{MOW03} that additionally contains the axiom $(B)$ and $(T_p)$ for dealing with termination. Nevertheless, in~\cite{MOW03} these axioms are not related to any particular monad.
There is also another possible way of combining probability with nondeterminism, by distributing $\cplus$ over $\pplus{p}$ (see e.g.~\cite{DBLP:journals/corr/KeimelP16,DBLP:conf/ictac/DahlqvistP018}).
\end{rem}

\begin{rem}\label{rem:syntax}
Having the presentation enables us to identify and interchangeably use convex subsets of distributions and terms in $\Sigma_{NP}$ modulo equations in $E_{NP}$. This is particularly useful in examples and our further developments. Note that in the syntactic view $\eta(x)$ is identified with the term $x$.
\end{rem}

The presentation is a valuable tool in many situations where reasoning with algebraic theories is more convenient than reasoning with monads. For instance, it is much easier to check whether a certain algebra is a $(\Sigma_{NP},E_{NP})$-model, than to check that it is an algebra for the monad $C$. We illustrate this with three $(\Sigma_{NP},E_{NP})$ models that play a key role in our further results and exposition.

\mypar{The max convex semilattice} $\mathbb{M}\rm{ax} = ([0,1],\max,\pplus{p})$ is  a $(\Sigma_{NP},E_{NP})$-algebra when taking $\cplus$ to be $\max\colon [0,1] \times [0,1] \to [0,1]$ and $\pplus{p}$ the standard convex combination
 $\pplus{p} \colon [0,1]\times [0,1] \to [0,1]$ with $x \pplus{p} y =p\cdot x +\overline{p} \cdot y$ for $x,y\in [0,1]$. To check that this is a  $(\Sigma_{NP},E_{NP})$ model, it is enough to prove that $\max$ satisfies the axioms in $E_N$, that $\pplus{p}$ satisfies the axioms in $E_P$, and that they satisfy the axiom $(D)$, namely that $\max(x,y)\pplus{p} z= \max(x\pplus{p} z, y\pplus{p} z)$.

\mypar{The min convex semilattice} $\mathbb{M}\rm{in} = ([0,1],\min,\pplus{p})$
 is obtained similarly by taking $\cplus$ to be $\min\colon [0,1] \times [0,1] \to [0,1]$ rather than $\max$, and gives another example of a $(\Sigma_{NP},E_{NP})$-algebra. It is indeed very simple to check that $([0,1],\min)$ forms a semilattice and that the distributivity law holds.

\mypar{The min-max interval convex semilattice}
We consider the algebraic structure $\mathbb{M}_{\intervals} = (\intervals,\minmax,\psuminterval)$ for $\intervals$ the set of intervals on $[0,1]$, i.e., \[\intervals = \{[x,y] \,|\, x,y\in [0,1] \text{ and } x\leq y\}.\] For $[x_1,y_1], [x_2,y_2]\in \intervals$, we define $\minmax \colon \intervals \times \intervals \to \intervals$  as \[\minmax( [x_1,y_1] , [x_2,y_2]) = [\min(x_1,x_2), \max(y_1,y_2)]\] and $\psuminterval \colon \intervals \times \intervals \to \intervals$ by \[[x_1,y_1] \psuminterval [x_2,y_2]= [ x_1 \pplus{p} x_2, \;  y_1\pplus{p}y_2]\text{.}\] The fact that this is a model for $(\Sigma_{NP},E_{NP})$ follows easily from the fact that $\mathbb{M}\rm{ax}$ and $\mathbb{M}\rm{in}$ are models for $(\Sigma_{NP},E_{NP})$.

\begin{rem}\label{rem:others}
The fact that $\maxalg$ and $\minalg$ are $C$-algebras on $[0,1]$ was already proven in~\cite{HHOS2018}, without an algebraic presentation. Having the algebraic presentation significantly simplifies the proofs.
\end{rem}

\subsection{Auxiliary Lemmas for the Proof of the Presentation of \texorpdfstring{$C$}{C}}\label{app:proofs-for-C}

Before we proceed with the proof of the presentation, we recall several properties that are known or immediate to check, but very helpful in our further proofs.

\begin{lem}\label{lem:convex-image-is-convex2}
Let $\mathbb A$ and $\mathbb B$ be two convex algebras, and $f\colon \mathbb A \to \mathbb B$ a convex homomorphism. Then for all $X \in \Pow_u A$, for $\Pow_u$ being the unrestricted (not necessarily finite) powerset, $\convex_{\mathbb B} \overline{f}(X) = \overline{f}(\convex_{\mathbb A} X)$.
In particular, if $X$ is convex then also $\overline{f}(X)$ is convex.
\end{lem}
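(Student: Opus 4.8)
The plan is to prove the set equality $\convex_{\mathbb B}\overline f(X) = \overline f(\convex_{\mathbb A}X)$ by double inclusion, and to observe that both inclusions rest on a single fact: that a convex homomorphism $f$ preserves \emph{all finite} convex combinations, i.e.
\begin{equation*}
f\Big(\sum_i p_i x_i\Big) = \sum_i p_i f(x_i)
\end{equation*}
for $x_i \in A$, $p_i \in [0,1]$, $\sum_i p_i = 1$. So the first step I would take is to establish this preservation property. By definition $f$ commutes with the binary operations $\pplus{p}$; to extend this to arbitrary $n$-ary convex combinations I would argue by induction on $n$, using Equation~(\ref{eq:bin-suff}) of Remark~\ref{rem:bin-suff} to rewrite an $n$-ary combination as a binary combination of an $(n-1)$-ary one, with the base cases handled by the projection axiom (for a weight equal to $1$) and idempotence.

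With that property in hand, both inclusions are immediate. For $(\subseteq)$, an element of $\convex_{\mathbb B}\overline f(X)$ has the form $\sum_i p_i b_i$ with each $b_i = f(x_i)$ for some $x_i \in X$; by preservation this equals $f(\sum_i p_i x_i)$, and since $\sum_i p_i x_i \in \convex_{\mathbb A}X$ it lies in $\overline f(\convex_{\mathbb A}X)$. For $(\supseteq)$, an element of $\overline f(\convex_{\mathbb A}X)$ has the form $f(c)$ with $c = \sum_i p_i x_i$, $x_i \in X$; again by preservation $f(c) = \sum_i p_i f(x_i)$ is a convex combination of elements of $\overline f(X)$, hence lies in $\convex_{\mathbb B}\overline f(X)$. (The degenerate case $X = \emptyset$ is trivial, as then both sides are empty.) The ``in particular'' clause is then a one-line corollary: if $X$ is convex then $\convex_{\mathbb A}X = X$, so $\overline f(X) = \overline f(\convex_{\mathbb A}X) = \convex_{\mathbb B}\overline f(X)$, which says exactly that $\overline f(X)$ equals its own convex closure and is therefore convex.

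I do not expect a genuine obstacle here; the only point requiring care is the preservation property itself, since a convex homomorphism is \emph{a priori} only required to respect the binary operations $\pplus{p}$, whereas $\convex$ is defined through arbitrary finite convex combinations. Once the inductive reduction to binary combinations is made explicit via Remark~\ref{rem:bin-suff}, the rest of the argument is purely a matter of unfolding the definition of $\convex$ and of the direct image $\overline f$.
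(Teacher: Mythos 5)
Your proof is correct and follows essentially the same double-inclusion argument as the paper, which pushes $f$ through convex combinations in each direction using the homomorphism property. If anything you are more careful than the paper's own proof, which only writes out the binary case $pf(x)+\overline{p}f(y)$ and leaves the extension to arbitrary finite convex combinations (your induction via Equation~(\ref{eq:bin-suff})) implicit.
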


\begin{proof}
	For $\subseteq$, for an arbitrary $p f(x) + \overline{p} f(y) \in \convex_{\mathbb B} \overline{f}(X)$ with $x, y \in X$, we have
	\[p f(x) + \overline{p} f(y) \stackrel{(*)}{=} f(px + \overline{p}y) \in \overline{f}(\convex_{\mathbb A} X)\]
	where the equality marked by $(*)$ holds by the assumption that $f$ is a convex homomorphism.
	For $\supseteq$, consider $f(a) \in \overline{f}(\convex_{\mathbb A}X)$.  Then $a = px + \overline{p}y$ for some $x, y \in X$. Since $f$ is convex, $f(a) = pf(x) + \overline{p} f(y)$ and $f(x), f(y) \in \overline{f}(X)$. Hence $f(a) \in \convex_{\mathbb B}\overline{f}(X)$.
\end{proof}

\begin{lem}\label{lem:convex-image-is-convex1} Let $\mathbb A$ and $\mathbb B$ be two convex algebras, and $f\colon \mathbb A \to \mathbb B$ a convex homomorphism. Then the image map $\overline{f} = \Pow_u f \colon \Pow_u A \to \Pow_u B$, for $\Pow_u$ being the unrestricted (not necessarily finite) powerset, is a convex map, i.e.\ if $S = X \pplus{p} Y$ for $X \in\Pow_u A, Y \in \Pow_u B$, then $\overline{f}(S) =\overline{f}(X) \pplus{p}\overline{f}(Y)$.
\end{lem}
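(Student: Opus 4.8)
The plan is to prove the set equality $\overline{f}(X \pplus{p} Y) = \overline{f}(X) \pplus{p} \overline{f}(Y)$ by a straightforward double inclusion, mirroring the structure of the proof of Lemma~\ref{lem:convex-image-is-convex2}. The single ingredient that makes everything go through is the defining property of a convex homomorphism, namely $f(x \pplus{p} y) = f(x) \pplus{p} f(y)$ for all $x, y \in A$, which I invoke in both directions. First I would unfold the two relevant definitions: the Minkowski sum on subsets, $X \pplus{p} Y = \{x \pplus{p} y \mid x \in X,\, y \in Y\}$ (with $\pplus{p}$ the convex combination of $\mathbb A$), and the image map $\overline{f}(Z) = \{f(z) \mid z \in Z\}$. (I read the statement with $X, Y \in \Pow_u A$, the Minkowski sum being taken in $\mathbb A$.)

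For the inclusion $\subseteq$, I would take an arbitrary element of $\overline{f}(X \pplus{p} Y)$, which by definition has the form $f(w)$ with $w = x \pplus{p} y$ for some $x \in X$, $y \in Y$. Applying the homomorphism property gives $f(w) = f(x) \pplus{p} f(y)$, and since $f(x) \in \overline{f}(X)$ and $f(y) \in \overline{f}(Y)$, this element lies in $\overline{f}(X) \pplus{p} \overline{f}(Y)$. For the reverse inclusion $\supseteq$, I would start from an arbitrary $a \pplus{p} b$ with $a \in \overline{f}(X)$ and $b \in \overline{f}(Y)$, write $a = f(x)$, $b = f(y)$ for suitable $x \in X$, $y \in Y$, and use the same equation read in the opposite direction, $f(x) \pplus{p} f(y) = f(x \pplus{p} y)$, to exhibit the element as $\overline{f}$ applied to $x \pplus{p} y \in X \pplus{p} Y$.

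I do not expect any genuine obstacle here: the statement is essentially the observation that a homomorphism preserves the binary operation pointwise, lifted to subsets, so both inclusions reduce to a single application of the convexity of $f$. The only point worth stating explicitly is that the identity $f(x \pplus{p} y) = f(x) \pplus{p} f(y)$ does all the work, read left-to-right for $\subseteq$ and right-to-left for $\supseteq$.
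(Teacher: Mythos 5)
Your proposal is correct and matches the paper's argument: the paper proves the same identity as a single chain of set equalities (unfold the image, unfold the Minkowski sum, apply $f(px+\overline{p}y)=pf(x)+\overline{p}f(y)$ pointwise, re-fold), which packages your two inclusions into one computation. Your reading of the statement with $X,Y\in\Pow_u A$ is also the intended one.
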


\begin{proof}
	Let $S = X \pplus{p} Y$ for $X \in\Pow_u A, Y \in \Pow_u B$. Then
	\begin{eqnarray*}
		\overline{f}(S) &=& \{f(s) \mid s \in S\}\\
		&=& \{f(px + \overline{p}y) \mid x \in X, y \in Y\}\\
		&\stackrel{(*)}{=}& \{p f(x) + \overline{p}f(y) \mid x \in X, y \in Y\}\\
		&=& p\overline{f}(X) + \overline{p}\overline{f}(Y).
	\end{eqnarray*}
	and here, again, $(*)$ holds since $f$ is convex.
\end{proof}

\begin{lem}\label{lem:union-is-convex}Let $X$ be a set and let $S \in  \Pow_u CX$ be a convex set with respect to Minkowski sum. Then $\bigcup S \in CX$.
\end{lem}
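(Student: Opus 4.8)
The plan is to verify directly the three defining conditions for membership in $CX$: that $\bigcup S$ is non-empty, convex, and finitely generated. Non-emptiness is immediate once we note that $S$ must itself be non-empty for the conclusion to hold: picking any $T \in S$, we have $\emptyset \neq T \subseteq \bigcup S$ because every element of $CX$ is by definition a non-empty subset of $\Dis X$. As the name of the lemma already suggests, the substance lies in convexity, and it is exactly the hypothesis that $S$ is convex with respect to Minkowski sum that yields it.

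For convexity I would argue pointwise, establishing closure under binary convex combinations and then invoking Remark~\ref{rem:bin-suff} to pass to arbitrary convex combinations. Fix $\varphi, \psi \in \bigcup S$ and $p \in [0,1]$; the goal is $\varphi \pplus{p} \psi \in \bigcup S$. By definition of the union there exist $T_1, T_2 \in S$ with $\varphi \in T_1$ and $\psi \in T_2$. Since $S$ is convex with respect to Minkowski sum, $T_1 \pplus{p} T_2 \in S$. By the definition of the Minkowski sum on subsets of $\Dis X$, namely $T_1 \pplus{p} T_2 = \{p\varphi' + \overline{p}\psi' \mid \varphi' \in T_1,\ \psi' \in T_2\}$, the combination $\varphi \pplus{p} \psi = p\varphi + \overline{p}\psi$ belongs to $T_1 \pplus{p} T_2$, hence to $\bigcup S$. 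This short argument is the conceptual heart of the statement: Minkowski-convexity of the family is precisely what is needed to make the union closed under convex combinations.

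The main obstacle I expect is finite generation of $\bigcup S$, which does not follow from convexity alone and forces one to use the finiteness available in the situations where the lemma is applied. Concretely, I would exploit that $S$ is finitely generated as a Minkowski-convex set, say $S = \convex\{T_1, \dots, T_n\}$ with the hull taken in the convex algebra $(CX, \pplus{p})$. Together with the convexity of $\bigcup S$ established above, this lets one prove $\bigcup S = \convex(T_1 \cup \dots \cup T_n)$ by a double inclusion: $\supseteq$ holds since each $T_i \subseteq \bigcup S$ and $\bigcup S$ is convex, while $\subseteq$ holds because any element of a Minkowski combination $\sum_i p_i T_i$ is a convex combination of points chosen from the $T_i$. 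Finally, choosing a finite basis $B_i$ for each of the finitely generated sets $T_i$ gives $\bigcup S = \convex(B_1 \cup \dots \cup B_n)$ with $B_1 \cup \dots \cup B_n$ finite, establishing finite generation. The point worth stressing is that this last step genuinely needs the finiteness: for an arbitrary infinite Minkowski-convex family over an infinite $X$ the union can cease to be finitely generated, so convexity is the easy half and finite generation the half requiring care.
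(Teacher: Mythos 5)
Your convexity argument is exactly the paper's entire proof: the authors take $\Phi \in S_i$, $\Psi \in S_j$, observe $p\Phi + \overline{p}\Psi \in pS_i + \overline{p}S_j$, and use Minkowski-convexity of $S$ to conclude $pS_i + \overline{p}S_j \in S$ — nothing more. Where you diverge is in taking the conclusion $\bigcup S \in CX$ literally and therefore also checking non-emptiness and finite generation; the paper silently omits both. Your diagnosis of the finite-generation clause is sharp and, as far as I can tell, correct: for an arbitrary (possibly infinite) Minkowski-convex $S \in \Pow_u CX$ the union need not be finitely generated (e.g.\ the Minkowski-convex hull of $\{\convex\{\delta_0,\delta_n\} \mid n \in \mathbb{N}\}$ over $X = \mathbb{N}$ has union $\convex\{\delta_n \mid n \in \mathbb{N}\}$), so the lemma as stated is really only proving, and should really only claim, that $\bigcup S$ is a \emph{convex} subset of $\Dis X$. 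That weaker conclusion is also all that is used downstream: in the proof of Lemma~\ref{lem:f-sharp} the lemma is invoked solely to get that $\bigcup \overline{f_{\Dis}^\#}(S)$ is convex, while finite generation of $f^\#(S)$ comes for free from its definition as a finite $\oplus$ of elements of $CY$. Your repair — adding the hypothesis that $S$ is finitely generated as a Minkowski-convex set and deducing $\bigcup S = \convex(B_1 \cup \dots \cup B_n)$ by double inclusion — is sound and would make the statement true as written, but it is extra machinery the paper neither supplies nor needs.
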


\begin{proof}
	Let $S = \{S_i \mid i \in I\}$. Let $\Phi, \Psi \in \bigcup S$. Then there exist $i, j \in I$ with $\Phi \in S_i$ and $\Psi \in S_j$. We have $p\Phi + \overline{p}\Psi \in pS_i + \overline{p}S_j \in S$ as $S$ is convex.
\end{proof}

We can now prove Lemmas~\ref{lem:CX-alg},~\ref{lem:CX--free-alg}, and~\ref{lem:CX--mu-pres} announced in the previous section, from which the presentation of $C$ follows.

\begin{proof}[Proof of Lemma~\ref{lem:CX-alg}]
We want to prove the following:
	$(CX, \cplus, \pplus{p})$ is a convex semilattice, for any set $X$. Moreover, for a map $f \colon X \to Y$, the map $Cf\colon CX \to CY$ is a convex semilattice homomorphism from $(CX, \cplus, \pplus{p})$ to $(CY, \cplus, \pplus{p})$.

	In any convex algebra $\mathbb A$ for $S,T \subseteq A$ we have \[\convex(\convex(S) \cup T) = \convex(S \cup T).\]
	As a consequence, using the associativity of union, we get that the axiom $(A)$ holds. For $S_1, S_2, S_3 \in CX$:
	\begin{eqnarray*}
		S_1 \cplus (S_2 \cplus S_3) &=& \convex(S_1 \cup \convex(S_2 \cup S_3))\\
		&=& \convex(S_1 \cup (S_2 \cup S_3))\\
		&=& \convex((S_1 \cup S_2) \cup S_3)\\
		&=& \convex(\convex(S_1 \cup S_2) \cup S_3)\\
		&=& (S_1 \cplus S_2) \cplus S_3.
	\end{eqnarray*}
Commutativity and idempotence hold due to commutativity and idempotence of union.

Defining convex operations on $CX$ using Minkowski sum, see~\cite{BSS17}, leads to a convex algebra, i.e., $(A_p), (C_p), (I_p)$ hold.

The axiom $(D)$ holds as:
\begin{align*}
	(S_1 \cplus S_2) \pplus{p} S_3 \\
	& =  p\convex(S_1 \cup S_2) + \overline{p} S_3  \\
	& =  \{pq d_1 + p\overline{q} d_2 + \overline{p} d_3 \mid q \in [0,1], d_i \in S_i\}\\
	& =  \convex((pS_1 + \overline{p} S_3) \cup (pS_2 + \overline{p} S_3)).
\end{align*}

Finally, $Cf$ is a homomorphism from $(CX, \cplus, \pplus{p})$ to $(CY, \cplus, \pplus{p})$ as
\begin{eqnarray*}
	Cf(S_1 \cplus S_2) &=& \overline{\Dis f}(S_1 \cplus S_2)\\
	&\stackrel{(a)}{=}& \convex(\overline{\Dis f}(S_1 \cup S_2))\\
	&=& \convex(\overline{\Dis f}(S_1) \cup \overline{\Dis f}(S_2))\\
	& = & \overline{\Dis f}(S_1) \cplus \overline{\Dis f}(S_2)\\
	& = & Cf(S_1) \cplus Cf(S_2)
\end{eqnarray*}
where the equality marked by $(a)$ holds by Lemma~\ref{lem:convex-image-is-convex2}. Similarly
\begin{eqnarray*}
	Cf(S_1 \pplus{p} S_2) &=& \overline{\Dis f}(S_1 \pplus{p} S_2)\\
	&\stackrel{(b)}{=}& \overline{\Dis f}(S_1) \pplus{p} \overline{\Dis f}(S_2)\\
	& = & Cf(S_1) \pplus{p} Cf(S_2).
\end{eqnarray*}
where the equality marked by $(b)$ holds by Lemma~\ref{lem:convex-image-is-convex1}.
\end{proof}

\begin{proof}[Proof of Lemma~\ref{lem:CX--free-alg}]
We want to prove the following:
	The convex semilattice
	$(CX, \cplus, \pplus{p})$ is the free convex semilattice generated by $\eta(X)$.

We need to show that for any map $f\colon X \to A$ for a convex semilattice $\mathbb A = (A, \cplus, +_p)$, there is a unique convex semilattice homomorphism $f^\#\colon (CX,\cplus,+_p) \to \mathbb A$ such that $Uf^\# \after \eta = f$. So, let $\mathbb A = (A, \cplus, +_p)$ be a convex semilattice, and let $f\colon X \to A$ be a map. We use the same notation for the operations in $A$ and in $CX$ for simplicity.

	Note that, since any convex semilattice is a convex algebra, there is a unique convex homomorphism $f_{\Dis}^\#\colon \Dis X \to (A,+_p)$, as $\Dis X$ is the free convex algebra generated by $\eta_{\Dis}(X)$. Hence, $Uf_{\Dis}^\# \after \eta_\Dis = f$.

	Now, given a convex set $S = \convex\{d_1, \dots, d_n\} \in CX$ we put
	\[f^\#(S) = f_{\Dis}^\#(d_1) \cplus f_{\Dis}^\#(d_2) \cplus \dots \cplus f_{\Dis}^\#(d_n).\]

We first prove that $f^\#$ is well defined, which is the most important step. We show that whenever
\begin{equation}\label{eq:well-def-as}
	\convex\{d_1, \dots, d_n\} = \convex\{e_1, \dots, e_m\}
\end{equation}
then
\[\f_{\Dis}^\#(d_1) \cplus \dots \cplus f_{\Dis}^\#(d_n) = f_{\Dis}^\#(e_1) \cplus \dots \cplus f_{\Dis}^\#(e_m).\]
Clearly, if Equation~(\ref{eq:well-def-as}) holds, then for all $i \in \{1, \dots,n\}$, $d_i \in \convex\{e_1, \dots, e_m\}$ and for all $j \in \{1, \dots,m\}$, $e_j \in \convex\{d_1, \dots, d_n\}$.
Hence,
\[
\convex\{d_1, \dots, d_n, e_1, \dots, e_m\} = \convex\{d_1, \dots, d_n\}
= \convex\{e_1, \dots, e_n\}.
\]
If we can prove that whenever $e \in \convex\{d_1, \dots, d_n\}$ then
\[f_{\Dis}^\#(d_1) \cplus \dots \cplus f_{\Dis}^\#(d_n) \cplus f_{\Dis}^\#(e) = f_{\Dis}^\#(d_1) \cplus \dots \cplus f_{\Dis}^\#(d_n),\] we would be done with well defined-ness as then
\begin{eqnarray*}
&&	f_{\Dis}^\#(d_1) \cplus \dots \cplus f_{\Dis}^\#(d_n)\\
&&  = \quad f_{\Dis}^\#(d_1) \cplus \dots \cplus f_{\Dis}^\#(d_n) \cplus f_{\Dis}^\#(e_1) \cplus \dots \cplus f_{\Dis}^\#(e_m)\\
&&  = \quad f_{\Dis}^\#(e_1) \cplus \dots \cplus f_{\Dis}^\#(e_m).
\end{eqnarray*}
 So, let $e \in \convex\{d_1, \dots, d_n\}$. Then $e = \sum_{i} p_id_i$ and since $f_{\Dis}^\#$ is a convex algebra homomorphism, $f_{\Dis}^\#(e) = \sum_i p_i f_{\Dis}^\#(d_i)$.
 Now, by the convexity law, Lemma~\ref{lem:gen-conv}, we have
 \begin{eqnarray*} f_{\Dis}^\#(d_1) \cplus \dots \cplus f_{\Dis}^\#(d_n) \cplus f_{\Dis}^\#(e) & = &  f_{\Dis}^\#(d_1) \cplus \dots \cplus f_{\Dis}^\#(d_n) \cplus \sum_i p_i f_{\Dis}^\#(d_i)\\
& = & f_{\Dis}^\#(d_1) \cplus \dots \cplus f_{\Dis}^\#(d_n).
\end{eqnarray*}

It remains to show that $f^\#$ is a homomorphism and that it is uniquely extending $f$ on $\eta(X)$. Let $S, T \in CX$. Let $S = \convex\{d_1,\dots,d_n \}$, $T = \convex\{e_1, \dots, e_m\}$.

 Then $S \cplus T = \convex(S \cup T) = \convex\{d_1,\dots,d_n, e_1, \dots, e_m \}$ and we get
 \[
 f^\#(S\cplus T)
 =  f_{\Dis}^\#(d_1) \cplus \dots \cplus f_{\Dis}^\#(d_n) \cplus f_{\Dis}^\#(e_1) \cplus \dots \cplus f_{\Dis}^\#(e_m)
 = f^\#(S) \cplus f^\#(T).
\]

 Next, we first notice that $S \pplus{p} T = \convex\{pd_i + \overline{p} e_j \mid i \in \{1,\dots, n\}, j \in \{1, \dots,m\}\}$.
 For $\supseteq$, we see that \[\sum_{i,j} q_{i,j}(pd_i + \overline{p} e_j) = p\sum_{i,j}q_{i,j}d_i + \overline{p}\sum_{i,j}q_{i,j}e_j \in S \pplus{p} T.\]
 For $\subseteq$, take $pd + \overline{p}e \in S \pplus{p} T$. So, $d = \sum_i q_i d_i$ and $e = \sum_j r_je_j$ and we have
 \begin{eqnarray*}
 	pd + \overline{p}e &=& p\sum_i q_i d_i + \overline{p}\sum_j r_j e_j\\
 	&=& p\sum_i q_i \left(\sum_j r_j\right)d_i  + \overline{p}\sum_j r_j \left(\sum_i q_i\right) e_j \\
 	& = & \sum_{i,j} q_ir_j (pd_i + \overline{p}e_j).
 \end{eqnarray*}

 Now
 \begin{eqnarray*}
 f^\#(S \pplus{p} T)
	&=&  \bigoplus_{i,j} f_{\Dis}^\#(pd_i + \overline{p} e_j)\\
 	&=& \bigoplus_{i,j} pf_{\Dis}^\#(d_i) + \overline{p} f_{\Dis}^\#(e_j)\\
 	&=& \bigoplus_{i,j} f_{\Dis}^\#(d_i) \pplus{p} f_{\Dis}^\#(e_j)\\
 	&\stackrel{(D)}{=}& ( f_{\Dis}^\#(d_1) \cplus \dots \cplus f_{\Dis}^\#(d_n)) \pplus{p} (f_{\Dis}^\#(e_1) \cplus \dots \cplus f_{\Dis}^\#(e_m))\\
 	&=& f^\#(S) \pplus{p} f^\#(T).
 \end{eqnarray*}

\medskip

Finally, assume $f^*\colon (CX,\cplus,+_p) \to \mathbb A$ is another homomorphism that extends $f$ on $\eta(X)$, i.e., such that $Uf^* \after \eta = f$.
 Then $f^*(\{\delta_x\}) = f^\#(\{\delta_x\}) = f(x)$.
 Since both $f^\#$ and $f^*$ are convex homomorphisms, and $\{\sum_i p_i x_i\} = \sum_i p_i \{\delta_{x_i}\}$, we get
\[
 f^*(\{\sum_i p_i x_i\})
 = \sum_i p_i f^*(\{ \delta_{x_i}\})
 =  \sum_i p_i f^\#(\{\delta_{x_i}\})
 = f^\#(\{\sum_i p_i x_i\}).
 \]
 Further on, for $S = \convex\{d_1, \dots, d_n\}$ we have $S = \{d_1\} \cplus \dots \cplus \{d_n\}$ and hence
 $f^*(S) = f^*(\{d_1\}) \cplus \dots \cplus f^*(\{d_n\}) = f^\#(\{d_1\}) \cplus \dots \cplus f^\#(\{d_n\}) = \f^\#(S)$
 shows that $f^* = f^\#$ and completes the proof.
\end{proof}

The final missing property for the presentation, Lemma~\ref{lem:CX--mu-pres}, is an easy consequence of the next property that clarifies the definition of $f^\#$.

\begin{lem}\label{lem:f-sharp}
Let $X$ be a set and $f\colon X \to CY$ a map. Then for all $S$ in $CX$
\[f^\#(S) =  \bigcup \overline{f_{\Dis}^\#}(S) = \bigcup_{\Phi \in S} \sum_{u \in \supp(\Phi)} \Phi(u)\cdot f(u).\]
\end{lem}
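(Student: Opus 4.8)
The plan is to reduce the statement to the explicit description of the free extension $f^{\#}$ from Lemma~\ref{lem:CX--free-alg} together with the two ``convexity'' lemmas already established. Throughout, let $f_{\Dis}^{\#}\colon \Dis X \to CY$ denote the unique convex-algebra homomorphism extending $f$, noting that here the target convex semilattice in which $f^{\#}$ takes values is $(CY, \cplus, \pplus{p})$.

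First I would dispatch the second equality, which is purely a matter of unfolding definitions. By definition $\overline{f_{\Dis}^{\#}}(S) = \{ f_{\Dis}^{\#}(\Phi) \mid \Phi \in S\}$, whence $\bigcup \overline{f_{\Dis}^{\#}}(S) = \bigcup_{\Phi \in S} f_{\Dis}^{\#}(\Phi)$. Since every $\Phi \in \Dis X$ equals $\sum_{u \in \supp\Phi} \Phi(u)\cdot \delta_u$ and $f_{\Dis}^{\#}$ is a convex homomorphism with $f_{\Dis}^{\#}(\delta_u) = f(u)$, we obtain $f_{\Dis}^{\#}(\Phi) = \sum_{u \in \supp\Phi}\Phi(u)\cdot f(u)$, the generalised convex (Minkowski) combination taken in $CY$. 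This yields the second equality directly.

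The real content is the first equality $f^{\#}(S) = \bigcup \overline{f_{\Dis}^{\#}}(S)$. Fixing $S = \convex\{d_1, \dots, d_n\} \in CX$, the definition of $f^{\#}$ in Lemma~\ref{lem:CX--free-alg}, together with the fact that $\cplus$ on $CY$ is $\convex(\cdot \cup \cdot)$, gives $f^{\#}(S) = \convex\big(\bigcup_{i} f_{\Dis}^{\#}(d_i)\big)$. The key preliminary observation is that $\bigcup \overline{f_{\Dis}^{\#}}(S)$ is itself convex: since $S$ is convex and $f_{\Dis}^{\#}$ is a convex homomorphism, Lemma~\ref{lem:convex-image-is-convex2} shows that $\overline{f_{\Dis}^{\#}}(S)$ is a Minkowski-convex subset of $CY$, and then Lemma~\ref{lem:union-is-convex} yields $\bigcup \overline{f_{\Dis}^{\#}}(S) \in CY$, in particular that it is convex.

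With this in hand both inclusions are short. For $\subseteq$: each generator $d_i$ lies in $S$, so $f_{\Dis}^{\#}(d_i) \subseteq \bigcup \overline{f_{\Dis}^{\#}}(S)$, hence $\bigcup_i f_{\Dis}^{\#}(d_i) \subseteq \bigcup \overline{f_{\Dis}^{\#}}(S)$; since the right-hand side is convex, taking convex closures gives $f^{\#}(S) \subseteq \bigcup \overline{f_{\Dis}^{\#}}(S)$. For $\supseteq$: any $\Phi \in S$ can be written $\Phi = \sum_i p_i d_i$, so by convexity of $f_{\Dis}^{\#}$ we have $f_{\Dis}^{\#}(\Phi) = \sum_i p_i\, f_{\Dis}^{\#}(d_i)$ as a Minkowski sum, and every element of this Minkowski sum is a convex combination of elements of $\bigcup_j f_{\Dis}^{\#}(d_j)$, hence lies in $\convex\big(\bigcup_j f_{\Dis}^{\#}(d_j)\big) = f^{\#}(S)$. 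I expect the only step requiring genuine care to be the convexity of $\bigcup \overline{f_{\Dis}^{\#}}(S)$, that is, the correct invocation of Lemmas~\ref{lem:convex-image-is-convex2} and~\ref{lem:union-is-convex}; everything else is routine bookkeeping with Minkowski sums.
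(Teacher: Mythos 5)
Your proof is correct and follows essentially the same route as the paper's: the same two auxiliary lemmas establish convexity of $\bigcup \overline{f_{\Dis}^\#}(S)$, and the two inclusions are argued in the same way. The only (harmless) deviation is in the $\supseteq$ direction, where you compute $f_{\Dis}^\#(\Phi) = \sum_i p_i f_{\Dis}^\#(d_i)$ directly as a Minkowski sum and observe it lands in $\convex\bigl(\bigcup_i f_{\Dis}^\#(d_i)\bigr)$, whereas the paper instead adjoins $\Phi$ to the generating set and appeals to the well-definedness of $f^\#$ established earlier; both arguments are equally valid.
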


\begin{proof}
	The first task is to prove that $f^\#(S) = \bigcup \overline{f_{\Dis}^\#}(S)$.
	Before we proceed, let's recall all the types. We have $f\colon X \to CY$ (and $CY$ is the carrier of a convex semilattice), so $f^\# \colon CX \to CY$. Also, $f_{\Dis}^\# \colon \Dis X \to CY$ and hence $\overline{f_{\Dis}^\#}\colon \Pow_u\Dis X \to \Pow_u CY$ for $\Pow_u$ denoting the unrestricted (and not just finite) powerset. Finally, here $\bigcup\colon \Pow_u\Pow_u\Dis Y \to \Pow_u\Dis Y$. Clearly, $CZ \subseteq \Pow_u\Dis Z$ for any set $Z$.

	Now, since $S$ is convex, by Lemma~\ref{lem:convex-image-is-convex2} also $\overline{f_{\Dis}^\#}(S)$ is convex. Each element of $\overline{f_{\Dis}^\#}(S)$ is of the form $f_{\Dis}^\#(\Phi)$ for $\Phi \in S$ and hence it is in $CY$, i.e., is convex.
	By Lemma~\ref{lem:union-is-convex}, we get that $\bigcup \overline{f_{\Dis}^\#}(S)$ is convex.

	Let $\Psi_1, \dots, \Psi_n \in \Dis X$ be such that $S = \convex\{\Psi_1, \dots, \Psi_n\}$. Clearly, $\Psi_1, \dots, \Psi_n \in S$. Now, we have
	\[\{f_{\Dis}^\#(\Psi_i) \mid i = 1, \dots, n\} \subseteq \{f_{\Dis}^\#(\Phi) \mid\Phi \in S\}\]
	and hence
	\[
	\bigcup\{f_{\Dis}^\#(\Psi_i) \mid i = 1, \dots, n\} \subseteq \bigcup\{f_{\Dis}^\#(\Phi) \mid\Phi \in S\}
	= \bigcup \overline{f_{\Dis}^\#}(S)
	\]
	and since the set on the right hand side is convex, as we noted above,
	\[
	f^\#(S)  = \bigoplus_i f_{\Dis}^\#(\Psi_i) = \convex \bigcup\{f_{\Dis}^\#(\Psi_i) \mid i = 1, \dots, n\}
	\subseteq \bigcup\{f_{\Dis}^\#(\Phi) \mid\Phi \in S\},
	\]
	where the first equality is simply the definition of $f^
\#$.

	For the other inclusion, let $\Phi \in S$. Then $S = \convex\{\Psi_1, \dots, \Psi_n, \Phi\}$ and
	\[f^\#(S) = \convex \bigcup\{f_{\Dis}^\#(\Psi_1), \dots,  f_{\Dis}^\#(\Psi_n), f_{\Dis}^\#(\Phi)\}\]
	by the definition of $f^\#$.
	Therefore, $f_{\Dis}^\#(\Phi) \subseteq f^\#(S)$ and since $\Phi$ was arbitrary,
	\[\bigcup\{f_{\Dis}^\#(\Phi) \mid\Phi \in S\} \subseteq f^\#(S).\]
	This proves the first equality of our statement.
	For the second equality, note that
	\begin{eqnarray*}f^\#(S) & = & \bigcup \{f_{\Dis}^\#(\Phi) \mid \Phi \in S \}\\
& \stackrel{(*)}{=} & 	\bigcup\{ \sum_{u \in \supp(\Phi)} \Phi(u)\cdot f(u) \mid \Phi \in S\}\\
& = & \bigcup_{\Phi \in S} \sum_{u \in \supp(\Phi)} \Phi(u)\cdot f(u)
\end{eqnarray*}
	where the  equality $(*)$ holds as $f^\#_{\Dis}$ is convex.
\end{proof}

\begin{proof}[Proof of Lemma~\ref{lem:CX--mu-pres}] Using Lemma~\ref{lem:f-sharp}, we immediately get
\[(\id_{CX})^\#(S)  =  \bigcup_{\Phi \in S} \sum_{A \in \supp(\Phi)} \Phi(A)\cdot A  =  \mu_X(S).\qedhere
\]
\end{proof}

\section{Adding termination}\label{sec:termination}

So far, we have provided a presentation for the monad $C$ which combines probability and nondeterminism. In order to properly model NPLTS, we need a last ingredient: termination. As discussed in Section~\ref{sec:monad}, termination is given by the monad $\cdot+1$ which can always be safely combined with any monad. Following the discussion at the end of Section~\ref{sec:monad}, the theory $\PCS= (\Sigma_{NP} \cup \Sigma_T, E_{NP})$ presents the monad $C(\cdot +1)$ which is the monad of finitely generated non empty convex sets of \emph{sub}distributions.
We call this theory $\PCS$ since algebras for this theory are \emph{pointed convex semilattices}, namely convex semilattices with a pointed element denoted by $\star$.

Like for the monad $\nePow$, there exist more than one interesting way of combining $C$ with $\cdot+1$. Rather than pointed convex semilattices, one can consider \emph{convex semilattices with bottom}, namely algebras for the theory $\CSB= (\Sigma_{NP}\cup \Sigma_T, E_{NP} \cup \{(B)\})$ obtained by adding $(B)$ to $\PCS$. Otherwise, one can add the axiom $(T)$ and obtain the theory $\CST= (\Sigma_{NP}\cup \Sigma_T, E_{NP} \cup \{(T)\})$ of \emph{convex semilattices with top}. We denote by $\TCSB$ and $\TCST$ the corresponding monads.

As we will illustrate in Section~\ref{sec:GenDet}, particularly relevant for defining trace semantics is the free algebra $\mu \colon\T\T{\{\succes\}} \to \T\{\succes\}$ generated by  a singleton $\{\succes\}$.
In the next three propositions we respectively identify these algebras for the monad $\TPCS$ (that is, $C(\cdot +1)$), the monad $\TCSB$, and the monad $\TCST$ in concrete terms.

\begin{prop}\label{prop:intervalspcs}
$\mathbb M_{\intervals,[0,0]} = (\intervals,\minmax,\psuminterval, [0,0])$ is the free pointed convex semilattice generated by a singleton set $ 1 = \{\succes\}$.
\end{prop}

\begin{proof}
Recall that $\mathbb{M}_{\intervals} = (\intervals,\minmax,\psuminterval)$ is the convex semilattice of  intervals from Section~\ref{sec:C}. Then, by interpreting the pointed element $\star$ as the interval $[0,0]$ we have that $\mathbb M_{\intervals,[0,0]}$ is a pointed convex semilattice.

Let $2=\{\succes, \star\}$.
Note that the carrier of the free pointed semilattice generated by $\{\bullet\}$ is $C(1 + 1) = C(2)$.
Recall that $(C(2), \cplus, \pplus{p})$, where $\cplus$ is the convex union and $\pplus{p}$ is the Minkowski sum, is the free convex semilattice generated by $2$.

We next show that $(C(2), \cplus, \pplus{p})$ is isomorphic to $\mathbb{M}_{\intervals}$. Indeed $\Dis(2)$ is isomorphic to $[0,1]$: the real number $0$ corresponds to $\delta_{\star}$, $1$ to $\delta_{\succes}$ and $p\in (0,1)$ to $\succes \pplus{p} \star$. Furthermore, the non-empty finitely-generated convex subsets of $[0,1]$ are the closed intervals.
To conclude, it suffices to see that $\minmax$ is $\cplus$ on $\intervals$ and $\psuminterval$ is the Minkowski sum.
\end{proof}

\begin{prop}\label{prop:max}
$\maxalg_B = ([0,1],\max,\pplus p,0)$ is the free convex semilattice with bottom generated by $ 1 = \{\succes\}$. \qed%
\end{prop}

\begin{proof}
By Proposition~\ref{prop:intervalspcs},  we know that $C(2)$ is isomorphic to $\intervals$.
We show that $\mathbb M_{\intervals,[0,0]}$ modulo the axiom (B) is isomorphic to $\maxalg_B$. We have
\[\minmax( [x,y] , [0,0]) \stackrel{(B)}{=} [x,y],\]
for $[x,y]\in \intervals$.
From \[[0,y]=\minmax( [x,y] , [0,0]) = [x,y],\]
we derive that $[x_1,y]=[x_{2},y]$ for any $x_{1},x_{2},y$. Hence, we define the isomorphism $[x,y]\mapsto y$ mapping any interval
$[x,y]$ to its upper bound $y$.

The interval $[0,0]$ is mapped to the bottom element $0$, and the operations are such that:
\begin{align*}
\minmax( [x_1,y_1] , [x_{2},y_{2}]) &=\minmax( [0,y_1] , [0,y_{2}])\\
&= [0, \max (y_{1},y_{2})]
\end{align*}
hence $\minmax( [x_1,y_1] , [x_{2},y_{2}]) \mapsto \max (y_{1},y_{2})$ and
\[[x_1,y_1] \pplus{p} [x_{2},y_{2}] =[0,y_1] \pplus{p} [0,y_{2}] = [0, y_{1} \pplus{p} y_{2}] \mapsto y_{1} \pplus{p} y_{2}.\qedhere \]
\end{proof}

\begin{prop}\label{prop:min}
$\minalg_T =([0,1],\min,\pplus p,0)$ is the free convex semilattice with top generated by $1 = \{\succes\}$. \qed%
\end{prop}

\begin{proof}
We show that $\mathbb M_{\intervals,[0,0]}$ modulo the (T) axiom
\[\minmax( [x,y] , [0,0]) \stackrel{(T)}{=} [0,0]\]
is isomorphic to $\minalg_{T}$.
First, we derive $[x,y_1]=[x,y_{2}]$ for any $x,y_{1},y_{2}$ as follows.
For $x=1$ the property trivially holds.
For $x=0$ we have
\begin{equation}\label{eq-star}
[0,y_1]=\minmax( [x,y_1] , [0,0])
 \stackrel{(T)}{=} [0,0]
\stackrel{(T)}{=}  \minmax( [x,y_2] , [0,0]) =  [0,y_{2}]
\end{equation}
Finally, for $x\in (0,1)$ and $y_1, y_2 \geq x$ we derive
\[[x,y_1]= [1,1] \pplus {x} [0,\frac {y_1-x}{1-x}]  \stackrel{(\ref{eq-star})}{=}
[1,1] \pplus {x} [0,\frac {y_2-x}{1-x}] =  [x,y_{2}].\]
Hence, we can now map every interval $[x,y]$ to its lower bound $x$.
Then $[0,0]$ is mapped to the top element $0$,
and
\[\minmax( [x_1,y_1] , [x_{2},y_{2}])
= [\min (x_{1},x_{2}), \max({y_{1}, y_{2}})]
\mapsto \min (x_{1},x_{2})
\]
\[[x_1,y_1] \pplus{p} [x_{2},y_{2}] =  [x_1 \pplus{p} x_{2},y_1 \pplus{p} y_{2}] \mapsto x_{1} \pplus{p} x_{2}. \qedhere \]
\end{proof}

At this point the reader may wonder what happens when one considers the axioms $(B_p)$ and $(T_p)$ in place of $(B)$ and $(T)$. We have already shown at the end of Section~\ref{sec:alg-th}, that the axiom $(B_p)$ makes the probabilistic structure collapse.
 When focussing on the free algebra generated by $\{\succes\}$, also quotienting  by $(T_p)$ is not really interesting: one can show by induction on the terms in $T_{\Sigma_{NP}\cup\Sigma_T}(\{\succes\})$ that every term is equal via $E_{NP}\cup \{(T_p)\}$ to either $\succes$ or $\star$ or $\succes \cplus \star$.

So, we have found three interesting ways of combining termination with probability and nondeterminism. Table~\ref{table:theories} summarises these theories, their monads, and their algebras.

\begin{table*}
\begin{center}
\begin{tabular}{ccccc}
Theory $(\Sigma, E) $& Monad $\T$ &  free algebra $\mu_1\colon \T\T1 \to\T1$ \\
\toprule
$\PCS= (\Sigma_{NP}\cup \Sigma_T, E_{NP})$ & $C(\cdot +1)=\TPCS$ & $\mathbb{M}_{\intervals,[0,0]} = (\intervals,\minmax,\psuminterval, [0,0])$ \\
$\CSB= (\Sigma_{NP}\cup \Sigma_T, E_{NP} \cup \{(B)\})$ & $\TCSB$ & $\maxalg_B = ([0,1],\max,+_p,0)$ \\
$\CST= (\Sigma_{NP}\cup \Sigma_T, E_{NP} \cup \{(T)\})$ & $\TCST$& $\minalg_T = ([0,1],\min,+_p,0)$  \\
\end{tabular}
\end{center}
\caption{The theories of pointed convex semilattices, with bottom, and with top.}\label{table:theories}
\end{table*}

\medskip

This completes our exploration of monads and algebras. In the next section, we will commence investigating coalgebras. But first, we show a useful result that illustrates the relationships among the monads encountered so far, where $\mathcal{SB}$ and $\mathcal{ST}$ are respectively the theory of semilattices with bottom and the theory of semilattices with top.

\begin{lem}\label{lemma:monadmaps}
There exist the following monad maps:
\[\xymatrix{
T_{\mathcal{SB}} \ar@{=>}[d]_{e^B} & \Powne(\cdot +1) \ar@{=>}[l]_{q^B} \ar@{=>}[r]^{q^T}  \ar@{=>}[d]|{\chi^{\Powne}{(\cdot +1)}} & T_{\mathcal{ST}} \ar@{=>}[d]^{e^T}\\
 T_{\mathcal{CSB}} & C(\cdot +1) \ar@{=>}[l]^{q^B} \ar@{=>}[r]_{q^T}  &T_{\mathcal{CST}} \\
 & \Dis(\cdot +1) \ar@{=>}[u]^{\chi^{\Dis}{(\cdot +1)}}
}\]
Moreover: 1.\ the vertical maps are injective; 2.\ the diagonal maps $q^B \circ \chi^{\Dis}{(\cdot +1)}$ and $q^T \circ \chi^{\Dis}{(\cdot +1)}$ are injective; 3.\ the two squares commute.
\end{lem}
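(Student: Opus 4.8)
The plan is to verify the three assertions one at a time, using throughout that every monad in the diagram is presented by a theory built over the operations $\cplus$, $\pplus{p}$, $\star$, so that each map acts ``by the identity on operations''. The horizontal maps $q^B,q^T$ are the quotient monad maps obtained by adjoining the axiom $(B)$, resp.\ $(T)$ (recall from Section~\ref{sec:alg-th} that $T_{\Sigma,E_2}$ is a quotient of $T_{\Sigma,E_1}$ whenever $E_1\subseteq E_2$); the maps $\chi^{\Powne}(\cdot+1)$ and $\chi^{\Dis}(\cdot+1)$ are monad maps by Lemma~\ref{lem:monad-map-termination} applied to the injective maps $\chi^{\Powne}$, $\chi^{\Dis}$ of Lemma~\ref{lemma:injmapC}; and $e^B$ (resp.\ $e^T$) is induced by the inclusion of theories $\SB\hookrightarrow\CSB$ (resp.\ $\ST\hookrightarrow\CST$), equivalently the unique semilattice-with-bottom (resp.\ top) homomorphism $\TSB\Rightarrow\TCSB$ extending the unit, which exists since $\TCSB$ is in particular a semilattice with bottom. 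Injectivity of $\chi^{\Powne}(\cdot+1)$ and $\chi^{\Dis}(\cdot+1)$ is then immediate, as Lemma~\ref{lem:monad-map-termination} preserves injectivity.

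For injectivity of $e^B$ and $e^T$ I would exhibit retractions coming from collapsing probability onto nondeterminism. Equip $\TSB(X)=\Pow X$ with the structure of a convex semilattice with bottom by interpreting $\pplus{p}$ as union: the axioms of $\CSB$ hold because with $\pplus{p}=\cplus$ the law $(D)$ and the laws $E_P$ degenerate to semilattice identities, and $(B)$ holds as $U\cup\emptyset=U$. By freeness of $\TCSB(X)$ there is a unique $\CSB$-homomorphism $r^B\colon\TCSB(X)\to\Pow X$ extending $x\mapsto\{x\}$. The composite $r^B\after e^B$ is then a semilattice-with-bottom endomorphism of $\Pow X$ extending the unit, hence equal to $\id$ by uniqueness, so $e^B$ is a split mono and injective. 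The same argument with union-as-$\pplus{p}$ on $\TST(X)$ and the theory $\CST$ yields a retraction $r^T$ of $e^T$.

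For the diagonal maps (item~2) I would use the free algebras $\maxalg_B$ and $\minalg_T$ of Propositions~\ref{prop:max} and~\ref{prop:min} as separating objects. Writing $d=q^B\after\chi^{\Dis}(\cdot+1)\colon\Dis(X+1)\to\TCSB(X)$, a subdistribution $\varphi\in\Dis(X+1)$ is sent to the convex combination $\sum_{x\in X}\varphi(x)\cdot x+\varphi(\star)\cdot\star$, since $\chi^{\Dis}$ maps $\varphi$ to the singleton $\{\varphi\}$ and every singleton is obtained from the generators $\{\delta_z\}=\eta(z)$ by the convex operations. Now fix $x\in X$ and let $g_x\colon X\to[0,1]$ send $x$ to $1$ and every other element to $0$; by freeness this extends to a $\CSB$-homomorphism $g_x^{\#}\colon\TCSB(X)\to\maxalg_B$, which is in particular convex and sends $\star$ to the bottom $0$. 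Hence $g_x^{\#}(d(\varphi))=\sum_z\varphi(z)\,g_x(z)=\varphi(x)$. As $x$ ranges over $X$ these evaluations recover $\varphi|_X$, and a subdistribution is determined by its restriction to $X$; therefore $d$ is injective. The identical computation into $\minalg_T$ (whose top is again interpreted as $0$ and whose underlying convex structure on $[0,1]$ is the standard one) proves injectivity of $q^T\after\chi^{\Dis}(\cdot+1)$.

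Finally, commutativity of the two squares (item~3) is routine. Both $e^B\after q^B$ and $q^B\after\chi^{\Powne}(\cdot+1)$ are monad maps $\Powne(\cdot+1)\Rightarrow\TCSB$ acting as the identity on the shared operations $\cplus,\star$, so it suffices to evaluate on a nonempty finite $U\subseteq X+1$: along either path $U$ is sent to $\convex\{\delta_z\mid z\in U\setminus\{\star\}\}$ taken modulo $(B)$ (the copy of $\star$ being absorbed by $(B)$), and these agree. The right-hand square is identical with $(T)$, under which any set containing $\star$ collapses to the top. I expect the main obstacle to be item~2: unlike for $e^B,e^T$, there is no monad-level retraction available (a bare convex algebra carries no binary term that is idempotent, commutative and associative), so the argument must instead extract the probabilistic datum $\varphi(x)$ through homomorphisms into $\maxalg_B$ and $\minalg_T$, which succeeds precisely because these algebras retain the convex structure of $[0,1]$.
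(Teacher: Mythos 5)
Your proof is correct, and its skeleton matches the paper's: the horizontal maps are the quotients by $(B)$ and $(T)$, the maps $\chi^{\Powne}(\cdot+1)$ and $\chi^{\Dis}(\cdot+1)$ are injective monad maps by Lemma~\ref{lem:monad-map-termination} applied to Lemma~\ref{lemma:injmapC}, and the squares commute because both composites are monad maps interpreting the shared operations $\cplus,\star$ identically. You diverge in two places. For injectivity of $e^B$ and $e^T$ the paper argues syntactically: it defines $\bar s$ by replacing every $\pplus{p}$ in a term by $\cplus$ and checks that each $\CSB$-equation is preserved, so that $s_1=_{\CSB}s_2$ implies $\bar s_1=_{\SB}\bar s_2$. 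Your retraction $r^B$, obtained by interpreting $\pplus{p}$ as union on $\Pow X$ and invoking freeness, is exactly the semantic incarnation of that bar-map: the equational verifications are the same, but your packaging through a split mono avoids the induction on derivations and is arguably cleaner. The more substantive difference is item~2: the paper disposes of it in one line by noting that $\chi^{\Dis}(\cdot+1)_X(\Delta)$ is a singleton and asserting that $q^B$ and $q^T$ do not identify distinct singletons, whereas you actually separate the images of distinct subdistributions by the $\CSB$-homomorphisms $g_x^{\#}$ into $\maxalg_B$ (dually, into $\minalg_T$), recovering $\varphi(x)$ as $g_x^{\#}(d(\varphi))$ and using that a subdistribution is determined by its restriction to $X$. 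This buys a fully explicit justification of a step the paper leaves implicit, at the modest cost of using that $\maxalg_B$ and $\minalg_T$ are algebras for the respective theories (already established in Sections~\ref{sec:presentation-C} and~\ref{sec:termination}). Both arguments are sound; yours is the more self-contained of the two on this point.
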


\begin{proof}
We define $e^B\colon \TSB \Rightarrow \TCSB$ by $e_X([t]_\SB) = [t]_\CSB$ for any term $t$ with variables in $X$ in signature $\Sigma_N \cup \Sigma_T$, where $[t]_\SB$ on the left denotes the equivalence class of $t$ modulo $E_N \cup \{(B)\}$ and $[t]_\CSB$ on the right the equivalence class of $t$ modulo $E_{NP} \cup \{(B)\}$. This is justified as every $\TSB$-term is a $\TCSB$-term as well. This is easily seen to be a monad map, we need to check well-definedness and injectivity: $t =_{\SB} t' \Leftrightarrow t =_\CSB t'$. Well-definedness, the implication left-to-right, is immediate as the equations of a semilattice with bottom are included in the equations of a convex semilattice with bottom. Assume $t =_\CSB t'$. Let $\bar s$ denote the term obtained from a term $s$ in $\TCSB$ by replacing every occurrence of $\pplus{p}$ by $\cplus$. Then we have
\[s_1 =_{\CSB} s_2 \Rightarrow \bar s_1 =_\SB \bar s_2\]
which is easy to show by checking that it holds for each of the equations.

Now, let $t = t_1 =_{\CSB} t_2 \cdots =_\CSB t_n = t'$. Then
$t = \bar t_1 =_{\SB} \bar t_2 \cdots =_{\SB} \bar t_n =  t'$ where the first and last equality hold since $t$ and $t'$ are terms in $\Sigma_N \cup \{(B)\}$ showing injectivity.

\medskip

The definition and the proof for $e^T\colon \TST \Rightarrow \TCST$ are as above but replacing axiom $(B)$ by  $(T)$.
The maps $\chi^{\Powne}{(\cdot +1)}$ and $\chi^{\Dis}{(\cdot +1)}$ are injective monad maps due to Lemma~\ref{lem:monad-map-termination} applied to the injective monad maps $\chi^{\Powne}$ and $\chi^{\Dis}$ from Lemma~\ref{lemma:injmapC}.
The horizontal maps are obtained by quotienting by axioms $(B)$ and $(T)$ the monads $\TPS= \Powne(\cdot+1)$ and $\TPCS=C(\cdot +1)$.
Since, for every $\Delta \in \Dis(X+1)$, $\chi^{\Dis}{(\cdot +1)}_X(\Delta)$ is a singleton set, the quotients $q^B$ and $q^T$ do not affect such set: both $q^B \circ \chi^{\Dis}{(\cdot +1)}$ and $q^T \circ \chi^{\Dis}{(\cdot +1)}$ are injective.
Commutation of the two squares is immediate.
\end{proof}

\section{Coalgebras and Determinisation }\label{sec:GenDet}

In this section, we briefly introduce coalgebra (Section~\ref{subsec:coalgebra}) and the generalised determinisation~\cite{SBBR10} construction (Section~\ref{subsec:determinisation}), as well as several examples of transitions systems and automata featuring either nondeterministic or probabilistic behaviour. We present some simple facts and a novel general result (Section~\ref{subsec:theorem}) that will be useful in Section~\ref{sec:maymust} to prove some key properties for systems featuring --at the same time-- nondeterminism and probability. At the end of this section (Section~\ref{subsec:sys-aut}), we provide a general notion of trace semantics for transition systems and we illustrate some important examples.

\subsection{Coalgebra}\label{subsec:coalgebra}

The theory of coalgebra provides an abstract framework for state-based transition systems and automata. A \emph{coalgebra} for a functor $F$ in $\Sets$ (also called $F$-coalgebra) is a pair $(S,c)$ of a state space $S$ and a function $c\colon S \to FS$
  where $F\colon\Sets \rightarrow \Sets$ specifies the type of transitions. Sometimes we say the coalgebra $c\colon S \to FS$, meaning the coalgebra $(S,c)$.

A \emph{coalgebra homomorphism} from a coalgebra $(S,c)$ to a coalgebra $(T,d)$ is a function $h\colon S \to T$ that satisfies $d \circ h = Fh \circ c$. Coalgebras for a functor $F$ and their coalgebra homomorphisms form a category, denoted by $\CoAlg{(F)}$.

The final object in $\CoAlg{(F)}$, when it exists, is the \emph{final $F$-coalgebra}. We write $\smash{\zeta\colon Z \stackrel{\cong}{\longrightarrow}
FZ}$ for the final $F$-coalgebra. For every coalgebra $c\colon S \to FS$, there is a unique homomorphism $\llbracket \cdot\rrbracket_c$ to the final one, the \emph{final coalgebra map}, making the diagram below commute:
\[\xymatrix@R-1pc{
S \ar[d]_{c} \ar@{-->}[rr]^{\exists!\,\llbracket \cdot\rrbracket_c} && Z \ar[d]^{\zeta}_{\cong} \\
FS  \ar@{-->}[rr]^{F\llbracket\cdot \rrbracket_c} && FZ
}\]
The \emph{final coalgebra semantics} $\sim$ is the kernel of the final coalgebra map, i.e., two states $s$ and $t$ are equivalent in the final coalgebra semantics iff  $\llbracket s\rrbracket_c = \llbracket t \rrbracket_c$.

Even without a final coalgebra, coalgebras over a concrete category are equipped with a generic behavioural equivalence.
Let $(S,c)$ be an $F$-coalgebra on $\Sets$. An equivalence relation $R \subseteq S \times S$ is a kernel bisimulation (synonymously, a cocongruence)~\cite{Staton11,Kurz00:thesis,Wol00:cmcs} if it is the kernel of a homomorphism, i.e., $R = \ker h = \{(s,t) \in S \times S\mid h(s) = h(t)\}$ for some coalgebra homomorphism $h\colon (S,c) \to (T,d)$ to some $F$-coalgebra $(T,d)$. Two states $s,t$ of a coalgebra are \emph{behaviourally equivalent} (notation: $s \approx t$) iff there is a kernel bisimulation $R$ with $(s,t) \in R$. If a final coalgebra exists, then the behavioural equivalence and the final coalgebra semantics coincide, i.e., $\approx\,\, = \,\,\sim$.

The following are well-known examples of $F$-coalgebras that will be useful later on:
\begin{enumerate}
\item 	Labelled transition systems, LTS, are coalgebras for the functor $F = (\Pow(\cdot))^A$. Behavioural equivalence coincides with strong bisimilarity.
\item 	Nondeterministic automata, NA, are coalgebras for $F = 2 \times (\Pow(\cdot))^A$ where $2 = \{0,1\}$ is needed to differentiate whether a state is accepting or not.
\item Deterministic automata, DA, are coalgebras for $F = 2 \times (\cdot)^A$. The final coalgebra is carried by the set of all languages $2^{A^*}$.
\item Moore automata, MA, are a slight generalisation of deterministic automata with observations $O$: they are coalgebras for $F = O \times (\cdot)^A$. The final coalgebra is carried by the set of all $O$-valued languages $O^{A^*}$.
\item Reactive probabilistic labelled transition systems, RPLTS, are  coalgebras for $F=(\Dis(\cdot)+1)^A$. Behavioural equivalence coincides with Larsen-Skou bisimilarity~\cite{LS91:ic}.
\item (Rabin) Probabilistic automata~\cite{Rab63}, PA,  are coalgebras for $F=[0,1]\times \Dis(\cdot)^A$.
\end{enumerate}

The following definition generalises the examples above.

\begin{defi}[Systems and Automata with $\T$-effects]
Let $\T$ be a monad and $O$ be a set. We call an $\T^A$-coalgebra a \emph{system with $\T$-effects}, and we call an $O \times \T^A$-coalgebra  an \emph{automaton with $\T$-effects and observations in $O$}.
We write $c = \langle o,t \rangle$ for an automaton with $\T$-effects and observations in $O$, where $o\colon X \to O$ is the observation map assigning observations to states, and $t\colon X \to (\T X)^A$ is the transition structure.
\end{defi}

For instance, LTS are systems with $\Pow$-effects while NA are automata with $\Pow$-effects and observations in $2$. Similarly, RPLTS are systems with $\Dis +1$-effects, while PA are automata with $\Dis$-effects and observations in $[0,1]$. Both DA and MA are automata with no effects ($\T=\Id$) and observations in $2$ and $O$, respectively.

\medskip

We write $x \stackrel{a}{\to} m$ for $t(x)(a) = m$ with $a \in A, x \in X, m \in \T X$ in a system or automaton with $\T$-effects. We also write $x\downarrow {o_x}$ for $o(x)=o_x$ with $o_x\in O$. For an LTS $t\colon X\to (\Pow X)^A$ we also write, as usual,  $x \stackrel{a}{\to} y$ for $y \in t(x)(a)$ and $x \stackrel{a}{\not\to}$ if $t(x)(a) = \emptyset$; for an RPLTS $t\colon X\to (\Dis X + 1)^A$, we may also write $x \stackrel{a}{\to_p} y$ for $t(x)(a)(y) = p$ and again $x \stackrel{a}{\not\to}$ if $t(x)(a) = \star$.
Note that in all our examples of systems and automata there is an implicit finite branching property ensured by the use of $\Pow$ and $\Dis$ involving only finite subsets and finitely supported distributions.

\subsection{Determinising Automata with \texorpdfstring{$M$}{M}-effects and Observations in \texorpdfstring{$O$}{O}}\label{subsec:determinisation}

The construction of generalised determinisation was originally discovered in~\cite{SBBR10,Bartels04:thesis}. It enables us to obtain trace semantics for coalgebras of type $c\colon X \to F\T X$ where $F$ is a functor and $\T$ a monad. The result is a determinised $F$-coalgebra $c^\#\colon \T X \to F\T X$ and the semantics is derived from behavioural equivalence for $F$-coalgebras.

Let $c\colon X\to F\T X$ be a coalgebra and $\lambda\colon \T F \Rightarrow F\T$ a functor distributive law. Then the determinisation is the $F$-coalgebra
\begin{equation}\label{eq:det}
c^\sharp = \xymatrix{\T X \ar[r]^{\T c}& \T F\T X \ar[r]^{\lambda}& F \T\T X \ar[r]^{F \mu }& F\T X} .
\end{equation}
It is easy to show that $c^\sharp \after \eta = c$ which justifies the notation $c^\sharp$:
the carrier $\T X$ carries the $\T$-algebra $\mu_X\colon \T\T X \to \T X$, the free $\T$-algebra generated by $X$, $F\T X$ carries the $\T$-algebra $F\mu\after\lambda \colon \T F\T X \to F\T X$, and $c^\sharp$ is the unique extension of $c$ to a homomorphism from the free $\T$-algebra $(\T X, \mu)$ to the $\T$-algebra $(F\T X, F\mu \after \lambda)$.

In this paper, we only consider determinisation of automata with $\T$-effects and observations in $O$, namely, $\funct\T$-coalgebras for
the Moore-automata functor $\funct = O \times (\cdot)^{\lset}$, where $O$ is some set of observations.
The following proposition shows that determinising automata with $\T$-effects and observations in $O$ is always possible when the observations carry an $\T$-algebra~\cite{SBBR10,JacobsSS15}.

\begin{prop}\label{prop:distr-law}
	For an Eilenberg-Moore algebra $a\colon \T O \to O$, for $\funct = O \times (\cdot)^{\lset}$ and any monad $\T$ on $\Sets$ there is a canonical distributive law $ \lambda \colon \T\funct \Rightarrow \funct\T$ given by
	\!\!\[\!\!\xymatrix@R-2pc{
{\T\big(O\!\times \!X^{A}\big)}
      \ar[rr]^-{\!\langle\T \pi_{1}, \T\pi_{2}\rangle\!}
   & & {\T O\!\times\! \T(X^{A})}
  \ar[r]^-{a\times\strength} &
   {O\!\times\! (\T X)^{A}}
}\]
where $\strength$ is the map $\strength\colon \T(X^A) \to (\T X)^A$ defined, for all labels $a \in A$, by $\strength(\varphi)(a) = \T\ev_a(\varphi)$ with $\ev_a\colon X^A \to X$ the evaluation map given as $\ev_a(\varphi) = \varphi(a)$.  \qed%
\end{prop}

\noindent As a consequence, we can determinise $c = \langle o,t\rangle \colon X \to O \times (\T X)^A$ to $c^\sharp = \langle o^\sharp, t^\sharp\rangle$ where

\begin{eqnarray*}o^\sharp &=& \xymatrix{\T X \ar[r]^{\T o} & \T O \ar[r]^a & O }\text{ and }\\ t^\sharp &=& \xymatrix{\T X \ar[rr]^{\T t} && \T (\T X)^A \ar[r]^{\strength}& (\T\T X)^A \ar[rr]^{\mu_X^A}&& (\T X)^A}\text{.}\end{eqnarray*}

The final $F$-coalgebra, for $\funct = O \times (\cdot)^{\lset}$, is carried by the set $O^{A^*}$ of $O$-valued languages over alphabet $A$, i.e., functions $\varphi\colon A^* \to O$. The final coalgebra is \[\zeta = \langle \epsilon, \dder \rangle \colon O^{A^*} \to O \times (O^{A^*})^A\] where for all $\varphi \in O^{A^*}$, $\epsilon\colon O^{A^*} \to O$ is defined as  $\epsilon(\varphi) = \varphi(\varepsilon)$ and $\dder \colon O^{A^*} \to (O^{A^*})^A$ as $\dder(\varphi)(a)(w)=\varphi(aw)$.

 The morphism into the  final coalgebra $\bb{\cdot}_{c^\sharp} \colon \T X \to O^{A^*}$ is defined (see e.g.~\cite{Jacobs:book}) for all $m\in \T X$ and $w\in A^*$ inductively as below on the right.

\begin{equation}\label{Eq:det-diagram}
\begin{array}{ll}
\xymatrix{ X \ar[d]_{c = \langle o,t \rangle} \ar[r]^{\eta}& \T X \ar[dl]^{c^\sharp = \langle o^\sharp,t^\sharp \rangle} \ar@{-->}[rr]^{\bb{\cdot}_{c^\sharp}}& & O^{A^*} \ar[d]^{\zeta = \langle \epsilon, \dder \rangle}\\
O\times \T X^A \ar@{-->}[rrr]_{id_O \times \bb{\cdot}_{c^\sharp}^A}& & & O \times (O^{A^*})^A
} &
\begin{array}{lcl}
\vspace{0.5cm}\\
\bb{m}_{c^\sharp}(\varepsilon) &= &o^\sharp(m)\\
\bb{m}_{c^\sharp}(aw) &=& \bb{t^\sharp(m)(a)}_{c^\sharp}(w)
\end{array}
\end{array}
\end{equation}

\begin{defi}[Language semantics for automata]\label{def:languageequivalence}%
Let $(\T,\eta, \mu)$ be a monad and $c=\langle o, t\rangle \colon X \to O \times \T X^A$ be an automaton with $\T$-effects and observations in $O$. Let $a\colon \T O \to O$ be a $\T$-algebra. Two states $x,y\in X$ are \emph{language equivalent}, written $x \equiv y$ iff
\begin{equation*}
\bb{\eta(x)}_{c^\sharp}=\bb{\eta (y)}_{c^\sharp}
\end{equation*}
where $c^{\sharp}$ is the determinisation w.r.t. $a$ of $c=\langle o,t\rangle$.
\end{defi}

The commuting diagram on the left in Eq.~\ref{Eq:det-diagram} summarises our setting. An important observation is that $O^{A^*}$  carries an $\T$-algebra defined as the point-wise extension of $a \colon \T O \to O$. Similarly, $O \times (O^{A^*})^A$ also carries an $\T$-algebra and the final $F$-coalgebra $\zeta = \langle \epsilon, der \rangle$ is actually  an $\T$-algebra homomorphism. Recall that also $c^\sharp= \langle o^\sharp, t^\sharp \rangle \colon \T X \to O \times \T X^A$ is at the same time an $F$-coalgebra and an $\T$-algebra homomorphism: both $\langle \epsilon, der \rangle$ and $\langle o^\sharp, t^\sharp \rangle$ are indeed examples of \emph{$\lambda$-bialgebras}~\cite{DBLP:conf/lics/TuriP97,DBLP:journals/tcs/Klin11}. Most importantly, the unique coalgebra morphism $\bb{\cdot}_{c^\sharp} \colon \T X \to O^{A^*}$ is also an $M$-algebra homomorphism\footnote{From a more abstract perspective, the distributive law $\lambda$ allows for lifting the functor $F$ to $\EM(\T)$ as well as the rightmost commuting square in~\eqref{Eq:det-diagram}: all objects are $\T$-algebras and all arrows are $\T$-algebra homomorphism. We refer the interested reader to~\cite{DBLP:journals/tcs/Klin11} for a gentle introduction to the subject.}. The latter entails the first item of the following.

\begin{thmC}[\cite{SBBR10,DBLP:journals/acta/BonchiPPR17}]\label{thm:det-prop}
The following hold for any coalgebra $c \colon X {\to} F\T X$ and its determinisation $c^\sharp \colon\T X {\to} F\T X$:
\begin{enumerate}
	\item Behavioural equivalence for $(\T X, c^\sharp)$ is a congruence for the algebraic structure of $\T$.
	\item Behavioural equivalence for $(X,c)$ implies language equivalence.
	\item Up-to context is a compatible~\cite{SP09b} proof technique.
\end{enumerate}
\end{thmC}

\noindent
The second item will be exploited in Section~\ref{sec:maymust} to show that convex bisimilarity implies trace equivalence for NPLTS, while the third item will be fundamental in Section~\ref{sec:upto} to prove the soundness of an effective proof technique.  Below we give a more concrete description of the overall construction by relying on a presentation $(\Sigma,E)$ for the monad $\T$.

\medskip

For an $n$-ary operation symbol $f \in \Sigma$ and a $(\Sigma, E)$-algebra $\mathbb A = (A, \Sigma_A)$ we write $f^{A}$ for the $n$-ary operation on $A$ that is the interpretation of $f$. We have
that $c^\sharp = \langle o^\sharp, t^\sharp\rangle$ can be defined as follows.
\begin{equation}\label{eq:alg-det}
\begin{array}{ll}
 o^\sharp(x) = o(x) & o^\sharp(f^{\T X}(s_1, \dots, s_n)) =  f^O(o^{\sharp}(s_1),\dots, o^\sharp(s_n))\\
t^\sharp(x) = t(x) & t^\sharp(f^{\T X}(s_1, \dots, s_n))(a) =  f^{\T X}(t^\sharp(s_1)(a),\dots,t^\sharp(s_n)(a))
\end{array}
\end{equation}

\begin{rem}
Technically, the above definition is telling us that $o^\sharp \colon \T X \to O$ is the unique $\T$-algebra homomorphism extending the function $o\colon X \to O$ and similarly for $t^\sharp \colon \T X \to \T X^A$. Such a definition (of a function on $E$-equivalence classes of $\Sigma$-terms) can also be seen as arising from an inductively defined function on $\Sigma$-terms. To be precise, it is important to recall that defining a function over equivalence classes based on representatives always requires to check whether the function is well-defined, in the sense that it is independent of the choice of a representative. Fortunately enough, monad properties guarantee that  $o^\sharp \colon \T X \to O$ and $t^\sharp \colon \T X \to \T X^A$ are well-defined in this respect\footnote{It is enough to observe that (a) the function $\hat{o}^\sharp \colon T_\Sigma X \to O$ defined inductively on terms in $T_\Sigma X$ is the unique $\Sigma$-algebra homomorphism extending $o$;
(b) that there is a monad quotient $q^E \colon T_\Sigma \Rightarrow \T$ and (c) $q^E_X \colon T_\Sigma X \to \T X$ is a $\Sigma$-algebra homomorphism. By uniqueness, $\hat{o}^\sharp=o^\sharp \circ q^E_X $.} and thus~\eqref{eq:alg-det} can effectively be thought of as an inductive definition on terms. In the remainder of this paper, we will instantiate~\eqref{eq:alg-det} to several examples in this way, without mentioning again that $o^\sharp$ and $t^\sharp$ are well-defined.
\end{rem}

The $\T$-algebra structure over the final $F$-coalgebra is given for $\varphi_i\in O^{A^*}$ and $w\in A^*$ as
\begin{align*}
\!\!&f^{O^{A^*}}(\varphi_1, \dots, \varphi_n)(w) =  f^{O}(\varphi_1(w), \dots, \varphi_n(w)) \nonumber
\end{align*}
The fact that $\bb{\cdot}_{c^\sharp} \colon \T X \to O^{A^*}$ is an $\T$-algebra homomorphism just means that
\begin{align}\label{eq:homo}
\!\!&\bb{f^{\T X}(s_1, \dots, s_n)}_{c^\sharp} =  f^{O^{A^*}}(\bb{s_1}_{c^\sharp}, \dots \bb{s_n}_{c^\sharp})
\end{align}
for all $n$-ary operator $f$ in $\Sigma$. Clearly this fact immediately entails that $\approx$ is a congruence w.r.t.\ the operations in $\Sigma$.

\begin{exa}[Determinisation of Nondeterministic Automata]
Applying this construction to $F = 2\times (\cdot)^A$ and $\T = \Pow$, one transforms $c\colon X\to 2\times (\Pow X)^A$ into $c^\sharp \colon \Pow X \to 2\times (\Pow X)^A$. The former is a nondeterministic automaton and the latter is a deterministic automaton which has $\Pow X$ as states space. The set of all languages $2^{A^*}$ (seen as functions in $2 = \{0,1\}$) carries the final $F$-coalgebra $\langle \epsilon, \dder \rangle \colon 2^{A^*} \to 2 \times (2^{A^*})^A$ which is exactly the emptiness operation and the derivatives operations by Brzozowski~\cite{Brzozowski}. The final coalgebra morphism $\bb{\cdot}_{c^\sharp}\colon \Pow X \to 2^{A^*}$ maps each state of the determinised automaton into the language that it accepts.

In~\cite{SBBR10}, see also~\cite{JacobsSS15}, it is shown that, using the distributive law from Proposition~\ref{prop:distr-law}, as $2 = \Pow 1$ is the carrier of the free $\Pow$-algebra $\mu_1\colon \Pow\Pow 1 \to \Pow 1$, this amounts exactly to the standard determinisation from automata theory and justifies the term \emph{generalised determinisation}.

\medskip

Recalling that $\Pow$ is presented by the algebraic theory of semilattices with bottom gives us a more concrete understanding. The set of observations $2=\{0,1\}$ carries the semilattice with bottom $0\sqsubseteq 1$ with supremum operation denoted by $\sqcup$. The interpretation of $\cplus$ and $\star$ in this algebra are defined for all $b_1,b_2\in 2$ as
\begin{equation*}
b_1\cplus^2 b_2 = b_1 \sqcup b_2  \qquad \star^2=0\text{.}
\end{equation*}
By instantiating~\eqref{eq:alg-det}, one has that the determinisation $c^{\sharp}=\langle o^{\sharp} , t^{\sharp} \rangle  \colon \Pow X \to 2\times (\Pow X)^A$ is defined inductively for all $S\in \Pow X$ and $a\in A$ as
\[\begin{array}{lclcl}
o^{\sharp}(S) & \!\!= \!\!&
\begin{cases}
o(x) & \text{ if } S= \eta(x) \text{;}\\
               \star^2               & \text{ if } S = \star^{\Pow X}\text{;}\\
               o^{\sharp}(S_1) \cplus^2 o^\sharp(S_2) & \text{ if } S=S_1\cplus^{\Pow X} S_2  \text{;}\\
           \end{cases}\\
           & \!\!= \!\!&
\begin{cases}
o(x) & \text{ if } S= \{x\} \text{;}\\
               0               & \text{ if } S = \emptyset\text{;}\\
               o^{\sharp}(S_1) \sqcup o^\sharp(S_2) & \text{ if } S=S_1\cup S_2  \text{;}\\
           \end{cases}
\end{array}\]
\[ \begin{array}{lclcl}
t^{\sharp}(S)(a) & \!\!=\!\! &
\begin{cases}
t(x)(a) & \text{ if } S= \eta(x) \text{;}\\
               \star^{\Pow X}              & \text{ if } S= \star^{\Pow X}\text{;}\\
               t^{\sharp}(S_1) \cplus^{\Pow X} t^{\sharp}(S_2) & \text{ if } S=S_1\cplus^{\Pow X} S_2  \text{;}\\
           \end{cases} \\
           & \!\!= \!\!&
\begin{cases}
t(x)(a) & \text{ if } S= \{x\} \text{;}\\
               \emptyset              & \text{ if } S= \emptyset\text{;}\\
               t^{\sharp}(S_1) \cup t^{\sharp}(S_2) & \text{ if } S=S_1 \cup S_2  \text{;}\\
           \end{cases}
\end{array}
\]
since $\oplus^{\Pow X}$ is union of subsets and $\star^{\Pow X}$ is the empty set. We have seen an example of this construction in Section~\ref{sec:intro}.

The semilattice with bottom over the final $F$-coalgebra $2^{A^*}$ is defined as the pointwise extension of the semilattice over $2$, that is, for all $\varphi_1,\varphi_2\in 2^{A^*}$ and $w\in A^*$
\begin{equation*}
(\varphi_1 \cplus^{2^{A^*}}\varphi_2)(w) = \varphi_1(w)\sqcup\varphi_2(w)  \qquad \star^{2^{A^*}}(w)=0
\end{equation*}
Observe that $\cplus^{2^{A^*}}$ is just the union of languages and $\star^{2^{A^*}}$ is the empty language. By instantiating~\eqref{eq:homo}, one has that
\begin{equation*}
\bb{S_1 \cplus^{\Pow X}S_2}_{c^\sharp}= \bb{S_1}_{c^\sharp}\cplus^{2^{A^*}} \bb{S_2}_{c^\sharp} \qquad \bb{\star^{\Pow X}}_{c^\sharp}= \star^{2^{A^*}}
\end{equation*}
meaning that (a) the language accepted by the union of two sets of states is exactly the union of the languages accepted by the two sets and (b) the empty set accepts the empty language. This immediately entails the first item of Theorem~\ref{thm:det-prop}. The second item, intuitively corresponds to the usual fact that bisimilarity implies language equivalence. The third item allows for exploiting bisimulations up-to $\cplus$ and, consequently, for an efficient algorithm to check language equivalence~\cite{BonchiP13}.
\end{exa}

\begin{exa}[Determinisation of Probabilistic Automata]\label{ex:PA}
Applying the construction for $F = [0,1]\times (\cdot)^A$ and $\T = \Dis$, one transforms $c\colon X\to [0,1]\times (\Dis X)^A$ into $c^\sharp \colon \Dis X \to [0,1]\times (\Dis X)^A$. The former is a probabilistic automaton and the latter is a Moore automaton with set of observations $O=[0,1]$.  The set of all $[0,1]$-valued languages $[0,1]^{A^*}$ carries the final $F$-coalgebra.
Since $[0,1]=\Dis(2)$, the set $[0,1]$ carries the $\Dis$-algebra $\mu_2\colon \Dis\Dis(2) \to \Dis(2)$ which amounts to taking convex combinations in $[0,1]$. By exploiting this algebra for the distributive law in Proposition~\ref{prop:distr-law}, one obtains a final coalgebra semantics that coincides with probabilistic language equivalence of~\cite{Rab63} (see~\cite{SBBR10}).

Next we illustrate more concretely this construction by relying on the presentation of $\Dis$ as the algebraic theory of convex algebras. For $p\in [0,1]$, the interpretation of the operation $\pplus{p}$ in $[0,1]$ is as expected: $q_1 \pplus{p}^{[0,1]} q_2 = p\cdot q_1 + (1-p)\cdot q_2$.
By instantiating~\eqref{eq:alg-det}, one has that the determinisation $c^\sharp=\langle o^{\sharp} , t^{\sharp} \rangle  \colon \Dis X \to [0,1]\times (\Dis X)^A$ is defined inductively for all $\Delta \in \Dis X$ and $a\in A$ as
\[\begin{array}{lcl}
o^{\sharp}(\Delta) & = &
\begin{cases}
o(x) & \text{ if } \Delta= \eta(x) = \delta_x \text{;}\\
               o^{\sharp}(\Delta_1) \pplus{p}^{[0,1]} o^\sharp(\Delta_2) & \text{ if } \Delta=\Delta_1\pplus{p}^{\Dis X} \Delta_2  \text{;}\\
           \end{cases}
\end{array}\]
\[ \begin{array}{lcl}
t^{\sharp}(\Delta)(a) & = &
\begin{cases}
t(x)(a) & \text{ if } \Delta= \eta(x) = \delta_x \text{;}\\
               t^{\sharp}(\Delta_1) \pplus{p}^{\Dis X} t^{\sharp}(\Delta_2) & \text{ if } \Delta=\Delta_1\pplus{p}^{\Dis X} \Delta_2  \text{;}\\
           \end{cases}
\end{array}
\]
We have seen an example of this construction in Section~\ref{sec:intro}.
The convex algebra over the final coalgebra $[0,1]^{A^*}$ is defined as the pointwise extension of the algebra over $[0,1]$, that is \begin{equation*}(\varphi_1 \pplus{p}^{[0,1]^{A^*}} \varphi_2) (w)= \varphi_1(w) \pplus{p}^{[0,1]}\varphi_2(w)\end{equation*} for $\varphi_1, \varphi_2 \in [0,1]^{A^*}$ and $w\in A^*$. By instantiating~\eqref{eq:homo}, one has that
\begin{equation*}
\bb{\Delta_1 \pplus{p}^{\Dis X}\Delta_2}_{c^\sharp}= \bb{\Delta_1}_{c^\sharp}\pplus{p}^{[0,1]^{A^*}} \bb{\Delta_2}_{c^\sharp}\text{.}
\end{equation*}
\end{exa}

We will apply similar methods to obtain languages and language equivalence for automata with both nondeterminism and probability in the rest of the paper.

\subsection{Invariance of the Semantics}\label{subsec:theorem}

We next state a theorem that guarantees invariance of the language semantics  for automata with $\T$-effects and observations in $O$, under controlled changes of the monad or the algebra of observations.

\begin{thm}[Invariance Theorem]\label{thm:transfert} Let $(\T,\eta, \mu)$ be a monad and $a\colon \T O \to O$ an $\T$-algebra. Let $c = \langle o,t\rangle \colon X \to O \times (\T X)^A$ be an automaton with $\T$-effects and observations in $O$ and $\llbracket \cdot \rrbracket \colon \T X \to O^{A^*}$ be the semantic map induced by the generalised determinisation w.r.t. $a$, i.e., $\llbracket \cdot \rrbracket = \llbracket \cdot \rrbracket_{c^\sharp}$
\begin{enumerate}
\setlength{\itemsep}{1em}
\item {\bf Transitions:}  Let $(\hat\T,\hat\eta,\hat\mu)$ be a monad and $\sigma\colon \T \Rightarrow \hat\T$ a monad map. Let $\hat a\colon \hat\T O \to O$ be an $\hat\T$-algebra. Consider the coalgebra \[\hat c = \langle o, \hat t \,\rangle = \langle o, \sigma_X^A \after t \rangle \colon X \to O \times (\hat\T X)^A\] and let $\hat{\bb{\cdot}}\colon \hat\T X \to O^{A^*}$ be the semantic map induced by its generalised determinisation wrt. $\hat a$. If $a = \hat a \after \sigma_O$, then $\bb{\cdot} \after \eta_X= \hat{\bb{\cdot}} \after \hat\eta_X$.
\item {\bf Observations:} Let $\hat a \colon \T \hat O \to \hat O$ be an $\T$-algebra and let $h\colon (O,a) \to (\hat O,\hat a)$ be an $\T$-algebra morphism. Consider the coalgebra \[\hat c = \langle \hat o, t \rangle = \langle h \after o,  t \rangle \colon X \to \hat O \times (\T X)^A\] and let $\hat{\bb{\cdot}}\colon \T X \to \hat O^{A^*}$ be induced by the generalised determinisation wrt. $\hat a$. Then $\hat{\bb{\cdot}}=h^{A^*} \after \bb{\cdot}$. \qed%
\end{enumerate}
\end{thm}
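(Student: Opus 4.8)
The plan is to prove both items by induction on the length of words, using the inductive description of the final coalgebra map recalled in Eq.~(\ref{Eq:det-diagram}), namely $\bb{m}(\varepsilon) = o^\sharp(m)$ and $\bb{m}(aw) = \bb{t^\sharp(m)(a)}(w)$. In each case the induction itself is routine once two auxiliary commutativities are in place: one comparing the output maps $o^\sharp$ and one comparing the transition maps $t^\sharp$ of the two determinised automata. So the real work is to establish those two facts, recalling that $o^\sharp = a \after \T o$ and $t^\sharp = \mu_X^A \after \strength \after \T t$.

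For item~2 (\textbf{Observations}) I would first observe that the transition part of the determinisation, $t^\sharp = \mu_X^A \after \strength \after \T t$, depends only on $t$ and $\mu$ and is completely insensitive to the observation map and to the algebra structure on the observations. Since $\hat c = \langle h \after o, t\rangle$ has the same transition structure $t$ and uses the same monad $\T$, we get $\hat t^\sharp = t^\sharp$ on the nose. For the outputs, $\hat o^\sharp = \hat a \after \T(h \after o) = \hat a \after \T h \after \T o = h \after a \after \T o = h \after o^\sharp$, where the middle equality $\hat a \after \T h = h \after a$ is exactly the hypothesis that $h\colon (O,a)\to(\hat O,\hat a)$ is a $\T$-algebra morphism. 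The statement $\hat{\bb{m}}(w) = h(\bb{m}(w))$, for all $m \in \T X$ and all $w$, then follows by induction on $|w|$: the base case $w=\varepsilon$ is $\hat o^\sharp = h \after o^\sharp$, and for $w = aw'$ one rewrites $\hat t^\sharp = t^\sharp$ and applies the induction hypothesis to the element $t^\sharp(m)(a)$. This is precisely $\hat{\bb{\cdot}} = h^{A^*} \after \bb{\cdot}$.

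For item~1 (\textbf{Transitions}) the strategy is to prove the single identity $\hat{\bb{\cdot}} \after \sigma_X = \bb{\cdot}$ as maps $\T X \to O^{A^*}$; the claim then drops out by precomposing with $\eta_X$ and using $\sigma_X \after \eta_X = \hat\eta_X$, which holds because $\sigma$ is a monad map. The two facts I need are $\hat o^\sharp \after \sigma_X = o^\sharp$ and $\hat t^\sharp \after \sigma_X = \sigma_X^A \after t^\sharp$; note that, taken together, these say exactly that $\sigma_X$ is an $F$-coalgebra homomorphism from $c^\sharp$ to $\hat c^\sharp$ (with $F = O \times (\cdot)^A$), so the identity $\hat{\bb{\cdot}} \after \sigma_X = \bb{\cdot}$ would also follow slickly by finality once they are established, rather than by a second explicit induction. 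The output equality is immediate: $\hat o^\sharp \after \sigma_X = \hat a \after \hat\T o \after \sigma_X = \hat a \after \sigma_O \after \T o = a \after \T o = o^\sharp$, using naturality of $\sigma$ at $o$ and the hypothesis $a = \hat a \after \sigma_O$.

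The transition equality $\hat t^\sharp \after \sigma_X = \sigma_X^A \after t^\sharp$ is the one genuine computation and I expect it to be the main obstacle. I would unfold $\hat t^\sharp = \hat\mu_X^A \after \strength \after \hat\T \hat t$ with $\hat t = \sigma_X^A \after t$, and push $\sigma_X$ leftward through the composite using, in order: naturality of $\sigma$ at $\hat t$; compatibility of $\sigma$ with the two strengths, i.e. $\strength \after \sigma_{(\cdot)^A} = \sigma^A \after \strength$, which is itself just naturality of $\sigma$ evaluated at the maps $\ev_a$; naturality of $\strength$ at $\sigma_X$; and finally the multiplication law of a monad map, $\hat\mu \after \sigma\sigma = \sigma \after \mu$, to collapse $\hat\mu_X^A \after \sigma_{\hat\T X}^A \after (\T\sigma_X)^A$ into $(\sigma_X \after \mu_X)^A = \sigma_X^A \after \mu_X^A$. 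Keeping the object indices and the orientation of each naturality square straight through this chase is the delicate part; once it is carried out, both inductions (or the single finality argument) conclude at once.
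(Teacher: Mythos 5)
Your proposal is correct and follows essentially the same route as the paper: for item~1 the paper likewise shows that $\sigma_X$ is an $F$-coalgebra homomorphism from $c^\sharp$ to $\hat c^\sharp$ (phrased there as compatibility of $\sigma$ with the two distributive laws of Proposition~\ref{prop:distr-law}, established via exactly the naturality squares you list for $\sigma$, $\strength$, and the monad-map multiplication law) and then concludes by finality and $\sigma_X \after \eta_X = \hat\eta_X$. For item~2 the paper also reduces everything to $\hat t^\sharp = t^\sharp$ and $h \after o^\sharp = (h\after o)^\sharp$, only concluding by identifying $h^{A^*}$ as the unique coalgebra morphism out of $(h\times \id)\after\zeta$ rather than by your explicit induction on word length; the two conclusions are interchangeable.
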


\begin{proof}
We prove the two items separately:
\begin{enumerate}
\setlength{\itemsep}{1em}
\item {\bf Transitions:}
The proof proceeds in two steps. First, we show that the following diagram commutes
\begin{equation}\label{eq:distributivelawmorphism}
\xymatrix{
\T FX \ar[r]^{\sigma_{FX}} \ar[d]_{\lambda_X} & \hat\T FX \ar[d]^{\hat\lambda_X}\\
F\T X \ar[r]_{F\sigma_X} & F\hat\T X\\
}
\end{equation}
where $\sigma$ is the monad map from the hypothesis, and $\lambda$ and $\hat\lambda$ are the distributive laws from Proposition~\ref{prop:distr-law} used for the determinisation of $F\T$- and $F\hat\T$-coalgebras using the algebras $a$ and $\hat a$, and the strengths $\strength$ and $\hat\strength$, respectively.

The following diagram commutes by naturality of $\sigma$:
\[\xymatrix@C+2pc@R-0.5pc{
\T(X^A) \ar[r]^{\sigma_{X^A}} \ar[d]_{\T\ev_a} & \hat\T(X^A) \ar[d]^{\hat\T\ev_a} \\
\T X \ar[r]_{\sigma_X} & \hat\T X
}\]
Using this, by definition of the strength, see Proposition~\ref{prop:distr-law}, we have
\begin{eqnarray*}
(\sigma_X^A \after\strength(\varphi)) (a) & = & \sigma_X^A(\strength(\varphi)(a))\\
& = & \sigma_X^A( \T\ev_a(\varphi))\\
& = & (\sigma_X^A \after \T\ev_a)(\varphi)\\
& \stackrel{(*)}{=} & \hat\T\ev_a \after \sigma_{X^A}(\varphi)\\
& = & \hat\T\ev_a(\sigma_{X^A}(\varphi))\\
& = & \hat\strength(\sigma_{X^A}(\varphi))(a)\\
& = & (\hat\strength\after\sigma_{X^A}(\varphi))(a).
\end{eqnarray*}
where the equality marked by $(*)$ holds by the commutativity of the diagram above. Hence, the following diagram commutes.
\[\xymatrix{
\T(X^A) \ar[r]^{\sigma_{X^A}} \ar[d]_{\strength} & \hat\T(X^A) \ar[d]^{\hat\strength} \\
(\T X)^A \ar[r]_{\sigma_X^A} & (\hat\T X)^A
}\]

Recall now that by hypothesis $a = \hat a \after \sigma_O$. Therefore, the following commutes.
\[\xymatrix@C=2cm{
\T O \times \T(X^A) \ar[r]^{\sigma_O \times \sigma_{X^A}} \ar[d]_{a\times \strength} & \hat\T O \times \hat \T(X^A) \ar[d]^{\hat a \times \hat\strength} \\
O \times (\T X)^A \ar[r]_{id_O \times \sigma_X^A} & O \times (\hat\T X)^A
}\]

Finally, the following two squares commute by naturality of $\sigma$.
\[\xymatrix{
\T(O \times X^A) \ar[r]^{\sigma_{O\times X^A}} \ar[d]_{\T\pi_1} & \hat\T (O\times X^A) \ar[d]^{\hat\T\pi_1} \\
\T O \ar[r]_{\sigma_O} & \hat\T O
}
\qquad \xymatrix{
\T(O \times X^A) \ar[r]^{\sigma_{O\times X^A}} \ar[d]_{\T\pi_2} & \hat\T(O\times X^A) \ar[d]^{\hat\T\pi_2} \\
\T(X^A) \ar[r]_{\sigma_{X^A}} & \hat\T(X^A)
}\]
By pasting together the last three diagrams, we obtain that the following commutes.
\[\xymatrix@C=2cm{
\T(O \times X^A) \ar[r]^{\sigma_{O\times X^A}} \ar[d]_{\langle\T\pi_1, \T\pi_2\rangle} & \hat\T(O\times X^A) \ar[d]^{\langle\hat\T\pi_1, \hat\T\pi_2\rangle}\\
\T O \times \T(X^A) \ar[r]^{\sigma_O \times \sigma_{X^A}} \ar[d]_{a\times \strength} & \hat\T O \times \hat\T(X^A) \ar[d]^{\hat a \times \hat\strength} \\
O \times \T(X)^A \ar[r]_{id_O \times \sigma_X^A} & O\times \hat\T(X)^A
}\]
Observe that, by the definition of the distributive law (Proposition~\ref{prop:distr-law}), this diagram is exactly~\eqref{eq:distributivelawmorphism}.
Using~\eqref{eq:distributivelawmorphism}, we can now easily show that the following commutes.
\[
\xymatrix{
\T X \ar[d]_{\T\langle o,t\rangle} \ar[r]^{\sigma_X}& \hat\T X \ar[d]^{\hat\T\langle o,t\rangle}\\
\T F\T X \ar[d]_{\lambda_{\T X}} \ar[r]^{\sigma_{F\T X}}& \hat\T F\T X\ar[d]_{\hat\lambda_{\T X}} \ar[rd]^{\hat\T F\sigma_X} \\
F\T\T X \ar[dd]_{F\mu} \ar[r]^{F\sigma_{\T X}} & F\hat\T\T X  \ar[d]_{F\hat\T \sigma_X } & \hat\T F\hat\T X \ar[ld]^{\hat\lambda_{\hat \T X}}\\
& F\hat\T\hat\T X \ar[d]^{F\hat\mu_X} \\
F\T X \ar[r]_{F\sigma_X} & F\hat\T X
}
\]
Indeed, commutativity of the topmost square is given by naturality of $\sigma$. The fact that $\sigma$ is a monad morphism entails commutativity of the bottom square. The rightmost square commutes by naturality of $\hat\lambda$. The missing square, the one in the centre, is exactly~\eqref{eq:distributivelawmorphism}.

Now observe that the leftmost border in the above diagram, the morphism $\T X \to F\T X$, equals $c^\sharp = \langle o^{\sharp}, t^\sharp \rangle$, see~\eqref{eq:det}. The determinisation $\hat c^\sharp$ of $\hat c = \langle o, \hat t\,\rangle = \langle o, (\sigma_X)^A \after t \rangle$ obtained using $\hat a$ and $\hat\lambda$ coincides with the rightmost border of the above diagram, the morphism $\hat \T X \to F\hat\T X$. The commuting of the above diagram means that $\sigma_X$ is a homomorphism of $F$-coalgebras. By postcomposing this homomorphism with the unique $F$-coalgebra morphism $\hat{\bb{\cdot}}\colon \hat\T X \to O^{A^*}$, one obtains an $F$-coalgebra morphism of type $\T X \to O^{A^*}$. Since $\bb{\cdot}$ is the unique such morphism, $\bb{\cdot}=\hat{\bb{\cdot}}\after \sigma_X$ follows.

\[
\xymatrix@C=2pc{
\T X\ar@(ur,ul)[rr]^{\bb{\cdot}}\ar[d]_{c^\sharp} \ar[r]^{\sigma_X}& \hat\T X  \ar[r]^{\hat{\bb{\cdot}}} \ar[d]^{ \hat c^\sharp}& O^{A^*}\ar[d]^{\zeta}_\cong\\
F\T X \ar@(dr,dl)[rr]_{F\bb{\cdot}} \ar[r]^{F\sigma_X} & F\hat\T X \ar[r]^{F\hat{\bb{\cdot}}} & F(O^{A^*})
}\]
Now, since $\sigma$ is a monad map, $\hat\eta = \sigma\after \eta$. Therefore $\bb{\cdot} \after \eta_X = \hat{\bb{\cdot}}\after \sigma_X \after \eta_X = \hat{\bb{\cdot}}\after \hat\eta_X$.

\item {\bf Observations:}
Consider the following diagram in $\Sets$.

	\[\xymatrix@C=2cm{\T X \ar[r]^{\bb{\cdot}} \ar[d]_{c^\sharp }& O^{A^*} \ar[d]^{\zeta}_\cong \ar[r]^{h^{A^*}}& \hat O^{A^*} \ar[dd]^{\hat\zeta}_\cong  \\
O \times (\T X)^A \ar[d]_{h\times id_{(\T X)^A}} \ar[r]^{id_O\times \bb{\cdot}^A}&O \times (O^{A^*})^A \ar[d]^{h\times id_{(O^{A^*})^A}} & \\
\hat O \times (\T X)^A  \ar[r]_{id_{\hat O}\times\bb{\cdot}^A} & \hat O \times (O^{A^*})^A \ar[r]_{id_{\hat O}\times (h^{A^*})^A}& \hat O \times (\hat O^{A^*})^A }\]

Both squares on the left trivially commute by definition. To prove that also the square on the right commutes, it is enough to show that $h^{A^*}\colon O^{A^*} \to \hat O^{A^*}$ coincides with the unique coalgebra morphism $\bb{\cdot}_d$ from the coalgebra \[d = (h \times id_{(O^{A^*})^A})\circ \zeta \colon O^{A^*} \to {\hat O}\times (O^{A^*})^A\]
to the final $\hat O\times (\cdot)^A$-coalgebra $\hat\zeta$.

From the inductive definition of $\bb{\cdot}_d$, see Eq.~\ref{Eq:det-diagram}, we get $\bb{\varphi}_d = \lambda w \in A^*.\, h\after\epsilon((\varphi)_w)$ where $(\varphi)_w(u) = \varphi(wu)$ which easily leads to $\bb{\varphi}_d = h \after \varphi = h^{A^*}(\varphi)$.

\medskip

Now observe that $h \after o^{\sharp}$ is equal to $(h\after o)^\sharp$, since $h$ is an algebra morphism.

From this observation and the commuting of the above diagram, it follows that $h^{A^*} \after \bb{\cdot}$ is the unique coalgebra morphism from $\hat c^\sharp = \langle (h\after o)^\sharp, t^{\sharp} \rangle$ to the final $\hat O\times (\cdot)^A$-coalgebra, and hence it equals $\hat{\bb{\cdot}}$. \qedhere
\end{enumerate}
\end{proof}

\noindent
Whenever the monad $M$ has a presentation $(\Sigma,E)$, the first part of the theorem above provides a convenient technique that allows one to work with syntactic terms in $T_{\Sigma}$ and forget about the axioms in $E$. Let $q^E\colon T_{\Sigma}\Rightarrow M$ be the monad morphism quotienting  $T_{\Sigma}$ by the axioms in $E$ and let $r_X\colon MX \to T_{\Sigma}X$ be a right inverse of $q^E_X$, namely $q^E_X  \circ r_X  =id_{MX}$. Now, the automaton with $\T$-effects $c = \langle o,t\rangle \colon X \to O \times (\T X)^A$ can be translated into an automaton with $T_\Sigma$-effects by taking
\[ c' = \langle o,  t' \,\rangle = \langle o, r_X^A \after t \rangle \colon X \to O \times (T_{\Sigma} X)^A\text{.}\]
Similarly the $\T$-algebra of observation $a\colon \T O\to O$ gives rise to the $T_{\Sigma}$-algebra
\[ a' = a \after q^E_O \colon T_{\Sigma}O \to O\text{.}\]
Now, rather than determinise $c$ with respect to $a$, one can determinise $c'$ w.r.t. $a'$ and, by virtue of Theorem~\ref{thm:transfert}.1, the semantics does not change.

\begin{cor}\label{cor:invarianceGSOS}
Let $\T$ be a monad with presentation $(\Sigma, E)$.
Let  $c = \langle o,t\rangle \colon X \to O \times (\T X)^A$  and  $c' = \langle o,t'\rangle \colon X \to O \times (T_{\Sigma} X)^A$ be such that $q^E_X\circ t'= t$. Let $a\colon \T O\to O$ be an $\T$-algebra.
Let $\bb{\cdot}$ and $\bb{\cdot}'$ be the semantics maps induced by the determinisation of $c$ w.r.t. $a$ and, respectively, $c'$ w.r.t. $a'= a \after q^E_O$. Let $\eta$ and $\eta'$ be the units of $\T$ and $T_{\Sigma}$.
Then, $\bb{\cdot} \after \eta_X= \bb{\cdot} \after \eta_X'$.
\end{cor}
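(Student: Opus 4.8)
The plan is to derive this directly from the \emph{Transitions} part of Theorem~\ref{thm:transfert}, instantiated with the quotient monad map $q^E\colon T_\Sigma \Rightarrow \T$. Recall that, since $(\Sigma,E)$ presents $\T$, the monad $\T$ is isomorphic to $T_{\Sigma,E}$ and $q^E$ is an (epi) monad map obtained by quotienting $T_\Sigma$ by the axioms in $E$; hence it is legitimate to use $q^E$ as the natural transformation $\sigma$ required by the theorem.

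Concretely, I would match the data of the corollary with the hypotheses of Theorem~\ref{thm:transfert}.1 via the following dictionary: take the theorem's base monad to be $T_\Sigma$, its target monad $\hat\T$ to be $\T$, and the monad map $\sigma$ to be $q^E$. For the observation algebras, let the theorem's $a$ be our $a' = a \after q^E_O$ (which is a $T_\Sigma$-algebra precisely because $a$ is a $\T$-algebra and $q^E$ is a monad map, so that precomposing an algebra with a monad map again yields an algebra), and let the theorem's $\hat a$ be our $a$. Finally, let the theorem's coalgebra $c$ be our $c' = \langle o, t'\rangle$. Then the coalgebra the theorem constructs from this data, namely $\hat c = \langle o, (q^E_X)^A \after t'\rangle$, coincides with $c = \langle o, t\rangle$ by the standing hypothesis $q^E_X \after t' = t$.

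With this identification in place, the single side condition of the theorem, $a = \hat a \after \sigma_O$, reads $a' = a \after q^E_O$, which holds by the very definition of $a'$. Theorem~\ref{thm:transfert}.1 then yields $\bb{\cdot}' \after \eta'_X = \bb{\cdot} \after \eta_X$, where $\bb{\cdot}'$ and $\eta'$ are the semantic map and unit associated with $T_\Sigma$ (the theorem's $\bb{\cdot}$ and $\eta$) while $\bb{\cdot}$ and $\eta$ are those associated with $\T$ (the theorem's $\hat{\bb{\cdot}}$ and $\hat\eta$). This is exactly the claimed equality, modulo the evident typo in the displayed conclusion.

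The only real care needed is bookkeeping: the corollary applies the Invariance Theorem ``backwards'', pushing from the free syntactic monad $T_\Sigma$ towards its quotient $\T$, so one must keep straight which monad is the source and which is the target of $\sigma = q^E$. Once the identification $\hat c = c$ is verified from $q^E_X \after t' = t$ and the algebra $a'$ is unfolded, no further computation is required; in particular there is no new distributive-law or naturality argument to run, since all of that is already absorbed into the cited theorem.
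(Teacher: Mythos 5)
Your proof is correct and matches the paper's route exactly: the paper also derives the corollary as an immediate instance of Theorem~\ref{thm:transfert}.1 with $\sigma = q^E\colon T_\Sigma \Rightarrow \T$, observation algebras $a' = a \after q^E_O$ and $a$, and the identification $\hat c = \langle o, (q^E_X)^A \after t'\rangle = c$. Your careful bookkeeping of which monad is source and which is target (and your note on the typo in the displayed conclusion, which should read $\bb{\cdot} \after \eta_X = \bb{\cdot}' \after \eta'_X$) is exactly the content the paper leaves implicit.
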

\begin{proof}
The property follows immediately by Theorem~\ref{thm:transfert}.1, but the reader has to be careful not to be mislead: Theorem~\ref{thm:transfert}.1 cannot be applied to $r_X$, since this is not a monad map (actually, it is not even a natural transformation). The monad map $\sigma \colon \T \Rightarrow \hat \T$ in the statement of the theorem is the quotient $q^E: T_\Sigma \Rightarrow \T$ in the corollary, indeed the monads $\T$ and $\hat \T$ in the theorem correspond to the monads $T_\Sigma$ and $\T$ in the corollary, respectively.
Similarly, the automaton with $\T$-effects $c$ in the theorem is $c'$ in the corollary, while $\hat c$ is $c$ in the corollary. The algebras $\hat a$ and $a$ in the theorem correspond to $a$ and $a'$ in the corollary, respectively.
\end{proof}
Determinising $c'$ rather than $c$, makes our approach similar to processes calculi. Indeed,
in $c'^\sharp=\langle o'^\sharp, t'^\sharp \rangle \colon T_\Sigma X \to O\times T_\Sigma X^A$, states are syntactic terms rather than elements in $\T X$, e.g., sets, distributions, \dots. The output and transition functions $o'^\sharp, t'^\sharp$ can be defined
by means of GSOS rules. For each $n$-ary operator $f$ in $\Sigma$, we have one rule for output and one rule  per action $a\in A$ for transitions:
\begin{equation}\label{generalGSOS1} \infer{s_i\downarrow o_i \quad  i\in 1\dots n}{f(s_1, \dots, s_n)\downarrow f^O (o_1, \dots, o_n)}  \qquad \infer{s_i \stackrel{a}{\to}s_i' \quad i\in 1\dots n}{f(s_1,\dots, s_n )\stackrel{a}{\to}f(s_1', \dots, s_n')}\end{equation}
Here $f^O$ is the interpretation of $f$ in the algebra of observation $a\colon \T O \to O$. For each $x\in X$, we have the following axioms:
\begin{equation}\label{generalGSOS2} \infer{-}{x\downarrow o(x)}  \qquad \infer{-}{x\stackrel{a}{\to}t'(x)(a)}\end{equation}
\begin{rem}\label{rem:GSOS}
The rules in~\eqref{generalGSOS1} and~\eqref{generalGSOS2} are equivalent --- modulo the axioms in $E$ --- to the definition of $o^\sharp$ and $t^\sharp$ provided by~\eqref{eq:alg-det}\footnote{The reader can easily check by induction that $o^\sharp \after q^E_X = o'^{\sharp}$ and that $t^{\sharp}\after q^E_X = (q^E_X)^A \after t'^\sharp$.}. However, the two rules above provide a specification which is closer to (the standard way of giving) the structured operational semantics via so-called GSOS rules where states are syntactic terms and transitions are derived by the structure of terms. A notable difference, beyond the outputs, is that the transitions specified by~\eqref{generalGSOS1} and~\eqref{generalGSOS2} are deterministic: $s\tr{a}s'$ means that $t'^{\sharp}(s)(a)=s'$ and not that $s' \in t'(s)(a)$ as common in structured operational semantics.

The connection between GSOS rules and bialgebras was first identified in the seminal work by Turi and Plotkin~\cite{DBLP:conf/lics/TuriP97}: they show that GSOS rules are in one-to-one correspondence with certain natural transformations that, in turn, give rise to distributive laws of type $\lambda \colon \T F \Rightarrow F\T$, see Section~\ref{subsec:distr-laws}, for $F=\Pow(\cdot)^A$ and $\T$ the term monad $T_\Sigma$ for some signature $\Sigma$. The rules in~\eqref{generalGSOS1} instead can be regarded as a natural transformation $\rho\colon \Sigma F \Rightarrow F\Sigma$, for $\Sigma$ the functor corresponding to a signature and $F=O\times (\cdot)^A$. In~\cite{DBLP:journals/tcs/Klin11}, it is shown that natural transformations of this kind, that are named there ``simple distributive laws'', give rise to a distributive law of type $\lambda \colon T_{\Sigma} F \Rightarrow F T_{\Sigma}$. In our case, the rules in~\eqref{generalGSOS1} give rise exactly  to the distributive law from Proposition~\ref{prop:distr-law} for $\T=T_\Sigma$.
\end{rem}

\subsection{From Systems to Automata and the Role of Termination}\label{subsec:sys-aut}
So far, we have seen how language semantics arises by determinisation of automata. However, our initial interest concerns trace semantics. Trace semantics is closely related to language semantics, but it concerns systems that lack the notion of observation. At the same time, dealing with automata, i.e., having observations, is crucial for determinisation. In this section, we discuss the move from systems to automata that enables defining trace semantics with help of language semantics. The intuition can easily be explained in the case of LTS\@: Create an automaton from an LTS by making every state accepting. Then trace equivalence for the LTS is language equivalence for the created automaton.

More precisely, starting from an LTS $t \colon X \to (\Pow X)^A$, we can add observations in $2 = \Pow 1$ in the simplest possible way, making every state an accepting state: \begin{equation*}o = ( X \stackrel{!}{\longrightarrow} 1 \stackrel{\eta_1}{\longrightarrow} \Pow 1 = 2)\end{equation*} and determinise the NA $\langle o,t\rangle \colon X \to 2 \times (\Pow X)^A$ w.r.t.\ the algebra $\mu_1 \colon \Pow\Pow 1 \to \Pow1$. The induced language semantics  on the state space $X$ is the standard trace semantics for LTS~\cite{DBLP:journals/mscs/BonchiBCR016}.

 This same approach can be applied in the case of any system with $\T$-effects $t\colon X \to (\T X)^A$. We can add observations in $O = \T 1$ by
\begin{equation}\label{eq:observation} o = ( X \stackrel{!}{\longrightarrow} 1 \stackrel{\eta_1}{\longrightarrow} \T 1),\end{equation} determinise the automaton $\langle o, t \rangle$ with $\T$-effects using the free algebra on $\T 1$, and consider the induced language semantics.
This idea is summarised in the following definition.

\begin{defi}[Trace semantics for systems with $\T$-effects]\label{def:traceequivalence}
Let $(\T,\eta, \mu)$ be a monad and $t\colon X \to \T X^A$ be a system with $\T$-effects. Two states $x,y\in X$ are trace equivalent, written $x \equiv y$ iff
\begin{equation*}
\bb{\eta(x)}_{c^\sharp}=\bb{\eta (y)}_{c^\sharp}
\end{equation*}
where $c^{\sharp}$ is the determinisation w.r.t. $\mu_1 \colon \T \T 1 \to \T1$ of $c=\langle o,t\rangle \colon \T X \to \T 1 \times \T X^A $ for $o$ defined as in~\eqref{eq:observation}.
\end{defi}

However, not everything is settled yet, as shown by the following example --- the way we model systems, i.e., the choice of the monad involved, may make a huge difference in the semantics that we obtain.

\begin{exa}\label{ex:rpltstermination}
Let us apply Definition~\ref{def:traceequivalence} to an RPLTS $t\colon X \to (\dset X+1)^A$, namely a system with $\dset+1$-effects. Recall the monad $\Dis + 1$ from Example~\ref{Example-D+1}. By taking as sets of observations $O=\T 1$, one would obtain that $O=\dset(1)+1$ which is isomorphic to $2$. Therefore, the final coalgebra would be $2^{A^*}$ making impossible to distinguish probabilistic behaviours.

The problem is not in the choice of observations but already in the monad $\dset+1$. Indeed, even by allowing different observations, determinising w.r.t.\ such monad would lead a too coarse semantics.
Consider for instance the RPLTS in the top of Figure~\ref{fig:RPLTS}. The states $x$ and $y$ should not be trace equivalent, since $x$ has probability $\frac 1 2$ of performing trace $ab$, and $y$ has probability $\frac 1 4$ of performing trace $ab$.
Let us look at what happens, however, if we determinise this system  with respect to the monad $\dset +1$.
The determinised transition function $t^\sharp$
will give us states in $\dset X +1$, i.e., states that are either full distributions or the element $\star\in 1$ and we have
\[t^\sharp (x)(a) = x_{1} \pplus {\frac 1 2} x_{2} \qquad t^\sharp (y)(a) =  y_{1} \pplus {\frac 1 4} y_{2} \]

\begin{align*}
\text{However, } &&t^\sharp (x_{1} \pplus {\frac 1 2} x_{2})(b) = t(x_{1})(b) \pplus {\frac 1 2} t(x_{2})(b) = \star \\
 &&t^\sharp ( y_{1} \pplus {\frac 1 4} y_{2})(b) = t(y_{1})(b) \pplus {\frac 1 4} t(y_{2})(b) = \star
\end{align*}
Hence, whatever $(\dset +1)$-algebra of observation we take, these states in the determinised system will return the same observation, i.e., $o^{\sharp}(x)(ab)= o^{\sharp}(y)(ab)$.
As a consequence, $x$ and $y$ will be equivalent.
\end{exa}
\begin{figure}
\begin{center}
\begin{tabular}{c}
\begin{tikzpicture}[thick]
\matrix[matrix of nodes, row sep= 0.6cm, column sep=.1cm,ampersand replacement=\&]
{
					\&\node (x) {$x$};		\\
					\&\node (d1) {$\Delta$};	\\
\node (x1) {$x_{1}$};		\&						\&\node (x2) {$x_{2}$};		\\
	};
\draw[-latex] (x) to node[left] {$a$} (d1);
\draw[-latex] (x1) to[bend left=70] node[left] {$b$} (x);
\draw[dotted,->] (d1) to node[left] {$\frac 1 2$}  (x1);
\draw[dotted,->] (d1) to node[right] {$\frac 1 2$}  (x2);

\begin{scope}[xshift=4cm]
\matrix[matrix of nodes, row sep= 0.6cm, column sep=.1cm,ampersand replacement=\&]
{
					\&\node (x) {$y$};		\\
					\&\node (d1) {$\Theta$};	\\
\node (x1) {$y_{1}$};		\&						\&\node (x2) {$y_{2}$};		\\
	};
\draw[-latex] (x) to node[left] {$a$} (d1);
\draw[-latex] (x1) to[bend left=70] node[left] {$b$} (x);
\draw[dotted,->] (d1) to node[left] {$\frac 1 4$}  (x1);
\draw[dotted,->] (d1) to node[right] {$\frac 3 4$}  (x2);
\end{scope}
\end{tikzpicture}

\vspace{3mm}\\
\xymatrix{x\downarrow_1 \ar[r]^a \ar[rd]^b& x_{1} \pplus {\frac 1 2} x_{2}\downarrow_1 \ar[d]^a\ar[r]^b& x \pplus {\frac 1 2} \star \downarrow_{\frac 1 2}\ar[r]^{a} \ar[dl]_b& \dots\\ 
& \star\downarrow_0 \ar@(ur,dr)^{a,b}\\
y\downarrow_1 \ar[r]^a \ar[ru]^b& y_{1} \pplus {\frac 1 4} y_{2}\downarrow_1 \ar[u]_a \ar[r]^b& y \pplus {\frac 1 4} \star\downarrow_{\frac 1 4} \ar[r]^{a} \ar[ul]_b& \dots} 

\vspace{5mm}\\
$\begin{array}{ccccc}
\bb{\eta (\cdot)}_{c^{\sharp}} & \varepsilon & a& b & ab\\
\toprule
x & 1 & 1 &  0 & \frac{1}{2} \\[0.4em]
y & 1 & 1 & 0 & \frac{1}{4} \\[0.4em]
\end{array}$
\end{tabular}
\end{center}
\caption{An RPLTS (top), part of its determinisation (center) and of its trace semantics (bottom)}\label{fig:RPLTS}
\end{figure}
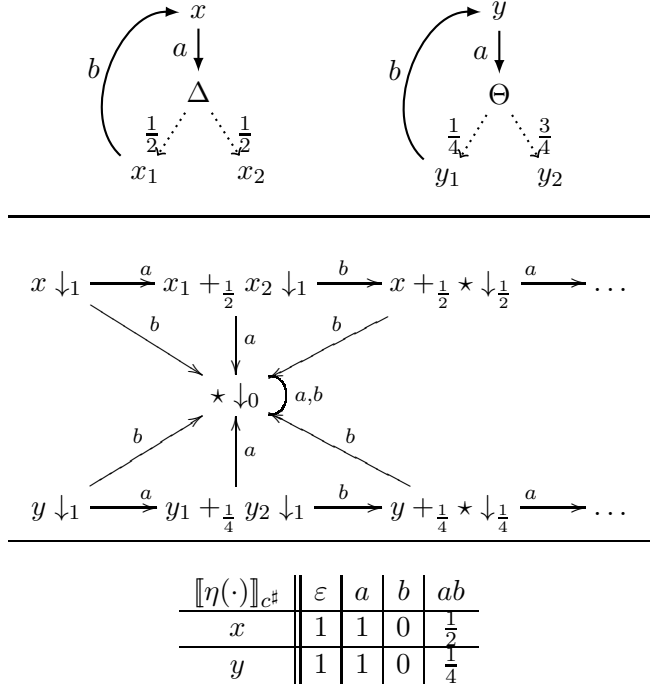

The example above suggests that the general trace equivalence given in Definition~\ref{def:traceequivalence} when dealing with systems with effects in a monad of shape $\T + 1$ might not provide the best results. Indeed, the state space of the determinised system for such a monad would always be of the shape $\T X + 1$, intuitively either an $\T$-combination of states in $X$ or $\star \in 1$. 
The solution that we propose here is to work with the monad $\T(\cdot +1)$ rather than  $\T + 1$. In order to do this it is enough to recall the morphism $\iota \colon \T + 1 \to \T(\cdot +1)$ from~\eqref{eq:iota}, transform any system $t\colon X \to (\T X + 1)^A$ into 
\begin{equation}\label{eq:postiota}
\bar{t}\colon \xymatrix{X \ar[rr]^t && (\T X + 1)^A \ar[rr]^{\iota^A_X} && (\T (X + 1))^A }
\end{equation}
and then adopt Definition~\ref{def:traceequivalence}.
With this recipe, we can recover the appropriate definition of trace equivalence for RPLTS\@.
\begin{exa}[Trace semantics for RPLTS]\label{ex:rplts}
We first turn an RPLTS $t\colon X \to (\dset X+1)^A$ into a system with $\Dis(\cdot + 1)$-effects by postcomposing with $\iota^A$, as in~\eqref{eq:postiota}. Observe that the monad map $\iota \colon \dset +1 \Rightarrow \Dis(\cdot + 1)$ only changes the type by embedding distributions into subdistributions and regards the unique element of $1$ as the empty subdistribution (namely the one mapping everything to $0$).

Now, we can apply Definition~\ref{def:traceequivalence}. The set of observation is $[0,1]=\dset(1+1)$ equipped with the free $\Dis(\cdot + 1)$-algebra generated by $1$. The observation function $o\colon X \to [0,1]$ maps every state $x\in X$ into the element $1\in [0,1]$. The function $\llbracket{\cdot}\rrbracket_{c^\sharp} \after \eta \colon X \to [0,1]^{A^*}$ obtained via the generalised determinisation of $c = \langle o, \bar{t} \rangle$ assigns to each state $x\in X$ and trace $w\in A^*$ the probability of reaching from $x$ any other state via $w$. We write $\equiv^{RP}$ for the induced trace equivalence. A similar construction has been used in~\cite{YZ14, YJZ17}.

Figure~\ref{fig:RPLTS} illustrates the semantics obtained in this way: observe that, differently from the semantics from Example~\ref{ex:rpltstermination}, now $x \not \equiv^{RP} y$ since $\bb{\eta (x)}_{c^{\sharp}}(ab)=\frac{1}{2}$ and $\bb{\eta (y)}_{c^{\sharp}}(ab)=\frac{1}{4}$.
\end{exa}

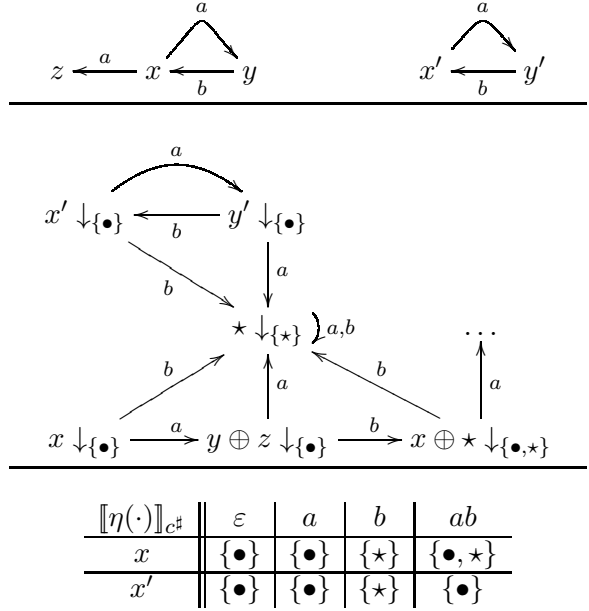
\begin{figure}
\begin{center}
\begin{tabular}{c}
$\xymatrix{
 z &
x \ar[l]_a \ar@(ur,ul)[r]^a & y \ar[l]^b & & x'\ar@(ur,ul)[r]^a & y' \ar[l]^b
}$
\vspace{5mm}\\
\begin{tabular}{c}
$\xymatrix{
x'\downarrow_{\{\bullet\}}\ar@(ur,ul)[r]^a \ar[rd]_b& y'\downarrow_{\{\bullet\}} \ar[l]^b \ar[d]^a\\
& \star \downarrow_{\{\star\}}\ar@(ur,dr)^{a,b} & \dots\\
x\downarrow_{\{\bullet\}}\ar[ru]^b \ar[r]^a & y\cplus z\downarrow_{\{\bullet\}} \ar[r]^b \ar[u]_a & x\cplus \star\downarrow_{\{\bullet,\star\}} \ar[ul]_{b} \ar[u]_a\\
}$
\end{tabular}
\vspace{5mm}\\
$\begin{array}{ccccc}
\bb{\eta (\cdot)}_{c^{\sharp}} & \varepsilon & a& b & ab\\
\toprule
x & \{\bullet\}& \{\bullet\}& \{\star\}& \{\bullet, \star\}\\[0.4em]
x' & \{\bullet\}& \{\bullet\}& \{\star\}& \{\bullet\}\\[0.4em]
\end{array}$
\end{tabular}
\end{center}
\caption{An LTS (top), part of its may-must determinisation (center); and the corresponding may-must semantics (bottom)}\label{figmaymustLTS}
\end{figure}

For LTS, we have already seen that Definition~\ref{def:traceequivalence} leads to the standard notion of trace semantics. However, by recalling that $\Pow=\Powne+1$, one could first transform an LTS into a systems with $\Powne(\cdot  +1)$-effects by~\eqref{eq:postiota} and then use Definition~\ref{def:traceequivalence}. The following example illustrates the alternative trace equivalence obtained in this way.

\begin{exa}[May/Must trace semantics for LTS]\label{ex:mmLTS}
Given an LTS $t\colon X \to (\Pow X)^A$, we define $\bar{t}\colon X \to \Powne(X+1)^A$ as in~\eqref{eq:postiota} by recalling that $\Pow= \Powne+1$. The embedding $\iota_X \colon \Pow X \to \Powne(X + 1)$  maps the empty set $\emptyset \in \Pow X$ into the singleton $\{\star\}$, where $\star$ is the unique element of $1$, and any nonempty subset to itself.

To obtain the algebra of observations we take the free $\Powne(\cdot +1)$-algebra generated by the singleton set $1=\{\bullet\}$. This is the semilattice
\begin{equation}\label{eq:domain}
\xymatrix@R=0.2cm@C=0.2cm{
& \{\bullet, \star\}\\
\{\bullet\} \ar@{-}[ur]& & \{\star\} \ar@{-}[ul]
}
\end{equation} with point $\{\star\}$. Hereafter we refer to this as $3$. The observation function $o\colon X \to 3$ maps every state $x\in X$ into the element $\{\bullet\}\in 3$. We rely on a \emph{testing scenario} to give an intuition of the semantic map $\llbracket{\cdot}\rrbracket_{c^\sharp} \after \eta \colon X \to 3^{A^*}$ obtained via the generalised determinisation of $c = \langle o, \bar{t} \rangle$: words $w\in A^*$ can be thought as tests to be performed on each state $x\in X$ of the original LTS;\@ the outcome of a test is  $\star$ if $x$ does not pass the test, namely it gets stuck during the execution of $w$; the outcome is $\bullet$ if $x$ passes the test, namely it has entirely executed $w$; since the system is nondeterministic different runs of the same test on the same state can lead to different outcomes. If the outcome is always $\bullet$, then $\llbracket{x}\rrbracket_{c^\sharp}(w)=\{\bullet\}$; if it is always $\star$, then $\llbracket{x}\rrbracket_{c^\sharp}(w)=\{\star\}$; if instead the outcome is sometimes $\bullet$ and sometimes $\star$, then $\llbracket{x}\rrbracket_{c^\sharp}(w)=\{\bullet, \star\}$. Figure~\ref{figmaymustLTS} illustrates an LTS and the obtained semantics. Observe that $\bb{\eta(x')}(a)=\bb{\eta(x)}(a)=\{\bullet\}$ since both $x$ and $x'$ always pass the test $a$, while $\bb{\eta(x')}(b)=\bb{\eta(x)}(b)=\{\star\}$ since both $x$ and $x'$ always fail the test $b$. The behaviours of $x'$ and $x$ differ on the test $ab$: $x'$ always passes $ab$, while $x$ may fail it. Indeed, by reading $a$, $x$ can go to $z$ and then get stuck when reading $b$.

We denote the induced equivalence by $\equiv^{LTS}_*$, and we call it may-must trace equivalence for LTS\@. Such a name is justified below by introducing may trace and must trace semantics, and in Remark~\ref{rem:testlts}.

\medskip

Recall the presentation of the monad $\Powne(\cdot +1)$ from Section~\ref{sec:alg-th} and its two possible quotients by the axioms (B) and (T). Let us denote the corresponding quotient maps by $q^B$ and $q^T$, respectively. By applying Definition~\ref{def:traceequivalence} to $(q^B_X)^A \after \bar{t}$ and $(q^T_X)^A \after \bar{t}$ one obtains two different semantics which we call may trace equivalence, denoted by $\equiv^{LTS}_B$, and must trace equivalence, denoted by $\equiv^{LTS}_T$. To better illustrate both of them, we focus on their algebras of observations.

By quotienting the pointed semilattice~\eqref{eq:domain} by (B) and (T), one obtains, respectively, the semilattice with bottom and the semilattice with top freely generated by $1$, depicted on the left and on the right below.
\begin{equation*}
\xymatrix@R=0.4cm@C=0.2cm{
\{\bullet\}= \{\bullet, \star\}\\
\{\star\} \ar@{-}[u]
}
\qquad
\xymatrix@R=0.4cm@C=0.2cm{
\{\star\}= \{\bullet, \star\}\\
\{\bullet\} \ar@{-}[u]
}\end{equation*}
In the semilattice on the left, the top element is assigned when a state $x$ may pass a test $w$, while the bottom element when $x$ always fails. In the semilattice on the right, the top element is assigned when $x$ can fail, while the bottom element is assigned when $x$ always passes the test.
Figure~\ref{figmayandmustLTS} illustrates an example of the may semantics and the must semantics.

We conclude by observing that, since the quotient of $\Powne(\cdot +1)$ by (B) is exactly $\Pow$ and, since $q^B\after \iota = id$, then  $\equiv^{LTS}_B$ is the standard trace semantics for LTS\@.
\end{exa}

 \begin{figure}
  \begin{center}
 \begin{tabular}{ccccc}
$\xymatrix{
x'\downarrow_{\{\bullet\}}\ar@(ur,ul)[r]^a \ar[rd]_b& y'\downarrow_{\{\bullet\}} \ar[l]^b \ar[d]^a\\
& \star \downarrow_{\{\star\}}\ar@(ur,dr)^{a,b} \\
x\downarrow_{\{\bullet\}}\ar[ru]^b \ar[r]^a & y\cplus z\downarrow_{\{\bullet\}} \ar@(dl,dr)[l]_b \ar[u]_a \\
}$\qquad
&&&&\qquad
$\xymatrix{
x'\downarrow_{\{\bullet\}}\ar@(ur,ul)[r]^a \ar[rd]_b& y'\downarrow_{\{\bullet\}} \ar[l]^b \ar[d]^a\\
& \star \downarrow_{\{\star\}}\ar@(ur,dr)^{a,b} \\
x\downarrow_{\{\bullet\}}\ar[ru]^b \ar[r]^a & y\cplus z\downarrow_{\{\bullet\}}  \ar[u]_{a,b} \\
}$
\\
\\
\\
$\begin{array}{ccccc}
\bb{\eta (\cdot)}_{c^{\sharp}} & \varepsilon & a& b & ab\\
\toprule
x & \{\bullet\}& \{\bullet\}& \{\star\}& \{\bullet\}\\
x' & \{\bullet\}& \{\bullet\}& \{\star\}& \{\bullet\}\\
\end{array}$
&&&&
$\begin{array}{ccccc}
\bb{\eta (\cdot)}_{c^{\sharp}} & \varepsilon & a& b & ab\\
\toprule
x & \{\bullet\}& \{\bullet\}& \{\star\}& \{\star\}\\
x' & \{\bullet\}& \{\bullet\}& \{\star\}& \{\bullet\}\\
\end{array}$
\end{tabular}
 \end{center}
 \caption{May trace (left) and must trace (right) semantics for the LTS in Figure~\ref{figmaymustLTS}}\label{figmayandmustLTS}
 \end{figure}
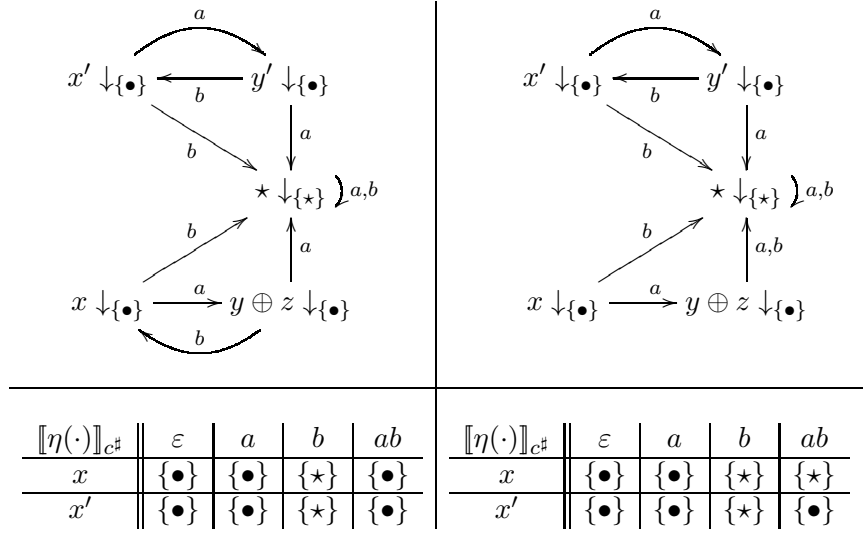

\begin{rem}\label{rem:testlts}
The testing scenario described in Example~\ref{ex:mmLTS}, using the three free algebras generated by the singleton set $1=\{\bullet\}$ as possible outcomes of tests,  is exactly the one of the standard theory of testing equivalences~\cite{DH84}, when taking as class of tests the set of finite traces.
Hence, in this testing scenario the equivalences $\equiv^{LTS}_*$, $\equiv^{LTS}_B$, and $\equiv^{LTS}_T$ are respectively the equivalences known as may-must (or test) testing equivalence, may testing equivalence, and must testing equivalence.
\end{rem}

In Section~\ref{sec:maymust}, we will introduce may-must, may and must semantics for systems with probability and nondeterminism, by basically applying the recipe described in this section to the three monads introduced in Section~\ref{sec:termination}. It is worth to announce here that the obtained semantics will be ``backward compatible'' ---in a sense that will be clarified later--- with $\equiv^{LTS}_*$, $\equiv^{LTS}_B$ and $\equiv^{LTS}_T$. To prove this property we will need the following corollary of Theorem~\ref{thm:transfert}.

\begin{cor}\label{cor:invariance}
Let $(\T,\eta,\mu)$ be a submonad of $(\hat \T,\hat\eta,\hat\mu)$ via an injective monad map $\sigma\colon \T \Rightarrow \hat\T$. Let $t\colon X \to (\T X)^A$ be a system with $\T$-effects and let $\hat t$ be the system with $\hat\T$-effects $\sigma_X^A\after t \colon X \to (\hat \T X)^A$. Let $\equiv, \hat\equiv\subseteq X\times X$ be the trace equivalences for $t$ and $ \hat t $, respectively.
Then $\equiv \,=\,\hat\equiv$. \qed%
\end{cor}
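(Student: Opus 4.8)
The plan is to exhibit an explicit relation between the two semantic maps and then conclude by injectivity. Recall that, by Definition~\ref{def:traceequivalence}, the trace equivalence $\equiv$ for $t$ is the kernel of $\bb{\cdot}\after\eta_X \colon X \to (\T 1)^{A^*}$, where $\bb{\cdot}\colon \T X \to (\T 1)^{A^*}$ is induced by determinising $\langle\eta_1\after !,\, t\rangle$ with respect to $\mu_1\colon \T\T 1 \to \T 1$; and $\hat\equiv$ is the kernel of $\hat{\bb{\cdot}}\after\hat\eta_X\colon X\to(\hat\T 1)^{A^*}$, with $\hat{\bb{\cdot}}\colon\hat\T X\to(\hat\T 1)^{A^*}$ induced by determinising $\langle\hat\eta_1\after !,\, \hat t\,\rangle$ with respect to $\hat\mu_1\colon\hat\T\hat\T 1\to\hat\T 1$. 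Since these maps land in different final coalgebras, the bridge between them is $\sigma_1^{A^*}\colon(\T 1)^{A^*}\to(\hat\T 1)^{A^*}$, postcomposition with the component $\sigma_1$. Concretely, I would prove
\[
\hat{\bb{\cdot}}\after\hat\eta_X \;=\; \sigma_1^{A^*}\after\bb{\cdot}\after\eta_X,
\]
and then, since $\sigma$ is injective, so is $\sigma_1$, hence $\sigma_1^{A^*}$ is injective and the two kernels coincide, i.e.\ $\equiv\,=\,\hat\equiv$.

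To obtain the displayed identity I would chain the two parts of Theorem~\ref{thm:transfert} through a single intermediate semantics: the language semantics of $t$ taken with observations in the larger set $\hat\T 1$. Let $a' = \hat\mu_1\after\sigma_{\hat\T 1}\colon \T(\hat\T 1)\to\hat\T 1$ be the $\T$-algebra obtained by restricting the free $\hat\T$-algebra $\hat\mu_1$ along $\sigma$ (this is a $\T$-algebra by the standard restriction-of-scalars fact for a monad map), and let $\bb{\cdot}'\colon\T X\to(\hat\T 1)^{A^*}$ be induced by determinising $\langle\hat\eta_1\after !,\, t\rangle$ with respect to $a'$. First I would apply Theorem~\ref{thm:transfert}.1 with the monad map $\sigma$, observation set $O=\hat\T 1$, the $\hat\T$-algebra $\hat a = \hat\mu_1$, and the $\T$-algebra $a = a' = \hat a\after\sigma_{\hat\T 1}$: the hypothesis $a = \hat a\after\sigma_O$ holds by construction, and the coalgebra $\langle\hat\eta_1\after !,\, \sigma_X^A\after t\rangle$ is exactly $\hat c = \langle\hat\eta_1\after !,\, \hat t\,\rangle$, whose determinisation with respect to $\hat\mu_1$ yields precisely $\hat{\bb{\cdot}}$. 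Part~1 thus gives $\bb{\cdot}'\after\eta_X = \hat{\bb{\cdot}}\after\hat\eta_X$.

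Next I would relate $\bb{\cdot}'$ to the genuine trace semantics $\bb{\cdot}$ of $t$ by a change of observations, via Theorem~\ref{thm:transfert}.2 applied to $t$ with fixed monad $\T$, from the observation algebra $(\T 1,\mu_1)$ to $(\hat\T 1, a')$ along $h = \sigma_1$. The crucial point is that $\sigma_1$ is a $\T$-algebra morphism $(\T 1,\mu_1)\to(\hat\T 1, a')$: the required equality $\sigma_1\after\mu_1 = a'\after\T\sigma_1 = \hat\mu_1\after\sigma_{\hat\T 1}\after\T\sigma_1$ is exactly the multiplication law $\sigma\after\mu = \hat\mu\after\sigma\sigma$ for the monad map $\sigma$ read off at the object $1$, recalling that $(\sigma\sigma)_1 = \sigma_{\hat\T 1}\after\T\sigma_1$. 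Moreover the new observation map $h\after(\eta_1\after !) = \sigma_1\after\eta_1\after ! = \hat\eta_1\after !$ by the unit law $\sigma\after\eta = \hat\eta$, matching the observation map used for $\bb{\cdot}'$. Hence Part~2 gives $\bb{\cdot}' = \sigma_1^{A^*}\after\bb{\cdot}$. Composing with $\eta_X$ and combining with the previous step yields $\hat{\bb{\cdot}}\after\hat\eta_X = \bb{\cdot}'\after\eta_X = \sigma_1^{A^*}\after\bb{\cdot}\after\eta_X$, as wanted.

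The routine verifications (that $a'$ is a $\T$-algebra, and that $\sigma_1^{A^*}$ is injective) are immediate. I expect the only real content, and the place where care is needed, to be bookkeeping: keeping track of which monad, which observation set, and which observation function each invocation of Theorem~\ref{thm:transfert} refers to, and checking that the coalgebra produced by Part~1 coincides on the nose with $\hat c$ and that the algebra-morphism hypothesis of Part~2 is nothing but the monad-map square evaluated at $1$. Note that injectivity of $\sigma$ enters only at the very last step, to pass from equality of the semantic maps to equality of their kernels.
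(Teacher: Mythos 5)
Your proof is correct and essentially the same as the paper's: both factor through the intermediate automaton $\langle\hat\eta_1\after !,\, t\rangle$ with the restricted $\T$-algebra $\hat\mu_1\after\sigma_{\hat\T 1}$ on $\hat\T 1$, invoke the two parts of Theorem~\ref{thm:transfert} with exactly the same verifications ($\sigma_1$ an algebra morphism via the monad-map multiplication square, $\hat\eta_1\after ! = \sigma_1\after\eta_1\after !$ via the unit law), and conclude from injectivity of $\sigma_1^{A^*}$. The only difference is that you apply Part~1 before Part~2 while the paper does the reverse, which is immaterial.
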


\begin{proof}
We fix
$o=\big(X \stackrel{!}{\longrightarrow}1 \stackrel{\eta_1}{\longrightarrow} \T 1\big)$ and $\hat o=\big(X \stackrel{!}{\longrightarrow}1 \stackrel{\hat\eta_1}{\longrightarrow} \hat\T 1\big)$.

We first transform the automaton $c = \langle o,t\rangle$ to $\hat c = \langle \hat o,t\rangle$ and apply Theorem~\ref{thm:transfert}.2 and then transform $\hat c = \langle \hat o,t\rangle$ to $\hat{\hat c} = \langle \hat o, \hat t\,\rangle$ and apply Theorem~\ref{thm:transfert}.1.

For the determinisation, take for $a$ in Theorem~\ref{thm:transfert}.2 the free algebra $\mu_1\colon \T\T 1 \to \T1$ and as $\hat a$ the $\T$-algebra $(\hat\mu \after \sigma_{\hat \T 1})\colon \T\hat \T 1\to \hat \T 1$. It is easy to see that $\hat a$ is indeed an $\T$-algebra using that $\sigma$ is a monad map, its naturality, and the associativity of $\hat\mu$. Since $\sigma$ is a monad map, the following  diagram commutes showing that $\sigma_1$ is an $\T$-algebra homomorphism.
\[
\xymatrix@R-1pc{
\T\T 1\ar[dd]_{\mu} \ar[r]^{\T\sigma_1} & \T\hat \T 1 \ar[d]^{\sigma_{\hat \T 1}}\\
& \hat \T\hat \T 1 \ar[d]^{\hat\mu}\\
\T 1 \ar[r]^{\sigma_1} & \hat\T 1
}
\]
Observe that $\hat o = \sigma_1 \after o$, again since $\sigma$ is a monad map. Then, by Theorem~\ref{thm:transfert}.2 $\hat{\beh{\cdot}} = \sigma_1^{A^*} \after \beh{\cdot}$
where
$\hat{\bb{\cdot}}$ is the semantics obtained by determinisation of $\hat c$ and $\bb{\cdot}$ the one after determinisation of $c$. Since $\sigma_1$ is injective, also $\sigma_1^{A^*}$ is injective and we have that for all $x,y\in X$, $\hat{\bb{\eta(x)}} = \hat{\bb{\eta(y)}}$ iff $\bb{\eta(x)}=\bb{\eta(y)}$, i.e., the semantics remains the same.

For the second step, take $\hat{\hat a} = \hat\mu\colon \hat\T\hat\T 1 \to \hat\T 1$ for the determinisation of $\hat{\hat{c}}$. By definition $\hat a = \hat{\hat a} \after \sigma_{\hat O}$. Therefore Theorem~\ref{thm:transfert}.1 guarantees that the language semantics of  $\hat{\hat c}$ (i.e., trace semantics of $\hat{t}$) again remains the same as the language semantics  of $\hat c$  (i.e, trace semantics of $t$).
\end{proof}

\section{May / Must Traces for NPLTS}\label{sec:maymust}

In this section, we put all the pieces together and give, using the general recipe from Section~\ref{subsec:sys-aut}, trace semantics for systems featuring nondeterminism, probability, and termination. We work with the monad $\TPCS = C(\cdot + 1)$ and consider its two quotients $\TCSB$ and $\TCST$ that we have illustrated in Section~\ref{sec:termination}. Each of these choices gives us a meaningful trace equivalence.

We start by recalling three types of coalgebras.

\mypar{NPLTS} Nondeterministic probabilistic labelled transition systems, NPLTS, also known as simple Segala systems, are coalgebras for the functor $F = (\Pow\Dis(\cdot))^A$. Behavioural equivalence coincides with strong probabilistic bisimilarity~\cite{BSV04:tcs,Sokolova11}.

\mypar{Convex NPLTS} Convex NPLTS are coalgebras for $(C+1)^A$. Behavioural equivalence coincides with convex probabilistic bisimilarity~\cite{Mio14}.

\mypar{NPA} Nondeterministic Probabilistic automata, NPA, with observations in $O$ are (for us in this paper)  coalgebras for $F = O\times (C(\cdot + 1))^A$. We explained in~Section~\ref{subsec:sys-aut} above how to move from (convex) NPLTS to NPA, which involves two steps: (1) Adding observations and  (2) Dealing with termination.

\medskip

In order to define trace semantics for NPLTS via the general recipe from Section~\ref{subsec:sys-aut}, we first need to transform them into convex NPLTS which are systems with $C+1$-effects. Then, following  Section~\ref{subsec:sys-aut}, we transform them into systems with $C(\cdot +1)$-effects and, finally, we add observations, so to obtain automata with $C(\cdot + 1)$-effects, namely NPA\@.

We exploit the natural transformations $\convex$ from~\eqref{eq:conv} and $\iota$ from~\eqref{eq:iota}. Given an NPLTS $t\colon X \to (\Pow\Dis X)^A$ we fix
\begin{equation}\label{eq:bart}
\xymatrix@C=1.5cm{\bart = \big( X  \ar[r]^t &  (\Powne\Dis X+1)^A
\ar[rr]^{(\convex_X+1)^A}  && (CX+1)^A  \ar[r]^{\iota_X^A}  &  (C(X+1))^A  \big)}
\end{equation}
and add observations as prescribed in~\eqref{eq:observation}
\begin{equation}\label{eq:barO}
\barO= \big( X  \stackrel{!}{\xrightarrow{\hspace*{5pt}}} 1
\stackrel{\eta_1}{\xrightarrow{\hspace*{10pt}}}  C(1+1)  \big)
\end{equation}
Now recall from Proposition~\ref{prop:intervalspcs}  that the pointed convex semilattice freely generated by $1$ is $\mathbb{M}_{\intervals,[0,0]}$ and thus
 $C(1+1) =\intervals$: We easily derive that
 \[\barO(x)= \eta_1(!(x)) = \eta_1(\bullet) = \{\delta_{\bullet}\} = [1,1]\] for all $x\in X$, as $\delta_{\bullet} = 1$ (see proof of Proposition~\ref{prop:intervalspcs}).

 Let $\langle\barO^\sharp, \bart^\sharp\rangle \colon C(X+1) \to \intervals \times C(X+1)^A$ be the determinisation of $\langle o,t\rangle$ w.r.t.\ such algebra and $\llbracket\cdot\rrbracket \colon C( X+1) \to \intervals ^{A^*}$ be the final coalgebra map.
\begin{equation*}
\begin{array}{ll}
\xymatrix{ X \ar[d]_{\langle \barO,\bart \rangle} \ar[r]^{\eta}& C( X+1) \ar[dl]^{\langle \barO^\sharp,\bart^\sharp \rangle} \ar@{-->}[rr]^{\bb{\cdot}}& & \intervals^{A^*} \ar[d]^{\langle \epsilon, \dder \rangle}\\
\intervals\times C( X+1)^A \ar@{-->}[rrr]_{id_{\intervals} \times \bb{\cdot}^A}& & & \intervals \times (\intervals^{A^*})^A
} &
\begin{array}{lcl}
\vspace{0.5cm}\\
\bb{S}(\varepsilon) &= &\barO^\sharp(S)\\
\bb{S}(aw) &=& \bb{\bart^\sharp(S)(a)}(w)
\end{array}
\end{array}
\end{equation*}
\begin{defi}We say that two states $x,y\in X$ are \emph{may-must trace equivalent}, written $x \equiv y$, if and only if $\bb{\eta(x)}=\bb{\eta(y)}$.
\end{defi}

By instantiating~\eqref{eq:alg-det} and using the presentation of the monad $C(\cdot +1)$, we can obtain a convenient inductive definition of $\langle \barO^{\sharp}, \bart^{\sharp} \rangle$:
\[\begin{array}{lcl}
\barO^{\sharp}(S) & = &
\begin{cases}
[1,1] & \text{ if } S=x \text{;}\\
               [0,0]               & \text{ if } S=\star\text{;}\\
               \barO^{\sharp}(S_1) \minmax \barO^{\sharp}(S_2) & \text{ if } S=S_1\cplus S_2  \text{;}\\
               \barO^{\sharp}(S_1) +_p \barO^{\sharp}(S_2) & \text{ if } S=S_1\pplus{p} S_2  \text{.}\\
           \end{cases}
\end{array}\]

\[\begin{array}{lcl}
\bart^{\sharp}(S)(a) & = &
\begin{cases}
\bart(x)(a) & \text{ if } S= x  \text{;}\\
               \star              & \text{ if } S=\star\text{;}\\
               \bart^{\sharp}(S_1)(a) \cplus \bart^{\sharp}(S_2)(a) & \text{ if } S=S_1\cplus S_2  \text{;}\\
               \bart^{\sharp}(S_1)(a) \pplus{p} \bart^{\sharp}(S_2)(a) & \text{ if } S=S_1\pplus{p} S_2  \text{.}\\
           \end{cases}
\end{array}
\]

\mypar{May trace equivalence and must trace equivalence}
Now one may want  to treat termination in a different way and exploit the monads $\TCSB$ and $\TCST$ discussed in Section~\ref{sec:termination}. Given the monad morphisms  $\quotientB\colon \TPCS\Rightarrow \TCSB$ and $\quotientT\colon \TPCS\Rightarrow \TCST$ quotienting   $\TPCS$ by $(B)$ and $(T)$, respectively, one can construct the transition functions
\begin{equation}\label{eq:bartB}
\bartB = (\quotientB_X)^A \after \bart \colon X \to (\TCSB X)^A \qquad \text{ and } \qquad
\bartT =(\quotientT_X)^A \after \bart \colon X \to (\TCST X)^A.
\end{equation}
For the observations, we always use the general recipe of Section~\ref{subsec:sys-aut} and take the observation functions:
\begin{equation*}
\barob= \big( X  \stackrel{!}{\xrightarrow{\hspace*{5pt}}} 1
\stackrel{\eta_1}{\xrightarrow{\hspace*{5pt}}}  \TCSB 1  \big) \qquad \text{ and } \qquad
 \barot= \big( X  \stackrel{!}{\xrightarrow{\hspace*{5pt}}} 1
\stackrel{\eta_1}{\xrightarrow{\hspace*{5pt}}}  \TCST 1  \big).
\end{equation*}
Recall from Proposition~\ref{prop:max} and Proposition~\ref{prop:min} that $\mathbb M\text{ax}_{B} = ([0,1],\max,+_p, 0)$ and $\mathbb M\text{in}_{T} = ([0,1],\min,+_p, 0)$ are,  the free convex semilattice with bottom and, respectively, with top, generated by the singleton set $1$.
Therefore these algebraic structures gives us the determinisation of $\barob$ and $\barot$. Since $\barob(x) = 1$ and $\barot(x) = 1$ for all $x\in X$, the determinisation $\barob^{\sharp}\colon \TCSB X \to [0,1]$ and $\barot^{\sharp}\colon \TCST X \to [0,1]$ enjoy the following inductive definition.
\[\begin{array}{lcl}
\barob^{\sharp}(S) & = &
\begin{cases}
1 & \text{ if } S= x \text{;}\\
               0               & \text{ if } S= \star\text{;}\\
               \barO^{\sharp}(S_1) \max \barO^{\sharp}(S_2) & \text{ if } S=S_1\cplus S_2  \text{;}\\
               \barO^{\sharp}(S_1) +_p \barO^{\sharp}(S_2) & \text{ if } S=S_1\pplus{p} S_2  \text{.}\\
           \end{cases}
\end{array}\]

\[ \begin{array}{lcl}
\barot^{\sharp}(S) & = &
\begin{cases}
1 & \text{ if } S= x \text{;}\\
               0               & \text{ if } S= \star\text{;}\\
               \barO^{\sharp}(S_1) \min\barO^{\sharp}( S_2) & \text{ if } S=S_1\cplus S_2  \text{;}\\
               \barO^{\sharp}(S_1) +_p \barO^{\sharp}(S_2) & \text{ if } S=S_1\pplus{p} S_2  \text{.}\\
           \end{cases}
\end{array}
\]
The transition functions $\bartB^{\sharp} \colon \TCSB X \to (\TCSB X)^A$ and $\bartT^{\sharp} \colon \TCST X \to (\TCST X)^A$ are defined in the same way like $\bart^{\sharp}$ above.

The coalgebras $\langle \barob^{\sharp}, \bartB^{\sharp} \rangle$ and $\langle \barot^{\sharp}, \bartT^{\sharp} \rangle$ give rise to morphisms $\bbmay{\cdot}\colon \TCSB X \to [0,1]^{A^*}$ and $\bbmust{\cdot}\colon \TCST X  \to [0,1]^{A^*}$ and corresponding behavioural equivalences.

\begin{defi} The \emph{may trace equivalence} for the NPLTS is denoted by
 $\eqmay$ and defined as
 \[x \eqmay y \quad \Leftrightarrow \quad \bbmay{\eta(x)}=\bbmay{\eta(y)}\] where $\eta$ is the unit of the monad $\TCSB$.
 The \emph{must trace equivalence} for the NPLTS is denoted by $\eqmust$ and defined by
 \[x \eqmust y \quad \Leftrightarrow \quad\bbmust{\eta(x)}=\bbmust{\eta(y)}\] for $\eta$ denoting now the unit of the monad $\TCST$.
\end{defi}

\bigskip

\begin{rem}%
\label{rem:detinvariance}
One might exploit the generalised determinisation in different ways, but these always lead to the above semantics.

Consider the coalgebra $\langle \barob,\bart \rangle\colon X \to \TCSB 1 \times (\TPCS X)^A$ and observe that the algebra $\maxalg_B = ([0,1],\max,+_p, 0)$, namely $\mu_1 \colon \TCSB\TCSB 1 \to \TCSB 1$, is also a pointed convex semilattice---formally this is $\mu_1 \after \quotientB \colon \TPCS\TCSB 1 \to \TCSB 1$. One could thus perform the generalised determinisation w.r.t.\ this algebra and the monad $\TPCS$ and obtain an equivalence that we denote by $\eqmay'$.
Theorem~\ref{thm:transfert}.1 guarantees however that ${\eqmay'}={\eqmay}$. Similarly, one could start with the coalgebra $\langle \barot,\bart \rangle$, apply the same construction and end up with an equivalence which, by Theorem~\ref{thm:transfert}.1, coincides with $\eqmust$.
\end{rem}

\begin{rem}\label{rem:gsosC}
By recalling that the monad $C(\cdot + 1)$ is presented by the algebraic theory $\PCS= (\Sigma_{NP}\cup \Sigma_T, E_{NP})$, one can use Corollary~\ref{cor:invarianceGSOS} to determinise w.r.t.\ the term monad $T_{\Sigma_{NP}\cup \Sigma_T}$ without changing the resulting semantics. The states of the systems determinised in this way are now syntactic terms, i.e, without the axioms in $E_{NP}$, and the determinisation can be expressed by means of GSOS rules. Indeed, by instantiating the GSOS rules~\eqref{generalGSOS1} and~\eqref{generalGSOS2}  for an arbitrary signature to the signature $\Sigma_{NP}\cup \Sigma_T$, one obtains exactly the rules displayed in Table~\ref{table:GSOS}.

The rule on the right in~\eqref{generalGSOS1} gives rise to the rules for $\star$, $\cplus$ and $\pplus{p}$ in Table~\ref{table:GSOS}.(a). The rule on the left gives rise to the rules for $\star$, $\cplus$ and $\pplus{p}$ in Table~\ref{table:GSOS}.(b), (c) and (d) when taking as algebras of observations $\maxalg_B$, $\minalg_T$ and $\mathbb{M}_{\intervals,[0,0]}$, respectively.

The rules in~\eqref{generalGSOS2} require more explanation. In the rule on the left, $o$ should be instatiated with $\barob$, $\barot$ and $\barO$. By recalling that for all $x\in X$, $\barob(x)=\barot(x)=1$ and $\barO(x)=[1,1]$ one obtains the rules for $x$ in Table~\ref{table:GSOS}.(b), (c) and (d). In order to see that the rule on the right of~\eqref{generalGSOS2} gives rise to the rule for $x$ in Table~\ref{table:GSOS}.(a), one should take $t'$ as $(\termfun_X)^{A} \after t$ and observe that $(q^{E_{NP}}_X)^{A} \after t'$ coincides with $\bart$.
\end{rem}

\begin{exa}\label{ex:det}
Consider the convex closure of the \pss from Figure~\ref{fig:examplesys}.
Following Remark~\ref{rem:gsosC}, we can syntactically describe the convex sets of subdistributions reached by a state when performing a transition as follows:
\[x \ttrel a x_{1} \cplus (x_{3} \pplus{\frac 1 2} x_{2})\]

\[y \ttrel a y_{1} \cplus (y_{4} \pplus{\frac 1 2} y_{2}) \cplus ((y_{2} \pplus{\frac 1 2} y_{4}) \pplus{\frac 1 2} y_{3})\]

 \[x_{1} \ttrel b x \pplus{\frac 1 2} x_{3}
\qquad
 y_{1} \ttrel b y \pplus {\frac 1 2}  y_{4} \]

\[ x_{2} \ttrel b x_{3}
\qquad x_{2} \ttrel c x
\qquad
 y_{2} \ttrel b  y_{4}
\qquad y_{3} \ttrel c y
\]
In the determinised system, we have
\[x \ttrel a S_{1} \ttrel b S_{2} \qquad y \ttrel a S'_{1} \ttrel b S'_{2}\]
For
\begin{eqnarray*}S_{1} &=& x_{1} \cplus (x_{3} \pplus{\frac 1 2} x_{2}) \\
 S_{2} &=& (x \pplus{\frac 1 2} x_{3}) \cplus (\onev \pplus{\frac 1 2} x_{3})\\
S'_{1} &=& y_{1} \cplus (y_{4} \pplus{\frac 1 2} y_{2}) \cplus ((y_{2} \pplus{\frac 1 2} y_{4}) \pplus{\frac 1 2} y_{3})\\
 S'_{2} &=& (y\pplus{\frac 1 2} y_{4}) \cplus(\onev  \pplus{\frac 1 2} y_{4}) \cplus ((y_{4} \pplus{\frac 1 2} \star) \pplus{\frac 1 2} \onev)
 \end{eqnarray*}

Consider now the observations associated to the terms in the may-must semantics. We have
$\barO^{\sharp}(x)=[1,1]=\barO^{\sharp}(y)$
and hence
\[
\barO^{\sharp}(S_{1}) = [1,1] \minmax ([1,1] \pplus{\frac 1 2} [1,1])= [1,1].
\]
Analogously, $\barO^{\sharp}(S'_{1})=[1,1]$.
Furtheron
\[
\barO^{\sharp} (S_{2}) =  ([1,1] \pplus {\frac 1 2} [1,1])  \minmax ([0,0] \pplus{\frac 1 2} [1,1])= [\frac 1 2, 1]
\]
and in the same way we derive $\barO^{\sharp} (S'_{2}) = [\frac 1 4,1]$.

Hence, $x$ and $y$ are not may-must trace equivalent:
$\bb {x} (ab) = \barO^{\sharp} (S_{2}) \neq \barO^{\sharp} (S'_{2}) = \bb {y} (ab)$.

However, using $\maxalg_B$, we get $\barob^{\sharp} (S_{2})= \barob^{\sharp} (S'_{2})$ as the intervals obtained via the may-must observation over $S_{2},S'_{2}$ have the same upper bound $1$, which is the value returned by both $\barob^{\sharp} (S_{2})$ and $\barob^{\sharp} (S'_{2})$.
Hence,
$\bb {x}_B (ab) = \barob^{\sharp} (S_{2}) = \barob^{\sharp} (S'_{2}) = \bb {y}_B (ab)$.
More generally, it holds that $x$ and $y$ are may trace equivalent. We can elegantly prove this by using up-to techniques, as shown in Section~\ref{sec:upto}.
\end{exa}

\subsection{Properties of the semantics}

The three notions of trace equivalence for NPLTS that we have introduced above, $\equiv$, $\eqmay$ and $\eqmust$, enjoy several desirable properties that, on the one hand, confirm the appropriateness of our semantics and, on the other, provide useful techniques for reasoning about them. In this section we show such properties while in Section~\ref{sec:upto}, we illustrate such techniques at work.

\subsubsection{Bisimilarity implies trace equivalence}
For NPLTS, there exist two main notions of bisimilarity, which are usually called \emph{(strong) probabilistic bisimilarity} and \emph{convex probabilistic bisimilarity}~\cite{SL94}. Originally, these relations were called (strong) bisimulation and (strong) probabilistic bisimulation, respectively. We show here that both imply may-must, may, and must trace equivalence. For this purpose, we will repeatedly use the following result from the general theory of coalgebra.

\begin{lemC}[{\cite[Theorem 15.1]{Rut00:tcs}}]\label{lemma:nattrans}
Let $F$ and $G$ be two endofunctors on $\Sets$ and $c\colon X \to FX$ be an $F$-coalgebra.
If there exists a natural transformation $\alpha \colon F \Rightarrow G$, then behavioural equivalence for $c$ implies  behavioural equivalence for $\alpha \after c$. \qed%
\end{lemC}

\begin{thm}
Probabilistic bisimilarity (called bisimilarity in the original paper) and convex probabilistic bisimilarity (called probabilistic bisimilarity in the original paper) from~\cite{SL94} imply $\equiv$, $\eqmay$, and $\eqmust$.
\end{thm}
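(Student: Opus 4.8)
The plan is to exhibit each of the three trace equivalences as the \emph{language} equivalence of a determinised automaton, and then to combine two results already available: item~2 of Theorem~\ref{thm:det-prop}, which says that behavioural equivalence of a coalgebra $c\colon X\to F\T X$ implies the language equivalence of its determinisation $c^\sharp$, and Lemma~\ref{lemma:nattrans}, which propagates behavioural equivalence along a natural transformation $\alpha\colon F\Rightarrow G$ from $c$ to $\alpha\after c$. Concretely, the pre-determinisation coalgebras $\langle\barO,\bart\rangle$, $\langle\barob,\bartB\rangle$ and $\langle\barot,\bartT\rangle$ already have the shape $X\to G\T X$ with $G=O\times(\cdot)^A$ and $\T$ equal to $C(\cdot+1)$, $\TCSB$ and $\TCST$ respectively, so Theorem~\ref{thm:det-prop}.2 applies to each and yields $\equiv$, $\eqmay$ and $\eqmust$ from behavioural equivalence of the respective coalgebra. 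It therefore remains to connect these coalgebras to the bisimilarity coalgebras by natural transformations.

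For the may-must case, recall from \eqref{eq:bart} and \eqref{eq:barO} that $\langle\barO,\bart\rangle=\alpha\after t$, where $t\colon X\to(\Pow\Dis X)^A$ is the NPLTS and
$$\alpha = \big\langle\, \mathrm{const}_{[1,1]},\; \iota^A\after(\convex+1)^A\,\big\rangle \colon (\Pow\Dis(\cdot))^A \Rightarrow \intervals\times(C(\cdot+1))^A$$
is read through the identification $\Pow\cong\Pow_{ne}(\cdot)+1$, with $\convex$ the natural transformation \eqref{eq:conv} and $\iota$ the one from \eqref{eq:iota}. The transition component is natural because $\convex$, $\iota$ and the iso are; the observation component is the constant map at $[1,1]\in\intervals=C(1+1)$, which is natural into the constant functor $\intervals$ and agrees with $\barO$ since $\barO(x)=[1,1]$ for all $x$. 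As probabilistic bisimilarity is behavioural equivalence for the $(\Pow\Dis(\cdot))^A$-coalgebra $t$, Lemma~\ref{lemma:nattrans} gives that it implies behavioural equivalence for $\langle\barO,\bart\rangle$, and Theorem~\ref{thm:det-prop}.2 then yields $\equiv$.

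For convex probabilistic bisimilarity I would start one step later, from the convex-closure coalgebra $\bar c=(\convex+1)^A\after t\colon X\to(C(\cdot)+1)^A$ (again through $\Pow\cong\Pow_{ne}(\cdot)+1$), whose behavioural equivalence is exactly convex probabilistic bisimilarity, and observe that $\langle\barO,\bart\rangle=\beta\after\bar c$ for $\beta=\langle\mathrm{const}_{[1,1]},\iota^A\rangle$, natural since $\iota$ is; Lemma~\ref{lemma:nattrans} and Theorem~\ref{thm:det-prop}.2 then give $\equiv$. (The factor $(\convex+1)^A$ also shows, via Lemma~\ref{lemma:nattrans}, that probabilistic bisimilarity implies convex probabilistic bisimilarity, so the two hypotheses are handled uniformly.) Finally, to pass from $\equiv$ to $\eqmay$ and $\eqmust$ I would post-compose once more with the monad maps $\quotientB$ and $\quotientT$ of Lemma~\ref{lemma:monadmaps}: since these are natural and $\quotientB_1\after\barO=\barob$, $\quotientT_1\after\barO=\barot$ (as $\quotientB,\quotientT$ are monad maps, hence commute with units), the coalgebras $\langle\barob,\bartB\rangle$ and $\langle\barot,\bartT\rangle$ arise from $\langle\barO,\bart\rangle$ by a natural transformation, and Lemma~\ref{lemma:nattrans} chains with the previous steps. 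I expect no serious obstacle: the only delicate points are the implicit identification $\Pow\cong\Pow_{ne}(\cdot)+1$ inside \eqref{eq:bart}, the naturality of the constant observation component, and checking that each target coalgebra genuinely has the form $X\to G\T X$ so that Theorem~\ref{thm:det-prop}.2 applies verbatim for $\T\in\{C(\cdot+1),\TCSB,\TCST\}$.
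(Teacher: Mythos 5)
Your proposal is correct and follows essentially the same route as the paper: identify the two bisimilarities with behavioural equivalence of the coalgebras $t$ and $(\convex+1)^A\after t$, push behavioural equivalence forward along natural transformations (Lemma~\ref{lemma:nattrans}) until you reach the automata $\langle\barO,\bart\rangle$, $\langle\barob,\bartB\rangle$, $\langle\barot,\bartT\rangle$, and conclude with Theorem~\ref{thm:det-prop}.2. The only differences are presentational — you bundle the observation component into a single constant natural transformation where the paper factors it as $\cong$ followed by $\eta_1\times\id$, and you spell out the $\quotientB$/$\quotientT$ step that the paper dismisses as "a similar argument" — so no further comment is needed.
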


\begin{proof}
Given an NPLTS, namely a coalgebra $t\colon X \to (\Pow\Dis X)^A$, we know from~\cite{BSV04:tcs,Sokolova11} that probabilistic bisimilarity of~\cite{SL94} (denoted $\approx^p$) coincides with behavioural equivalence. Moreover, by~\cite{Mio14}, behavioural equivalence for $(\convex_X +1)^A \after t$ coincides with convex probabilistic bisimilarity ($\approx^c$). Using Lemma~\ref{lemma:nattrans} twice, we have that $\approx^p \,\, \subseteq \,\,\approx^c \,\,\subseteq\,\, \approx^{\iota}$ where $\approx^{\iota}$ is behavioural equivalence for $\iota_X^A \after (\convex_X +1)^A \after t$, namely $\bart$ from~\eqref{eq:bart}.

Now, by postcomposing $\bart$ with the natural transformation $C(\cdot+1)^A \stackrel{\cong}{\Rightarrow} 1 \times C(\cdot+1)^A \stackrel{\eta_1 \times id}{\Rightarrow}C(1+1) \times C(\cdot+1)^A$ instantiated at $X$, one obtains exactly $c=\langle \barO, \bart\rangle$. Again by Lemma~\ref{lemma:nattrans}, $\approx^{\iota} \,\,\subseteq\,\, \approx$ where $\approx$ denotes here behavioural equivalence for $c$. By Theorem~\ref{thm:det-prop}.2, we now can conclude that $ \approx\,\, \subseteq \,\,\equiv$. A similar argument applies for $\eqmay$ and $\eqmust$.
\end{proof}

\subsubsection{Backward compatibility}
Both LTS and RPLTS can be regarded as special cases of NPLTS:\@ LTS are NPLTS where all distributions are Dirac distributions; RPLTS are NPLTS where all subsets are at most singletons. Formally, we can express this using the natural transformations $\Powne \eta^{\Dis} \colon \Powne \Rightarrow \Powne \Dis$ and $\eta^{\Powne}{\Dis} \colon \Dis \Rightarrow \Powne \Dis$.
Given an LTS $t\colon X \to (\Pow X)^A$ we call
\begin{equation*}
\tilde{t} = \xymatrix{X \ar[rr]^t && (\Powne X+1)^A \ar[rr]^{ (\Powne \eta^{\Dis}_X+1)^A } && (\Powne \Dis X +1)^A }
\end{equation*}
the corresponding NPLTS\@. Similarly, the NPLTS corresponding to an RPLTS $t\colon X \to (\Dis X +1)^A$ is
\begin{equation*}
\tilde{t} = \xymatrix{X \ar[rr]^t && (\Dis X+1)^A \ar[rr]^{ (\eta^{\Powne}{\Dis}_X+1)^A } && (\Powne \Dis X +1)^A }
\end{equation*}

The following two theorems state that $\eqmay$, $\eqmust$, and $\equiv$ generalize---through these translations---the trace equivalences for RPLTS and LTS introduced in Example~\ref{ex:rplts} and Example~\ref{ex:mmLTS}.

\begin{thm}[Backward Compatibility for RPLTSs]
Let $\equiv^{RP}$ be trace equivalence defined on an RPLTS $t$ and $\eqmay$, $\eqmust$ and $\equiv$ the may, must and may-must equivalences defined on the corresponding NPLTS $\tilde{t}$. Then \[\equiv^{RP} \,\, =\,\,  \equiv \,\, = \,\, \eqmay \,\, = \,\, \eqmust \]
\end{thm}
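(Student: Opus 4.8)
The plan is to reduce the whole statement to three applications of the submonad-invariance result, Corollary~\ref{cor:invariance}, one for each of the three injective monad maps out of $\Dis(\cdot + 1)$ collected in Lemma~\ref{lemma:monadmaps}. First I would record what the four equivalences actually are as instances of Definition~\ref{def:traceequivalence}. By construction in Example~\ref{ex:rplts}, $\equiv^{RP}$ is exactly the trace equivalence of the system with $\Dis(\cdot+1)$-effects
\[
t' = (\iota_X)^A \after t \colon X \to (\Dis(X+1))^A,
\]
the observations there being the canonical $\eta_1 \after {!}$ into $\Dis(1+1)$. Dually, by their definitions in Section~\ref{sec:maymust}, the equivalences $\equiv$, $\eqmay$, $\eqmust$ on the corresponding NPLTS $\tilde t$ are precisely the trace equivalences of the systems $\bart$, $\bartB$, $\bartT$ with $C(\cdot+1)$-, $\TCSB$- and $\TCST$-effects, again with the canonical observations $\barO$, $\barob$, $\barot$ of the form $\eta_1 \after {!}$. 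Thus all four equivalences sit in the exact format required by Corollary~\ref{cor:invariance}.

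The crucial step is then to identify $\bart$, $\bartB$, $\bartT$ as the images of $t'$ under the three monad maps of Lemma~\ref{lemma:monadmaps}. Unfolding \eqref{eq:bart} with $\tilde t = (\eta^{\Powne}_{\Dis,X}+1)^A \after t$ and using functoriality of $(\cdot)^A$ and of $+1$, together with $\convex_X \after \eta^{\Powne}_{\Dis,X} = \chi^{\Dis}_X$ from Lemma~\ref{lemma:injmapC}, gives
\[
\bart = \iota_X^A \after (\convex_X + 1)^A \after (\eta^{\Powne}_{\Dis,X}+1)^A \after t = \iota_X^A \after (\chi^{\Dis}_X + 1)^A \after t .
\]
Applying Lemma~\ref{lem:map-iotas} to the monad map $\chi^{\Dis}\colon \Dis \Rightarrow C$ yields $\iota_X \after (\chi^{\Dis}_X + id_1) = \chi^{\Dis}_{X+1} \after \iota_X$, and hence
\[
\bart = (\chi^{\Dis}_{X+1})^A \after (\iota_X)^A \after t = \big((\chi^{\Dis}(\cdot+1))_X\big)^A \after t' ,
\]
so that $\bart$ is the image of $t'$ under the injective vertical map $\chi^{\Dis}(\cdot+1)$. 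Postcomposing with the quotients as in \eqref{eq:bartB} then gives $\bartB = \big((q^B \circ \chi^{\Dis}(\cdot+1))_X\big)^A \after t'$ and $\bartT = \big((q^T \circ \chi^{\Dis}(\cdot+1))_X\big)^A \after t'$, i.e.\ the images of $t'$ under the two diagonal maps of Lemma~\ref{lemma:monadmaps}.

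It then only remains to invoke Corollary~\ref{cor:invariance} three times. The hypotheses hold because $\chi^{\Dis}(\cdot+1)$, $q^B \circ \chi^{\Dis}(\cdot+1)$ and $q^T \circ \chi^{\Dis}(\cdot+1)$ are all injective monad maps out of $\Dis(\cdot+1)$ (Lemma~\ref{lemma:monadmaps}, items~1 and~2), and because in every case the observation map is the canonical $\eta_1 \after {!}$ handled by the corollary. Consequently the trace equivalence of $t'$ coincides with those of $\bart$, $\bartB$ and $\bartT$, which is exactly $\equiv^{RP} = \equiv = \eqmay = \eqmust$. I expect the only real work, and the main source of slips, to be the bookkeeping in the diagram chase that factors $\bart$ through $t'$ via $\chi^{\Dis}(\cdot+1)$; once that factorization is established, the three equalities follow immediately and uniformly from the corollary.
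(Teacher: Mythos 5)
Your proof is correct and follows essentially the same route as the paper: both reduce the statement to three applications of Corollary~\ref{cor:invariance} with the injective monad maps $\chi^{\Dis}(\cdot+1)$, $q^B\circ\chi^{\Dis}(\cdot+1)$ and $q^T\circ\chi^{\Dis}(\cdot+1)$ from Lemma~\ref{lemma:monadmaps}, after establishing the factorisation $\bar{\tilde t}=(\chi^{\Dis}(\cdot+1)_X)^A\after\bar t$ via Lemma~\ref{lemma:injmapC} and Lemma~\ref{lem:map-iotas}. The paper presents that factorisation as a commuting diagram (two triangles plus a square) where you write it as a chain of equations, but the content is identical.
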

\begin{proof}
We prove $\equiv^{RP}\,\,=\,\,  \equiv $ by applying Corollary~\ref{cor:invariance} to the injective monad map $\chi^{\Dis}{(\cdot +1)}$ from Lemma~\ref{lemma:monadmaps}. In order to do so, it is enough to observe that the following diagram commutes: the two triangles by definition of $\tilde{t}$ and $\chi^{\Dis}$ from Lemma~\ref{lemma:injmapC}; the square by Lemma~\ref{lem:map-iotas}.
\begin{equation*}
\xymatrix{
X \ar[rr]^t \ar[dd]_{\tilde{t}}&& (\Dis X +1)^A \ar[lldd]^{(\eta^{\Powne}{\Dis}_ X+1)^A} \ar[rr]^{\iota_X^A} \ar[dd]^{(\chi^{\Dis}_X +1)^A}&&  \Dis(X+1)^A \ar[dd]^{{\chi^{\Dis}(\cdot +1)^A_X}} \\&&&& \\
(\Powne \Dis X +1)^A \ar[rr]_{(\convex_X+1)^A}
&& (C X+1)^A \ar[rr]_{\iota_X^A} && C(X+1)^A
}\end{equation*}
In the diagram above, the topmost arrow is exactly $\bar{t}\colon X \to \Dis(X+1)^A$ used to define $\equiv^{RP}$  over $t$ (Example~\ref{ex:rplts}). The down-and-right arrow $\iota_X^A \after (\convex_X+1)^A \after \tilde{t}$ is exactly the arrow $\bar{\tilde{t}}$ used to define $\equiv$ over $\tilde{t}$, see~\eqref{eq:bart}.
Commutativity of the above diagram states that $(\chi^{\Dis}(\cdot+1))^A_X \circ \bar{t} = \bar{\tilde{t}}$ which, by Corollary~\ref{cor:invariance} entails that $\equiv^{RP}\,\,=\,\,  \equiv $.

To prove $\equiv^{RP}\,\,= \,\, \eqmay$ we use Corollary~\ref{cor:invariance} with the injective monad map $q^B_X \circ \chi^{\Dis}(\cdot +1)_X$ from Example~\ref{lemma:monadmaps}. Observe that the commutativity of the above diagram gives us also that $(q^B_X \after \chi^{\Dis}(\cdot +1)_X)^A \circ \bar{t} = (q^B_X)^A \circ \bar{\tilde{t}}$ and that $(q^B_X)^A \circ \bar{\tilde{t}}$ is exactly the system used to define $\eqmay$ on $\tilde{t}$, see~\eqref{eq:bartB}.

The proof for $\equiv^{RP}\,\,=\,\,  \eqmust$ is obtained by replacing in the previous paragraph $B$ by $T$.
\end{proof}

\begin{thm}[Backward Compatibility for LTSs]
Let $\equiv^{LTS}_B$, $\equiv^{LTS}_T$, and $\equiv^{LTS}_*$ be may, must, and may-must trace equivalences  on an LTS $t$ and $\eqmay$, $\eqmust$, and $\equiv$ the equivalences defined on the corresponding NPLTS $\tilde{t}$. Then \[\equiv^{LTS}_* \,\, =\,\,  \equiv \qquad \equiv^{LTS}_B\,\,=\,\,  \eqmay \qquad \equiv^{LTS}_T \,\,=\,\, \eqmust \]
\end{thm}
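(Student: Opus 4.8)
The plan is to follow the proof of Backward Compatibility for RPLTSs essentially verbatim, replacing the right-hand column of the diagram in Lemma~\ref{lemma:monadmaps} by its left-hand counterpart. Concretely, I would invoke three injective monad maps supplied by Lemma~\ref{lemma:monadmaps}, namely $\chi^{\Powne}(\cdot+1)\colon \Powne(\cdot+1)\Rightarrow C(\cdot+1)$, $e^B\colon \TSB\Rightarrow\TCSB$, and $e^T\colon \TST\Rightarrow\TCST$, and in each case apply Corollary~\ref{cor:invariance}. Throughout I would keep in mind that $\Pow=\Powne+1$, so an LTS $t\colon X\to(\Pow X)^A$ is already a system with $\Powne+1$-effects, and that $\equiv^{LTS}_*$, $\equiv^{LTS}_B$, $\equiv^{LTS}_T$ are defined (Example~\ref{ex:mmLTS}) via $\bar t=\iota_X^A\after t$ together with its quotients $(q^B_X)^A\after\bar t$ and $(q^T_X)^A\after\bar t$, using the monads $\Powne(\cdot+1)$, $\TSB=\Pow$, and $\TST$.

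First I would prove $\equiv^{LTS}_*\,=\,\equiv$ by applying Corollary~\ref{cor:invariance} to $\chi^{\Powne}(\cdot+1)$. The heart of the argument is the commutativity of
\begin{equation*}
\xymatrix{
X \ar[rr]^t \ar[dd]_{\tilde{t}}&& (\Powne X +1)^A \ar[lldd]^{(\Powne \eta^{\Dis}_X+1)^A} \ar[rr]^{\iota_X^A} \ar[dd]^{(\chi^{\Powne}_X +1)^A}&&  \Powne(X+1)^A \ar[dd]^{{\chi^{\Powne}(\cdot +1)^A_X}} \\&&&& \\
(\Powne \Dis X +1)^A \ar[rr]_{(\convex_X+1)^A}
&& (C X+1)^A \ar[rr]_{\iota_X^A} && C(X+1)^A
}
\end{equation*}
Here the two triangles commute by the definition of $\tilde t$ and by the identity $\chi^{\Powne}=\convex\after\Powne\eta^{\Dis}$ of Lemma~\ref{lemma:injmapC}, while the square commutes by Lemma~\ref{lem:map-iotas} applied to $\chi^{\Powne}$. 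The top edge is $\bar t$ and the down-then-right composite is $\bar{\tilde t}$, so the diagram yields $(\chi^{\Powne}(\cdot+1))^A_X\after\bar t=\bar{\tilde t}$, and Corollary~\ref{cor:invariance} gives $\equiv^{LTS}_*\,=\,\equiv$.

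Next I would treat $\equiv^{LTS}_B\,=\,\eqmay$ by applying Corollary~\ref{cor:invariance} to $e^B$. The system defining $\equiv^{LTS}_B$ is $(q^B_X)^A\after\bar t$ with $q^B\colon\Powne(\cdot+1)\Rightarrow\TSB$; postcomposing with $(e^B_X)^A$ and using the commutativity of the left square of Lemma~\ref{lemma:monadmaps}, $e^B\after q^B=q^B\after\chi^{\Powne}(\cdot+1)$ (now $q^B\colon C(\cdot+1)\Rightarrow\TCSB$), together with the diagram above, I would compute
\begin{equation*}
(e^B_X)^A\after (q^B_X)^A\after\bar t=(q^B_X)^A\after(\chi^{\Powne}(\cdot+1)_X)^A\after\bar t=(q^B_X)^A\after\bar{\tilde t}=\bartB,
\end{equation*}
which is exactly the system defining $\eqmay$ on $\tilde t$; Corollary~\ref{cor:invariance} then gives $\equiv^{LTS}_B\,=\,\eqmay$. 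The case $\equiv^{LTS}_T\,=\,\eqmust$ is identical after replacing $B$ by $T$, $e^B$ by $e^T$, and the left square by the right square of Lemma~\ref{lemma:monadmaps}.

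Once the diagram is in place the argument is pure bookkeeping, so the main obstacle—and the only place a slip is likely—is the deliberate overloading of the names $q^B,q^T$, which here denote two distinct quotient maps (one out of $\Powne(\cdot+1)$, one out of $C(\cdot+1)$). The commuting squares of Lemma~\ref{lemma:monadmaps} are precisely what reconcile the two, and the one point requiring genuine care is tracking which monad each arrow lives over at every step.
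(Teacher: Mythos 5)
Your proposal is correct and follows essentially the same route as the paper: the same three-part diagram (two triangles, the $\iota$-square from Lemma~\ref{lem:map-iotas}, and the extra square $e^B\after q^B=q^B\after\chi^{\Powne}(\cdot+1)$ from Lemma~\ref{lemma:monadmaps}), followed by Corollary~\ref{cor:invariance} applied to the injective monad maps $\chi^{\Powne}(\cdot+1)$, $e^B$, and $e^T$. The only cosmetic difference is that you spell out the equational chain $(e^B_X)^A\after(q^B_X)^A\after\bar t=\bartB$ explicitly where the paper reads it off the pasted diagram.
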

\begin{proof}
The proof for $\equiv^{LTS}_* \,\, =\,\,  \equiv$ is like the one for $\equiv^{RP} \,\, = \,\, \equiv$, but replacing $\Dis$ by $\Powne$. The proof for $\eqmay$ requires one additional step: in the diagram below, the rightmost square commutes by Lemma~\ref{lemma:monadmaps}.3.
\begin{equation*}\xymatrix{
X \ar[rr]^t  \ar[dd]_{\tilde{t}} && (\Powne X +1)^A \ar[lldd]^{(\nePow\eta^{\Dis}_X+1)^A} \ar[r]^{\iota_X^A} \ar[dd]^{(\chi^{\Powne}_X +1)^A}&  \Powne(X+1)^A \ar[dd]^{{(\chi^{\Powne}(\cdot +1))^A_X}}   \ar[r]^ {(q^B_X)^A}  & \TSB(X)^A \ar[dd]^{(e_X^B)^A}  \\ &&&&\\
(\Powne \Dis X+1)^A \ar[rr]_{(\convex_X+1)^A}
&& (C X+1)^A \ar[r]_{\iota_X^A} &C(X+1)^A \ar[r]_{(q^B_X)^A} & \TCSB(X)^A
}\end{equation*}
Like in the case of RPLTS, the two triangles commute by definition of $\tilde{t}$ and $\chi^{\Powne}$ and the central square by Lemma~\ref{lem:map-iotas}. So, the whole diagram commutes. Observe that  the topmost arrow is exactly $(q^B_X)^A\circ \bar{t}$ used to define $\equiv^{LTS}_B$  over $t$ (Example~\ref{ex:mmLTS}). The down-and-right arrow $(q^B_X)^A\after \iota_X^A \after (\convex_X+1)^A \after \tilde{t}$ is exactly the arrow $(q^B_X)^A\circ \bar{\tilde{t}}$ used to define $\eqmay$ over $\tilde{t}$ (see~\eqref{eq:bartB}). By using Corollary~\ref{cor:invariance} with the injective monad map $e^B$ from Lemma~\ref{lemma:monadmaps}, we obtain that $\equiv^{LTS}_B\,\,=\,\, \eqmay$.
The proof for $\equiv^{LTS}_T\,\,=\,\,  \eqmust$ is obtained by replacing in the previous paragraph $B$ by $T$.
\end{proof}

\subsubsection{The bialgebra of probabilistic traces}

The generalised determinisation outlined in Section~\ref{subsec:determinisation} allows to think of the final coalgebra as a denotational universe of behaviours and of the final  map $\bb{\cdot}$ as a denotational semantics assigning to each state its behaviour. We better illustrate this idea here by considering the may-must semantics, but the same arguments hold for the may and the must cases.

\medskip

The set $ \intervals^{A^*}$ of nondeterministic-probabilistic languages, namely functions $\varphi \colon A^* \to \intervals$, carries at the same time the final coalgebra $\langle \epsilon, \dder\rangle \colon \intervals^{A^*} \to \intervals \times \intervals^{A^*}$ (defined as in Section~\ref{subsec:determinisation} for $O=\intervals$) and a pointed convex semilattice.
The latter is defined as the pointwise extension of the algebra of observations $\mathbb M_{\intervals,[0,0]}$, see Proposition~\ref{prop:intervalspcs}. This means that the structure of pointed convex semilattice is defined as follows for all $\varphi_1,\varphi_2\in \intervals^{A^*}$ and for all $w\in A^*$.
\begin{equation*}
\begin{array}{rcl}
(\varphi_1 \cplus^{\intervals^{A^*}} \varphi_2)(w)&=& \varphi_1(w)\minmax \varphi_2(w) \\ (\varphi_1 \pplus{p}^{\intervals^{A^*}} \varphi_2)(w)&=& \varphi_1(w)\pplus{p}^\intervals \varphi_2(w) \\ \star^{\intervals^{A^*}}(w)&=&[0,0]
\end{array}
\end{equation*}
By the generalised determinisation, we know that the final coalgebra map $\bb{\cdot}\colon C(X+1) \to \intervals^{A^*}$ is also a homomorphism of pointed convex semilattices which means that
\begin{equation*}
\bb{S_1\cplus S_2}= \bb{S_1} \cplus^{\intervals^{A^*}}\bb{S_2}\text{, }  \bb{S_1\pplus{p} S_2}= \bb{S_1} \pplus{p}^{\intervals^{A^*}}\bb{S_2} \text{ and } \bb{\star}=\star^{\intervals^{A^*}}
\end{equation*}
for all $S_1,S_2\in C(X+1)$. This guarantees the following result.

\begin{thm}
Let $\cong$, $\cong_B$ and $\cong_T$ be the kernels of $\bb{\cdot}$, $\bbmay{\cdot}$, and $\bbmust{\cdot}$, respectively. Then $\cong$, $\cong_B$, and $\cong_T$ are congruences w.r.t. $\cplus$ and $\pplus{p}$.
\end{thm}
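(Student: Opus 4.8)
The plan is to exploit the fact, established in the discussion immediately preceding the statement, that $\bb{\cdot}\colon C(X+1) \to \intervals^{A^*}$ is a homomorphism of pointed convex semilattices, together with the elementary algebraic fact that the kernel of a homomorphism is always a congruence. Concretely, I would argue as follows. Suppose $S_1 \cong S_1'$ and $S_2 \cong S_2'$, i.e.\ $\bb{S_1}=\bb{S_1'}$ and $\bb{S_2}=\bb{S_2'}$. Then, using the homomorphism equations displayed just above the statement,
$$\bb{S_1\cplus S_2} = \bb{S_1}\cplus^{\intervals^{A^*}}\bb{S_2} = \bb{S_1'}\cplus^{\intervals^{A^*}}\bb{S_2'} = \bb{S_1'\cplus S_2'},$$
so $S_1\cplus S_2 \cong S_1'\cplus S_2'$; and the identical computation with $\pplus{p}$ in place of $\cplus$ shows $S_1\pplus{p}S_2 \cong S_1'\pplus{p}S_2'$. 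This establishes that $\cong$ is a congruence for both operations.

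For $\cong_B$ and $\cong_T$ I would proceed verbatim, once I observe that the corresponding maps $\bbmay{\cdot}\colon \TCSB X \to [0,1]^{A^*}$ and $\bbmust{\cdot}\colon \TCST X \to [0,1]^{A^*}$ are likewise algebra homomorphisms, now for the theories of convex semilattices with bottom and with top, respectively. Both theories retain $\cplus$ and $\pplus{p}$ among their operations, so the homomorphism equations $\bbmay{S_1\cplus S_2}=\bbmay{S_1}\cplus^{[0,1]^{A^*}}\bbmay{S_2}$ and $\bbmay{S_1\pplus{p}S_2}=\bbmay{S_1}\pplus{p}^{[0,1]^{A^*}}\bbmay{S_2}$ (and their must counterparts) hold, and the same kernel argument applies unchanged.

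The only point requiring a word of justification, and hence the closest thing to an obstacle, is the homomorphism property for the may and must maps. This is not a genuine difficulty: it follows by exactly the generalised determinisation argument used for $\bb{\cdot}$, namely that the final coalgebra map out of any determinised coalgebra is an algebra homomorphism into the pointwise-extended observation algebra, applied now to the monads $\TCSB$ and $\TCST$ with observation algebras $\maxalg_B$ and $\minalg_T$ from Propositions~\ref{prop:max} and~\ref{prop:min}. Alternatively, one may simply invoke Theorem~\ref{thm:det-prop}.1, which states that behavioural equivalence for a determinised coalgebra is a congruence for the algebraic structure of the monad; since $\cong$, $\cong_B$, and $\cong_T$ are exactly these behavioural equivalences for the monads $C(\cdot+1)$, $\TCSB$, and $\TCST$, the conclusion is immediate.
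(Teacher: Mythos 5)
Your proof is correct and follows essentially the same route as the paper, which justifies the theorem as an instantiation of Theorem~\ref{thm:det-prop}.1 after observing that $\bb{\cdot}$, $\bbmay{\cdot}$, and $\bbmust{\cdot}$ are algebra homomorphisms into the pointwise-extended observation algebras. Your explicit kernel-of-a-homomorphism calculation is simply a spelled-out version of that same argument.
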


The theorem above is an instantiation of Theorem~\ref{thm:det-prop}.1 to our construction for NPLTS\@. By instantiating Theorem~\ref{thm:det-prop}.3, we have the following useful fact.

\begin{thm}\label{them:upto}
Up-to context is compatible for each of the three equivalences.\qed%
\end{thm}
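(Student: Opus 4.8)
The plan is to observe that each of the three equivalences $\equiv$, $\eqmay$, and $\eqmust$ is, by construction, the kernel of a final-coalgebra map obtained by the generalised determinisation of an automaton with $\T$-effects, so that the statement becomes a direct instantiation of Theorem~\ref{thm:det-prop}.3. First I would record the three determinisations explicitly. The may-must equivalence $\equiv$ is the kernel of $\bb{\cdot}\after\eta$, where $\bb{\cdot}$ is the final map for the determinisation of the automaton with $\TPCS$-effects $\langle\barO,\bart\rangle\colon X \to \intervals\times(C(X+1))^A$, determinised with respect to the $\TPCS$-algebra $\mathbb{M}_{\intervals,[0,0]}$ (the free pointed convex semilattice on $1$, Proposition~\ref{prop:intervalspcs}). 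Likewise $\eqmay$ and $\eqmust$ are the kernels of $\bbmay{\cdot}\after\eta$ and $\bbmust{\cdot}\after\eta$, arising from the determinisations of $\langle\barob,\bartB\rangle$ and $\langle\barot,\bartT\rangle$ for the monads $\TCSB$ and $\TCST$, using the observation algebras $\maxalg_B$ and $\minalg_T$ (Propositions~\ref{prop:max} and~\ref{prop:min}).

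Each of these is precisely a coalgebra of the shape $c\colon X \to F\T X$ for the Moore functor $F = O\times(\cdot)^A$ and a monad $\T$, determinised into an $F$-coalgebra $c^\sharp\colon \T X \to F\T X$ via the canonical distributive law of Proposition~\ref{prop:distr-law}. Theorem~\ref{thm:det-prop} applies verbatim to each, and item~3 of that theorem states that up-to context is a compatible proof technique for the determinised coalgebra. Applying it three times --- with $\T = \TPCS$, with $\T = \TCSB$, and with $\T = \TCST$ --- yields compatibility of up-to context for $\equiv$, $\eqmay$, and $\eqmust$, respectively.

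It remains only to make explicit what \emph{context} means in each case, so that the statement is concrete rather than abstract. Since $\TPCS$, $\TCSB$, and $\TCST$ all have presentations over the signature $\Sigma_{NP}\cup\Sigma_T = \{\cplus, \pplus{p}, \star\}$, the contexts are exactly the terms built from the convex-semilattice operations together with the constant $\star$; in particular, up-to $\cplus$ and up-to $\pplus{p}$ are compatible for all three equivalences, as announced in Section~\ref{sec:intro}.

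The step requiring the most care --- though it is bookkeeping rather than a genuine obstacle --- is the identification in the first paragraph: one must check that the behavioural equivalences on the determinised state spaces $C(X+1)$, $\TCSB X$, and $\TCST X$ restrict along $\eta$ to exactly $\equiv$, $\eqmay$, and $\eqmust$ on $X$. This is immediate from the definitions of the three equivalences as kernels of $\bb{\cdot}\after\eta$ (and its $B$- and $T$-variants), so no further argument is needed.
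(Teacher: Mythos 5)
Your proposal is correct and follows exactly the paper's own route: the paper also obtains this theorem as a direct instantiation of Theorem~\ref{thm:det-prop}.3 to the three determinisations $\langle\barO^\sharp,\bart^\sharp\rangle$, $\langle\barob^\sharp,\bartB^\sharp\rangle$, and $\langle\barot^\sharp,\bartT^\sharp\rangle$ with respect to the algebras $\mathbb{M}_{\intervals,[0,0]}$, $\maxalg_B$, and $\minalg_T$. Your additional bookkeeping (identifying the equivalences as kernels of $\bb{\cdot}\after\eta$ and spelling out that contexts are $\Sigma_{NP}\cup\Sigma_T$-terms) matches what the paper makes explicit in Section~\ref{sec:upto}.
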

In the next section, we illustrate the implication of such result. Since $x \equiv y$ iff $\eta(x)\cong\eta(y)$, hereafter we will sometimes write $\equiv$ instead of $\cong$. Similarly for $\eqmay$ and $\eqmust$.

\subsection{Coinduction Up-to}\label{sec:upto}

As anticipated in Theorem~\ref{them:upto}, $\eqmaymust$, $\eqmay$, and $\eqmust$  can be proved coinductively by means of bisimulation up-to. In order to define uniformly the proof techniques for the three equivalences, we let $\eqthree$ to range over  $\eqmaymust$, $\eqmay$, and $\eqmust$; $\Tthree$ to range over $\TPCS$, $\TCSB$, and  $\TCST$; $\tthree$, over $\bart^{\sharp}$, $\bartB^{\sharp}$,  and $\bartT^{\sharp}$; $\othree$ over $\barO^{\sharp}$, $\barob^{\sharp}$, and $\barot^{\sharp}$.

\begin{defi}\label{def:bis} Let $(X,t)$ be an NPLTS and $(\Tthree X,\langle \othree, \tthree \rangle)$ the corresponding determinised system. A relation $\R \subseteq \Tthree X \times \Tthree X$ is a bisimulation iff for all $(\termone,\termtwo)\in R$ it holds that
\begin{enumerate}
\item $\othree(\termone)=\othree(\termtwo)$ and
\item $\tthree(\termone)(a)\, \RR \, \tthree(\termtwo)(a)$ for all $a\in A$.
\end{enumerate}
\end{defi}

\noindent
The coinduction proof principle  (see e.g.~\cite{DBLP:journals/acta/BonchiPPR17}) asserts that for all $x,y\in X$, $x \eqthree y$ iff there exists a bisimulation $\R$ such that $x \, \R \, y$.

We can now prove that states $x,y$ in Figure~\ref{fig:examplesys} are may trace equivalent by showing that there exists a bisimulation $\R$ such that $x\RR y$ on the system determinized using as algebra of observations $\maxalg_B$.

\begin{exa}%
\label{secapp:example}
Consider the \pss $(X,t)$ depicted in Figure~\ref{fig:examplesys}, and discussed in Example
Figure~\ref{fig:exampledet} shows the determinization of the system, where
the terms are, as already partially mentioned in Example
as follows
\[S_{1} = x_{1} \cplus (x_{3} \pplus{\frac 1 2} x_{2})
\qquad S_{2} = (x \pplus{\frac 1 2} x_{3}) \cplus (\onev \pplus{\frac 1 2} x_{3})\]
\[S'_{1} = y_{1} \cplus (y_{4} \pplus{\frac 1 2} y_{2}) \cplus ((y_{2} \pplus{\frac 1 2} y_{4}) \pplus{\frac 1 2} y_{3})\]
\[ S'_{2} = (y\pplus{\frac 1 2} y_{4}) \cplus(\onev  \pplus{\frac 1 2} y_{4}) \cplus ((y_{4} \pplus{\frac 1 2} \star) \pplus{\frac 1 2} \onev) \]
\[S_{3} = \onev \cplus (\onev \pplus{\frac 1 2} x) \qquad S'_{3} = \onev \cplus (\onev \pplus{\frac 1 2} y)
 \]
 and the depicted transitions are those given by $\bart^\sharp$.

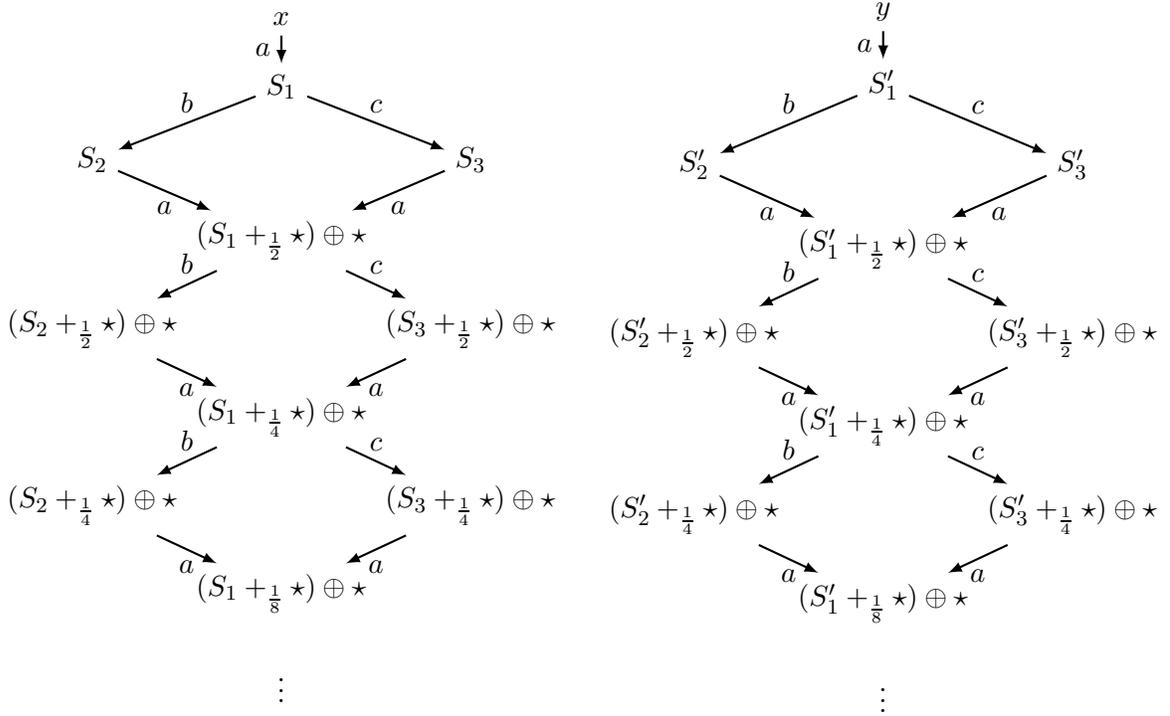
\begin{figure*}[t]
\begin{center}
\begin{tikzpicture}[thick]

\begin{scope}[thick,scale=0.5]

\matrix[matrix of nodes, row sep= 0.4cm, column sep=0.007pt,ampersand replacement=\&]
{
				\&\node (x) {$x$};		\\
				\&\node (t1) {$S_{1}$};		\\
\node (t2) {$S_{2}$};	\&					\&\node (t3) {$S_{3}$};	\\
				\&\node (t11) {$(S_{1} \pplus {\frac 1 2 } \onev) \cplus \onev$};		\\
\node (t21) {$(S_{2} \pplus {\frac 1 2 } \onev) \cplus \onev$};	\&					\&\node (t31) {\!\!\!\!\!\!$(S_{3} \pplus {\frac 1 2 } \onev) \cplus \onev$};	\\
				\&\node (t12) {$(S_{1} \pplus {\frac 1 4 } \onev) \cplus \onev$};		\\
\node (t22) {$(S_{2} \pplus {\frac 1 4 } \onev) \cplus \onev$};	\&					\&\node (t32) {\!\!\!\!\!\!$(S_{3} \pplus {\frac 1 4 } \onev) \cplus \onev$};	\\
				\&\node (t13) {$(S_{1} \pplus {\frac 1 8 } \onev) \cplus \onev$};		\\
				\&\node (vdots) {$\vdots$};		\\
	};
\draw[-latex] (x) to node[left] {$a$} (t1);
\draw[-latex] (t1) to node[above] {$c$} (t3);
\draw[-latex] (t1) to node[above] {$b$} (t2);
\draw[-latex] (t2) to node[below] {$a$} (t11);
\draw[-latex] (t3) to node[below] {$a$} (t11);
\draw[-latex] (t11) to node[above] {$c$} (t31);
\draw[-latex] (t11) to node[above] {$b$} (t21);
\draw[-latex] (t21) to node[below] {$a$} (t12);
\draw[-latex] (t31) to node[below] {$a$} (t12);
\draw[-latex] (t12) to node[above] {$c$} (t32);
\draw[-latex] (t12) to node[above] {$b$} (t22);
\draw[-latex] (t22) to node[below] {$a$} (t13);
\draw[-latex] (t32) to node[below] {$a$} (t13);

\end{scope}

\begin{scope}[xshift=8cm,scale=0.5]
\matrix[matrix of nodes, row sep= 0.4cm, column sep=0.007pt,ampersand replacement=\&]
{
				\&\node (x) {$y$};		\\
				\&\node (t1) {$S'_{1}$};		\\
\node (t2) {$S'_{2}$};	\&					\&\node (t3) {$S'_{3}$};	\\
				\&\node (t11) {$(S'_{1} \pplus {\frac 1 2 } \onev) \cplus \onev$};		\\
\node (t21) {$(S'_{2} \pplus {\frac 1 2 } \onev) \cplus \onev$};	\&					\&\node (t31) {\!\!\!\!\!\!\!\!\!\!\!\!$(S'_{3} \pplus {\frac 1 2 } \onev) \cplus \onev$};	\\
				\&\node (t12) {$(S'_{1} \pplus {\frac 1 4 } \onev) \cplus \onev$};		\\
\node (t22) {$(S'_{2} \pplus {\frac 1 4 } \onev) \cplus \onev$};	\&					\&\node (t32) {\!\!\!\!\!\!\!\!\!\!\!\!$(S'_{3} \pplus {\frac 1 4 } \onev) \cplus \onev$};	\\
				\&\node (t13) {$(S'_{1} \pplus {\frac 1 8 } \onev) \cplus \onev$};		\\
				\&\node (vdots) {$\vdots$};		\\
	};
\draw[-latex] (x) to node[left] {$a$} (t1);
\draw[-latex] (t1) to node[above] {$c$} (t3);
\draw[-latex] (t1) to node[above] {$b$} (t2);
\draw[-latex] (t2) to node[below] {$a$} (t11);
\draw[-latex] (t3) to node[below] {$a$} (t11);
\draw[-latex] (t11) to node[above] {$c$} (t31);
\draw[-latex] (t11) to node[above] {$b$} (t21);
\draw[-latex] (t21) to node[below] {$a$} (t12);
\draw[-latex] (t31) to node[below] {$a$} (t12);
\draw[-latex] (t12) to node[above] {$c$} (t32);
\draw[-latex] (t12) to node[above] {$b$} (t22);
\draw[-latex] (t22) to node[below] {$a$} (t13);
\draw[-latex] (t32) to node[below] {$a$} (t13);
\end{scope}

\end{tikzpicture}
\end{center}
\caption{Determinization}\label{fig:exampledet}
\end{figure*}
The determinized \pss is a system with infinitely many states, which are given by the presence of cycles in the original system. In the determinization of the automaton with algebra of observation $\maxalg_B$, each state $S$ is assigned an observation $\barob^{\sharp}(S) \in [0,1]$.
We prove that $x$ and $y$ are may trace equivalent by exhibiting the following bisimulation $\R$:
\begin{align*}
\R = &\{(x,y),(S_{1},S'_{1}), (S_{2},S'_{2}), (S_{3},S'_{3})\}\\
&\cup \{((S_{i} \pplus {\frac 1 {2^{n}} } \onev) \cplus \onev, (S'_{i} \pplus {\frac 1 {2^{n}} } \onev) \cplus \onev) |\, 1\leq i \leq 3, n\geq 1 \}
\end{align*}
The relation satisfies the two clauses required by Definition~\ref{def:bis} of  bisimulation. As it emerges from Figure~\ref{fig:exampledet}, the clause on transitions (clause 2) is satisfied by each pair in the relation.
As to the clause on the observation (clause 1), we can derive as in Example~\ref{ex:det} that for every pair $(S,S')\in \{(x,y),(S_{1},S'_{1}), (S_{2},S'_{2}), (S_{3},S'_{3})\}$ it holds
$\barob^{\sharp}(S)=\barob^{\sharp}(S')$.

Finally, clause 1 also holds for the remaining pairs, since for $1\leq i \leq 3 $ and $n\geq 1$ we have
\[
\barob^{\sharp}((S_{i} \pplus {\frac 1 {2^{n}} } \onev) \cplus \onev)
=(\barob^{\sharp}(S_{i}) \pplus {\frac 1 {2^{n}} } 0 )\max 0
=(\barob^{\sharp}(S'_{i}) \pplus {\frac 1 {2^{n}} } 0 )\max 0
= \barob^{\sharp}((S'_{i} \pplus {\frac 1 {2^{n}} } \onev) \cplus \onev)
\]
Hence, $\R$ is a bisimulation.

As shown in Example~\ref{ex:det}, $x,y$ are not bisimilar if the algebra of observations for the must equivalence, i.e, $\minalg_T$,  is used instead of the one for the may equivalence, since $\barot^{\sharp} (S_{2}) \neq \barot^{\sharp} (S'_{2})$. Analogously, they are not equivalent if we take the may-must algebra of observation $\mathbb{M}_{\intervals,[0,0]}$.
\end{exa}

To make this proof principle more effective, one can use \emph{up-to techniques}~\cite{Milner89,SP09b}. Particularly relevant for us is up-to \emph{contextual closure} which, for all relations $\R\subseteq \Tthree(X) \times \Tthree(X)$, is defined inductively by the following rules.
\[
\inferrule*{\term \mathrel{\R} \term'}{\term \mathrel{\Ctx(\R)} \term'} \quad \quad \inferrule*{-}{* \mathrel{\Ctx(\R)} *}
\]\[ \inferrule*{\termone \mathrel{\Ctx(\R)} \termone' \and
 \termtwo \mathrel{\Ctx(\R)} \termtwo'}{\termone \cplus \termtwo \mathrel{\Ctx(\R)} \termone' \cplus \termtwo' }
\]
\[\inferrule*{\termone \mathrel{\Ctx(\R)} \termone' \and
 \termtwo \mathrel{\Ctx(\R)} \termtwo'}{\termone \pplus{p} \termtwo \mathrel{\Ctx(\R)} \termone' \pplus{p} \termtwo' }
\]
\begin{defi} Bisimulations up-to context are defined as in Definition~\ref{def:bis}, but with $\Ctx(\R)$ instead of $\R$ in point (2).
\end{defi}
By virtue of the general theory in~\cite{DBLP:journals/acta/BonchiPPR17}, one has that $\Ctx$ is a \emph{sound} up-to technique, that is $x \eqthree y$ iff there exists a bisimulation up-to context $\R$ such that $x \, \R \, y$.
Actually, the theory in~\cite{DBLP:journals/acta/BonchiPPR17} guarantees a stronger property known as compatibility~\cite{San98MFCS,pous:aplas07:clut,SP09b}. Intuitively, this means that the technique is sound and it can be safely combined with other compatible up-to techniques. We refer the interested reader to~\cite{SP09b} for a detailed introduction to compatible up-to techniques.

We conclude with an example illustrating a finite bisimulation up-to context witnessing that the states $x$ and $y$ from  Figure~\ref{fig:examplesys} are in $\eqmay$.

\begin{exa}\label{ex:uptos}
Consider the \pss depicted in Figure~\ref{fig:examplesys}.
We have seen in Example~\ref{secapp:example} how to prove that $x \eqmay y$ by exhibiting a bisimulation on $(\TCSB X, \langle \barob^{\sharp}, \bartB^{\sharp} \rangle )$ relating them. However, due to the presence of cycles, the determinization of the NPLTS is infinite and the bisimulation relation contains infinitely many pairs.

With bisimulations up-to, only few pairs are necessary.
Indeed, we prove that the relation
\begin{align*}
\R = \{
&(x,y),
(x_{1}, y_{1}),
(x_{3}, y_{4}), \\
&
(x_{3} \pplus{\frac 1 2} x_{2},
(y_{4} \pplus{\frac 1 2} y_{2}) \cplus ((y_{2} \pplus{\frac 1 2} y_{4}) \pplus{\frac 1 2} y_{3}))
 \}
 \end{align*}
is a bisimulation up-to context.
First, note that the observation is trivially the same for all pairs in the relation, since $\barob^{\sharp}(S)=1$ for all $S$ in the relation.
heck that the clauses of bisimulation up-to context on the transitions are satisfied. Consider the first pair.
In $(\TCSB X, \langle \barob^{\sharp}, \bartB^{\sharp} \rangle )$, we have
\begin{align*}
x &\ttrel a x_{1} \cplus (x_{3} \pplus{\frac 1 2} x_{2})\\
y &\ttrel a y_{1} \cplus (y_{4} \pplus{\frac 1 2} y_{2}) \cplus ((y_{2} \pplus{\frac 1 2} y_{4}) \pplus{\frac 1 2} y_{3})
\end{align*}
The reached states are in $\Ctx(\R)$ by the second and fourth pairs of $\R$.
For any action $a' \neq a$, we have
$x\ttrel {a'} \onev$,  $y \ttrel {a'} \onev$
and $\onev \,\Ctx(\R)\,\onev$.

The second and the third pairs can be checked in a similar way.
For the fourth pair, we have
\[ x_{3} \pplus{\frac 1 2} x_{2} \ttrel b \onev \pplus{\frac 1 2} x_{3}\]

\[
(y_{4} \pplus{\frac 1 2} y_{2}) \cplus ((y_{2} \pplus{\frac 1 2} y_{4}) \pplus{\frac 1 2} y_{3})
\ttrel b (\onev  \pplus{\frac 1 2} y_{4}) \cplus ((\onev \pplus{\frac 1 2} y_{4}) \pplus{\frac 1 2} \onev)
\]
We observe that \begin{align*}
&(\onev  \pplus{\frac 1 2} y_{4}) \cplus ((\onev \pplus{\frac 1 2} y_{4}) \pplus{\frac 1 2} \onev)\\
&\stackrel{(B)}{=} (\onev  \pplus{\frac 1 2} y_{4}) \cplus ((\onev \pplus{\frac 1 2} y_{4})  \pplus{\frac 1 2} \onev) \cplus \onev \\
&\stackrel{(C)}{=}(\onev  \pplus{\frac 1 2} y_{4}) \cplus \onev\\
&\stackrel{(B)}{=} \onev  \pplus{\frac 1 2} y_{4}
\end{align*}
and we conclude by $\onev \pplus{\frac 1 2} x_{3} \,\Ctx(\R)\, \onev \pplus{\frac 1 2} y_{4}$.
The cases for $a$ and $c$ are simpler.
\end{exa}

\section{From the global to the local perspective}\label{sec:resolutions}

Usually trace semantics for NPLTS is defined in terms of \emph{schedulers}, or \emph{resolutions}: intuitively, a scheduler resolves the nondeterminism by choosing, at each step of the execution of an NPLTS, one of its possible transitions; the transition systems resulting from these choices are called resolutions.

This perspective on trace semantics is somehow opposed to ours, where the generalised determinisation keeps track of all possible executions at once. In this sense, the determinisation provides a perspective which is \emph{global}, opposite to those of resolutions that are \emph{local}. In this section, we show that our semantics can be characterised through such local views, by means of resolutions, defined as follows.

\begin{defi}\label{def:res}
Let $t\colon X \to (\Pow\mathcal{D}X)^A$ be an NPLTS\@. A \emph{(randomized) resolution} for $t$ is a triple $\R=(Y,\corr ,r)$ where $Y$ is a set of states, $\corr \colon Y \to X$ is \emph{the correspondence function}, and $r\colon Y \to (\mathcal{D}Y+1)^A$ is an RPLTS such that for all $y\in Y$ and $a\in A$,
\begin{enumerate}
\item $r(y)(a)= \star$ iff $t(\corr (y))(a)= \star$,
\item if $r(y)(a)\neq \star$ then $\mathcal{D}(\corr ) (r(y)(a)) \in \convex (t(\corr  (y))(a))$.
\end{enumerate}
\end{defi}

\noindent
Intuitively, this means that a resolution of an NPLTS is built from the original system by discarding internal nondeterminism (the possibility to perform multiple transitions labelled with the same action) and in such a way that the structure of the original system is preserved.

\begin{exa}\label{ex:resolutions}
Consider the \pss  on the left of Figure~\ref{fig:examplesys}.
The RPLTSs $\mathcal{R}_1$ and $\mathcal{R}_2$  in Figure~\ref{fig:resolutions} are two resolutions for it, both having the identity as correspondence function. In the resolution $\mathcal{R}_1$, the nondeterministic choice of $x$ is resolved by choosing the leftmost $a$-transition. The resolution $\mathcal{R}_2$ is obtained by taking a convex combination of the two distributions $\dirac {x_{1}}$ and $\Delta_{2}$, assigning one half probability to each of them.
\end{exa}

\begin{figure}
\begin{center}
\begin{tikzpicture}[thick
]

\matrix[matrix of nodes, row sep= 0.5cm, column sep=0cm,ampersand replacement=\&]
{
					\&\node (x) {$x$};		\\
						\&\node (x1) {$x_{1}$};		\\
						  	\&\node (d1) {$\Delta_{1}$};\\
  	\&							\& \node (z) {${x_{3}}$};	\\
	};
\draw[-latex] (x) to node[left] {$a$} (x1);
\draw[-latex] (x1) to node[left] {$b$} (d1);
\draw[dotted,->] (d1) to node[left] {$\frac 1 2$}  (z);
\draw[dotted,->] (d1) to[bend left=70] node[left] {$\frac 1 2$}  (x);

\begin{scope}[xshift=4cm]

\matrix[matrix of nodes, row sep= 0.5cm, column sep=0.1cm,ampersand replacement=\&]
{
	\&							\&\node (x) {$x$};		\\
	\&	\&	\node (d3) {$\Delta_{3}$}; 					\&	\& \\
	\&\node (x1) {$x_{1}$};			\& 						\&						\&\&\node (x2) {$x_{2}$}; 	\& \\
  	\&\node (d1) {$\Delta_{1}$};		\& 						\& \node (z) {${x_{3}}$};	\&				 	\& \\
  	\&							\&	\\
	};
\draw[-latex] (x) to node[left] {$a$} (d3);
\draw[-latex] (x1) to node[left] {$b$} (d1);
\draw[-latex] (x2) to node[right] {$b$} (z);
\draw[-latex] (x2) to[bend right=50] node[right] {$c$} (x);

\draw[dotted,->] (d3) to node[above] {$\frac 1 2$}  (x1);
\draw[dotted,->] (d1) to node[below] {$\frac 1 2$}  (z);
\draw[dotted,->] (d3) to node[left] {$\frac 1 4$}  (z);
\draw[dotted,->] (d3) to node[above] {$\frac 1 4$}  (x2);
\draw[dotted,->] (d1) to[bend left=70] node[left] {$\frac 1 2$}  (x);
\end{scope}

\begin{scope}[xshift=8cm]

\matrix[matrix of nodes, row sep= 0.5cm, column sep=0cm,ampersand replacement=\&]
{
	\&			\&				\&\node (x) {$x$};		\\
	\&	\&\&	\node (d3) {$\Delta_{2}$}; 					\&	\& \\
	\&			\& 						\&						\&\node (x2) {$x_{2}$}; 	\& \\
  	\&							\& 	\&\node (z) {${x_{3}}$};		\&		 	\& \node (x4) {${x_{4}}$};\\

	\&							\&\&	\&		\&\node (x1) {$x_{1}$};		\\
	\&							\& \&	  \&	\&\node (d1) {$\Delta_{1}$};\\
	};

\draw[-latex] (x) to node[left] {$a$} (d3);
\draw[-latex] (x1) to node[left] {$b$} (d1);
\draw[-latex] (x2) to node[left] {$b$} (z);
\draw[-latex] (x2) to node[right] {$c$} (x4);
\draw[-latex] (x4) to node[left] {$a$} (x1);

\draw[dotted,->] (d1) to node[left] {$\frac 1 2$}  (z);
\draw[dotted,->] (d3) to node[left] {$\frac 1 2$}  (z);
\draw[dotted,->] (d3) to node[above] {$\frac 1 2$}  (x2);
\draw[dotted,->] (d1) to[bend right=50] node[right] {$\frac 1 2$}  (x);
\end{scope}
\end{tikzpicture}
\caption{The resolutions $\mathcal{R}_{1}$ (left), $\mathcal{R}_{2}$ (center), and $\mathcal{R}_3$ (right).}\label{fig:resolutions}
\end{center}
\end{figure}
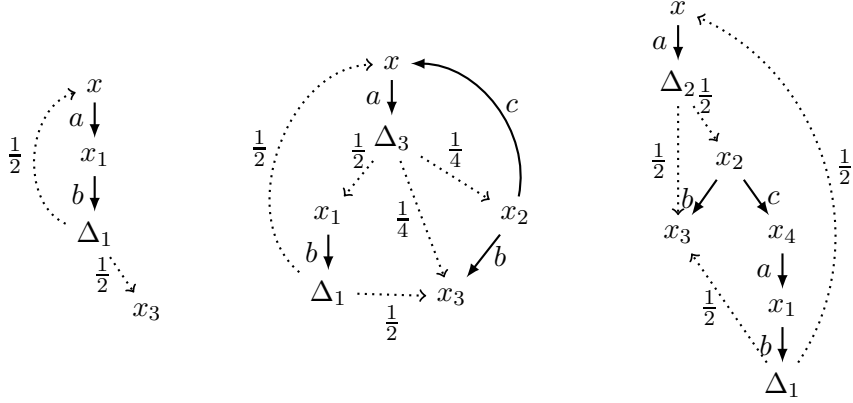

The reason why we take arbitrary $\corr$ functions, rather than just injective ones, is that the original NPLTS might contain cycles, in which case we want to allow the resolution to take different choices at different times.

\begin{exa}\label{ex:resolutionsappendix}
In order to understand how a resolution allows to resolve differently nondeterministic choices at different times, when cycles occur in the original system, consider the \pss  on the left of Figure~\ref{fig:examplesys} and its resolution $\mathcal{R}_{3}$ in
Figure~\ref{fig:resolutions}. In the latter, the state space is enlarged with state $x_{4}$, which is mapped to $x$ by the correspondence function. On the remaining states, the correspondence function is the identity over $X$. In this resolution, $x$ first chooses the right-hand transition of the original NPLTS, and at the next cycle iteration, represented by $x_{4}$, the left-hand transition is chosen. Observe that $\pprob_{\mathcal{R}_{3}}(x)(abab)=0$.
\end{exa}

Given a resolution $\mathcal{R}=(Y,\corr ,r)$, we define the function $\pprob_{\mathcal{R}} \colon Y \to [0,1]^{A^*}$ inductively for $y\in Y$ and $w\in A^*$ as
{{\begin{align*}
&\pprob_{\mathcal{R}}(y)(\varepsilon) &=&\quad1\text{;}\\
&\pprob_{\mathcal{R}}(y)(aw) &=
& \quad\begin{cases}
               0               & \text{ if } r(y)(a)=\star\text{;}\\
               \sum_{y'\in \supp(\Delta)} \Delta(y') \cdot \pprob_{\mathcal{R}}(y')(w) & \text{ if } r(y)(a)= \Delta \text{.}\\
           \end{cases}
\end{align*}}}
Intuitively, for all states $y\in Y$,  $\pprob_{\mathcal{R}}(y)(w)$ gives the probability of $y$ performing the trace $w$. For instance, in the resolutions in Figure~\ref{fig:resolutions},
$\pprob_{\mathcal{R}_{1}}(x)(abab)=\frac 1 2$ and  $\pprob_{\mathcal{R}_{2}}(x)(abab)=\frac 3 {16}$.

Now, given an NPLTS $(X,t)$, define $\bbressup{\cdot} \colon  X \to [0,1]^{A^*}$ by,  for $x\in X$ and $w\in A^*$,
\begin{align*}
\bbressup{x} (w) = \bigsqcup \{\pprob_{\mathcal{R}}(y)(w) \, \mid \, \mathcal{R}=(Y,\corr ,r) \text{ is a resolution of }(X,t) \text{ and } \corr (y)=x  \}\text{.}
\end{align*}

Similarly, we define
\begin{align*}
\bbresinf{x} (w) = \bigsqcap \{\pprob_{\mathcal{R}}(y)(w) \, \mid \, \mathcal{R}=(Y,\corr ,r) \text{ is a resolution of }(X,t) \text{ and } \corr (y)=x  \}\text{.}
\end{align*}

The following theorem states that the global view of trace semantics developed in Section~\ref{sec:maymust} coincides with the trace semantics defined locally via resolutions.

\begin{thm}[Global/local correspondence]\label{thm:correspondence}
Let $(X,t)$ be an NPLTS\@. For all $x\in X$ and $w\in A^*$, it holds that
\[\bb{x}(w)= [\, \bbresinf{x} (w), \bbressup{x} (w)\,]\text{.}\]
\end{thm}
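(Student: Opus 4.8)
The plan is to reduce the statement to a combinatorial analysis of the convex set of subdistributions reached by reading $w$, and then to match its extreme weights with resolutions. Write $\mathrm{wt}(d)=\sum_{z\in X} d(z)$ for the total mass on $X$ of a subdistribution $d\in\Dis(X+1)$ (the probability of not having failed), and let $S^x_w\in C(X+1)$ be the determinised state reached from $\eta(x)$ after reading $w$, so that $\bb{x}(w)=\barO^\sharp(S^x_w)$ by the inductive definition of $\bb{\cdot}$. First I would prove the clean lemma that $\barO^\sharp(S)=[\min_{d\in S}\mathrm{wt}(d),\ \max_{d\in S}\mathrm{wt}(d)]$ for every $S\in C(X+1)$. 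Both sides are pointed convex semilattice homomorphisms $C(X+1)\to\intervals$ sending $x\mapsto[1,1]$ and $\onev\mapsto[0,0]$: the left by construction of the determinisation, the right because $\mathrm{wt}$ is affine, so its range over $\convex(S_1\cup S_2)$ is the $\minmax$ of the two ranges and its range over the Minkowski sum $S_1\pplus{p}S_2$ is their $\psuminterval$. By freeness of $C(X+1)$ (Theorem~\ref{th:pres-C}) together with the identification of the target algebra in Proposition~\ref{prop:intervalspcs}, they coincide, whence $\bb{x}(w)=[\min_{d\in S^x_w}\mathrm{wt}(d),\ \max_{d\in S^x_w}\mathrm{wt}(d)]$. (Projecting to the endpoints via $\maxalg_B$ and $\minalg_T$ recovers the may- and must-values $\bbmay{\eta(x)}(w)$ and $\bbmust{\eta(x)}(w)$.) It remains to prove $\max_{d\in S^x_w}\mathrm{wt}(d)=\bbressup{x}(w)$ and $\min_{d\in S^x_w}\mathrm{wt}(d)=\bbresinf{x}(w)$.

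For the inclusion $[\bbresinf{x}(w),\bbressup{x}(w)]\subseteq\bb{x}(w)$ I would show, by induction on $w$, that for every resolution $\R=(Y,\corr,r)$ and every $y$ with $\corr(y)=x$ there is $d\in S^x_w$ with $\mathrm{wt}(d)=\pprob_{\R}(y)(w)$; this forces every resolution value to lie in $\bb{x}(w)$, so $\bbressup{x}(w)\le\max$ and $\bbresinf{x}(w)\ge\min$. The base case is $\delta_x\in S^x_\varepsilon$. For $w=aw'$: if $r(y)(a)=\onev$ then $t(\corr(y))(a)=\emptyset$, $\pprob_{\R}(y)(w)=0$, and $\delta_{\onev}\in S^x_w$ has weight $0$; otherwise $\bar\Delta:=\Dis(\corr)(r(y)(a))=\sum_{y'}\Delta(y')\delta_{\corr(y')}\in\convex(t(x)(a))$, where $\Delta=r(y)(a)$, and the induction hypothesis supplies $d_{y'}\in S^{\corr(y')}_{w'}$ with $\mathrm{wt}(d_{y'})=\pprob_{\R}(y')(w')$ for each $y'\in\supp(\Delta)$.

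The key point is that $\bart^\sharp(\cdot)(a)$, and hence the iterated ``read $w'$'' map on $C(X+1)$, is a convex semilattice homomorphism; since the singleton $\{\bar\Delta\}$ equals the $\pplus{}$-combination $\sum_{y'}\Delta(y')\{\delta_{\corr(y')}\}$ and satisfies $\{\bar\Delta\}\sqsubseteq S^x_a=\convex(t(x)(a))$, monotonicity of homomorphisms gives $\sum_{y'}\Delta(y')d_{y'}\in S^x_{aw'}$, and this element has weight $\sum_{y'}\Delta(y')\pprob_{\R}(y')(w')=\pprob_{\R}(y)(aw')$, completing the induction. For the reverse inequalities I would attain the two extremes by explicit resolutions. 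Because all operations building $S^x_w$ ($\cplus$ as convex union, $\pplus{p}$ as Minkowski sum) preserve finite generation, $S^x_w=\convex\{d_\sigma\mid\sigma\}$ where $\sigma$ ranges over \emph{pure schedules}: a choice, at each reachable pair (state, position $<|w|$), of a single generating distribution $\Delta\in t(z)(a)$. Since $\mathrm{wt}$ is affine, $\max_{d\in S^x_w}\mathrm{wt}(d)=\max_\sigma\mathrm{wt}(d_\sigma)$ and dually for $\min$.

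Finally, each pure schedule $\sigma$ is realised by its depth-$|w|$ tree unfolding: take $Y$ to be the reachable histories along $w$, let $\corr$ send a history to its current state, and let $r$ pick the chosen $\Delta$ along the $w$-path (setting $r(y)(a)=\onev$ exactly when $t(\corr(y))(a)=\emptyset$, and making any legal choice on the remaining, irrelevant actions). This satisfies Definition~\ref{def:res} by construction, and $\pprob_{\R_\sigma}$ at the root equals $\mathrm{wt}(d_\sigma)$; the non-injectivity of $\corr$ is precisely what lets cyclic states be scheduled differently at different depths (cf.\ Example~\ref{ex:resolutionsappendix}). Hence $\bbressup{x}(w)\ge\max$ and $\bbresinf{x}(w)\le\min$, which combined with the previous inequalities gives $\bbressup{x}(w)=\max$ and $\bbresinf{x}(w)=\min$, so $\bb{x}(w)=[\bbresinf{x}(w),\bbressup{x}(w)]$. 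I expect the construction and verification of these extremal tree unfoldings---checking the resolution axioms for \emph{all} actions and confirming that $\pprob$ computes the chosen weight---to be the main obstacle, the rest being homomorphism and mass bookkeeping.
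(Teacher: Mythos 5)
Your proof is essentially correct and its overall architecture matches the paper's: both reduce $\bb{x}(w)$ to the convex set of subdistributions reached after $w$ (your $S^x_w$ is the paper's $\reach(x)(w)$, via Lemmas~\ref{lemmatrivial} and~\ref{lemmasophisticated}), identify the interval with the range of total masses over that set, prove one inclusion by showing every resolution's value $\pprob_{\R}(y)(w)$ is the mass of some element of $S^x_w$ (the paper's Proposition~\ref{prop:correspondence} plus Lemma~\ref{lemma:oprob}), and prove the other by exhibiting resolutions attaining the endpoints. Two of your choices genuinely diverge from the paper and are arguably cleaner. First, you establish $\barO^\sharp(S)=[\min_{d\in S}\wt{d},\max_{d\in S}\wt{d}]$ by observing both sides are pointed convex semilattice homomorphisms agreeing on generators and invoking freeness (Theorem~\ref{th:pres-C}, Proposition~\ref{prop:intervalspcs}); the paper treats this as immediate from the inductive definition of $\barO^\sharp$ without isolating it. Second, for attainment of the endpoints the paper proves the stronger Proposition~\ref{prop:correspondence2} --- \emph{every} $\Delta\in\reach(x)(w)$ is realised by a (randomized) resolution, built by a coproduct of inductively obtained resolutions plus a fresh root whose transition is a genuinely randomized convex combination --- whereas you only realise the two extreme points, which, since $\wt{}$ is affine and $S^x_w$ is the convex hull of finitely many pure-schedule points, are attained at generators; your realising resolutions therefore never need the convex closure at all. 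This buys a lighter construction at the price of proving only the endpoints rather than surjectivity onto the whole interval (which is all the theorem needs).

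The one place you must still do real work is the one you flagged: Definition~\ref{def:res} requires the resolution to be a \emph{total} RPLTS on $Y$ satisfying clauses (1)--(2) for \emph{all} actions, so $Y=\{\text{histories along }w\}$ is not closed under the off-path and beyond-depth-$|w|$ transitions you are obliged to provide. The standard fix --- and the one the paper uses inside Proposition~\ref{prop:correspondence2} --- is to pad $Y$ with a disjoint copy of $X$ carrying a default resolution structure (an arbitrary choice $\Delta_{z,b}\in t(z)(b)$ whenever $t(z)(b)\neq\emptyset$) and to route all irrelevant transitions into that copy; the extra states do not affect $\pprob_{\R_\sigma}$ at the root for the word $w$. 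A second, cosmetic point: your pure schedules should be indexed by histories rather than by pairs (state, position); since identical states at the same position contribute the \emph{same} convex set to the Minkowski sum, the resulting convex hulls coincide either way, but the tree unfolding you describe is history-indexed, so stating the schedules that way from the start avoids any mismatch.
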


\begin{cor}\label{cor:correspondencemaymust}
Let $(X,t)$ be an NPLTS\@. For all $x\in X$ and $w\in A^*$,
$\bbmay{x}(w)= \bbressup{x} (w) $ and $ \bbmust{x}(w)= \bbresinf{x} (w)$.
\end{cor}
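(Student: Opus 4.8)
The plan is to prove the theorem by splitting the interval into its two endpoints and reducing everything to two one-sided statements about the may and must semantics; the Corollary then comes for free. The first step is to record that
$$\bb{x}(w) = [\,\bbmust{x}(w),\,\bbmay{x}(w)\,].$$
The map $h_B\colon \mathbb M_{\intervals,[0,0]}\to\maxalg_B$ sending an interval to its upper bound and the map $h_T\colon \mathbb M_{\intervals,[0,0]}\to\minalg_T$ sending it to its lower bound are homomorphisms of pointed convex semilattices (this is exactly the content of the isomorphisms in Propositions~\ref{prop:max} and~\ref{prop:min}). By Theorem~\ref{thm:transfert}.2, post-composing $\bb{\cdot}$ with $h_B$ gives the semantics induced by $\maxalg_B$ viewed as a pointed convex semilattice, which equals $\bbmay{\cdot}$ by the remark establishing $\eqmay'=\eqmay$ (an instance of Theorem~\ref{thm:transfert}.1 for $q^B$), and symmetrically for $h_T$ and $\bbmust{\cdot}$. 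Hence it suffices to prove $\bbmay{x}(w) = \bbressup{x}(w)$ and $\bbmust{x}(w) = \bbresinf{x}(w)$.

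Next I would unfold the inductive definitions of $\bartB^\sharp$ and $\barob^\sharp$ together with the homomorphism property~\eqref{eq:homo} to obtain the recursive shape $\bbmay{x}(\varepsilon)=1$, $\bbmay{x}(aw')=0$ whenever $t(x)(a)=\emptyset$, and otherwise
$$\bbmay{x}(aw') = \max_{d\in t(x)(a)} \sum_{x'\in\supp(d)} d(x')\cdot\bbmay{x'}(w').$$
The crucial point here is that $d\mapsto \sum_{x'} d(x')\bbmay{x'}(w')$ is affine, so its maximum over the finitely generated convex set $\convex(t(x)(a))$ is attained at a generator, i.e.\ at a member of $t(x)(a)$; the symmetric statement with $\min$ holds for $\bbmust$.

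For the soundness direction I would prove, by induction on $|w|$, that for every resolution $\mathcal R=(Y,\corr,r)$ and every $y$ with $\corr(y)=x$ one has $\bbmust{x}(w)\le \pprob_{\mathcal R}(y)(w)\le \bbmay{x}(w)$. The base case is immediate; in the step $w=aw'$ with $t(x)(a)\ne\emptyset$, resolution condition~(2) gives $d:=\Dis(\corr)(r(y)(a))\in\convex(t(x)(a))$, and grouping the defining sum of $\pprob_{\mathcal R}(y)(aw')$ by the fibres of $\corr$ and applying the induction hypothesis to each successor yields $\pprob_{\mathcal R}(y)(aw')\le \sum_{x'} d(x')\bbmay{x'}(w')\le \bbmay{x}(aw')$, the last inequality using the recursion above and $d\in\convex(t(x)(a))$; the case $t(x)(a)=\emptyset$ is handled by condition~(1). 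This gives $\bbressup{x}(w)\le\bbmay{x}(w)$ and, symmetrically, $\bbresinf{x}(w)\ge\bbmust{x}(w)$.

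Finally, for the matching bounds I would exhibit, for a fixed word $w=a_1\cdots a_n$, a resolution attaining $\bbmay{x}(w)$ exactly. The idea is to unfold time: take $Y=X\times\{0,\dots,n\}$ with $\corr(\langle x',i\rangle)=x'$, and define $r$ by backward induction on $i$, choosing at $\langle x',i\rangle$ for the action $a_{i+1}$ a generator $d\in t(x')(a_{i+1})$ maximizing $\sum_{x''}d(x'')\,\bbmay{x''}(a_{i+2}\cdots a_n)$ (lifted to a distribution on $Y$ moving to time $i+1$), and choosing arbitrary valid transitions for the off-trace actions so that conditions~(1)--(2) hold. Since each choice realizes the maximum in the recursion and off-trace choices do not affect $\pprob_{\mathcal R}(\cdot)(w)$, one gets $\pprob_{\mathcal R}(\langle x,0\rangle)(w)=\bbmay{x}(w)$, whence $\bbressup{x}(w)\ge\bbmay{x}(w)$ and thus equality; the analogous minimizing construction gives $\bbresinf{x}(w)=\bbmust{x}(w)$. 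Together with the endpoint splitting this yields $\bb{x}(w)=[\bbresinf{x}(w),\bbressup{x}(w)]$. I expect this last construction to be the main obstacle: one must ensure that the time-unfolding genuinely permits distinct choices on repeated visits to a state (so $\corr$ is deliberately non-injective), that the resulting $r$ is a bona fide resolution, and that finiteness (finite branching and finite $w$) turns the supremum into an attained maximum rather than a mere limit.
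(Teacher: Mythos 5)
Your proposal is correct, but it takes a genuinely different route from the paper. The paper obtains Corollary~\ref{cor:correspondencemaymust} as a two-line consequence of Theorem~\ref{thm:correspondence}: it applies Theorem~\ref{thm:transfert}.2 to the quotient maps $\quotientB,\quotientT$ (which on $\intervals$ send $[p,q]$ to $q$ and $p$ respectively, exactly your $h_B,h_T$) and then reads off the endpoints of the interval $\bb{x}(w)=[\,\bbresinf{x}(w),\bbressup{x}(w)\,]$. All the real work is in the proof of Theorem~\ref{thm:correspondence}, which argues at the level of reached subdistributions: the function $\reach\colon X\to (C(X+1))^{A^*}$ collects the convex set of subdistributions reachable via $w$, Proposition~\ref{prop:correspondence} shows every resolution's reached subdistribution lands (after pushing forward along $\corr$) in $\reach(x)(w)$, and Proposition~\ref{prop:correspondence2} shows conversely that every element of $\reach(x)(w)$ is realized by some resolution (built there, as in your construction, by gluing sub-resolutions and exploiting non-injectivity of $\corr$). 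You instead work with scalar values: a Bellman-style recursion for $\bbmay{\cdot}$ and $\bbmust{\cdot}$ using that affine maps on finitely generated convex sets attain their extrema at generators, an induction bounding $\pprob_{\mathcal R}$ between the must and may values, and an explicit time-unfolded optimal resolution on $X\times\{0,\dots,n\}$. This is sound (your attention to the off-trace transitions and to level-$n$ targets, needed for condition~(1) of Definition~\ref{def:res}, covers the only delicate points), and it is more elementary and self-contained; what it gives up is the stronger distribution-level correspondence of Propositions~\ref{prop:correspondence} and~\ref{prop:correspondence2}, which the paper's route establishes once and then reuses to get the full interval statement of Theorem~\ref{thm:correspondence} and both halves of the corollary simultaneously, whereas you recover the interval statement only at the end by combining your two one-sided results with the endpoint splitting.
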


Theorem~\ref{thm:correspondence} and Corollary~\ref{cor:correspondencemaymust} provide a  characterisation of $\eqmaymust$, $\eqmay$ and $\eqmust$ in terms of resolutions. Their proofs are presented in Section~\ref{sec:appthmcorr}.
Moreover, we show in Section~\ref{sec:correspondence_may} that $\eqmay$ coincides with the randomized $\sqcup$-trace equivalence investigated in~\cite{Cast18} and inspired by~\cite{BDL14a,BDL14b}.

\subsection{Proof of the global/local correspondence theorems}\label{sec:appthmcorr}

Given a resolution $\mathcal{R}=(Y,\corr ,r)$, we define the function $\reachres_{\mathcal{R}} \colon Y \to (\mathcal{D}(Y+1))^{A^*}$ inductively as
\begin{align*}
&\reachres_{\mathcal{R}}(y)(\varepsilon) &=&\quad \delta_y \text{;}\\
&\reachres_{\mathcal{R}}(y)(aw) &=& \quad
 \begin{cases}
               \delta_\star               & \text{ if } r(y)(a)=\star\text{;}\\
               \sum_{y'\in \supp(\Delta)} \Delta(y') \cdot \reachres_{\mathcal{R}}(y')(w) & \text{ if } r(y)(a)= \Delta \text{.}\\
           \end{cases}
           \end{align*}
Intuitively, this assigns to each state $y\in Y$ and word $w\in A^*$ a subdistribution over $Y$, which is the state of the determinised system that $y$ reaches via $w$.

Let $o'^{\sharp} \colon \mathcal{D}(Y+1) \to [0,1]$ be the function assigning to a subdistribution $\Delta$ its total mass, namely $1-\Delta(\star)$. More formally, this is defined inductively as
\[
o'^{\sharp}(\Delta)=\begin{cases}
              0              & \text{ if } \Delta=   \delta_\star \text{;}\\
              1              & \text{ if } \Delta=   \delta_y \text{ for } y\in Y\text{;}\\
             o'^{\sharp}(\Delta_1) +_p o'^{\sharp}(\Delta_2) & \text{ if } \Delta=  \Delta_1 \pplus{p} \Delta_2 \text{.}
             \end{cases}
\]

\begin{lem}\label{lemma:oprob}
$o'^\sharp \after \reachres_{\mathcal{R}}= \pprob_{\mathcal{R}}$.
\end{lem}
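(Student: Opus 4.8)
The plan is to prove $o'^{\sharp} \after \reachres_{\mathcal{R}} = \pprob_{\mathcal{R}}$ by induction on the length of the word $w \in A^*$, since both functions are defined inductively by recursion on $w$. This is the natural strategy because the two target functions share exactly the same case structure on $w$: a base case for $\varepsilon$ and an inductive case for $aw$ that splits according to whether $r(y)(a) = \star$ or $r(y)(a) = \Delta$ for some distribution $\Delta$.

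For the base case $w = \varepsilon$, I would compute directly: $\reachres_{\mathcal{R}}(y)(\varepsilon) = \delta_y$, and since $y \in Y$ we have $o'^{\sharp}(\delta_y) = 1 = \pprob_{\mathcal{R}}(y)(\varepsilon)$, matching the defining clause of $\pprob_{\mathcal{R}}$ at $\varepsilon$. For the inductive step, assume the claim holds for all states and all words of length $n$, and consider a word $aw$ with $|w| = n$. The first subcase $r(y)(a) = \star$ is immediate: $\reachres_{\mathcal{R}}(y)(aw) = \delta_\star$, so $o'^{\sharp}(\delta_\star) = 0 = \pprob_{\mathcal{R}}(y)(aw)$. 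In the second subcase $r(y)(a) = \Delta$, I would write
$$\reachres_{\mathcal{R}}(y)(aw) = \sum_{y'\in \supp(\Delta)} \Delta(y') \cdot \reachres_{\mathcal{R}}(y')(w)$$
and then apply $o'^{\sharp}$ to both sides.

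The key step is to push $o'^{\sharp}$ through the convex combination, i.e.\ to argue that
$$o'^{\sharp}\!\left(\sum_{y'\in \supp(\Delta)} \Delta(y') \cdot \reachres_{\mathcal{R}}(y')(w)\right) = \sum_{y'\in \supp(\Delta)} \Delta(y') \cdot o'^{\sharp}\!\left(\reachres_{\mathcal{R}}(y')(w)\right).$$
This follows because $o'^{\sharp}$ is, by its defining clauses, a homomorphism of convex algebras (equivalently, a $\Dis$-algebra / $\Dis(\cdot+1)$-algebra morphism): it respects the operations $\pplus{p}$, and any finitely supported convex combination decomposes into iterated binary ones via the barycentre law, so $o'^{\sharp}$ of a convex combination is the corresponding convex combination of the values. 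Once this linearity is in hand, the right-hand side equals $\sum_{y'} \Delta(y') \cdot \pprob_{\mathcal{R}}(y')(w)$ by the induction hypothesis (applied at each $y'$ and the word $w$ of length $n$), which is exactly the defining clause of $\pprob_{\mathcal{R}}(y)(aw)$ in the non-$\star$ case.

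The main obstacle is making the convex-linearity of $o'^{\sharp}$ precise. Its inductive definition is stated only for binary combinations $\Delta_1 \pplus{p} \Delta_2$, so I must justify that it extends consistently to the general finite convex combination appearing in the definition of $\reachres_{\mathcal{R}}$, and that the extension is well defined (independent of how the combination is bracketed). I would handle this by invoking Remark~\ref{rem:bin-suff}, which shows $n$-ary convex combinations reduce to binary ones and that the binary presentation $(\Sigma_P, E_P)$ agrees with the $n$-ary one $(\Sigma_{\hat P}, E_{\hat P})$; concretely, $o'^{\sharp}$ is the $\Dis(\cdot+1)$-algebra structure on $[0,1]$ assigning total mass, so its compatibility with arbitrary convex combinations is exactly the statement that it is an algebra homomorphism. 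With that identification, the linearity step is routine and the induction closes.
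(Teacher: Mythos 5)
Your proof is correct and follows essentially the same route as the paper's: induction on $w$, with the base case and the $\star$-subcase immediate, and the key step in the non-$\star$ subcase being that $o'^{\sharp}$ commutes with convex combinations (the paper labels this step simply ``$o'^{\sharp}$ hom.''). Your extra care in justifying the passage from binary to finitely supported convex combinations via Remark~\ref{rem:bin-suff} is a welcome elaboration of a point the paper leaves implicit, but it does not change the argument.
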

\begin{proof}
We prove that $o'^\sharp ( \reachres_{\mathcal{R}}(y)(w))= \pprob_{\mathcal{R}}(y)(w)$ for all $y\in Y$ and $w\in A^*$. The proof proceeds by induction on $w$.

Base case: $w=\varepsilon$.
\[\pprob_{\mathcal{R}}(y)(\varepsilon)= 1 = o'^{\sharp}(\delta_y) = o'^{\sharp}(\reachres_{\mathcal{R}}(y)(\varepsilon))\]

Inductive case: $w=aw'$. If $r(y)(a)=\star$, then
\[\pprob_{\mathcal{R}}(y)(aw')= 0 = o'^{\sharp}(\delta_\star) = o'^{\sharp}(\reachres_{\mathcal{R}}(y)(aw'))\text{.}\]

If $r(y)(a)=\Delta$, then
\begin{align*}
\pprob_{\mathcal{R}}(y)(aw') & =
 \sum_{y'\in \supp(\Delta)} \Delta(y') \cdot \pprob_{\mathcal{R}}(y')(w') \tag{definition}\\
& =  \sum_{y'\in \supp(\Delta)} \Delta(y') \cdot o'^\sharp ( \reachres_{\mathcal{R}}(y')(w'))  \tag{IH} \\
& =  o'^\sharp( \sum_{y'\in \supp(\Delta)} \Delta(y') \cdot  ( \reachres_{\mathcal{R}}(y')(w')))  \tag{$o'^{\sharp}$ hom.} \\ 
& =  o'^\sharp (\pprob_{\mathcal{R}}(y)(aw'))  & \tag*{(definition) \qedhere}
\end{align*}
\end{proof}

\medskip

Given an NPLTS $(X,t)$, we define the function $\reach \colon X \to (C(X+1))^{A^*}$ inductively as
\begin{align*}
&\reach(x)(\varepsilon) &=& \{\delta_x  \}\text{;}\\
&\reach(x)(aw) &=
&\begin{cases}
               \{ \delta_\star \}               & \text{ if } t(x)(a)=\star\text{;}\\
              \underset{_{\Delta\in \convex (S)}}{\displaystyle\bigoplus}\displaystyle\sum_{x'\in \supp(\Delta)} \Delta(x') \cdot \reach(x')(w)& \text{ if } t(x)(a)=S{.}\\
           \end{cases}
\end{align*}

For each NPLTS $(X,t)$, we have a function $\bbangle{\cdot}\colon C(X+1)\to C(X+1)^{A^*}$ defined for all $S\in C(X+1)$ and $w\in A^*$ as
\[\begin{array}{lcl}
\bbangle{S}(\varepsilon) & = & S\text{;}\\
\bbangle{S}(aw) & = & \bbangle{\bart^{\sharp}(S)(a)}(w)\text{.}\\
\end{array}
\]

The following property follows directly from the inductive definitions of $\bb{\cdot}$ and $\bbangle{\cdot}$.

\begin{lem}\label{lemmatrivial}
$\bb{\cdot}=\barO^{\sharp}\after \bbangle{\cdot}$ \qedhere
\end{lem}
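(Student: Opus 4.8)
The plan is to prove the functional identity $\bb{S}(w) = \barO^\sharp(\bbangle{S}(w))$ for every $S \in C(X+1)$ and every word $w \in A^*$ by induction on the length of $w$. The key observation driving the argument is that both $\bb{\cdot}$ and $\bbangle{\cdot}$ obey \emph{the same} recursion along the transition structure $\bart^\sharp$ — namely $\bb{S}(aw') = \bb{\bart^\sharp(S)(a)}(w')$ and $\bbangle{S}(aw') = \bbangle{\bart^\sharp(S)(a)}(w')$ — and they differ only in their base clauses at $\varepsilon$: the former applies $\barO^\sharp$ to $S$, while the latter simply returns $S$. Thus the whole content of the lemma is that the observation map $\barO^\sharp$ can be pushed through $\bbangle{\cdot}$, and this is witnessed exactly by matching the base clauses.

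First I would treat the base case $w = \varepsilon$. By the defining equation $\bb{S}(\varepsilon) = \barO^\sharp(S)$ and the defining equation $\bbangle{S}(\varepsilon) = S$, we get directly
$$\bb{S}(\varepsilon) = \barO^\sharp(S) = \barO^\sharp(\bbangle{S}(\varepsilon)),$$
as required. Next I would carry out the inductive step, assuming the identity holds for all words of length $n$ (for every state $S$) and proving it for words $w = aw'$ with $|w'| = n$. Unfolding the two recursions and invoking the induction hypothesis at the state $\bart^\sharp(S)(a)$ and the shorter word $w'$, we obtain the chain
$$\bb{S}(aw') = \bb{\bart^\sharp(S)(a)}(w') = \barO^\sharp\bigl(\bbangle{\bart^\sharp(S)(a)}(w')\bigr) = \barO^\sharp\bigl(\bbangle{S}(aw')\bigr),$$
where the first equality is the recursion for $\bb{\cdot}$, the middle equality is the induction hypothesis, and the last equality is the recursion for $\bbangle{\cdot}$. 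This closes the induction and yields the stated equality $\bb{\cdot} = \barO^\sharp \after \bbangle{\cdot}$.

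There is essentially no genuine obstacle here: the result is a routine induction, and the only thing one must be careful about is that the induction hypothesis is quantified over \emph{all} states $S$ (not just the initial one), so that it can be reapplied at the intermediate state $\bart^\sharp(S)(a)$. I would therefore state the inductive predicate as ``for all $S \in C(X+1)$, $\bb{S}(w) = \barO^\sharp(\bbangle{S}(w))$'' and induct on $|w|$, which makes the reapplication of the hypothesis immediate.
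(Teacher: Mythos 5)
Your proof is correct and is exactly the argument the paper intends: the paper simply remarks that the identity ``follows directly from the inductive definitions of $\bb{\cdot}$ and $\bbangle{\cdot}$'', and your induction on $|w|$ (with the hypothesis quantified over all $S \in C(X+1)$) is the straightforward unfolding of that remark. No gaps.
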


\begin{lem}\label{lemmasophisticated}
$\bbangle{\cdot} \after \eta =\reach$
\end{lem}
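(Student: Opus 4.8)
The plan is to prove the equality $\bbangle{\eta(x)}(w) = \reach(x)(w)$ for all $x \in X$ and $w \in A^*$ by induction on the length of $w$, exploiting two facts: that $\bart^{\sharp}$ satisfies $\bart^{\sharp}\after\eta = \bart$ (the defining property of the determinisation), and that, for each fixed word $v$, the map $S \mapsto \bbangle{S}(v)$ is a homomorphism of pointed convex semilattices.

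First I would record the homomorphism fact. Unfolding the recursion of $\bbangle{\cdot}$, one sees that $\bbangle{S}(v)$ is obtained by iterating the map $S \mapsto \bart^{\sharp}(S)(a)$ along the letters of $v$. Each such step is a homomorphism for $\cplus$, $\pplus{p}$, and $\star$ --- this is exactly the inductive description of $\bart^{\sharp}$ in Section~\ref{sec:maymust} (equivalently, $\bart^{\sharp}$ is an algebra homomorphism by Equation~\eqref{eq:homo}) --- hence the composite $S \mapsto \bbangle{S}(v)$ is a homomorphism too. I would also record the auxiliary fact that $\bbangle{\{\delta_\star\}}(v) = \{\delta_\star\}$ for every $v$, which follows by a trivial induction since $\bart^{\sharp}(\{\delta_\star\})(a) = \{\delta_\star\}$ (the point $\star$ is a fixpoint of every transition).

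For the base case $w = \varepsilon$ both sides equal $\eta(x) = \{\delta_x\}$. For the inductive step $w = a w'$, I compute $\bbangle{\eta(x)}(aw') = \bbangle{\bart^{\sharp}(\eta(x))(a)}(w') = \bbangle{\bart(x)(a)}(w')$. Now I split on $t(x)(a)$, using the definition of $\bart$ in \eqref{eq:bart}. If $t(x)(a) = \star$ then $\bart(x)(a) = \{\delta_\star\}$, so the left-hand side is $\{\delta_\star\}$ by the auxiliary fact, matching $\reach(x)(aw') = \{\delta_\star\}$. If $t(x)(a) = S = \{d_1,\dots,d_n\} \neq \emptyset$, then $\bart(x)(a) = \convex(S)$ (the distributions of $\convex(S)$ regarded inside $\Dis(X+1)$). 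Writing $h = \bbangle{\cdot}(w')$ and using that $\convex(S) = \bigoplus_{i} \{d_i\}$ with $\{d_i\} = \sum_{x'\in\supp d_i} d_i(x') \cdot \eta(x')$ in $C(X+1)$, the homomorphism property together with the inductive hypothesis $h(\eta(x')) = \bbangle{\eta(x')}(w') = \reach(x')(w')$ yields
$$h(\convex(S)) = \bigoplus_{i=1}^n \sum_{x'\in\supp d_i} d_i(x') \cdot \reach(x')(w').$$

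It then remains to match this join over the generators $d_1,\dots,d_n$ with the join over the whole convex hull appearing in the definition of $\reach(x)(aw')$. For an arbitrary $\Delta = \sum_i p_i d_i \in \convex(S)$, the same homomorphism applied to $\{\Delta\} = \sum_i p_i \{d_i\}$ shows that $\sum_{x'\in\supp\Delta}\Delta(x')\cdot\reach(x')(w')$ equals $\sum_i p_i\, g_i$, a convex combination of the generator-terms $g_i = \sum_{x'\in\supp d_i} d_i(x')\cdot\reach(x')(w')$. Hence each summand of the big join $\bigoplus_{\Delta\in\convex(S)}$ is a Minkowski convex combination of the $g_i$, so it is contained in $\convex(\bigcup_i g_i) = \bigoplus_i g_i$; conversely every $g_i$ itself occurs (take $\Delta = d_i$). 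Thus the two joins have the same convex closure and so coincide, which is precisely $\reach(x)(aw')$, completing the induction. The main obstacle is exactly this last collapse of the infinite join $\bigoplus_{\Delta\in\convex(S)}$ to the finite join over generators; I would handle it with the convexity law, Lemma~\ref{lem:gen-conv} (equivalently, the observation that every convex combination of the $g_i$ is absorbed by $\bigoplus_i g_i$), which guarantees that the infinite convex-union is in fact finitely generated and equal to $\bigoplus_i g_i$.
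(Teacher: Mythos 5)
Your proof is correct and follows essentially the same route as the paper's: induction on $w$, with the base case and the $t(x)(a)=\star$ case handled identically, and the inductive step resting on $\bart^{\sharp}\after\eta=\bart$ together with the fact that $S\mapsto\bbangle{S}(w')$ is a homomorphism of pointed convex semilattices. The only difference is one of bookkeeping: the paper applies the homomorphism property directly to the join $\bigoplus_{\Delta\in\convex(S)}$ and recognises $\bigoplus_{\Delta\in\convex(S)}\sum_{x'}\Delta(x')\cdot\eta(x')$ as $\convex(S)$ (implicitly via Lemma~\ref{lem:f-sharp}), whereas you first pass to the finite set of generators and then collapse the infinite join using the convexity law of Lemma~\ref{lem:gen-conv} --- a somewhat more explicit treatment of the same step.
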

\begin{proof}
The proof goes by induction on $w\in A^*$.

Base case: if $w=\varepsilon$, then $\reach(x)(\varepsilon)=\{\delta_x  \} = \eta(x)= \bbangle{\eta(x)}(\varepsilon)$.

Inductive case: $w=aw'$. If $t(x)(a)=\star$, then $\bbangle{\eta(x)}(aw')=\bbangle{\bart^{\sharp}(\{\delta_x\})(a)}(w')= \bbangle{\{\delta_{\star}\}}(w')= \{\delta_{\star}\}= \reach(x)(aw')$.

If $t(x)(a)=S$, then $\reach(x)(aw)= \bigoplus_{\Delta\in \convex (S)}\sum_{x'\in \supp(\Delta)} \Delta(x') \cdot \reach(x')(w)$. By induction hypothesis, the latter is equal to
$\bigoplus_{\Delta\in \convex (S)}\sum_{x'\in \supp(\Delta)} \Delta(x') \cdot \bbangle{\eta(x')}(w')$. Since $\bart^{\sharp}$ is a homomorphism of convex semilattices, then $\bbangle{\cdot}$ is a homomorphism of convex semilattices.
Hence, the latter is equal to
$\bbangle{\bigoplus_{\Delta\in \convex (S)}\sum_{x'\in \supp(\Delta)} \Delta(x') \cdot \eta(x')}(w)$
that is $\bbangle{\convex(S)}(w')=\bbangle{\bart^{\sharp}(\{\delta_x\})(a)}(w')=\bbangle{\eta(x)}(aw')$.
\end{proof}

\begin{prop}\label{prop:bb-bbangle}
$ \bb{\cdot} \after \eta = \barO^{\sharp}\after \reach$
\end{prop}
\begin{proof}
By Lemma~\ref{lemmatrivial}, $\bb{\cdot} \after \eta =\barO^{\sharp}\after \bbangle{\cdot} \after \eta$. By Lemma~\ref{lemmasophisticated}, $\barO^{\sharp}\after \bbangle{\cdot} \after \eta = \barO^{\sharp}\after \reach$.
\end{proof}

\begin{prop}\label{prop:correspondence}
Let $(X,t)$ be an NPLTS and let $\mathcal{R}=(Y,\corr ,r)$ be one of its resolutions. Let $x\in X$ and $y\in Y$  such that $\corr (y)=x$. For all $w\in A^*$, \[\mathcal{D}(\corr +1)(\reachres_\mathcal{R}(y)(w))\in \reach(x)(w)\text{.}\]
\end{prop}
\begin{proof}
By induction on the structure of $w$.
If $w=\epsilon$ then
\begin{align*}
\mathcal{D}(\corr +1)(\reachres_\mathcal{R}(y)(\epsilon))&= \mathcal{D}(\corr +1)(\delta_{y})\\
&= \delta_{x} \\
&\in \{\delta_{x}\}\\
&= \reach(x)(\epsilon)
\end{align*}
If $w=aw'$ and $t (x) (a)=\star$, then $r (y) (a)=\star$, and
\begin{align*}
\mathcal{D}(\corr +1)(\reachres_\mathcal{R}(y)(aw'))&= \mathcal{D}(\corr +1)(\delta_{\star})\\
&= \delta_{\star} \\
&\in \{\delta_{\star}\}\\
&= \reach(x)(aw')
\end{align*}
If $t (x) (a)\neq \star$ then we have $r (y) (a)\neq \star$. Let $r (y) (a)=\Delta \in\mathcal D (Y)$. We have:
\begin{align*}
&\mathcal{D}(\corr +1)(\reachres_\mathcal{R}(y)(aw'))\\
&= \mathcal{D}(\corr +1)(\sum_{y'\in \supp(\Delta)} \Delta(y') \cdot \reachres_{\mathcal{R}}(y')(w'))\\
&= \sum_{y'\in \supp(\Delta)} \Delta(y') \cdot \mathcal{D}(\corr +1)(\reachres_{\mathcal{R}}(y')(w'))\\
\end{align*}
By the inductive hypothesis, for each $y'$ we have \[\mathcal{D}(\corr +1)(\reachres_{\mathcal{R}}(y')(w'))\in \reach (\corr  (y')) (w') \text{.}\]
Hence, by the definition of Minkowski sum,
\begin{eqnarray*}
\sum_{y'\in \supp(\Delta)} \Delta(y') \cdot \mathcal{D}(\corr +1)(\reachres_{\mathcal{R}}(y')(w'))
&\in& \msum_{y'\in \supp(\Delta)} \Delta(y') \cdot \reach (\corr  (y')) (w')
\end{eqnarray*}
Since $\R$ is a resolution, there is a $\Delta'\in \convex (t(x)(a))$ such that $\mathcal D (\corr ) (\Delta) = \Delta'$. The latter means that $\Delta'(x')= \sum_{\{y' \in \supp(\Delta) | \corr (y')=x'\}}  \Delta(y')$, and thus:
\begin{eqnarray*}
\msum_{y'\in \supp(\Delta)} \Delta(y') \cdot \reach (\corr  (y')) (w')
&=& \msum_{x' \in \supp(\Delta')} \Delta'(x') \cdot \reach (x') (w')
\end{eqnarray*}
as easily follows from the axioms of convex algebras.  We can then conclude by the definition of $\reach (x)(aw')$
\begin{align*}
\msum_{x' \in \supp(\Delta')} \Delta'(x') \cdot \reach (x') (w')
&\subseteq  \bigoplus_{\Delta'\in \convex (t(a)(x))}\;\msum_{x'\in \supp(\Delta')} \Delta'(x') \cdot \reach(x')(w')\\
&= \reach(x)(aw'). \qedhere
\end{align*}
\end{proof}

\begin{prop}\label{prop:correspondence2}
Let $(X,t)$ be an NPLTS\@. For all $x\in X$ and $w\in A^*$, if $\Delta \in \reach(x)(w)$ then there exists a resolution $\mathcal{R}=(Y,\corr ,r)$ and a state $y \in Y$ such that
\begin{enumerate}
\item[(1)] $\corr (y)=x$ and
\item[(2)] $\mathcal{D}(\corr +1)(\reachres_\mathcal{R}(y)(w))= \Delta$.
\end{enumerate}
\end{prop}
\begin{proof}
The proof proceeds by induction on $w \in A^*$.

In the base case $w=\varepsilon$. For all $x\in X$ and $a\in A$ such that $t(x)(a)\neq \star$, we can choose one distribution $\Delta_{x,a}\in t(x)(a)$. Then, we take $\mathcal{R}=(X,id_X,r)$  where $r\colon X \to (\mathcal{D}X+1)^A$ is defined for all $x\in X$ and $a\in A$ as
\[r(x)(a) = \begin{cases}
              \star             & \text{ if } t(x)(a)=\star\text{;}\\
              \Delta_{x,a} & \text{ otherwise.}
\end{cases}\]
By construction $\mathcal{R}$ is a resolution.  Then we take $x$ as the selected state $y$ of the resolution $\mathcal{R}$. Since the correspondence function is $id_X$, $(1)$ is immediately satisfied. Now, by definition, $\reachres_{\mathcal{R}}(x)(\varepsilon)=\delta_x$ and $\reach(x)(\varepsilon)=\{\delta_x\}$. We conclude by observing that  $\mathcal{D}(id_X+1)(\delta_x)=\delta_x\in \{\delta_x\}=\reach(x)(\varepsilon)$.

\medskip

In the inductive case $w=aw'$. Now we have two cases to consider: either $t(x)(a)=\star$ or $t(x)(a)=S$ for $S\in \Powne\mathcal{D}(X)$.

Assume $t(x)(a)=\star$. Then $\reach(x)(aw')=\{\delta_\star\}$.
Let $\mathcal{R}=(X,id_X,r)$ be the resolution defined as in the base case, and take $x$ as the selected state $y$ of the resolution $\mathcal{R}$. Since the correspondence function is $id_X$, $(1)$ is immediately satisfied. Since $\mathcal R$ is a resolution, $t(x)(a)=\star$ implies $r(x)(a)=\star$. Hence, $\reachres_{\mathcal{R}}(x)(a)=\delta_\star$ and $\reach(x)(a)=\{\delta_\star\}$. We conclude by  $\mathcal{D}(id_X+1)(\delta_\star)=\delta_{\star}\in \{\delta_\star\}=\reach(x)(a)$.

Assume $t(x)(a)=S$.
Then \[\reach(x)(aw')= \bigoplus_{\Delta'\in \convex (S)}\sum_{x'\in \supp(\Delta')} \Delta'(x') \cdot \reach(x')(w')\text{.}\]
By Lemma~\ref{lem:f-sharp},
it holds that
\[\reach(x)(aw') =  \bigcup_{\Delta'\in \convex (S)}\sum_{x'\in \supp(\Delta')} \Delta'(x') \cdot \reach(x')(w')\text{.}\]
Hence, $\Delta\in \reach(x)(aw')$ if and only if
there exists a $\Delta' \in \convex (S)$ such that
\[\Delta \in \sum_{x'\in \supp(\Delta')} \Delta'(x') \cdot \reach(x')(w')\text{.} \]
This is in turn equivalent to saying (by the definition of Minkowski sum) that for every $x' \in \supp(\Delta')$ there exists a $\Delta'_{x'}\in \reach(x')(w')$ such that
\[\Delta=\sum_{x'\in \supp(\Delta')} \Delta'(x') \cdot \Delta'_{x'}\text{.} \]
We can now use the induction hypothesis on $\Delta'_{x'}\in \reach(x')(w')$: for each $\Delta'_{x'}\in \reach(x')(w')$ there exists a resolution $\mathcal{R}_{x'}=(Y_{x'},\corr_{x'}, r_{x'})$ and a $y_{x'}\in Y_{x'}$ such that
\begin{enumerate}
\item[(c)] $\corr_{x'}(y_{x'})=x'$ and
\item[(d)] $\mathcal{D}(\corr_{x'}+1)(\reachres_{\mathcal{R}_{{x'}}}(y_{x'})(w))=\Delta'_{x'}$.
\end{enumerate}
Now we construct the coproduct of all the resolutions $\mathcal{R}_{{x'}}$. Take $Z$ to be the disjoint union of all the $Y_{x'}$ and define $\corr_Z \colon Z \to X$ as
$\corr_Z(z)=\corr_{x'}(z)$ if $z\in Y_{x'}$. Similarly, we define $r_Z \colon Z \to (\mathcal{D}Z+1)^A$ as $r_Z(z)=r_{x'}(z)$ if $z\in Y_{x'}$. By construction, $\mathcal{R}_Z=(Z,\corr_Z,r_Z)$ is a resolution of $(X,t)$.

Let $\mathcal{R'}=(X,id_X,r')$ be a resolution defined as in the base case, i.e., by arbitrarily choosing a distribution $\Delta_{x,a} \in t(x)(a)$, for any $x$ and $a$, as value of $r'(x)(a)$, whenever $t(x)(a) \neq \star$.
We define the resolution $\R=(Y,\corr,r)$ needed to conclude the proof as follows:
the state space is $Y=Z+X+\{y\}$, namely the disjoint union of $Z$, of $X$, and of the singleton containing a fresh state $y$;  the correspondence function $\corr \colon Y \to X$ and the transition function $r\colon Y \to (\mathcal{D}Y+1)^A$ are defined for all $u\in Y$ as
\[ \corr(u)= \begin{cases}
		\corr_{Z}(u) & \text{ if } u\in Z\text{,}\\
                    id_X(u)      & \text{ if } u\in X\text{,}\\
                    x      & \text{ if } u=y\text{,}
\end{cases}
\]
\[
r(u)(b) = \begin{cases}
		r_Z(u)(b) & \text{ if } u\in Z\text{,}\\
		r'(u)(b) & \text{ if } u\in X\text{,}\\
              \Delta_{x,b} & \text{ if } u=y, a \neq b, t(x)(b)\neq \star \text{,}\\
  \star & \text{ if } u=y, a \neq b, t(x)(b)= \star \text{,}\\
                 \Delta''      & \text{ if } u = y, \, a=b\\
\end{cases}\]
where $\Delta''$ is the distribution having as support the set of states $\{y_{x'}| x' \in \supp(\Delta)\} \subseteq Z$, and such that $\Delta''(y_{x'})=\Delta'(x')\text{.}$ Note that $\Delta''$ is a distribution, since $\Delta'$ is and since we are taking exactly one $y_{x'}$ for each $x'\in \supp(\Delta')$.

The fact that $\R$ is a resolution follows from $\R_{Z}$ and $\R'$ being resolutions and
$y$ respecting --- by construction --- the conditions of resolution: indeed $\corr (y)=x$, and \begin{itemize}
\item if $a \neq b$ and $t(x)(b)\neq \star$, then
$r(y)(b)= \Delta_{x,b}$ and
$\mathcal{D}(\corr )  (\Delta_{x,b})= \mathcal{D}(id_{X})  (\Delta_{x,b})= \Delta_{x,b} \in t(x)(b)$
\item if $a \neq b$ and $t(x)(b) = \star$, then $r(y)(b)= \star$;
\item if $a=b$, then $r(y)(b)= \Delta''$, and $\mathcal{D}(\corr ) (\Delta'')=  \Delta'$, with $\Delta' \in \convex (t(x)(a))$.
\end{itemize}

\noindent
To conclude the proof we only need to show that (1) and (2) hold. The former is trivially satisfied by definition of $\corr $. For (2), we derive:

\begin{align*}
\mathcal{D}(\corr +1)(\reachres_{\mathcal{R}}(y)(aw'))
&=  \mathcal{D}(\corr +1) (\sum_{y_{x'}\in \supp(\Delta'')}( \Delta''(y_{x'})\cdot \reachres_{\mathcal{R}}(y_{x'})(w')  )  )\\
&=\mathcal{D}(\corr +1) (\sum_{x'\in \supp(\Delta')}( \Delta'({x'})\cdot \reachres_{\mathcal{R}}(y_{x'})(w')  )  )\\
&=\sum_{x'\in \supp(\Delta')} \Delta'({x'})\cdot ( \mathcal{D}(\corr +1) ( \reachres_{\mathcal{R}}(y_{x'})(w')  ) ) \\
&=\sum_{x'\in \supp(\Delta')}\Delta'({x'})\cdot \Delta'_{x'} \\
&= \Delta \qedhere
\end{align*}

\end{proof}

\begin{proof}[Proof of Theorem~\ref{thm:correspondence}]
Before starting with the actual proof, we need the following elementary observation: for all $f\colon X \to Y$ and $\Delta\in \mathcal{D}(X+1)$, it holds that
\begin{equation}\label{eq:elemtaryobservation}
o'^{\sharp}(\mathcal{D}(f+1)(\Delta))= o'^{\sharp}(\Delta)\text{,}
\end{equation}
namely, the total mass is preserved by applying $\mathcal{D}(f+1)$.

Now, suppose that $\bb{\eta(x)}(w)=[p,q]$ for some $p,q\in [0,1]$ with $p\leq q$.
By Proposition~\ref{prop:bb-bbangle}, it holds that $\barO^{\sharp}(\reach(x) (w))=[p,q] $. By definition of $\barO^{\sharp}$ there exists $\Delta_{\min}, \Delta_{\max} \in \reach(x)(w)$ such that the total mass of $\Delta_{\min}=p$, the total mass of $\Delta_{\max}=q$ and for an arbitrary $\Delta \in \reach(x)$, its total mass is in between $p$ and $q$. In other words,
\begin{enumerate}
\item[(a)] $o'^{\sharp}(\Delta_{\min})=p$,
\item[(b)] $o'^{\sharp}(\Delta_{\max})=q$ and
\item[(c)] $p\leq o'^{\sharp}(\Delta)\leq q$ for all $\Delta \in \reach(x)$.
\end{enumerate}

\noindent
By Proposition~\ref{prop:correspondence}, for all resolutions $\R$, states $y$ such that $\corr (y)=x$, and distributions $\Delta'$ such that $\reachres_{\R}(y)(w)=\Delta'$, one has that $\mathcal{D}(\corr +1)(\Delta')\in \reach(x)(w)$. By (c), $p\leq o'^{\sharp}(\mathcal{D}(\corr +1)(\Delta')) \leq q$ and, by~\eqref{eq:elemtaryobservation}, $p \leq o'^{\sharp}(\Delta') \leq q$. This means $p \leq o'^{\sharp}(\reachres_{\R}(y)(w)) \leq q$ that, by Lemma~\ref{lemma:oprob}, coincides with $p \leq \pprob_{\R}(y)(w) \leq q$. This proves that $\bbresinf{x} (w)\geq p$ and $\bbressup{x} (w)\leq q$.

\medskip

We now prove that $\bbresinf{x} (w)\leq p$; the proof for $\bbressup{x} (w)\geq q$ is analogous.

By Proposition~\ref{prop:correspondence2}, there exist resolutions $\R$, a state $y$ and a distribution $\Delta''$ such that
\begin{enumerate}
\item[(d)] $\corr (y)=x$,
\item[(e)] $\mathcal{D}(\corr +1)(\Delta'')=\Delta_{\min}$,
\item[(f)] $\reachres_{\R}(y)(w)=\Delta''$.
\end{enumerate} By (e) and~\eqref{eq:elemtaryobservation}, one immediately has that $o'^{\sharp}(\Delta'')=o'^{\sharp}(\Delta_{\min})=p$. By (f), the latter means that $o'^{\sharp}(\reachres_{\R}(y)(w))=p$ that, by Lemma~\ref{lemma:oprob}, allows to conclude that $\pprob_{\R}(y)(w)=p$.
This proves that $\bbresinf{x} (w)\leq p$.
\end{proof}

\begin{proof}[Proof of Corollary~\ref{cor:correspondencemaymust}.]

Consider the monad morphism  $\quotientB\colon \TPCS\Rightarrow \TCSB$ quotienting  $\TPCS$ by $(B)$ (see Section~\ref{sec:maymust}), and let $\eta^{B},\mu^{B}$ respectively denote the unit and multiplication of the monad $\TCSB$.
Following Remark~\ref{rem:detinvariance} (by Theorem~\ref{thm:transfert}.1) we have
\[\bbmay{\cdot} \circ \eta^{B}= \bb{\cdot}_{B'}\circ \eta\]
where $\bb{\cdot}_{B'}: \TPCS X \to \TCSB 1^{A^*}$ is the semantic map induced by the determinisation of
$\langle \barob,\bart \rangle\colon X \to \TCSB 1 \times (\TPCS X)^A$ using the algebra $\mu^{B}_1 \after \quotientB_{\TCSB 1} \colon \TPCS\TCSB 1 \to \TCSB 1$.
The monad map $\quotientB\colon \TPCS\Rightarrow \TCSB$ gives a $\TPCS$-algebra morphism $\quotientB_{1}\colon (\TPCS 1, \mu_{1})  \to (\TCSB 1, \mu^{B}_{1} \circ \quotientB_{\TPCS 1})$.
By
 Theorem~\ref{thm:transfert}.2 we have
\[\bb{\cdot}_{B'}= {\quotientB_{1}}^{A^*} \circ \bb{\cdot}.\]
Hence, we derive
\[\bbmay{\cdot} \circ \eta^{B}= {\quotientB_{1}}^{A^*} \circ \bb{\cdot} \circ \eta.\]
For an interval $[p,q]$,
$\quotientB_{1} ([p,q])= q$.
Then by Theorem~\ref{thm:correspondence} we conclude
\begin{align*}
({\quotientB_{1}}^{A^*} \circ \bb{\eta(x)}) (w)
&= \quotientB_{1} (\bb{\eta(x)} (w)) \\
&= \quotientB_{1} ([\, \bbresinf{x} (w), \bbressup{x} (w)\,]) \\
&=\bbressup{x} (w)\text{.}
\end{align*}
To prove that $\bb{\eta(\cdot)}_T=\bbresinf{\cdot}$ we proceed in the same way, but taking the monad morphism $\quotientT\colon \TPCS\Rightarrow \TCST$ quotienting   $\TPCS$ by  $(T)$.
\end{proof}

\subsection{Coincidence with randomized \texorpdfstring{$\sqcup$}{supremum}-trace equivalence}%
\label{sec:correspondence_may}
Let $t\colon X \to (\Pow\mathcal{D} X)^A$ be an NPLTS\@. A \emph{fully probabilistic resolution} for $t$ is a triple $\R=(Y,\corr ,r)$ such that $Y$ is a set, $\corr\colon Y \to X$, and $r\colon Y \to (A \times \dset Y) +1$ such that for every $y\in Y$ and $a\in A$:

\qquad\qquad if $r(y)= \langle a, \Delta\rangle$ then $\mathcal{D}(\corr )(\Delta) \in \convex (t(\corr  (y))(a))$.

\medskip
While resolutions resolve only internal nondeterminism, fully probabilistic resolutions resolve both internal and external nondeterminism. Indeed, in a resolution a state can perform transitions with different labels, while in a fully probabilistic resolution a state can perform at most one transition. Moreover, a state $y$ in a fully probabilistic resolutions might not perform any transition (i.e., $r(y)=\star$), even if the corresponding state $\corr (y)$ may perform a transition (i.e., $t(\corr (y))(a)\neq\star$ for some $a$).

\begin{exa}\label{ex:fpresolutions}
As in Example~\ref{ex:resolutions}, consider the NPLTS on the left of Figure~\ref{fig:examplesys}.
The resolution $\mathcal{R}_1$ in Figure~\ref{fig:resolutions} is a fully probabilistic resolution, while $\mathcal{R}_2$ and $\mathcal{R}_3$ are not, since $x_{2}$ is allowed to perform more than one transition, even if labelled by different actions.

In Figure~\ref{fig:fpresolutions}, we show three examples of fully probabilistic resolutions of the same NPLTS\@. Note that neither of them is a resolution. In $\R_{1}$, state $x_{1}$ does not satisfy the first clause of the definition of resolution, since $x_{1}$ does not move while its corresponding state in the original NPLTS does. In $\R_{2}$ and $\R_{3}$, state $x_{2}$ respectively only performs a $b$-labelled transition and only performs a $c$-labelled transition. In a resolution, it should perform both.
\begin{figure}
\begin{center}
\begin{tikzpicture}[thick]

\matrix[matrix of nodes, row sep= 0.5cm, column sep=0cm,ampersand replacement=\&]
{
	\&							\&\node (x) {$x$};		\\
		\&							\&\node (x1) {${x_{1}}$};		\\
	};
\draw[-latex] (x) to node[left] {$a$} (x1);

\begin{scope}[xshift=2.6cm]

\matrix[matrix of nodes, row sep= 0.5cm, column sep=.0cm,ampersand replacement=\&]
{
	\&							\&\node (x) {$x$};		\\
	\&	\&	\node (d3) {$\Delta_{3}$}; 					\&	\& \\
	\&\node (x1) {$x_{1}$};			\& 						\&						\&\node (x2) {$x_{2}$}; 	\& \\
  	\&\node (d1) {$\Delta_{1}$};		\& 						\& \node (z) {${x_{3}}$};	\&				 	\& \\
	};
\draw[-latex] (x) to node[left] {$a$} (d3);
\draw[-latex] (x1) to node[left] {$b$} (d1);
\draw[-latex] (x2) to node[right] {$b$} (z);

\draw[dotted,->] (d3) to node[above] {$\frac 1 2$}  (x1);
\draw[dotted,->] (d1) to node[below] {$\frac 1 2$}  (z);
\draw[dotted,->] (d3) to node[left] {$\frac 1 4$}  (z);
\draw[dotted,->] (d3) to node[above] {$\frac 1 4$}  (x2);
\draw[dotted,->] (d1) to[bend left=70] node[left] {$\frac 1 2$}  (x);

\end{scope}

\begin{scope}[xshift=6.3cm]

\matrix[matrix of nodes, row sep= 0.5cm, column sep=0cm,ampersand replacement=\&]
{
	\&							\&\node (x) {$x$};		\\
	\&	\&	\node (d3) {$\Delta_{3}$}; 					\&	\& \\
	\&\node (x1) {$x_{1}$};			\& 						\&						\& 	\&\node (x2) {$x_{2}$}; \\
  	\&\node (d1) {$\Delta_{1}$};		\& 						\& 	\&	\node (z) {${x_{3}}$};			 	\& \\
  	\&							\& 	\\
	};
\draw[-latex] (x) to node[left] {$a$} (d3);
\draw[-latex] (x1) to node[left] {$b$} (d1);
\draw[-latex] (x2) to[bend right=50] node[right] {$c$} (x);

\draw[dotted,->] (d3) to node[above] {$\frac 1 2$}  (x1);
\draw[dotted,->] (d1) to node[below] {$\frac 1 2$}  (z);
\draw[dotted,->] (d3) to node[left] {$\frac 1 4$}  (z);
\draw[dotted,->] (d3) to node[above] {$\frac 1 4$}  (x2);
\draw[dotted,->] (d1) to[bend left=70] node[left] {$\frac 1 2$}  (x);
\end{scope}

\end{tikzpicture}
\end{center}
\caption{Fully probabilistic resolutions ($\mathcal{R}_1$, $\mathcal{R}_2$, $\mathcal{R}_3$, from left to right)}\label{fig:fpresolutions}
\end{figure}
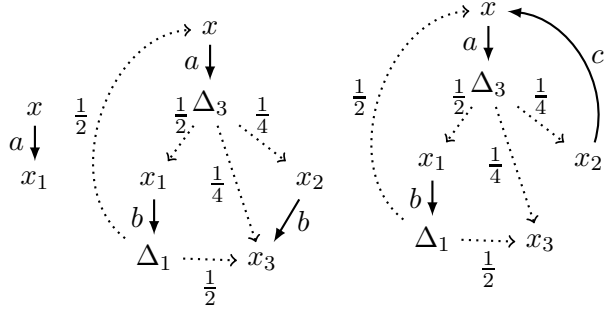

\end{exa}

As for resolutions, we can define $\pprob_{\mathcal{R}} \colon Y \to [0,1]^{A^*}$ for $\mathcal{R}=(Y,\corr ,r)$ a fully probabilistic resolution inductively for all $y\in Y$ and all $w\in A^*$ as
\begin{align*}
&\pprob_{\mathcal{R}}(y)(\varepsilon) &=&\quad 1\text{;}\\
&\pprob_{\mathcal{R}}(y)(aw) &=&\quad \begin{cases}
                              \sum_{y'\in \supp(\Delta)} \Delta(y') \cdot \pprob_{\mathcal{R}}(y')(w) & \text{ if } r(y)=\langle a, \Delta \rangle \text{,}\\
                              0               & \text{ otherwise.}\\
           \end{cases}
\end{align*}

Given an NPLTS $(X,t)$, we define  for $x\in X$ and $w\in A^*$:
\begin{align*}
\bbressupfp{x} (w) = \bigsqcup \{\pprob_{\mathcal{R}}(y)(w) \, \mid \, &\mathcal{R}=(Y,\corr ,r) \text{ is a fully probabilistic}\\
&\text{ resolution of }(X,t) \text{ and } \corr (y)=x  \}\text{.}
\end{align*}

In~\cite{Cast18} (following~\cite{BDL14a,BDL14b}), two states $x$ and $y$ are defined to be \emph{randomized $\sqcup$-trace equivalent} whenever $\bbressupfp{x} (w)=\bbressupfp{y} (w)$, for all $w\in A^*$.\footnote{Actually,~\cite{Cast18,BDL14a,BDL14b} use a notion of resolution which is equal to our fully-probabilistic resolution modulo a tiny modification due to a mistake in~\cite{BDL14a,BDL14b}, as confirmed by the authors in a personal communication.} The following proposition guarantees that such equivalence coincides with $\eqmay$.

\begin{prop}\label{prop:correpsondencefp}
Let $(X,t)$ be an NPLTS\@. For all $x\in X$ and $w\in A^*$, it holds that
$\bbmay {x}(w) = \bbressup{x} (w) = \bbressupfp{x} (w)$.
\end{prop}

\begin{proof}
We first prove that $\bbressup{x} (w) \leq \bbressupfp{x} (w)$.

Let $\R= (Y,\corr ,r)$ be a resolution of $(X,t)$, $x\in X$, and $w\in A^*$. Let  $y\in Y$ such that $\corr (y)=x$. We show that there exists a fully probabilistic resolution $\R'$ of $(X,t)$ with a state $z$ such that $z$ is mapped by the correspondence function of $\R'$ to $x$ and such that $\pprob_{\R}(y)(w)=\pprob_{\R' }(z)(w)$.

We define $\R'=(Y \times A^{*},\corr ',r')$ as follows.
The correspondece function $\corr '\colon Y\times A^* \to Y$ is $\corr  \after \pi_1$, namely $\corr '(y,w')=\corr (y)$ for all $w'\in A^{*}$. To define $r'$, we use the notation $\Delta_{w'}\in \dset (Y\times A^*)$ to denote, for all $\Delta\in \dset (Y)$ and $w'\in A^*$, the distribution over $Y \times A^{*}$ given as
\[\begin{array}{rcl}
\Delta_{w'}(y,w'')& = &
\begin{cases}
               \Delta(y) & \text{ if } w'=w'' \text{,} \\
               0 	& \text{ otherwise.}
           \end{cases}
\end{array}
\]
Now $r' \colon Y \times A^{*} \to (A \times \dset (Y \times A^{*})) +1$ is defined as:
\[\begin{array}{rcl}
r'(y,\epsilon)& = & \star\\
r'(y,aw')& = &
\begin{cases}
               \langle a, \Delta_{w'} \rangle & \text{ if } r(y)(a)= \Delta \neq \star                \\
               \star 	& \text{ otherwise.}
           \end{cases}
\end{array}
\]

First, it is necessary to observe that $\R'$ is indeed a fully probabilistic system, that is, for every state there is at most one transition that can be performed.
Indeed, by taking as set of states of $\R'$ the set $Y\times A^*$, we guarantee that despite a state $y$ in $\R$  might perform different transitions reaching distributions over states (one for each label), only one of these transitions is actually performed by a corresponding state $(y,aw')$ in $\R'$, namely the transition with label $a$. This allows us to move from a (reactive) resolution $\R$ to a fully probabilistic resolution $\R'$, while preserving the probability of performing traces. As we will prove below, for all $w$ and $y$ it holds $\pprob_{\R}(y)(w)=\pprob_{\R' }(y,w)(w)$. To illustrate this with an example, suppose the resolution $\R$ is the RPLTS with two states $y,y'$ and with $r (y)(a) =\dirac y$ and $r(y)( b)= \dirac {y'}$. Take the trace $ab$, for which we have $\pprob_{\R}(y)(ab)=1$.
Then in $\R'$ we have $r' (y,ab)=\langle a, \dirac {(y,b)}\rangle$ and $r' (y,b)=\langle b, \dirac {(y',\epsilon)}\rangle$, which gives $\pprob_{\R'}(y,ab)(ab)=1$.

\medskip

We proceed by proving that $\R'$ is a fully probabilistic resolution of $(X,t)$.
Suppose that $r'(y,w')\neq \star$. Then $w'=aw''$,
and $r'(y,w')=\langle a, \Delta_{w''}\rangle$ with $\Delta=r(y)(a)$. Hence, $\dset (\corr ') (\Delta_{w''})=\dset (\corr ) (\Delta)$. Since $\R$ is a resolution, $\dset (\corr ) (\Delta) \in \convex (t(x)(a))$ and therefore $\R'$ is a fully probabilistic resolution.

\medskip

We now prove that for all $w'\in A^{*}$
and for all $y\in Y$, it holds that $\pprob_{\R}(y)(w')=\pprob_{\R' }(y,w')(w')$.
The proof goes by induction on $w'$.

If $w'=\epsilon$ then, $\pprob_{\R}(y)(\epsilon)=1=\pprob_{\R' }(y,\epsilon)(\epsilon)$.

If $w'=aw''$ and $r(y)(a)=\star$, then $r'(y,w')=\star$ and $\pprob_{\R' }(y,w')(w')=0=\pprob_{\R}(y)(w')$.

If $w'=aw''$ and $r(y)(a)=\Delta\neq\star$, then $r'(y, aw'')=\langle a, \Delta_{w''}\rangle$ and
\begin{eqnarray*}
 \pprob_{\R} (y) (w')
 &=& \sum_{y'\in \supp(\Delta)} \Delta(y') \cdot \pprob_{\mathcal{R}}(y')(w'')\\
 &\stackrel{(IH)}{=}& \sum_{y'\in \supp(\Delta)} \Delta(y') \cdot \pprob_{\mathcal{R'}}(y', w'')(w'') \\
 &=& \sum_{(y', w'')\in \supp(\Delta_{w''})} \Delta_{w''}(y', w'') \cdot \pprob_{\mathcal{R'}}(y', w'')(w'')\\
 &=& \pprob_{\mathcal{R'}}(y, aw'')(aw'')\\
 \end{eqnarray*}

Hence, $\pprob_{\R}(y)(w)=\pprob_{\R' }(y,w)(w)$, with $\corr (y)=\corr '(y,w)=x$.

\bigskip

We now prove that $\bbressup{x} (w) \geq \bbressupfp{x} (w)$.

Let $\R= (Y,\corr ,r)$ be a fully probabilistic resolution of $(X,t)$, $x\in X$, and $w\in A^*$. Let $\corr (y)=x$. We show that there exists a resolution $\R'=(Y',\corr ',r')$ of $(X,t)$ with a state $z$ such that $\corr '(z)=x$ and $\pprob_{\R}(y)(w) \leq \pprob_{\R' }(z)(w)$.
We define $\R'=(Y',\corr ',r')$ as follows:
\begin{itemize}
\item $Y'=Y+X$ is the (disjoint) union of $Y$ and $X$
\item $\corr '=\corr + \ide_X$ that is for all $y'\in Y'$
\[\begin{array}{rcl}
\corr '(y')& = &
\begin{cases}
              \corr (y')  & \text{ if } y' \in Y \text{;}  \\
		\ide_X(y') & \text{ if } y' \in X  \text{.}
           \end{cases}
\end{array}
\]
\item $r': Y' \to (\dset (Y') +1)^{A}$ is defined as:
\[\begin{array}{rcl}r'(y')(a)& = &
\begin{cases}
             \star & \text{if $t(\corr '(y'))(a)= \star$}\\
              \Delta  & \text{ if $y' \in Y$ and } r(y')=\langle a, \Delta\rangle      \\
              \Delta_ {\corr '(y'),a} & \text{ otherwise}\\
           \end{cases}
\end{array}
\]

where $\Delta_{x,a}$ are defined like in the base case of the proof of Proposition~\ref{prop:correspondence2} (namely, an arbitrary choice amongst the distributions in $t(x)(a)$).
\end{itemize}

\noindent
We prove that $\R'$ is a resolution. For elements  $y'\in X$ the conditions of Definition~\ref{def:res} are trivially satisfied (see the analogous construction used in the base case in the proof of Proposition~\ref{prop:correspondence2}).
Suppose $y' \in Y$.
\begin{enumerate}
\item By definition, $r'(y')(a)=\star$ iff $t(\corr '(y'))(a)= \star$.
\item If $r'(y')(a)\neq \star$, then we are either in the second or in the third case of the definition of $r'$. If we are in the second case, $r'(y')(a)=\Delta  $ with $ r(y')=\langle a, \Delta\rangle$. Since $\Delta\in \dset Y$ we have $\dset(\corr ') (\Delta)= \dset(\corr )(\Delta)$, and by the definition of fully probabilistic resolution, it holds that $\dset(\corr )(\Delta) \in \convex  (t(\corr (y'))(a))$. Therefore $\dset(\corr ')(\Delta) \in \convex  (t(\corr '(y'))(a))$.
If we are in the third case, we have $r'(y')(a)= \Delta_ {\corr '(y'),a}$, with $ \Delta_ {\corr '(y'),a} \in t(\corr '(y'))(a)$. By definition of $\corr '$, $\dset(\corr ')(\Delta_ {\corr '(y'),a})=\Delta_ {\corr '(y'),a}$. Therefore $\dset(\corr ')( \Delta_ {\corr '(y'),a} ) \in \convex  (t(\corr '(y'))(a))$.
\end{enumerate}

\noindent
We conclude by showing that for all $y \in Y$ and for all $w'$, $\pprob_{\R}(y,w') \leq \pprob_{\R' }(y,w')$. The proof goes by induction on $w'$.
The case $w'=\epsilon$ is trivial, since $\pprob_{\R}(y,\epsilon) =1 = \pprob_{\R' }(y,\epsilon)$.
For the inductive case, take $w'=aw''$.
Suppose $r(y)=\langle a, \Delta\rangle$. Then, by definition of $r'$, $r'(y)(a)= \Delta$, and
 \begin{align*}
 \pprob_{\R} (y)(aw'')
 &= \sum_{y'\in \supp(\Delta)} \Delta(y') \cdot \pprob_{\mathcal{R}}(y')(w'')\\
 &\leq \sum_{y'\in \supp(\Delta)} \Delta(y') \cdot \pprob_{\mathcal{R'}}(y')(w'') \tag{by IH}\\
 &= \pprob_{\mathcal{R'}}(y)(aw'')
 \end{align*}
Now suppose that  $r(y)\neq \langle a, \Delta\rangle$. Then $r(y)=\star$ and, by definition of  $\pprob_{\R}$, $\pprob_{\R}(y)(aw'')=0$, so there is nothing to prove.

Hence, $\pprob_{\R}(y)(w)\leq \pprob_{\R' }(y)(w)$, with $\corr (y)=\corr '(y)=x$.
\end{proof}

\begin{rem}
The correspondence  in Proposition~\ref{prop:correpsondencefp} does not hold when infima are considered, instead of suprema. Indeed define $\bbresinffp{x} (w)$ as expected, namely, by replacing $\bigsqcup$ with $\bigsqcap$ in $\bbressupfp{x} (w)$. Then for any state $x$ of an arbitrary  NPLTS it holds that $\bbresinffp{x} (w)=0$ for all $w\neq \varepsilon$.
To see this, observe that $\R'= (\{y\}, \corr ', r')$ with $\corr '(y)=x$ and $r'(y)=\star$ is always a fully probabilistic resolution, and that $\pprob_{\mathcal{R}'}(y)(w)=0$.

To avoid this problem, one typically modifies the definition of $\bbresinffp{\cdot}$ by restricting only to those fully probabilistic resolutions that can perform a certain trace (see e.g.~\cite{BDL14a,BDL14b}). Instead, with our notion of resolution based on RPLTSs (Definition~\ref{def:res}), this problem does not arise and the definition of $\bbresinf{\cdot}$ is totally analogous to the one of $\bbressup{\cdot}$.
\end{rem}

\mypar{Why may, must, may-must? Trace equivalences as testing equivalences}
The notion of resolution is at the basis not just of the definitions of trace equivalences for NPLTS investigated in the literature, but also of testing equivalences for nondeterministic and probabilistic processes~\cite{YL92,JHY94,DGHMZ07a,DGHM09}, extending the theory of testing equivalences for purely nondeterministic processes~\cite{DH84} (see Remark~\ref{rem:testlts}).
In testing equivalences for nondeterministic and probabilistic processes,
we say that $x,y$ are may testing equivalent if, for every test, they have the same greatest probabilities of passing the test, with respect to any resolution $\R$ of the system resulting from the interaction between the test and the NPLTS\@. Analogously, $x,y$ are must testing equivalent if the smallest probabilities coincide, and the may-must testing equivalence requires both the greatest and the smallest probabilities to coincide.

Now, take tests to be finite traces, and the probability of passing a given test in a resolution as the probability of performing the trace in the resolution. Then it becomes clear, by the correspondence between the local and the global view proven in Theorem~\ref{thm:correspondence}, that each of our three trace equivalences indeed coincides with the corresponding testing equivalence, when tests are finite traces.

\section{Conclusion and future work}\label{sec:conc}

We developed an algebra-and-coalgebra-based trace theory for systems with nondeterminism and probability, that covers intricate trace semantics from the literature. The abstract approach sheds light on all choices and leaves no space for ad-hoc solutions.

Although the combination of nondeterminism and probability has been posing challenges to abstract approaches, this new algebraic theory of traces for NPLTS shows that it can be dealt with in a smooth and uniform way.

In~\cite{Jacobs08}, Jacobs leaves open the question of whether the semantics that he obtains by exploiting the Kleisli traces~\cite{DBLP:journals/lmcs/HasuoJS07} for the monad $C$ coincides with some scheduler based semantics. We would like to answer that such semantics corresponds to the may-must trace equivalence by our results and the correspondence between Kleisli traces and generalised determinisation studied in~\cite{JacobsSS15}. However, we cannot directly derive this: for technical reasons (namely, the absence of a bottom element in the Kleisli category of $C$), Jacobs has to hack the general framework in~\cite{DBLP:journals/lmcs/HasuoJS07} and therefore the correspondence does not follow from known results. Providing an answer to this interesting question is left as future work.

\bibliographystyle{alphaurl}

 \bibliography{biblio}

\end{document}